\tikzset{
every loop/.style={},
tadpole/.style={loop, min distance=15mm, in=-40, out=40},
bdry/.style={shape=circle, draw, inner sep=2pt,fill=black!40},
bulk/.style={shape=circle, draw, inner sep=2pt,fill=black},
bubu/.style={black},
bubo/.style={decorate,decoration=snake,red},
bobo/.style={decorate,decoration={coil,amplitude=0.1cm, aspect=1.2,segment length=0.2cm},purple},
bobo2/.style={decorate, decoration={zigzag,amplitude=0.2cm},blue}
}
\newtheorem{theorem}{Theorem}[section]
\newtheorem{conjecture}[theorem]{Conjecture}
\newtheorem{corollary}[theorem]{Corollary}
\newtheorem{definition}[theorem]{Definition}
\newtheorem{example}[theorem]{Example}
\newtheorem{lemma}[theorem]{Lemma}
\newtheorem{prop}[theorem]{Proposition}
\newtheorem{remark}[theorem]{Remark}
\newtheorem{assumption}[theorem]{Assumption}
\numberwithin{equation}{section}
\DeclareMathOperator{\tr}{tr}
\DeclareMathOperator{\proj}{q}
\def\C{\mathbb{C}}
\def\R{\mathbb{R}}
\def\A{\mathrm{A}}
\def\Aut{\operatorname{Aut}}
\def\exp{\operatorname{Exp}}
\def\Cn{C^{\infty}}
\def\de{\Delta}
\def\FS{{\mathcal{F}_{\Sigma}}}
\def\sp{{\mathrm{split}}}
\def\dt{{\mathrm{dt}}}
\def\d{{\mathrm{d}}}
\def\fp{{\mathrm{\textbf{f}}.\mathrm{\textbf{p}}.}}
\def\dvol{\mathrm{dVol}}
\def\Gr{\mathrm{Gr}}
\newcommand{\dd}{\partial}
\newcommand{\ra}{\rightarrow}
\newcommand{\mr}{\mathrm}
\newcommand{\RR}{\mathbb{R}}
\newcommand{\ZZ}{\mathbb{Z}}
\newcommand{\RG}{\mathcal{R}}
\newcommand{\Fun}{\mathrm{Fun}(\Sigma)}
\renewcommand{\exp}{\mathrm{exp}}
\newcommand{\DD}{\mathcal{D}}
\newcommand{\OO}{\mathrm{O}}
\newcommand{\til}{\widetilde}
\newcommand{\funlog}{C^\infty_\mathrm{adm}}
\newcommand{\pre}{\mathrm{pre}}
\begin{document}
\title{Two-dimensional perturbative scalar QFT and Atiyah-Segal gluing}

\author{Santosh Kandel}  

\address{University of Freiburg}
\email{skandel1@alumni.nd.edu}

\author{Pavel Mnev}

\address{University of Notre Dame}
\address{St. Petersburg Department of V. A. Steklov Institute of Mathematics of the Russian Academy of Sciences}
\email{pmnev@nd.edu}

\author{Konstantin Wernli}
\address{Centre for Quantum Mathematics - 
University of Southern Denmark
}
\email{kwernli@imada.sdu.dk}

\thanks{
S.K. and K.W. would like to thank the University of Zurich where a part of this work was written, and acknowledge partial support of NCCR SwissMAP, funded by the Swiss National Science Foundation, and by the
COST Action MP1405 QSPACE, supported by COST (European Cooperation in Science and Technology), and the SNF grant No. 200020\_172498/1 during their affilitation with the University of Zurich. 
K. W.  acknowledges further support from a BMS Dirichlet postdoctoral fellowship and the SNF Postdoc Mobility grant P2ZHP2\_184083, and would like to thank the Humboldt-Universit\"at Berlin, in particular the group of Dirk Kreimer, and the University of Notre Dame for their hospitality.
}

\maketitle

\begin{abstract}
We study the perturbative quantization of 2-dimensional massive scalar field theory with polynomial (or power series) potential on manifolds with boundary. We prove that it fits into the functorial quantum field theory framework of Atiyah-Segal.
In particular, we prove that the perturbative partition function defined in terms of integrals over configuration spaces of points on the surface satisfies an Atiyah-Segal type gluing formula. Tadpoles (short loops) behave nontrivially under gluing  and play a crucial role in the result.
\end{abstract}
\setcounter{tocdepth}{3}
\tableofcontents




\section{Introduction}

In recent years, \emph{Functorial Quantum Field Theories} (FQFTs), as proposed by Atiyah and Segal \cite{Atiyah},\cite{Seg1}, have been the subject of intense mathematical investigation, see e.g. \cite{CMRQ}, \cite{Reshetikhin2010}, \cite{Stolz} and references therein. The rough idea is that a quantum field theory corresponds to a functor 
$$Z \colon \mathbf{Cob} \to \mathbf{Hilb}$$
from a cobordism category, possibly equipped with extra structure, to the category of Hilbert spaces. Examples of such functors from \emph{topological} cobordism categories (called TQFTs, short for \emph{Topological Quantum Field Theories}) abound, see e.g. \cite{Reshetikhin1991}, \cite{Turaev1992}, \cite{Dijkgraaf1990}. 
On the other hand, there are very few examples known for geometric 
cobordism categories.
The first examples (in dimension greater than one) are: 2-dimensional Yang-Mills theory (Migdal-Witten, \cite{Migdal1975},  \cite{Witten1991}) and 2-dimensional free fermion conformal field theory (Segal \cite{Seg1} and 
Tener \cite{Tener2017}) -- for the cobordism category endowed with area form or conformal structure, respectively. An example of an invertible FQFT for the Spin Riemannian cobordism category is constructed by Dai and Freed \cite{Daifreed1994}. 
In \cite{Santosh}
it was shown that free massive scalar field theory provides an example of such a FQFT in  even dimensions, for the Riemannian cobordism category. \\

In this paper we give a new example of an interacting FQFT on the Riemannian cobordism category arising from the perturbative path integral.

\subsection{Main results}
We are considering the perturbative quantization of the scalar field theory defined classically by the action functional
$$ S_\Sigma(\phi)= \int_\Sigma \frac12 d\phi \wedge * d\phi + \int_{\Sigma}\frac{m^2}{2} \phi^2\, \dvol_\Sigma+ \int_{\Sigma}p(\phi)\, \dvol_\Sigma $$
with $\Sigma$ an oriented surface endowed with Riemannian metric, $\phi\in C^\infty(\Sigma)$ the field, $m>0$ a parameter (``mass'') and $p(\phi)=\sum_{ k\geq 3 }\frac{p_k}{k!}\phi^k$ a polynomial (or possibly power series) interaction potential.

The first main result of this paper is that there is an Atiyah-Segal gluing formula for the perturbative partition function (Definition \ref{def: Z tadpole}). The result is set up as follows. First, we define a vector space 
$$H_Y=\Big\{\Psi(\eta)=\sum_{n\geq 0}\int_{C^\circ_n(Y)}dx_1\cdots dx_n\, \psi_n(x_1,\ldots,x_n) \eta(x_1)\cdots \eta(x_n) \Big\}$$
 associated to a 1-dimensional Riemannian manifold $Y=S^1\sqcup\cdots \sqcup S^1$ (Definition \ref{def:bdry_space}). Vectors in $H_Y$ are functionals on $C^\infty(Y)\ni \eta$ and 
  are parameterized by the ``$n$-particle wave functions'' $\psi_n$ -- 
 formal power series in $\hbar^{1/2}$ with coefficients given by 
 smooth symmetric functions on the open configuration space of $n$ points on $Y$ with certain types of singularities on diagonals allowed (in particular, logarithmic singularities on codimension $1$ diagonals), see Definition \ref{def: adm singularities}. 
   
  The perturbative partition function of a surface $\Sigma$  is then given as\footnote{
For the sake of preliminary exposition, we are being slightly imprecise writing that $Z_\Sigma$ is an element of the space of boundary states  -- see Section \ref{sec:Feynmanrules} for details (in particular, Remarks \ref{rem: Z is not in H but Zhat is in H}, \ref{rem: DN kernel singularity} and \ref{rem:Zfactors}). 
  }
\begin{equation}\label{Z intro}
Z_\Sigma(\eta)=e^{-\frac{1}{2}\int_{\partial \Sigma} \dvol_{\partial \Sigma}\, \eta D_\Sigma(\eta)} {\det}^{-\frac{1}{2}}(\Delta+m^2)\sum_\Gamma \frac{\hbar^{E-N-\frac{n}{2}} F_\Gamma(\eta)}{|\Aut(\Gamma)|}  \quad \in H_{\partial \Sigma}
\end{equation}
where:
\begin{itemize}
\item $\hbar$ is a formal parameter of quantization.
\item $D_\Sigma$ is the Dirichlet-to-Neumann operator. 
\item $\det(\Delta+m^2)$ is the zeta-regularized determinant over functions on $\Sigma$ with Dirichlet boundary conditions.
\item The sum runs over graphs $\Gamma$ with $N$ bulk vertices and $n$ boundary vertices (which are univalent), with no boundary-boundary edges.\footnote{In fact, the exponential prefactor in (\ref{Z intro}) can be seen as the contribution of boundary-boundary edges -- see Remark \ref{rem:Zfactors}.} Here 
$E$  is the total number of edges and $|\Aut(\Gamma)|$ is the order of the automorphism group of the graph.
\item The Feynman evaluation of a graph $F_\Gamma(\eta)$ is calculated as follows. 
\begin{itemize}
\item Boundary vertices of $\Gamma$ are placed at points $x_i\in \partial \Sigma$ and are decorated with $\eta(x_i)$; 
\item bulk vertices are placed at points $y_j\in \Sigma$ and are decorated by $-p_v$ (the coefficient of the interaction polynomial, with $v$ the valence of the vertex);
\item edges between distinct vertices are decorated by the Green's function for $\Delta+m^2$ (or 
its normal derivative for bulk-boundary edges), 
\item an edge connecting a vertex to itself is decorated by the zeta-regularized evaluation of the Green's function on the diagonal, see Definition \ref{def: tau zeta-reg}.
\end{itemize}
Then $F_\Gamma(\eta)$ is calculated as the product of all decorations integrated over positions of all points $x_i$, $y_j$.
\end{itemize}

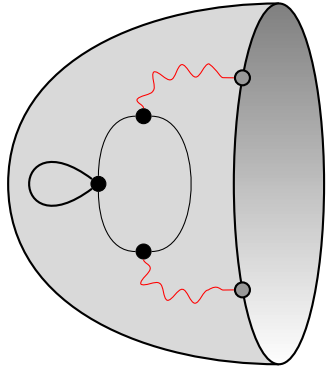
\begin{figure}[h!]
\centering 
\begin{tikzpicture}[baseline=(z2.base),scale=0.6] 

    \coordinate (z0) at (1,4);
    \coordinate (z1) at (1,-4);
    \coordinate (z2) at (-5,0);
    
    \filldraw[fill=black!15, thick, draw=black] (z0) to[out=180, in=90] (z2) to[out=-90, in=180] (z1); 
    
    \shadedraw[ thick, draw=black] (2,0) arc [start angle=0,   
                  end angle=360,
                  x radius=1cm, 
                  y radius=4cm]
    node [bdry, pos=.4] (x1) {} node [pos = .6, bdry] (x2) {};
    \node[bulk] (b1) at (-2,1.5) {} edge[bubo, out=90, in=180] (x1);
    \node[bulk] (b2) at (-2,-1.5) {} edge[bubo, out=270,in=180] (x2);
    \node[bulk] (b3) at (-3,0) {};
    \draw[bubu] (b1) to[out=180,in=90] (b3);
       \draw[bubu] (b3) to[out=270,in=180] (b2);
    \draw[bubu] (b1) to[out=0,in=360] (b2);
    \draw[thick] (b3) .. controls (-5,1.5) and (-5,-1.5) .. (b3);
  \end{tikzpicture}
  \caption{A typical Feynman graph on a hemisphere. Straight lines are decorated with the Green's function (for Dirichlet boundary conditions) while wavy lines are decorated with its normal derivative. }
  \end{figure}  
  
  Finally, if $\Sigma$ is a closed surface with a decomposition $\Sigma = \Sigma_L \cup_Y \Sigma_R$, where $Y$ is the boundary of $\Sigma_L$ and $\Sigma_R$, we define a pairing 
\begin{equation}\label{intro <> DN}
\langle\cdot,\cdot\rangle_{\Sigma_L,Y,\Sigma_R} \colon 
H_Y\otimes H_Y\rightarrow \mathbb{R}[[\hbar^{1/2}]]
\end{equation}
as follows: 
\begin{multline}\label{intro <,> DN explicit}
 \langle \Psi^L(\eta),\Psi^R(\eta) \rangle 
 ={\det}^{-\frac12}(D_{\Sigma_L}+D_{\Sigma_R}) \cdot \\
\cdot \sum_{\mathfrak{m}\in \mathfrak{M}_{m+n}} \int_{C^\circ_{m+n}(Y)}dx_1\cdots dx_{m+n}\,\psi^L_m(x_1\ldots,x_m) \psi^R_n(x_{m+1},\ldots,x_{m+n}) \prod_{(i,j)\in\mathfrak{m}}K(x_i,x_j) 
 \end{multline}
Here $\psi^{L,R}$ are the wavefunctions for the states $\Psi^{L,R}$ and $K$ is the Green's function for the operator $D_{\Sigma_L}+D_{\Sigma_R}$. The sum runs over perfect matchings $\mathfrak{m}$ of $m+n$ elements. 

The Atiyah-Segal gluing formula can then be formulated as follows. 
\begin{theorem}\label{thm:ASgluingIntro}
Let $\Sigma = \Sigma_L \cup_Y \Sigma_R$. Then 
$$Z_\Sigma = \langle \widehat{Z}_{\Sigma_L},\widehat{Z}_{\Sigma_R}\rangle_{\Sigma_L,Y,\Sigma_R}$$ 
where 
$\widehat{Z}_{\Sigma_L}$, $\widehat{Z}_{\Sigma_R}$ are given by (\ref{Z intro}) specialized to $\Sigma_{L}, \Sigma_R$ with the exponential prefactor omitted.
\end{theorem}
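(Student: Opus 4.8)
The plan is to reconstruct the Feynman graph sum defining $Z_\Sigma$ out of the right-hand pairing, using three gluing inputs: a factorization of the regularized determinant, a gluing formula for the propagator, and the associated gluing law for the regularized tadpole. I would begin with the determinant. By the Burghelea--Friedlander--Kappeler gluing formula for zeta-regularized determinants,
\begin{equation*}
\det(\Delta_\Sigma + m^2) = \det(\Delta_{\Sigma_L} + m^2)\,\det(\Delta_{\Sigma_R} + m^2)\,\det(D_{\Sigma_L} + D_{\Sigma_R}),
\end{equation*}
the first two factors taken with Dirichlet conditions on $Y$. Raising to the power $-\tfrac12$, the two Dirichlet prefactors carried by $\widehat{Z}_{\Sigma_L}$ and $\widehat{Z}_{\Sigma_R}$ times the factor $\det^{-1/2}(D_{\Sigma_L}+D_{\Sigma_R})$ built into the pairing reproduce the prefactor $\det^{-1/2}(\Delta_\Sigma+m^2)$ of \eqref{Z intro}. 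Since $\Sigma$ is closed there is no exponential boundary term, consistent with the hatted states in which that prefactor has been stripped.

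The analytic heart is the gluing formula for the propagator: for $x,y\in\Sigma$,
\begin{equation*}
G_\Sigma(x,y) = G^{\mathrm{Dir}}(x,y) + \int_{Y\times Y} (\partial_n G)(x,u)\,K(u,v)\,(\partial_n G)(y,v)\,du\,dv,
\end{equation*}
where $G^{\mathrm{Dir}}$ is the Dirichlet Green's function on $\Sigma_L\sqcup\Sigma_R$ (equal to $G_{\Sigma_L}$, resp. $G_{\Sigma_R}$, on the diagonal blocks and vanishing on the mixed blocks), $\partial_n$ is the one-sided normal derivative at $Y$ decorating bulk--boundary edges, and $K=(D_{\Sigma_L}+D_{\Sigma_R})^{-1}$. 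I would prove this by the usual characterization: the right-hand side has the correct coincident-point singularity, satisfies $(\Delta_\Sigma+m^2)$ away from the diagonal, and is continuous with matching normal derivatives across $Y$, the last being exactly the defining property of $K$. Restricting to the diagonal and comparing the two regularizations then yields the tadpole gluing law
\begin{equation*}
\tau_\Sigma(y) = \tau_{\Sigma_{L/R}}(y) + \int_{Y\times Y}(\partial_n G)(y,u)\,K(u,v)\,(\partial_n G)(y,v)\,du\,dv,
\end{equation*}
with $\tau$ the regularized coincident-point limit of the respective Green's function.

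Third, I would reorganize the graph sum. Expanding the propagator on each edge and the tadpole on each self-loop according to the two identities distributes every graph $\Gamma$ on $\Sigma$ into a direct term plus corrections; each correction cuts an edge at $Y$, replacing it by two univalent boundary vertices carrying $\partial_n G$ joined across by a $K$-propagator. Collecting the direct edges and the boundary stubs on each side produces a pair $(\Gamma_L,\Gamma_R)$ of graphs on $\Sigma_L,\Sigma_R$ together with a perfect matching of their boundary vertices on $Y$: an edge of $\Gamma$ lying in one piece yields either a direct $G_{\Sigma_{L/R}}$-edge or a same-side matched pair, while an edge crossing $Y$ yields a left--right matched pair, exactly the structure of the matching sum $\mathfrak{m}\in\mathfrak{M}_{m+n}$. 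The boundary stubs carrying $\partial_n G$ are precisely the univalent vertices generating the wavefunctions $\psi^{L}_m,\psi^{R}_n$, and the $K$-propagators over $\mathfrak{m}$ are exactly those in the pairing. It then remains to check symmetry factors: the standard cutting argument shows that $\tfrac{1}{|\Aut(\Gamma)|}$ is reproduced by summing over $(\Gamma_L,\Gamma_R)$ with weights $\tfrac{1}{|\Aut(\Gamma_L)|\,|\Aut(\Gamma_R)|}$ together with the matching sum, since the wavefunctions already package the subgraphs with their automorphism weights and $\mathfrak{M}_{m+n}$ counts the ways of reconnecting cut edges.

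The main obstacle I anticipate is the tadpole gluing law: the regularized diagonal values $\tau_\Sigma$ and $\tau_{\Sigma_{L/R}}$ are defined through spectral zeta functions of \emph{different} operators, so the naive restriction of the propagator gluing to the diagonal may a priori differ from the correct regularized identity by a local anomaly. The delicate step is to show this anomaly vanishes, or is exactly absorbed by the determinant factor and the $K$-dependence of the boundary correction; this is the technical content behind the assertion that tadpoles behave nontrivially under gluing, and once it is settled, assembling the determinant factorization, the edge and tadpole gluings, and the symmetry bookkeeping turns $\langle\widehat{Z}_{\Sigma_L},\widehat{Z}_{\Sigma_R}\rangle$ into the full graph sum for $Z_\Sigma$.
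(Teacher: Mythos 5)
Your outline follows the same route as the paper's proof: the BFK determinant factorization, the Green's function gluing formula, a tadpole gluing law, and an orbit-stabilizer bookkeeping of symmetry factors under cutting (the paper formalizes the last step via ``decorated'' Feynman graphs, which is exactly your ``standard cutting argument'' made precise). However, there is one genuine gap, and you have correctly located it yourself: the tadpole gluing law for the zeta-regularized tadpole is asserted, flagged as ``the delicate step,'' and then left unresolved. As written, your argument only establishes that the \emph{off-diagonal} propagator gluing restricts to \emph{some} identity on the diagonal; it does not show that the regularized diagonal values $\tau_\Sigma$ and $\tau_{\Sigma_{L/R}}$ -- defined through local zeta functions of different operators -- satisfy $\tau_\Sigma = \tau_{\Sigma_{L/R}} + \int_{Y\times Y}\partial_\nu G\, K\, \partial_\nu G$ with no extra local term. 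Without this, the cut self-loops (Figure \ref{fig:gluingviolation}) do not resum to the tadpole on $\Sigma$, and the graph sum on the right-hand side does not close up into $Z_\Sigma$.

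The missing idea in the paper is to route the comparison through the \emph{point-splitting} tadpole $\tau^{\mathrm{split}}_\Sigma(x)=\lim_{x'\to x}\bigl[G_\Sigma(x,x')+\tfrac{1}{2\pi}\log d(x,x')\bigr]$. Since the boundary correction term in the propagator gluing formula is smooth across the diagonal in the interior, the point-splitting limit commutes with it, so $\tau^{\mathrm{split}}$ satisfies the gluing law \emph{immediately} from the off-diagonal identity. One then shows by an explicit heat-kernel computation (Lemma \ref{lemma:splitting_tadpole_behavior} and Corollary \ref{corollary:tadpole difference}) that $\tau^{\mathrm{reg}}_\Sigma - \tau^{\mathrm{split}}_\Sigma = \tfrac{\gamma-\log 2}{2\pi}$ is a \emph{universal constant}, independent of the surface; the same constant appears on both sides of the gluing law, so it cancels and $\tau^{\mathrm{reg}}$ is local as well. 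This is precisely the statement that your anticipated ``local anomaly'' is a surface-independent constant rather than a genuinely local quantity, and it is the one nontrivial input your proposal would need to supply to be complete.
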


The proof is based, roughly, on the idea that the value of a Feynman graph $\Gamma$ on $\Sigma$ can be presented in terms of values of its subgraphs $\Gamma_L,\Gamma_R$ located on $\Sigma_L$ and $\Sigma_R$, glued using the interface Green's functions $K$, 
see Figure \ref{fig: gluing of graphs}. 
  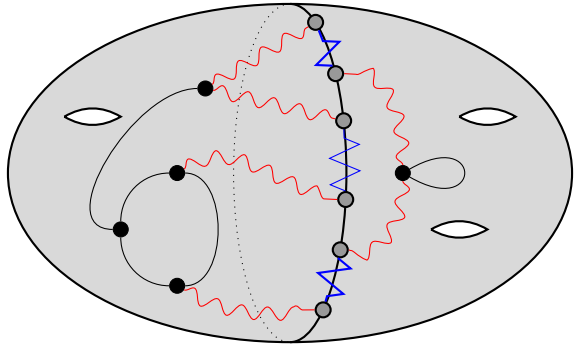
\begin{figure}[h!]
\centering 
\begin{tikzpicture}[scale=0.75] 
\filldraw[,fill=black!15, thick] (0,0) ellipse (5cm and 3cm); 
\draw[thick] (0,-3) arc (270:450:1cm and 3cm) 
node[bdry, pos=.2] (bdry1) {} 
node[bdry, pos=.45] (bdry2) {} 
node[bdry, pos=.6] (bdry3) {}
 node[bdry, pos=.85] (bdry4) {}
node[bdry, pos=.35] (bdry5){} edge[bobo2] (bdry1) 
node[bdry,pos=.7] (bdry6) {} edge[bobo2] (bdry4);
\draw[dotted] (0,3) arc (90:270:1cm and 3cm);
 \node[bulk] (b1) at (-2,0) {} edge[bubo, out=30, in=180] (bdry2);
 \node[bulk] (b2) at (-2,-2) {} edge[bubo, out=330,in=180] (bdry1);
 \node[bulk] (b3) at (-3,-1) {};
 \draw[bubu] (b1) to[out=180,in=90] (b3);
 \draw[bubu] (b3) to[out=270,in=180] (b2);
 \draw[bubu] (b1) to[out=0,in=360] (b2);
\node[bulk] (b4) at (-1.5,1.5) {} edge[bubo] (bdry3) edge[bubo] (bdry4);
   \draw[bubu] (b4) to[out=180,in=180] (b3);
   \draw[bobo2] (bdry2) -- (bdry3);
\node[bulk] (b1r) at (2,0) {} edge[bubo, in=0, out=270] (bdry5) edge[bubo, in=0,out=90] (bdry6);
\draw[tadpole] (b1r) to[out=30,in=330] (b1r);
\coordinate (x) at (3,1);
\coordinate (y) at (4,1);
\filldraw[fill=white, thick] (x) to[out=330,in=210] (y) to [out=150,in=30] (x);
\coordinate (y1) at (-3,1);
\coordinate (x1) at (-4,1);
\filldraw[fill=white, thick] (x1) to[out=330,in=210] (y1) to [out=150,in=30] (x1);

\coordinate (x) at (2.5,-1);
\coordinate (y) at (3.5,-1);
\filldraw[fill=white, thick] (x) to[out=330,in=210] (y) to [out=150,in=30] (x);
\end{tikzpicture}
  \caption{ \label{fig: gluing of graphs} When gluing contributions from two Feynman graphs, their boundary vertices are connected by the Green's function of the Dirichlet-to-Neumann operator (zig-zag line).}
  \end{figure}

This result has a generalization for $\Sigma$ a non-closed surface, see Theorem \ref{thm:ASgluing}. 

The second result is that the construction above can be upgraded to a functor from the symmetric monoidal semi-category (i.e. without identity morphisms) of Riemannian cobordisms to Hilbert spaces. The main problem obstructing functoriality of the previous result is the fact that the pairing (\ref{intro <> DN}) depends not just on the gluing interface $Y$ but also on the adjacent surfaces.

We define the adjusted partition function 
$$\overline{Z}_\Sigma(\eta)=e^{\frac{1}{2}\int_{\partial \Sigma} \dvol_{\partial \Sigma}\, \eta \varkappa(\eta)}Z_\Sigma(\eta)$$ 
where 
$\varkappa=(\Delta+m^2)^{1/2}\big|_{\partial \Sigma}$  is the square root of the Helmholtz operator on the boundary. We also define a new adjusted (functorial) pairing $\langle,\rangle_{2\varkappa}$ on $H_Y$, given by a similar formula to (\ref{intro <,> DN explicit}) where the operator $D_{\Sigma_L}+D_{\Sigma_R}$ is replaced by $2\varkappa$. 
Furthermore, we replace $H_Y$ with its $L^2$ completion.

\begin{theorem} \label{thm:functor intro} The assignment of the Hilbert space $H_Y$ to each closed Riemannian $1$-manifold $Y$ (endowed with a two-sided collar) and of the adjusted partition function $\overline{Z}_\Sigma$ to each Riemannian 2-cobordism constitutes a functor $
\mathbf{Riem}^2
\rightarrow \mathbf{Hilb}$.
\end{theorem}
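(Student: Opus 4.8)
The plan is to deduce the theorem from a single gluing identity for the adjusted objects, namely
$$\langle \overline{Z}_{\Sigma_L},\overline{Z}_{\Sigma_R}\rangle_{2\varkappa}=\overline{Z}_{\Sigma_L\cup_{Y_1}\Sigma_R},$$
supplemented by the verification that all objects and morphisms genuinely land in $\mathbf{Hilb}$ and that the assignment is symmetric monoidal. A morphism $Y_{\mathrm{in}}\to Y_{\mathrm{out}}$ is encoded by the vector $\overline{Z}_\Sigma\in H_{\partial\Sigma}$; the nondegenerate pairing $\langle\,,\,\rangle_{2\varkappa}$ identifies $H_Y^*\cong H_Y$ and thereby turns the boundary components into the in/out tensor factors, so that operator composition is by construction the contraction of the matching factors, i.e. the $\langle\,,\,\rangle_{2\varkappa}$-pairing over the shared boundary. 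Writing $\Sigma=\Sigma_L\cup_{Y_1}\Sigma_R$ with $\Sigma_L\colon Y_0\to Y_1$ and $\Sigma_R\colon Y_1\to Y_2$, functoriality on morphisms is exactly the displayed identity, with the pairing taken over $Y_1$. I would establish this first, assuming the surface-dependent gluing of Theorem \ref{thm:ASgluing}, and then turn to the analytic and categorical bookkeeping.

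The core computation is purely algebraic once both pairings are rewritten as flat Gaussian integrals. By Wick's theorem the normalization $\det^{-1/2}$ in (\ref{intro <,> DN explicit}) exactly cancels the Gaussian normalization, so that for any operator $A$ on $Y_1$ one has $\langle\Psi^L,\Psi^R\rangle_A=\int\mathcal{D}\eta_1\,e^{-\frac12\int_{Y_1}\eta_1 A\eta_1}\Psi^L\Psi^R$ with the normalized flat measure. Taking $A=2\varkappa$ and inserting $\overline{Z}_\bullet=e^{\frac12\int_{\partial\Sigma_\bullet}\eta\varkappa\eta}Z_\bullet$, the two adjustment factors on the shared boundary contribute $e^{\frac12\int_{Y_1}\eta_1\varkappa\eta_1}\cdot e^{\frac12\int_{Y_1}\eta_1\varkappa\eta_1}=e^{\int_{Y_1}\eta_1\varkappa\eta_1}$, which cancels the weight $e^{-\frac12\int_{Y_1}\eta_1(2\varkappa)\eta_1}=e^{-\int_{Y_1}\eta_1\varkappa\eta_1}$ identically. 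What survives is
$$\langle\overline{Z}_{\Sigma_L},\overline{Z}_{\Sigma_R}\rangle_{2\varkappa}=e^{\frac12\int_{Y_0}\eta_0\varkappa\eta_0+\frac12\int_{Y_2}\eta_2\varkappa\eta_2}\int\mathcal{D}\eta_1\,Z_{\Sigma_L}(\eta_0,\eta_1)\,Z_{\Sigma_R}(\eta_1,\eta_2),$$
where the residual $\eta_1$-Gaussian is now supplied entirely by the Dirichlet-to-Neumann prefactors of $Z_{\Sigma_L},Z_{\Sigma_R}$, whose $Y_1$-blocks add to $D_{\Sigma_L}+D_{\Sigma_R}$. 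By the surface-dependent gluing formula (Theorem \ref{thm:ASgluing}) the remaining integral is precisely $Z_\Sigma(\eta_0,\eta_2)$; here the Schur-complement law for the glued Dirichlet-to-Neumann operator and the Burghelea--Friedlander--Kappeler factorization of determinants enter, but only as already established in the first theorem. The surviving outer adjustments reassemble $e^{\frac12\int_{\partial\Sigma}\eta\varkappa\eta}$, yielding $\overline{Z}_\Sigma$. The essential point is that replacing the surface-dependent $D_{\Sigma_L}+D_{\Sigma_R}$ by the intrinsic $2\varkappa$ introduces no new determinant anomaly: the $\varkappa$-factors cancel exactly in the exponent, so the functorial pairing collapses onto the old one.

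The hard part will be analytic rather than algebraic: whether the objects and morphisms actually define data in $\mathbf{Hilb}$. I expect the main obstacle to be showing that $\overline{Z}_\Sigma$ lies in the $L^2$-completion of $H_{\partial\Sigma}$ and represents a bounded, ideally Hilbert--Schmidt, operator. The bare $\widehat{Z}_\Sigma$ fails to be square-integrable precisely because of the diagonal singularity of the Dirichlet-to-Neumann kernel (Remark \ref{rem: DN kernel singularity}); the adjustment trades $D_\Sigma$ for $\varkappa$ in the leading boundary two-point function, and since $D_\Sigma-\varkappa$ is a pseudodifferential operator of nonpositive order (the two share a principal symbol), the adjusted kernel has an integrable singularity and the two-particle wavefunction becomes $L^2$. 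I would make this quantitative using the admissibility classes of Definition \ref{def: adm singularities}, controlling each $n$-particle sector order by order in $\hbar^{1/2}$, and then check that the infinite-dimensional $2\varkappa$-Gaussian integration converges and implements operator composition, which requires Hilbert--Schmidt estimates for the interface propagator $(2\varkappa)^{-1}$, a smoothing operator of order $-1$ on $Y_1$.

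It remains to check the residual functor axioms. Monoidality is immediate: a disjoint union gives $H_{Y\sqcup Y'}=H_Y\otimes H_{Y'}$ from the product structure of configuration spaces, and $\overline{Z}_{\Sigma\sqcup\Sigma'}=\overline{Z}_\Sigma\otimes\overline{Z}_{\Sigma'}$ since no Feynman graph connects the components and $\varkappa$ is block-diagonal. Compatibility with the braiding follows from the symmetry of the wavefunctions $\psi_n$ and of the kernel $K$ under permutations. Associativity of composition is inherited from the associativity of iterated Gaussian integration, since two successive gluings produce the same nested $2\varkappa$-integrals in either order, each reduced by the same $\varkappa$-cancellation. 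Finally, because $\mathbf{Riem}^2$ is a semicategory, no unit axiom need be checked; this is fortunate, as the would-be identity is the degenerate zero-length cylinder, exactly the limit where the construction is singular. Assembling these pieces with the gluing identity of the second paragraph completes the proof.
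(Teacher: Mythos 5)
Your overall architecture --- reduce functoriality to the single identity $\langle \overline{Z}_{\Sigma_L},\overline{Z}_{\Sigma_R}\rangle_{2\varkappa}=\overline{Z}_{\Sigma}$, import the surface-dependent gluing of Theorem \ref{thm:ASgluing}, and then check membership in $\mathbf{Hilb}$ and the monoidal axioms --- is the same as the paper's. But your ``core computation'' is carried out against a flat measure $\mathcal{D}\eta_1$ on $C^\infty(Y_1)$ that does not exist; the entire content of the theorem is precisely to replace that formal cancellation by a comparison of two honest Gaussian measures, $\mu'_{2\varkappa}$ and $\mu'_{D_{\Sigma_L}+D_{\Sigma_R}}$, and to show that the density $e^{-\frac12\int_{Y_1}\tilde\eta(S_L+S_R)\tilde\eta}$ (with $S_i=D_{\Sigma_i}-\varkappa$), together with the ratio of determinants, really converts one into the other. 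That step is not bookkeeping: it requires the Fredholm determinant $\det\bigl(1+(2\varkappa)^{-1}(S_L+S_R)\bigr)$ to exist and to equal $\det D_{\Sigma_L,\Sigma_R}/\det 2\varkappa$, i.e.\ it requires $(2\varkappa)^{-1}(S_L+S_R)$ to be trace class. Your assertion that the replacement of $D_{\Sigma_L}+D_{\Sigma_R}$ by $2\varkappa$ ``introduces no new determinant anomaly'' is exactly the claim that has to be proved, not an observation.

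Here is the concrete gap: you justify the needed regularity by saying that $D_\Sigma-\varkappa$ is ``a pseudodifferential operator of nonpositive order (the two share a principal symbol).'' Agreement of principal symbols only gives order $\leq 0$; an order-$0$ operator on a circle has a distributional, non-integrable kernel on the diagonal, $\varkappa^{-1}(D_\Sigma-\varkappa)$ would then only have order $-1$, which is \emph{not} trace class in one dimension, and the two Gaussian measures need not even be mutually absolutely continuous --- so both the $L^2$ membership of $\overline{Z}_\Sigma$ and the measure comparison would fail. What the argument actually needs, and what the paper proves as Proposition \ref{prop:delta regularity} by matching the full symbols to subleading order ($p_{D_\Sigma}(x,\xi)=|\xi|+m^2/2|\xi|+\mathrm{O}(|\xi|^{-2})=p_\varkappa(x,\xi)$), is that $D_\Sigma-\varkappa$ has order at most $-2$, whence $\delta=\varkappa^{-1}(D_\Sigma-\varkappa)$ has order at most $-3$ and is trace class. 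With that input your outline does go through and is essentially the paper's measure-theoretic proof (the paper also gives an alternative purely perturbative proof via resummation of the Neumann series, which needs the extra hypothesis $\|\delta\|<1$); without it the central cancellation is unsupported. A smaller slip: it is $Z_\Sigma$, not $\widehat{Z}_\Sigma$, that fails to lie in the space of states --- the hatted object already has the singular boundary-boundary exponential removed (Remark \ref{rem: Z is not in H but Zhat is in H}).
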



\subsection{Tadpoles}
In various treatments of scalar theory, tadpole diagrams were set to zero (this corresponds to a particular renormalization scheme -- in flat space, this is tantamount to normal ordering, see e.g. \cite{GJ}, \cite{SimonB}).
However, in our framework this prescription 
contradicts locality in Atiyah-Segal sense, see Section \ref{sec: why tadpoles?}. 
One good solution is to prescribe to the tadpole diagrams the zeta-regularized diagonal value of the Green's function. 
We prove that 
assigning to a surface its zeta-regularized tadpole is compatible with locality, see  Proposition \ref{prop: tau reg locality}.
However there are other consistent prescriptions (for instance, the tadpole regularized via point-splitting and subtracting the singular term, see Section \ref{sec: point-splitting tadpole}). This turns out to be related to Wilson's idea of RG flow in the space of interaction potentials, see Section \ref{sec: RG flow}.

In the free theory, the zeta-regularized tapdole can be interpreted in terms of the trace of the (classical) stress-energy tensor.
Generally, the trace of the quantum stress-energy tensor 
$\big \langle \tr T_q(x) \big\rangle$ 
(the reaction of the partition function to an infinitesimal Weyl transform of the metric) differs from the expectation value of the classical stress-energy tensor by a ``trace anomaly'' -- an effect well-known in conformal field theory. In Appendix \ref{app: trace anomaly} we obtain an expression for the trace anomaly in the interacting massive scalar theory 
(Proposition \ref{prop: trace anomaly}):
$$ 
\big \langle \tr T_q(x) \big\rangle - \big\langle \tr T_\mr{cl}(x) \big\rangle = \frac{\hbar}{4\pi}\Big\langle \frac{K(x)}{6}-m^2-\frac{\dd^2}{\dd \phi^2}p(\phi) \Big\rangle
$$
with $K(x)$ the scalar curvature of $\Sigma$ at the point $x$. Brackets  $\langle\cdots\rangle$  
stand for the expectation value defined using the perturbative path integral, as a sum of connected graphs with a single marked vertex (except in 
$\big \langle \tr T_q(x) \big\rangle$ where brackets are a part of notation).
We also comment on how to compare the $m \to 0$ limits of trace anomaly and partition function with conformal field theory, see Remark \ref{rem:comparison to CFT}.

\subsection{Plan of the paper}
Let us briefly outline the plan of the paper. 
\begin{itemize}
\item In Section \ref{sec:CFT} we recall some facts on classical field theory and free massive scalar field theory.
\item In Section \ref{sec:pert_quant}, we define the perturbative quantization of scalar field theory on manifolds with boundary\footnote{Our approach is slightly different from the one in \cite{CMRQ}: 
we use the second-order formalism whereas \cite{CMRQ} works in the first order. Also, a technical point: we use a different extension of boundary fields into the bulk, for the splitting of fields as boundary fields plus fluctuations, -- we use the extension as a solution of equations of motion whereas \cite{CMRQ} uses a ``discontinuous extention.'' 
We plan to discuss the relation between the two approaches in a future publication.}. We use heuristics of path integrals to motivate our construction and then give rigorous definitions and proofs.
\item In Section \ref{sec:gluing}, we show how to heuristically derive gluing formulae for regularized determinants and Green's functions from formal Fubini theorems for path integrals. These gluing formulae have been proven in other contexts in the literature, and we briefly review these mathematical results.
\item In Section \ref{sec:tadpoles} we study the regularization of tadpole diagrams and how it interacts with gluing.
\item In Section \ref{sec:Fubini}, we state and prove Theorem \ref{thm:ASgluingIntro} above (in the general form). 
\item In Section \ref{sec:functor}, we promote our results to the functorial framework and state and prove Theorem \ref{thm:functor intro} above.  
\item In Appendix \ref{App: Examples}, we present a collection of explicit examples of zeta-regularized determinants and tadpole functions (and gluing thereof) and Dirichlet-to-Neumann operators.  
We prove that the Dirichlet-to-Neumann operator differs from the square root of the Helmholtz operator on the boundary by a pseudodifferential operator of order $\leq -2$ (a sharp bound) -- Proposition \ref{prop:delta regularity}. 
\item In Appendix \ref{app: trace anomaly} we discuss the relation of the tadpole and the stress-energy tensor, and obtain the trace anomaly.
\end{itemize}

\subsection{Related work}
The cutting and gluing of perturbative partition functions in the context of first-order gauge theories is discussed in \cite{CMRQ}. An example of a computation in the context of Chern-Simons theory was done in \cite{Wernli2018}, \cite{CMW1}. 
An approach to 
perturbative Chern-Simons invariants of 3-manifolds via cutting and gluing in the context 
of algebraic topology was discussed in \cite{KT}. Another approach to gluing of functional integrals - with a view towards supersymmetry -  is considered in \cite{D1},\cite{D2}. Gluing in the context of the perturbative path integral approach to quantum mechanics was considered in \cite{JF}.

\subsection{Acknowledgements}
We would like to thank 
Alberto S. Cattaneo, Liviu Nicolaescu, Nicolai Reshetikhin, Stephan Stolz and Donald Youmans for helpful remarks and discussions.

\section{
Preliminaries
}\label{sec:CFT}
\subsection{Classical field theory}
A classical (Lagrangian) field  theory on a compact oriented $d$-dimensional Riemannian manifold $\Sigma$, possibly with nonempty boundary, consists of:
\begin{enumerate}[(i)]
\item The space of fields $\FS$, which is the space sections of a vector bundle\footnote{
Or, more generally, a sheaf. However, the case relevant for this paper is the one of the trivial $\R$-bundle.}
over $\Sigma$. 
\item
A Lagrangian $\mathcal{L}_{\Sigma}$, which is a function on $\FS$ with values in the space of $d$-forms on $\Sigma$, i.e. $\mathcal{L}_{\Sigma}:\FS\to \Omega^d(\Sigma), \phi\mapsto \mathcal{L}_{\Sigma}(\phi)$ such that it is \emph{local} in the sense that it depends only on the fields and finitely many derivatives thereof. 
\end{enumerate}
Using the Lagrangian, we define the action functional $S_{\Sigma}:\FS\to\R, S_{\Sigma}(\phi)=\int_{\Sigma}\mathcal{L}_{\Sigma}(\phi).$ 

The variation $\delta S_{\Sigma}$ of the action functional $S_{\Sigma}$ has the so-called Euler-Lagrange term and a boundary term. The Euler-Lagrange term gives rise to the equation of motion and its solutions are called the classical solutions. Let $EL_{\Sigma}\subset \FS$ denote the space of the classical solutions. The boundary term induces a one-form on $EL_{\Sigma}$. 

Let $\textbf{Y}$ denote  a $(d-1)$-dimensional manifold $Y$ together with a one-sided $d$-dimensional collar around $Y$. Associated to $Y$ we have a space $\Phi_{Y}$ which consists of the solutions to the equation of motion on $\textbf{Y}$, and a one-form $\alpha_{Y}$ on $\Phi_{Y}$ that arises from the boundary contribution of the variation 
$\delta S_{\textbf{Y}}$ on $\mathcal{F}_{\textbf{Y}}$. We can identify $\Phi_Y$, at least when $\mathcal{L}_{\Sigma}$ is nice, with the Cauchy data space $C_{Y}$ which is the information on the fields and their derivatives along $Y$ so that the equation of motion has a unique solution. Let $\omega_{Y}=\delta\alpha_{Y}$, then, $\omega_{Y}$ defines a presymplectic structure on $C_{Y}$ and it is symplectic\footnote{$C_Y$ is typically infinite-dimensional in the setting of field theory, and nondegeneracy of symplectic structure is understood in the weak sense, i.e. the map $TM \to T^*M$ induced by $\omega$ is injective, but not necessarily surjective.} in nice situations. For instance, for the free scalar field theory $(C_Y, \omega_Y)$ is a symplectic vector space.  

Now assume that $\partial\Sigma =Y$, then we have a surjective submersion $\pi_{\Sigma}:\FS\to C_{Y}.$ Let ${L}_{\Sigma}=\pi_{\Sigma}(EL_{\Sigma}).$ Then,  ${L}_{\Sigma}$ is an isotropic submanifold of $C_{Y}$ and it will be a Lagrangian submanifold when $\mathcal{L}_{\Sigma}$ is nice. We assume that $L_{\Sigma}$ is Langangian for this discussion.

Hence, a 
classical 
field theory assigns to a compact oriented Riemannian $(d-1)$-dimensional manifold $Y$ (more precisely, $Y$ has a one-sided collar) a symplectic manifold $(C_{Y}, \omega_{Y})$. Moreover, if $\partial\Sigma=Y$, then $L_{\Sigma}$ is a Lagrangian submanifold of   $C_{Y}$. More generally if $\partial\Sigma= \overline{{\partial\Sigma^{\text{in}}}}\sqcup\partial\Sigma^{\text{out}}$, then $L_{\Sigma}$ is a Lagrangian submanifold of $\overline{C_{\partial\Sigma^{\text{in}}}}\times C_{\partial\Sigma^{\text{out}}}$ and it can be regarded as a relation which if often called a canonical relation \cite{Weinstein_Alan}. Here $\overline{\partial\Sigma^{\text{in}}}$ is used to emphasize that the intrinsic orientation on $\partial\Sigma^{\text{in}}$ is the opposite of the induced orientation from $\Sigma$ (whereas the bar in $\overline{C_{\partial\Sigma^{\text{in}}}}$ denotes the change of sign of the symplectic form). Furthermore, if $\Sigma=\Sigma_1\cup_{Y}\Sigma_2$, then $L_{\Sigma}=L_{\Sigma_1}\circ L_{\Sigma_2}$ where the $\circ$ means the composition of relations. We refer to  \cite{Weinstein_Alan} for a discussion of the symplectic category and 
\cite{CMRCQ}  
 for a more elaborate discussion and axiomatization of classical field theory in a more general setting.

\subsection{Free massive scalar theory} 
Here, we illustrate the discussion above using the free massive scalar field theory. Let $\Sigma$ be a compact oriented Riemannian manifold of dimension $d$ with $\partial\Sigma=Y$. Let $m$ be a positive real number. For the massive free scalar field theory on $\Sigma$, the space of fields is $\mathcal{F}_{\Sigma}=C^{\infty}(\Sigma)$ and the Lagrangian is given by
\begin{equation*}\mathcal{L}(\phi)=\dfrac{1}{2}(d\phi\wedge \ast d\phi+m^2\phi\wedge\ast \phi) \hskip1.5mm\text{and}\hskip1.5mm
        S(\phi)=\frac12\int_{\Sigma}d\phi\wedge \ast d\phi+m^2\phi\wedge\ast \phi.
        \end{equation*} Moreover, 
\begin{eqnarray*}\delta S=\int_{\Sigma}\delta\phi\wedge (d\ast d\phi+m^2\ast\phi)+\int_{\partial\Sigma}\delta\phi\wedge\ast d\phi 
\end{eqnarray*}
and  the equation of motion is 
\begin{eqnarray*}(\Delta_{\Sigma}+m^2)\phi=0
\end{eqnarray*} which is also known as \textit{Helmholtz} equation. Here, and throughout the paper, $\Delta_\Sigma$ is the Laplace-de Rham operator $ \Delta = dd^* + d^*d$ (where $d^*$ is the codifferential) restricted to 0-forms.\footnote{In particular, it has nonnegative spectrum, so coincides with \emph{minus} the usual Laplace operator on flat space.} The Cauchy data space $C_{Y}$ is given by $\Cn(Y)\oplus\Cn(Y)$, and $\pi_{\Sigma}:\FS\to C_{Y}$ is given by $\phi\mapsto \left(\iota_{Y}^*(\phi), \iota_{Y}^*\left(\dfrac{\partial\phi}{\partial\nu}\right)\right),$ where $\nu$ is the outward pointing unit normal vector field along $Y.$ Furthermore, $\alpha_{Y}=\int_{Y}\chi\delta\phi\, \dvol_Y$ and    the symplectic form $\omega_Y$ is given by $\omega_Y = \delta\alpha_Y = \int_Y \delta \chi \delta \phi\, \dvol_Y$. Here $\delta$ is understood as de Rham differential on $C^\infty(Y)$, this means that when evaluated on two vectors $(\phi_1,\chi_1)$ and $(\phi_2,\chi_2)$ the result is 
\begin{equation*}
\omega_Y((\phi_1,\chi_1)(\phi_2,\chi_2)) = \int_Y( \chi_1\phi_2 - \phi_1\chi_2)\,\dvol_Y.
\end{equation*} 
$L_{\Sigma}$ is the graph of the Dirichlet-to-Neumann operator $D_{\Sigma}$ on $Y$, which is defined as follows:

\begin{definition} Let $\eta\in\Cn(Y)$ and $\phi_{\eta}\in\Cn(\Sigma)$ be the solution of the Helmholtz equation on $\Sigma$ with  $\iota_{Y}^*(\phi_{\eta})=\eta$ given by Lemma \ref{BVP}. Then we define $D_\Sigma\colon C^\infty(Y) \to C^\infty(Y)$ by 
 \[D_{\Sigma}(\eta):=\iota_{Y}^*\left(\dfrac{\partial\phi_{\eta}}{\partial\nu}\right)
 .\]
 \end{definition} It is known that $D_{\Sigma}$ is symmetric from which it follows that $L_{\Sigma}$ is Lagrangian. Furthermore, if $\Sigma=\Sigma_1\cup_{Y}\Sigma_2$, one can verify that $L_{\Sigma}=L_{\Sigma_1}\circ L_{\Sigma_2}$ \cite{Santosh, CMRC}. 

\begin{remark} \label{rem:DtoNgluing} When $\Sigma=\Sigma_1\cup_{Y}\Sigma_2$, the Dirichlet-to-Neumann operator $D_{\Sigma_1,\Sigma_2}$ along $Y$ is defined as the sum of  normal derivatives, with respect to the induced orientations on $Y$, of solutions of Helmholtz equation on $\Sigma_1$ and $\Sigma_2$: 
\begin{equation}
D_{\Sigma_1,\Sigma_2} = D_{\Sigma_1} + D_{\Sigma_2}. \label{eq:DtoNgluing}
\end{equation}
\end{remark}

\subsection{Green's functions}\label{sec:Green}
Let $\Sigma$ be a closed oriented Riemannian manifold and $P$ be an elliptic differential operator on $\Sigma$ such that $P$ is invertible, then $P$ has a unique Green's function $G(x,y)$, see for example \cite{TayII}, Chapter 7: 

The PDE 
\begin{eqnarray*} P_{y}G(x, y)=\delta_{x}(y)
\end{eqnarray*} 
has unique distributional solution $G(x,y)$ and it is the integral kernel of $P^{-1}.$  Moreover, \[G(x,y)\in C^{\infty}\left(\Sigma\times\Sigma\setminus \mathrm{diag}\right).\]

More generally, if $\Sigma$ is a compact oriented manifold with boundary, then one can define Green's function by imposing boundary conditions. For example, for the Dirichlet boundary condition, we have the following definition.

\begin{definition}\label{Green's funcn} Let $\Sigma$ be a compact manifold with $\partial\Sigma\ne\varnothing$ and $P$ be an elliptic operator on $\Sigma.$ Then the boundary value problem 
\begin{eqnarray*} P_{y}G(x, y)=\delta_{x}(y)
\end{eqnarray*} with $G(x,y)=0$ on $\partial\Sigma$ has a unique distributional solution $G(x,y).$ We call such a $G(x,y)$ Green's function with Dirichlet boundary condition and denote it by $G_{\Sigma}^{D}(x,y).$
\end{definition} 

\begin{remark} Green's functions for other boundary conditions are defined similarly and Green's functions may not be unique for a general boundary condition.
\end{remark} 

In this paper, we are mostly interested in the Green's function of $\Delta_{\Sigma}+m^2$ where $\Delta_{\Sigma}$ is the nonnegative Laplacian on $\Sigma$. One well-known technique to construct Green's functions of an elliptic operator $P$ on a manifold with boundary is the method of images (see for example, \cite{GJ}) which we describe below. 

Let $\Sigma$ be a smooth compact oriented Riemannian manifold with boundary and $\partial\Sigma=\partial_1\Sigma\sqcup\partial_2\Sigma$ We want to construct a Green's function that satisfies the Dirichlet boundary condition on $\partial_1\Sigma$ and the Neumann boundary condition on $\partial_2\Sigma.$ The idea here is to use the ``doubling twice trick" (we took this name from \cite{CMRQ}): we first glue a copy of $\Sigma$ along $\partial_1\Sigma$ and denote the resulting manifold by $\Sigma'.$ Note that there is a canonical  isomorphism $S_1$ which is the reflection about $\partial_1\Sigma.$ Next, we glue $\Sigma'$ with itself to get a closed  Riemannian manifold $\Sigma''.$ Let $S_2$ denote the  reflection along the boundary of $\Sigma'.$ Since $\Sigma''$ is a closed manifold we have the Green's function $G''$ for $\Sigma''.$  We use $G''$  to define Green's function on $\Sigma$ with the desired properties. For this purpose, we define
\begin{eqnarray*} G(x,y)= G''(x,y)+G''(x,S_2(y))-G''(x,S_1y)-G''(x, S_2\circ S_1(y)).
\end{eqnarray*} Let us verify that $G(x,y)$ is indeed a desired Green's function.

\begin{lemma} G(x,y) is a Green's function for $P$ that satisfies Dirichlet boundary condition on $\partial_1\Sigma$ and the Neumann boundary condition on $\partial_2\Sigma.$
\begin{proof} Let $x\in \Sigma.$ Then $P_yG(x,y)=\delta_{x}(y)$ as $\delta_{x}(S(y))=0$ for $y\in \Sigma.$ By construction $G(x,y)$ satisfies the Dirichlet boundary condition on $\partial_{1}\Sigma$ and the Neumann boundary condition on $\partial_{2}\Sigma.$
\end{proof}
\end{lemma}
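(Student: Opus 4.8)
The plan is to exploit that $S_1$ and $S_2$ are \emph{isometric involutions} of the closed manifold $\Sigma''$, so that they commute with $P=\Delta_{\Sigma''}+m^2$ and the Green's function inherits the symmetry $G''(Sx,Sy)=G''(x,y)$ for every $S$ in the group generated by $S_1,S_2$. (This tacitly uses that the ``doubling twice'' construction produces a genuine smooth Riemannian structure on $\Sigma''$ for which the reflections are honest isometries; granting this, everything that follows is bookkeeping with reflection symmetries.) I would first record two elementary facts about a smooth isometric involution $S$ with fixed hypersurface $\mathrm{Fix}(S)$, along which $S$ reverses the unit normal $\nu$: an $S$-invariant function ($f\circ S=f$) has vanishing normal derivative along $\mathrm{Fix}(S)$, while an $S$-anti-invariant function ($f\circ S=-f$) vanishes along $\mathrm{Fix}(S)$. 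I would also note that the construction gives $S_1S_2=S_2S_1$, so the four images $\Sigma,\ S_1\Sigma,\ S_2\Sigma,\ S_2S_1\Sigma$ tile $\Sigma''$ with pairwise disjoint interiors, and that $S_1$ preserves $\mathrm{Fix}(S_2)$ (since $S_2(S_1p)=S_1S_2p=S_1p$ whenever $S_2p=p$) and carries its unit normal to $\pm$ the unit normal at the image point.

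The key reorganization is to fix $x\in\mathrm{int}(\Sigma)$ and set $F(z):=G''(x,z)+G''(x,S_2 z)$, which is $S_2$-invariant. A direct substitution then gives
$$G(x,y)=F(y)-F(S_1 y),$$
so that $G(x,\cdot)$ is $S_1$-anti-invariant, $G\circ S_1=-G$. For the defining PDE, I apply $P_y$ termwise: isometry-invariance yields $P_y\,G''(x,Sy)=\delta_{Sx}(y)$ for each involution $S$, so on all of $\Sigma''$
$$P_yG(x,y)=\delta_x(y)+\delta_{S_2x}(y)-\delta_{S_1x}(y)-\delta_{S_2S_1x}(y).$$
Restricting to $y\in\mathrm{int}(\Sigma)$, the disjointness of the four tiles gives $S_1x,\,S_2x,\,S_2S_1x\notin\mathrm{int}(\Sigma)$, so the last three deltas are zero there and $P_yG(x,y)=\delta_x(y)$, as required.

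The boundary conditions then fall out of the two symmetry facts. Dirichlet on $\partial_1\Sigma$ is immediate: since $\partial_1\Sigma\subseteq\mathrm{Fix}(S_1)$ and $G(x,\cdot)$ is $S_1$-anti-invariant, $G(x,y)=0$ there. For Neumann on $\partial_2\Sigma\subseteq\mathrm{Fix}(S_2)$ I treat the two pieces of $G=F-F\circ S_1$ separately: $\partial_\nu F=0$ on $\mathrm{Fix}(S_2)$ because $F$ is $S_2$-invariant; and $\partial_\nu(F\circ S_1)=0$ on $\partial_2\Sigma$ because $S_1$ maps $\mathrm{Fix}(S_2)$ into itself and sends $\nu$ to $\pm$ a unit normal of $\mathrm{Fix}(S_2)$, along which $\partial_\nu F$ vanishes again by $S_2$-invariance. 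Hence $\partial_\nu G=0$ on $\partial_2\Sigma$.

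I expect the main obstacle to lie not in this algebra but in the geometric and analytic setup it rests on: ensuring that $\Sigma''$ carries a smooth metric and that $S_1,S_2$ are genuine commuting isometries, so that $P$ commutes with them and $G''$ is reflection-symmetric. This is the step where one must assume product collars near $\partial_1\Sigma$ and $\partial_2\Sigma$ (as is standard for the method of images); once that is granted, the identities above hold classically because $x$ lies in the interior, which keeps every reflected source point off $\mathrm{int}(\Sigma)$ and off the diagonal, so the boundary traces and normal derivatives of $G''$ are well defined and smooth where we evaluate them.
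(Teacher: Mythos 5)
Your proof is correct and follows the same route as the paper's (very terse) argument: the reflected delta sources lie off the interior of $\Sigma$, so $P_yG(x,y)=\delta_x(y)$ there, and the boundary conditions follow from the $S_1$-anti-invariance and $S_2$-invariance built into the image-charge combination. You have simply made explicit the symmetry bookkeeping that the paper compresses into ``by construction.''
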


\begin{example} Consider $\Sigma$ to be the first quadrant in the $\R^2$ and $P=\Delta$. We want a Green's function that satisfies the Dirichlet boundary condition on $y=0$ and the Neumann boundary condition on $x=0.$ In this case $\Sigma''$ is the whole $\R^2.$ We know that $-\frac{1}{2\pi} \log((d(x,y),(\alpha,\beta))$ is the Green's function on $\R^2.$ Thus, in this case, 
\begin{eqnarray*} G((x,y),(\alpha,\beta))=-\frac{1}{2\pi} \log((d((x,y),(\alpha,\beta)))-\frac{1}{2\pi} \log((d(x,y),(-\alpha,\beta))\\
+\frac{1}{2\pi} \log((d(x,y),(\alpha,-\beta))+\frac{1}{2\pi} \log((d(x,y),(-\alpha,-\beta))
\end{eqnarray*}
One can easily check $G$ has the desired properties. 
\end{example}

From now onward a Green's function always refers to a Green's function associated to $\Delta+m^2$. The following fact about Green's functions on a compact oriented Riemannian manifold $\Sigma$ is well known, see for example \cite{TayII}, Chapter 7; the item (\ref{item: Green fun asymptotics}) follows from the expansion of the Green's function near the diagonal (see for example \cite{Ludewig2017}).
\begin{lemma}\label{DGreen's function} 
\begin{enumerate}[(i)]
\item The Green's function $G_{\Sigma}^D(x,y)$, satisfying Dirichlet boundary condition, is symmetric and it defines a positive bounded operator on $L^2(\Sigma).$
\item \label{item: Green fun asymptotics} In the case $\dim\Sigma=2$, in a neighborhood of the diagonal $\Delta(\Sigma)$ in $\Sigma\times\Sigma$, away from the boundary, we have
\[G_{\Sigma}^D(x,y)=
                    -\frac{1}{2\pi}\log(d(x,y))+H(x,y)   
,\] where 
$H$ is in  
$C^1$.
\end{enumerate}
 
\end{lemma}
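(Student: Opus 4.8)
The plan is to treat the two parts separately: the first by functional analysis on $L^2(\Sigma)$, the second by a local parametrix analysis near the diagonal.

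For part (i), I would first realize $\Delta+m^2$ with Dirichlet boundary conditions as a self-adjoint operator on $L^2(\Sigma)$. The natural route is via the closed, symmetric, coercive quadratic form $q(\phi)=\int_\Sigma(|d\phi|^2+m^2\phi^2)\,\dvol_\Sigma$ with form domain $H^1_0(\Sigma)$; its Friedrichs extension is self-adjoint with domain $H^2(\Sigma)\cap H^1_0(\Sigma)$. Since $\Delta\geq 0$ and $m>0$, coercivity gives $q(\phi)\geq m^2\|\phi\|^2$, so the operator is bounded below by $m^2>0$, hence invertible with $\|(\Delta+m^2)^{-1}\|\leq m^{-2}$; in particular $(\Delta+m^2)^{-1}$ is bounded. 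By the spectral theorem the spectrum of the inverse lies in $(0,m^{-2}]$, so it is a positive operator. Finally, $G_\Sigma^D$ is by definition the integral kernel of this inverse, and since the operator is self-adjoint with real coefficients, its kernel is symmetric, $G_\Sigma^D(x,y)=G_\Sigma^D(y,x)$. This establishes (i).

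For part (ii), the claim is local near the diagonal and away from $\partial\Sigma$, so I would first remove the boundary. Fix an interior point $p$ and a small geodesic ball $B$ around it, disjoint from $\partial\Sigma$. On $B$ the difference between $G_\Sigma^D$ and a Green's function built from an interior parametrix solves the homogeneous equation $(\Delta+m^2)u=0$ (the $\delta$-singularity and the parametrix error cancel), with the boundary influence entering only through smooth data; by elliptic regularity this difference is smooth on a slightly smaller ball. Hence the singular part of $G_\Sigma^D$ near the diagonal coincides, modulo smooth terms, with that of a local Hadamard parametrix for $\Delta+m^2$. I would then invoke the two-dimensional parametrix expansion: for a second-order elliptic operator with the same principal symbol as the nonnegative Laplacian, the parametrix has the form $-\tfrac{1}{2\pi}\log d(x,y)\,U(x,y)+W(x,y)$, with $U,W$ smooth, $U(x,x)=1$, and $U-1=O(d(x,y)^2)$. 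Because the principal symbol of $\Delta+m^2$ agrees with that of the flat Laplacian, the leading log coefficient is exactly $-\tfrac{1}{2\pi}$, matching the claimed leading term; this is the content cited from \cite{Ludewig2017}. Setting $H=G_\Sigma^D+\tfrac{1}{2\pi}\log d$, I obtain $H=-\tfrac{1}{2\pi}\log d\,(U-1)+W+(\text{smooth})$.

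The final and most delicate step is the regularity of $H$. Since $U-1$ vanishes to second order along the diagonal, the term $\log d\,(U-1)$ behaves like $d^2\log d$, whose first derivatives ($\sim d\log d$ and $d$) extend continuously by zero to the diagonal, while its second derivatives ($\sim\log d$) blow up. Thus $H\in C^1$ but not $C^2$, exactly the stated regularity; one can see the same phenomenon already in the flat model, where $G=\tfrac{1}{2\pi}K_0(m d)$ and the Bessel expansion $K_0(z)=-\log(z/2)I_0(z)+(\text{even entire})$ produces precisely the $d^2\log d$ correction. I expect the main obstacle to lie not in the leading singularity, which is fixed by the principal symbol, but in the bookkeeping showing that the first subleading Hadamard term is of type $d^2\log d$ and hence exactly $C^1$; this is intrinsic to even dimension and to the mass and curvature corrections, and is what caps the regularity of $H$ at $C^1$.
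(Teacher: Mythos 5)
The paper does not actually prove this lemma: it is quoted as a known fact, with Taylor (Chapter 7) cited for the functional-analytic part and Ludewig's near-diagonal expansion cited for part (ii). Your argument is a correct reconstruction of precisely what those references supply — the Friedrichs extension of the coercive form $q(\phi)=\int_\Sigma |d\phi|^2+m^2\phi^2$ giving a self-adjoint operator bounded below by $m^2$, hence a bounded positive inverse with symmetric kernel, and the two-dimensional Hadamard parametrix pinning the leading singularity to $-\frac{1}{2\pi}\log d(x,y)$ because the principal symbol agrees with the flat Laplacian. Two points deserve care if this were written out in full. First, ``by elliptic regularity this difference is smooth'' should mean smooth \emph{jointly} in $(x,y)$, not merely in $y$ for each fixed $x$; joint regularity comes from the parametrix construction itself (the error of an $N$-th order parametrix is $C^N$ on $B\times B$), not from pointwise elliptic regularity in one variable. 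Second, the assertion $U-1=O(d(x,y)^2)$ is exactly the statement that the square root of the van Vleck--Morette determinant has no linear term in its normal-coordinate expansion; this is standard, but it is the precise input that yields $H\in C^1$ (via $d^2\log d$) rather than merely $H\in C^0$, so it should be cited or verified — your flat-model check with $K_0(md)$ is the right sanity test. With those caveats your proof is complete and is the argument the paper's citations intend.
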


We can use Green's function on a manifold with boundary to construct solutions to boundary value problems \cite{F2}:
\begin{lemma}\label{BVP} Let $\eta\in \Cn(\partial\Sigma).$ Define $\phi^\Sigma_{\eta}$ on $\Sigma$ by 
\[\phi^\Sigma_{\eta}(x)= -
\int_{\partial\Sigma}\frac{\partial G_{\Sigma}^{D}(x,y)}{\partial\nu}\eta(y)\,dy,\] then $\phi^\Sigma_{\eta}$ is the unique solution of the Dirichlet boundary value problem for $\Delta_{\Sigma}+m^2$ on $\Sigma$ with boundary value $\eta.$ Moreover, $\phi_{\eta}$ is smooth on $\Sigma$. 
\end{lemma}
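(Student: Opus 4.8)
The plan is to prove the three assertions---solvability, uniqueness, and the integral representation---in that order, \emph{deducing} that the boundary value $\eta$ is attained from uniqueness rather than from a direct boundary analysis of the formula. Since $m>0$, the Dirichlet realization of $\Delta_\Sigma+m^2$ is a positive, hence invertible, self-adjoint operator. Given $\eta\in\Cn(\partial\Sigma)$ I would extend it to some $E\eta\in\Cn(\Sigma)$ and solve $(\Delta_\Sigma+m^2)\psi=-(\Delta_\Sigma+m^2)E\eta$ with homogeneous Dirichlet data (possible by invertibility); then $\phi:=\psi+E\eta$ satisfies $(\Delta_\Sigma+m^2)\phi=0$ with $\phi|_{\partial\Sigma}=\eta$, and elliptic regularity up to the boundary, using smoothness of $\partial\Sigma$ and of $\eta$, gives $\phi\in\Cn(\Sigma)$. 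Uniqueness is the standard energy estimate: if $w$ solves the Helmholtz equation with $w|_{\partial\Sigma}=0$, then integrating by parts (the boundary term drops since $w$ vanishes on $\partial\Sigma$) yields $\int_\Sigma\big(|d w|^2+m^2 w^2\big)\,\dvol_\Sigma=\int_\Sigma w\,(\Delta_\Sigma+m^2)w\,\dvol_\Sigma=0$, which forces $w=0$ as $m>0$.

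It then remains to identify this unique smooth solution with $\phi^\Sigma_\eta$. Fixing an interior point $x$, I would apply Green's second identity to $u=\phi$ and $v=G_\Sigma^D(x,\cdot)$. The key cancellation is that in the combination $\phi\,\Delta_y G-G\,\Delta_y\phi$ the $m^2$-terms cancel, so using $(\Delta_y+m^2)G_\Sigma^D(x,y)=\delta_x(y)$ together with $(\Delta_\Sigma+m^2)\phi=0$, the bulk integral collapses to $\phi(x)$, while the boundary integral reduces---thanks to $G_\Sigma^D(x,y)=0$ for $y\in\partial\Sigma$---to $-\int_{\partial\Sigma}\frac{\partial G_\Sigma^D(x,y)}{\partial\nu}\,\eta(y)\,dy$, the overall sign matching the convention that $\Delta_\Sigma$ is nonnegative. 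To handle the $-\frac{1}{2\pi}\log d(x,y)$ singularity of $G_\Sigma^D$ on the diagonal recorded in Lemma \ref{DGreen's function}, I would excise a small ball $B_\varepsilon(x)$, apply the identity on $\Sigma\setminus B_\varepsilon(x)$, and let $\varepsilon\to 0$; the contribution of the small sphere converges to $\phi(x)$ exactly as in the closed-manifold computation. This gives $\phi=\phi^\Sigma_\eta$, establishing simultaneously that the integral formula solves the problem, is the unique solution, and is smooth.

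The main obstacle---and the reason I would route through abstract solvability rather than verifying the formula head-on---is the boundary behaviour of the kernel $-\partial_{\nu_y}G_\Sigma^D(x,y)$. A direct proof that the formula attains its boundary value, i.e.\ that $\phi^\Sigma_\eta(x)\to\eta(x_0)$ as $x\to x_0\in\partial\Sigma$, would require the jump relations for the double-layer (Poisson) kernel, namely showing that $-\partial_{\nu_y}G_\Sigma^D(x,y)$ acts as an approximate identity on $\partial\Sigma$ as $x$ approaches the boundary. This local analysis near $\partial\Sigma$ is the genuinely delicate point; deducing the boundary condition from uniqueness instead sidesteps it entirely.
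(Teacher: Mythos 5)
Your proof is correct and complete. The paper itself offers no argument for this lemma --- it is stated as a known fact with a citation to the reference [F2] --- so there is no in-paper proof to compare against; your route (abstract solvability of the Dirichlet problem plus the energy-estimate uniqueness, followed by Green's second identity with the ball-excision to identify the solution with the integral formula) is the standard one, and your observation that the boundary attainment of the double-layer formula is obtained for free from uniqueness, rather than from jump relations, is exactly the right way to sidestep the only delicate point.
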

We will sometimes drop the subscript $\Sigma$ if it is clear from the context. We will also drop $D$ from $G_{\Sigma}^D$ because, in this paper, we consider the Green's function either on a closed manifold $\Sigma$ or with respect to the Dirichlet boundary condition when $\partial\Sigma$ is non-empty.

\section{Perturbative Quantization}\label{sec:pert_quant}

In this section we consider the perturbative quantization of scalar field theory with a potential $p \in  \RR[[\phi]]    
$ 
- i.e. the  evaluation of the partition function by formally applying the method of steepest descent. As usual, terms in the resulting power series are labeled by Feynman graphs that are evaluated according to Feynman rules. The result is a functional on the boundary fields (more precisely, on the leaf space of a polarization on $C_Y$). 


\subsection{Formal Gaussian integrals and moments}\label{sec:formalints}
The path integrals appearing in this paper are all integrals over vector spaces, and can be reduced to expressions of the form 
\begin{equation*}
\int_{\phi \in C^{\infty}(\Sigma,\partial \Sigma)} e^{-\frac{1}{2\hbar} (\phi, A \phi)}\phi(x_1)\cdots \phi(x_n)D\phi,
\end{equation*}
where $\Sigma$ is a Riemannian manifold, $C^{\infty}(\Sigma,\partial \Sigma)$ denotes smooth functions which vanish on the boundary, $A\colon C^{\infty}(\Sigma,\partial \Sigma)\to C^{\infty}(\Sigma,\partial \Sigma)$ is a linear operator,  $(\phi,\psi) = \int_\Sigma \phi \psi\,\dvol_\Sigma$, and $\hbar$ is a formal parameter. One way to define these integrals is just to simply postulate the rules for finite-dimensional Gaussian moments in infinite dimensions. We will very briefly review this idea, as it is essential to the paper. Details can be found in many places, for instance \cite{Polyak2005},\cite{Reshetikhin2010},\cite{Mnev2019}.
 For $n=0$, we want to define the ``formal Gaussian integral''
\begin{equation}
\int_{\phi \in C^{\infty}{(\Sigma,\partial \Sigma)}} e^{-\frac{1}{2\hbar} (\phi, A \phi)}D\phi := \frac{1}{(\det A)^\frac12} \label{eq:InfGaussian}
\end{equation}
Heuristically this defines a certain normalization of the path integral measure (absorbing an infinite power of $2\pi\hbar$). However, since $A$ is an operator on an infinite-dimensional space, we need to be careful about the determinant. In this paper, 
following \cite{hawking}, we will use the zeta-regularization.  \\ 
The values of Gaussian moments can be elegantly given using the notion of perfect matchings:
\begin{definition} If $S$ is a set, a \emph{perfect matching} on $S$ is a collection $\mathfrak{m}$ of disjoint two-element subsets of $S$ such that $\bigcup \mathfrak{m} = S$.  The set of perfect matchings on $\{1,\ldots,n\}$ is denoted $\mathfrak{M}_n$.
\end{definition}

For instance, $\mathfrak{m} = \{\{1,3\},\{2,4\}\}$ is a perfect matching on $S=\{1,2,3,4\}$. 
Again, simply extending the finite-dimensional result to infinite dimensions yields the following definition:
\begin{equation}\int_{\phi \in C^{\infty}{(\Sigma,\partial \Sigma)}} e^{-\frac{1}{2\hbar} (\phi, A \phi)}\phi(x_1)\cdots \phi(x_n)D\phi := \frac{\hbar^n}{(\det A)^\frac12}\sum_{\mathfrak{m} \in \mathfrak{M}_n} \prod_{\{i,j\}\in\mathfrak{m}}A^{-1}(x_i,x_j)\label{eq:InfGaussianMom}
\end{equation}
where $A^{-1}$ is the integral kernel of the inverse of $A$. This definition works fine as long as $x_i \neq x_j$ for all $i\neq j$, but if two $x_i$'s coincide, we run into trouble because  $A^{-1}$ is typically singular on the diagonal. In this section, we will resolve this issue by \emph{normal ordering}, which has the effect of neglecting any terms containing $A^{-1}(x_i,x_i)$. However, for the purpose of gluing, we will need to resort to another mechanism explained in Section \ref{sec:tadpoles}. \\
A standard combinatorial argument, for which we again refer to the literature (e.g. the references above), then shows that one can conveniently label all terms in integrals such as 
\begin{equation*}
\int_{C^\infty(\Sigma,\partial \Sigma)} e^{-\frac{1}{2\hbar} (\phi, A \phi) - \frac{1}{\hbar}p(\phi)},
\end{equation*}
where $p$ is a polynomial, by graphs. These graphs are called Feynman graphs and the rules to evaluate them are called Feynman rules. Below, we will define the path integrals in question through these graphs and rules.

\subsection{The path integral on a manifold with boundary}
\subsubsection{The general picture}
Let $\Sigma$ be a compact oriented Riemannian manifold with $\partial\Sigma=Y$. We may have $Y=\varnothing.$ 
{  
Recall from Section \ref{sec:CFT} that we then have a presymplectic manifold $(C_Y, \omega_Y)$. Let us assume it is symplectic. 
}For the quantization we need some extra data, namely a polarization $P_{Y}$ of $C_{Y}$. We assume that $P_{Y}$ is such that the space of leaves $B_{Y}$ of the associated foliation is a smooth manifold. Let $\proj_{Y}:C_{Y}\to B_{Y}$ be the quotient map. If $\pi_{\Sigma}^{-1}(\proj_{Y}^{-1}(\eta))\cap EL_{\Sigma}$ is a finite set, then the  formal expression
\begin{eqnarray}
\begin{array}{lll} Z_{\Sigma}(\eta)=\int_{\pi_{\Sigma}^{-1}(\proj_{Y}^{-1}(\eta))} e^{-\frac{S(\phi)}{\hbar}}\, D\phi 
\end{array}
\label{pertdef}
\end{eqnarray}
can be defined using the formal version of the method of steepest descent.  \\

\subsubsection{Free scalar theory}
 Let us again consider our main example, the free massive scalar field theory defined by the action 
$$ S(\phi)=\frac12\int_\Sigma  d\phi\wedge *d\phi +m^2\phi \wedge *\phi $$ 
     We recall that 
\begin{align*} C_Y &= C^{\infty}(Y) \oplus C^\infty(Y)\ni(\eta,\psi), \\ \pi_\Sigma(\phi) &= \Big(\iota_Y^*(\phi),
 \iota^*_Y \Big(\frac{\partial \phi}{\partial \nu}\Big)\Big), \\\omega_Y &= \int_Y \delta\eta\delta\psi. 
 \end{align*} 
In a symplectic vector space, a nice class of polarizations is given by Lagrangian subspaces. In particular, the splitting $C_Y = C^{\infty}(Y) \oplus C^\infty(Y)$ is Lagrangian, so that there are two obvious polarizations on $C_Y$. 
We will choose the polarization for which $q_Y$ is the projection on the first component. Thus, for $\eta \in C^\infty(Y)$, we have $$\pi_\Sigma^{-1}(q^{-1}_Y(\eta)) = \{\phi \in C^\infty(\Sigma),\iota_Y^*(\phi) = \eta\}$$ and $$\proj_{Y}^{-1}(\pi_{\Sigma}^{-1}(\eta))\cap EL_{\Sigma}=\{\phi_{\eta}\}$$ where $\phi_{\eta}$ is the unique solution of the Dirichlet problem for $\Delta_{\Sigma}+m^2$ with boundary value $\eta$. The assignment $s\colon\eta\mapsto \phi_{\eta}$,  defines a section of the short exact sequence of vector spaces
$$\begin{tikzcd}
{C^\infty(\Sigma,Y)} \arrow[r, hook] & C^\infty(\Sigma) \arrow[r, "q_Y \circ \pi_\Sigma", two heads] & C^\infty(Y) \arrow[l, "s", bend left]
\end{tikzcd} $$ Hence, we can write $\phi=\hat{\phi}+\phi_{\eta}$ where $\hat{\phi}$ vanishes on $Y.$ Moreover, $S(\phi)=S(\hat{\phi})+S(\phi_{\eta}).$ We can then rewrite\footnote{Assuming translation invariance of the functional measure $D\phi$.} the formal expression (\ref{pertdef}) as follows: 
$$\int_{\pi_{\Sigma}^{-1}(\proj_{Y}^{-1}(\eta))} e^{-\frac{S(\phi)}{\hbar}}\,D\phi = e^{-\frac{S(\phi_\eta)}{\hbar}} \int_{\pi_{\Sigma}^{-1}(\proj_{Y}^{-1}(0))}  e^{-\frac{S(\hat{\phi})}{\hbar}}\,D\hat{\phi}$$ 
The latter expression is, formally, a Gaussian integral over the vector space $C^{\infty}_0(\Sigma)$ functions which vanish on the boundary of $\Sigma$. Thus it makes sense to define it, analogously to the finite-dimensional case, as 
\begin{eqnarray}
\label{eq:partition_function_free_theory_with_boundary} Z_{\Sigma}(\eta)=\int_{\pi_{\Sigma}^{-1}(\proj_{Y}^{-1}(\eta))} e^{-\frac{S(\phi)}{\hbar}}\,D\phi:=(\det (\Delta_{\Sigma}^{}+m^2))^{-\frac12}e^{-\frac{S(\phi_{\eta})}{\hbar}}
\end{eqnarray}
where, 
$S(\phi_{\eta})$ is given by 
\begin{eqnarray}\label{eq:boundary_action}S(\phi_{\eta})=\frac12\int_{\partial\Sigma}\eta D_{\Sigma}\eta\, \dvol_{\partial\Sigma}
\end{eqnarray} and the Dirichlet boundary condition is used for the zeta-regularized determinant. In this paper, unless stated otherwise, the zeta-regularized determinant will be taken with respect to the Dirichlet boundary condition when the boundary is present.  

Later on, we will be interested in a decomposition of the boundary into two components, $\partial\Sigma = \partial_L\Sigma \sqcup \partial_R\Sigma$ (these components are allowed to be empty or disconnected). Then $\eta = \eta_L + \eta_R$, where $\eta_i$ is supported on $\partial_i\Sigma$. Since the Dirichlet-to-Neumann operator is symmetric, we can rewrite  
\begin{equation*}
S(\phi_\eta) = S(\phi_{\eta_L}) + S(\phi_{\eta_R}) + \int_{\partial \Sigma_L}\eta_LD_\Sigma \eta_R\,\dvol_{\partial\Sigma_L},
\end{equation*}
and the latter term can be expanded\footnote{To condense the notation, we let $\int_Yf(y)dy := \int_Yf(y) \,\dvol_Y(y)$.} as 
\begin{align*}
\int_{\partial \Sigma_L}\eta_LD_\Sigma \eta_R \, \dvol_{\partial\Sigma_L} &= 
-
\int_{(y,y')\in\partial \Sigma_L \times \partial\Sigma_R}\frac{\partial}{\partial\nu(x)}\frac{\partial G_{\Sigma}^{}(x,y)}{\partial\nu(y)}\eta_L(x)\eta_R(y)dydy' \\
&=: S_{L,R}(\eta_L,\eta_R) 
\end{align*}
Thus, the partition function can be expanded as 
\begin{align}
Z_{\Sigma}(\eta_L,\eta_R)&=\int_{\pi_{\Sigma}^{-1}(\proj_{Y}^{-1}(\eta))} e^{-\frac{S(\phi)}{\hbar}}\,D\phi \notag \\
&:=(\det (\Delta_{\Sigma}^{}+m^2))^{-\frac12}e^{-\frac{S(\phi_{\eta_L})}{\hbar}}e^{-\frac{S(\phi_{\eta_R})}{\hbar}}e^{-\frac{S_{L,R}(\eta_L,\eta_R)}{\hbar}}.\label{eq:Zfree} 
\end{align} 

\subsubsection{Interacting theory}
From now on, unless stated otherwise, $\Sigma$ is assumed to be a \emph{two-dimensional} Riemannian manifold.

Let  
$$p(\phi) = \sum_{k\geq 0} \frac{p_k}{k!}\phi^k$$ 
be a formal power series. 
We are interested in the interacting massive scalar field theory where the Lagrangian  has the form 
\begin{equation*}L(\phi)=\dfrac{1}{2}(d\phi\wedge\ast d\phi+m^2\phi\wedge\ast\phi)+\ast p(\phi)
\end{equation*} so that the action functional is $S=S_0+S_{\text{int}}$ with
\begin{equation*}S_0=\frac12\int_{\Sigma}d\phi\wedge\ast 
d\phi+m^2\phi\wedge\ast\phi \hskip1.5mm\text{and}\hskip1.5mm S_{\text{int}}=\int_{\Sigma}\ast p(\phi).
\end{equation*}
We will consider this theory as a perturbation of the free theory, in order to define the perturbative partition function 
\begin{equation*}
Z_{\Sigma}(\eta,\hbar)=\int_{\pi_{\Sigma}^{-1}(\proj_{Y}^{-1}(\eta))} e^{-\frac{S_0(\phi)+ S_{\text{int}}(\phi)}{\hbar}}\, D\phi
\end{equation*}

Let $\eta\in \Cn(Y)$. The assignment $\eta\mapsto \phi_{\eta}$, where $\phi_{\eta}$ is the unique solution to the Dirichlet boundary value problem with boundary value $\eta$, defines a section of $q_{Y}\circ\pi_{\Sigma}:\FS\to B_{Y}.$ Hence, we can write $\phi=\hat{\phi}+\phi_{\eta}$ where $\hat{\phi}$ vanishes on $Y.$ Moreover, $S_0(\phi)=S_0(\hat{\phi})+S_0(\phi_{\eta}).$ Now, we can write 
\begin{equation}Z_{\Sigma}(\eta,\hbar)=e^{-\frac{S_0(\phi_{\eta})}{\hbar}}\int_{\pi_{\Sigma}^{-1}(\proj_{Y}^{-1}(0))} e^{-\frac{1}{\hbar}(S_0(\hat{\phi})+ S_{\text{int}}(\hat{\phi}+\phi_{\eta}))}\, D\hat{\phi}. \label{eq:formalpathint}
\end{equation}
The integral on the right hand side is again a formal path integral, and as such we would like to define as a formal perturbed Gaussian integral as explained above.     

There is a subtlety here: one would prefer the partition function to be a formal power series in $\hbar$, i.e. it should not contain negative powers of $\hbar$.\footnote{
For instance, because a product of two formal Laurent series with infinitely many negative powers is generally ill-defined. 
} In the closed case, it is enough to assume $p_0 = p_1 = 0$ to achieve this. In the presence of boundary however, it is necessary to express the partition function instead in terms of the rescaled boundary field 
\begin{equation*}
\tilde{\eta}= \hbar^{-1/2}\eta \quad  \Leftrightarrow \quad  \eta = \hbar^{1/2}\tilde{\eta}.
\end{equation*}
In terms of $\tilde{\eta}$, \eqref{eq:formalpathint} reads
\begin{equation}Z_{\Sigma}(\tilde{\eta},\hbar)=e^{-S_0(\phi_{\tilde{\eta}})}\int_{\pi_{\Sigma}^{-1}(\proj_{Y}^{-1}(0))} e^{-\frac{1}{\hbar}(S_0(\hat{\phi})+ S_{\text{int}}(\hat{\phi}+\sqrt{\hbar}\phi_{\tilde{\eta}})}\, D\hat{\phi}. \label{eq:formalpathint2}
\end{equation}
For the rest of the paper, we will work with the rescaled boundary field $\tilde{\eta}$, and we will treat it as an element of $C^\infty(\partial\Sigma)$. Unless otherwise stated, we will assume $p_0 = p_1 = p_2 = 0$.\footnote{
This condition guarantees that there are finitely many Feynman diagrams contributing to each order in $\hbar$. We can also relax this assumption and allow nonzero $p_0,p_1,p_2$, see the discussion in Section \ref{sec:low valence vertices}. 

%
}

\subsubsection{The space of boundary states - a perturbative model} 
  Heuristically, the space of boundary states should be the space of square integrable functions 
  on the space of leaves of the polarization: $H_Y = L^2(B_Y) = L^2(C^\infty(Y))$. Of course, in the field theory setting, one has to be very careful about the measure used to define these $L^2$ spaces, and in many cases it is convenient to drop the measure theory altogether and work with a different model for the space of states. 
Here we will present (Definitions \ref{def:bdry_space}, \ref{def: H completion DN}) a natural model in the context of perturbative quantization, in which a gluing formula can be formulated. 
In Section \ref{sec:functor}, we will revisit it from a more measure-theoretic perspective, and will introduce another pairing of states leading to a functorial interpretation of the gluing formula. 

Let $C^\circ_n(Y)$ denote the open configuration space of $n$ points in $Y$: 
$$C^\circ_n(Y) = \{(y_1,\ldots,y_n)\in Y^n|y_i\neq y_j, \forall i \neq j\}.$$

To introduce the model for the space of states, we will need the following auxiliary definition (which is motivated by the properties of Feynman graphs, see Proposition \ref{prop: Feynman singularities} below).
\begin{definition}\label{def: adm singularities}
We say that a smooth function $f(y_1,\ldots,y_n)$ on the open configuration space $C^\circ_n(Y)$ has \emph{admissible singularities on diagonals} if:
\begin{enumerate}[(a)]
\item $f$ has (at most) logarithmic singularity when two points collide: 
$$f =  
 \mathrm{O}(\log d(y_i,y_j)) $$
 as $y_i\ra y_j$.
\item Assume that a set of $k\geq 3$ points $y_{i_1},\ldots,y_{i_k}$ coalesce, so that pairwise distances satisfy $\epsilon< d(y_{i_r},y_{i_s})<C\epsilon$  (with $C$ a constant). Then 
$$f=\mathrm{O}\Big(\frac{1}{\epsilon^{k-2}}\Big)$$ 
as $\epsilon\ra 0$.
\item Behavior near a general diagonal: assume that several disjoint subsets  $S_1,\ldots, S_p \subset \{1,\ldots,n\}$ of points coalesce at $p$ different points on $Y$, with each coalescing cloud of points at pairwise distances of order $\epsilon$. Then 
$$ f = \mathrm{O}\Big(\prod_{j=1}^p g_{|S_j|}(\epsilon)\Big) $$
as $\epsilon\ra 0$, 
where 
$$ 
g_k(\epsilon) = 
\begin{cases}
 \log\epsilon, \;\; k=2\\
\frac{1}{\epsilon^{k-2}},\;\; k\geq 3
\end{cases}
$$
\end{enumerate}
We will denote the space of smooth functions on $C^\circ_n(Y)$ with admissible singularities on diagonals allowed as $\funlog(C^\circ_n(Y))$.
\end{definition}
We remark that all functions in $\funlog(C^\circ_n(Y))$ are integrable. However, they are generally not in $L^p$ for $p\geq 2$ due to the singularities arising at a collapse of $\geq 3$ points.

The perturbative model for the space of boundary states is constructed in two steps: first we introduce the ``pre-space of states'' and then (Definition \ref{def: H completion DN}) we will introduce the space of states proper as its appropriate completion.

\begin{definition}\label{def:bdry_space}
Let $Y$ be a closed 1-dimensional manifold. 
For $n \in \mathbb{N}, n \geq 1$, we define $H^{(n)}_Y$ to be the space of    functionals $\Psi \colon C^{\infty}(Y) \to \R$ of the form 
\begin{equation}
\Psi(\tilde{\eta}) = \int_{C^\circ_n(Y)}\psi(y_1,\dots,y_n)\tilde{\eta}(y_1) \cdots \tilde{\eta}(y_n)\, dy_1\ldots dy_n,\label{eq:defalpha}
\end{equation}
 where $\psi$ is the ``wave function'' of the state $\Psi$ and it is a smooth, 
symmetric function on $C^\circ_n(Y)$ with admissible singularities (in the sense of Definition \ref{def: adm singularities}) allowed on diagonals.
 We say that $\psi$ represents $\Psi$. We are allowing the wave function $\psi$ to take values in formal power series $\RR[[\hbar^{1/2}]]$.
Moreover, we define $H^{(0)} = \mathbb{R}[[\hbar^{1/2}]]$. Finally, we define the pre-space of states\footnote{
This model for the pre-space of states is twice larger than what we need, in the following sense. The space $H^\mr{pre}_Y$ is $\frac12\mathbb{Z}\times  \mathbb{Z}$-bigraded by the $\hbar$-degree and polynomial degree in $\tilde\eta$, $d_1$ and $d_2$ respectively. The perturbative state induced from a surface on the boundary always has only monomials satisfying the ``selection rule'' $d_1-\frac{d_2}{2}\in \mathbb{Z}$.
}  
as 
$$H^\mr{pre} = \bigoplus_{n\geq0} H^{(n)}.$$ 

\end{definition} 
In particular, the vector space associated to the empty manifold $Y = \varnothing$ is just $\mathbb{R}[[\hbar^{1/2}]]$. 




\begin{remark}\label{rem: DN kernel singularity}
Notice that\footnote{One way to prove Eq. \eqref{eq:reg second derivative integral} is as follows. In $S_0[\phi_{\til{\eta}} ]= \frac12 \int_\Sigma d\phi_{\til{\eta}} * d\phi_{\til{\eta}}  +  m^2\phi_{\til{\eta}} * \phi_{\til{\eta}}   = \frac12 \int_\Sigma d(\phi_{\til\eta}*d\phi_{\til\eta})$ use the expansion $\phi_{\eta}(x)=-
\int_{\partial\Sigma}\frac{\partial G_{\Sigma}(x,y')}{\partial\nu(y)}\eta(y')\,dy'$ for the second factor and use Stokes' theorem for the complement of a small half-disk of radius $\epsilon$ around $y'$. The second term in the second line in \eqref{eq:reg second derivative integral} arises (asymptotically) as the contribution of the boundary of that half-disk.} 
\begin{equation}
\begin{aligned}
S_0&(\phi_{\tilde{\eta}}^\Sigma) = -
\left[\frac12 \int_{\partial \Sigma \times \partial \Sigma} dy\,dy'\,\frac{\partial^2 G_{\Sigma}^{}(y,y')}{\partial\nu(y)\partial\nu(y')}{\tilde{\eta}}(y){\tilde{\eta}}(y')\right]_\mr{reg}\\
&:= - \lim_{\epsilon\ra 0} \left(\frac12 \int_{(y,y')\in\partial \Sigma \times \partial \Sigma,\, d(y,y')>\epsilon} dy\,dy'\,\frac{\partial^2 G_{\Sigma}^{}(y,y')}{\partial\nu(y)\partial\nu(y')}{\tilde{\eta}}(y){\tilde{\eta}}(y') -\frac{1}{\pi \epsilon} \int_{\dd \Sigma} dy\, \til\eta(y)^2\right)\label{eq:reg second derivative integral}
\end{aligned}
\end{equation} 
Since the second normal derivative of the Green's function behaves as 
$\mathrm{O}\left(\frac{1}{d(y,y')^2}\right)$, it is worse than a logarithmic singularity allowed for a $2$-point collapse and thus 
$e^{-S_0(\phi_{\tilde{\eta}}^\Sigma)} \notin H^\mr{pre}_{\partial \Sigma}$.

In fact this singularity is strong enough to be non-integrable on the diagonal of the configuration space and the integral needs to be understood in the regularized sense, as in the second line above.
\end{remark}
  \begin{remark} 
If $Y$ has several components $Y = Y_1 \sqcup \ldots \sqcup Y_n$, then the associated pre-space of states factorizes as a (projective) tensor product 
\begin{equation}
H^\mr{pre}_Y \cong H^\mr{pre}_{Y_1} \otimes \cdots \otimes H^\mr{pre}_{Y_n}.
\end{equation}
\end{remark}

\subsubsection{The gluing pairing}

In this subsection we define the pairing that will be used to formulate the gluing theorem. 
The notation is as follows. We consider a cobordism $(\Sigma,\partial_L\Sigma,\partial_R\Sigma)$. We then consider a decomposition of $\Sigma$ along a 
 hypersurface (curve) $Y$: $\Sigma = \Sigma_L \cup_Y \Sigma_R$, such that $\partial\Sigma_L = \partial_L\Sigma \cup Y$ and $\partial\Sigma_R = \partial_R\Sigma \cup Y$. {  The heuristic idea to define the pairing is as follows: if $\Psi_1$ is a functional of boundary fields of the left cobordism ${\tilde{\eta}}_L,{\tilde{\eta}}_Y$, and $\Psi_2$ is a functional of the boundary fields ${\tilde{\eta}}_Y,{\tilde{\eta}}_R$ on the right cobordism, then we want to define
$$\langle \Psi_1,\Psi_2\rangle({\tilde{\eta}}_L,{\tilde{\eta}}_R) = \int_{{\tilde{\eta}}_Y}\Psi_1({\tilde{\eta}}_L,{\tilde{\eta}}_Y)\Psi_2({\tilde{\eta}}_Y,{\tilde{\eta}}_R)\, \mathcal{D}{\tilde{\eta}}_Y,$$ 
where $\mathcal{D}{\tilde{\eta}}_Y$ is the ``Lebesgue measure'' on $C^\infty(Y)$. To get to a mathematical definition, we notice that partition functions always include a factor of $e^{-S_0(\phi_{{\tilde{\eta}}_Y})}$. Thus, it makes sense to extract that factor and thus arrive at a formal \emph{Gaussian} measure on $C^\infty(Y)$, for which we can use the ideas of Section \ref{sec:formalints}: 
\begin{align*}
&\int_{{\tilde{\eta}}_Y}\Psi_1({\tilde{\eta}}_L,{\tilde{\eta}}_Y)\Psi_2({\tilde{\eta}}_Y,{\tilde{\eta}}_R) \,\mathcal{D}{\tilde{\eta}}_Y \\
&\text{``=''} 
\int_{{\tilde{\eta}}_Y}\widehat{\Psi_1}({\tilde{\eta}}_L,{\tilde{\eta}}_Y)\widehat{\Psi_2}({\tilde{\eta}}_Y,{\tilde{\eta}}_R)e^{-S_0(\phi_{{\tilde{\eta}}_Y}^{\Sigma_L}) - S_0(\phi_{{\tilde{\eta}}_Y}^{\Sigma_R})}\,\mathcal{D}{\tilde{\eta}}_Y
\end{align*}   }
With this idea in mind, we now define a map describing the formal integral over $C^\infty(Y)$ with respect to $e^{-S_0(\phi_{{\tilde{\eta}}_Y}^{\Sigma_L}) - S_0(\phi_{{\tilde{\eta}}_Y}^{\Sigma_R})}\mathcal{D}{\tilde{\eta}}_Y$. 

\begin{definition}
Let $\Sigma = \Sigma_L \cup_Y \Sigma_R$ and $D_{\Sigma_L,\Sigma_R}$ the Dirichlet-to-Neumann operator along $Y$ defined in Remark \ref{rem:DtoNgluing}. Let $K$ be the integral kernel of the inverse of $D_{\Sigma_L,\Sigma_R}$.
We define the map $\langle\cdot\rangle_{\Sigma_L,Y,\Sigma_R}\colon H^{(n)}_Y \to \C$, called the \emph{expectation value map}, by 
\begin{equation}
\left\langle \Psi\right\rangle_{\Sigma_L,Y,\Sigma_R}  = \dfrac{1}{\det(D_{\Sigma_L,\Sigma_R})^{\frac12}}\sum_{\mathfrak{m}\in \mathfrak{M}_n}\int_{C_n^\circ(Y)}\psi(y_1,\ldots,y_n)\prod_{\{v_1,v_2\}\in\mathfrak{m}}
K(y_{v_1},y_{v_2})\; dy_1\cdots dy_n.\label{eq:exp_value_bdry}
\end{equation}
The extension of this map to $H^\mr{pre}_Y$ is also denoted by $\langle\cdot\rangle_{\Sigma_L,Y,\Sigma_R}$. 
\end{definition}

Since $K(y,y')=\mathrm{O}(\log d(y,y'))$ at $y\ra y'$ and $\psi$ 
has admissible singularities
 on diagonals, the integral is convergent.


\begin{remark} This map is of course nothing but a formal integration over the field ${\tilde{\eta}} \in C^\infty(Y)$. It has an interesting interpretation as the expectation value of the observable $\Psi \colon C^\infty(Y) \to \C$ with respect to the theory on $Y$ with space of fields $C^{\infty}(Y)$ and (non-local) action functional $S_Y = \int_Y {\tilde{\eta}} D_{\Sigma_L,Y,\Sigma_R} {\tilde{\eta}}\, \dvol_Y$. This explains the notation.
\end{remark}

Since sets of odd cardinality do not have any perfect matchings, the map $\langle\cdot\rangle$ vanishes on $H^{(n)}$, for $n$ odd.


For $\Psi\in H^{(n)}$, $\Psi'\in H^{(m)}$ two states with wave functions $\psi\in \funlog(C^\circ_n(Y))$, $\psi'\in \funlog(C^\circ_m(Y))$, we can form a new state $\Psi\odot\Psi'\in H^{(n+m)}$ whose wave function is the symmetrized tensor product $\psi\odot \psi' \in \funlog(C^\circ_{n+m}(Y))$ given by 
\begin{equation}(\psi\odot \psi')(y_1,\ldots,y_{n+m})=\frac{1}{(n+m)!}\sum_{\sigma} \psi(y_{\sigma(1)},\ldots,y_{\sigma(n)})\, \psi'(y_{\sigma(n+1)},\ldots,y_{\sigma(n+m)})\label{eq:def symm tens}
\end{equation}
with $\sigma$ running over permutations of $n+m$ elements.

The pairing is then simply the composition of the expectation value map with the multiplication $\odot$: 
\begin{definition}\label{def:pairing}
Let $\Sigma = \Sigma_L \cup_Y \Sigma_R$. Then, we define a pairing $$\langle\cdot,\cdot\rangle_{\Sigma_L,Y,\Sigma_R}  \colon H^\pre_Y \times H^\pre_Y \to \R[[\hbar^{1/2}]]$$ by 
\begin{equation}\label{<>_D def}
\left\langle  \Psi, \Psi'\right\rangle_{\Sigma_L,Y,\Sigma_R}  = \left\langle\Psi  \odot  \Psi'\right\rangle_{\Sigma_L,Y,\Sigma_R} 
\end{equation}
and extending bilinearly.
\end{definition}

\begin{definition}\label{def: H completion DN}
We define the space of states $H_Y$ associated to a Riemannian $1$-manifold $Y$ as the completion (order-by-order in $\hbar$) of $H^\pre_Y$ with respect to the pairing $\langle\cdot,\cdot \rangle_{\Sigma_L,Y,\Sigma_R} $.
\end{definition}

\begin{remark}\label{rem:ext_pairing}
If $\Sigma = \Sigma_L \cup_Y \Sigma_R$ with 
$\partial \Sigma_i = Y_i \sqcup Y, \hskip1.5mm i\in\{L,R\}$, then $\partial \Sigma = Y_L \sqcup Y_R$. By using the isomorphisms $H_{\partial\Sigma_{i}} \cong H_{Y_{i}} \otimes H_Y$ and $H_{\partial\Sigma} \cong H_{Y_L}\otimes H_{Y_R}$, the pairing extends to a map 
$$\langle\cdot,\cdot\rangle_{\Sigma_L,Y,\Sigma_R}  \colon H_{\partial\Sigma_L} \otimes H_{\partial\Sigma_R}\to H_{\partial\Sigma}.$$

 \end{remark}

\subsection{Feynman graphs}
In this subsection we introduce the Feynman graphs relevant for this paper. 
\begin{definition}\label{def:FeynmanGraph}
A Feynman graph $\Gamma$ is given by the following data: 
\begin{enumerate}
\item Three disjoint finite sets $(V_b,V_{L},V_R)$, called the set of bulk and left resp. right boundary vertices. Their union, $V = V_b \sqcup V_L \sqcup V_R$ is called the set of vertices. $V_\partial = V_L \sqcup V_R$ is called the set of boundary vertices. 
\item A finite set $H$ with an incidence map $i \colon H \to V$
\item An involution $\tau \colon H \to H$ without fixed points (representing the edges)
\end{enumerate} 
such that 
for all $v \in V_{L} \sqcup V_R$, we have $|i^{-1}(v)| = 1$ (boundary vertices are univalent).

\end{definition}
The edge set $E(\Gamma)$ of the graph is by definition the set of orbits of $\tau$. We denote by $E_i(\Gamma)$ the edges that contain $i$ boundary vertices. Thus $E(\Gamma) = E_0(\Gamma) \sqcup E_1(\Gamma) \sqcup E_2(\Gamma)$. We give them different graphical representations (see Table \ref{tab:edges}). Some examples of graphs are shown in Figure \ref{fig:examplegraphs}. 

\begin{table}[!h]
\centering
\begin{tabular}{c|c|c}
edge & set & name \\
\hline 
\begin{tikzpicture} \node[] (a) at (-1,0){};
\node at (1,0) {}
edge (a); \end{tikzpicture} & $E_0$ & bulk edge \\
\begin{tikzpicture} \node[coordinate] (a) at (-1,0) {};
\node at (1,0){} 
edge[bubo] (a); 
\end{tikzpicture} &  $E_1$ & bulk-boundary edge \\
\begin{tikzpicture} \node[coordinate] (a) at (-1,0) {};
\node at (1,0) {}
edge[bobo] (a); 

\end{tikzpicture}& $E_2$ & boundary-boundary edge\\
\end{tabular}
\caption{Edges in Feynman diagrams}\label{tab:edges}
\end{table}

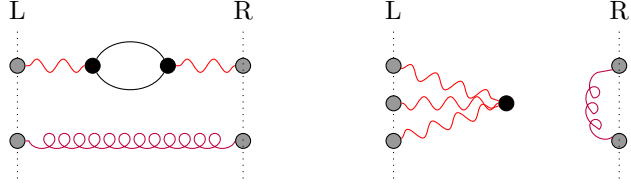
\begin{figure}[h!]
\centering
\begin{tikzpicture}
\draw[dotted] (0,0) -- (0,2);
\node[coordinate,label=above:{L}] at (0,2) {};
\node[bdry] (b1) at (0,1.5){};
\node[bulk] (bu1) at (1,1.5) {}
edge[bubo] (b1);
\node[bulk] (bu2) at (2,1.5) {}
edge[bubu, bend left=60] (bu1)
edge[bubu, bend right=60] (bu1) ;
\node[bdry] (b2) at (3,1.5){} 
edge[bubo] (bu2);
\node[bdry] (b3) at (0,0.5) {};
\node[bdry] at (3,0.5) {} edge[bobo] (b3);
\draw[dotted] (3,0) -- (3,2);
\node[coordinate,label=above:{R}] at (3,2) {};
\begin{scope}[shift={(5,0)}]
\draw[dotted] (0,0) -- (0,2);
\node[coordinate,label=above:{L}] at (0,2) {};
\node[bdry] (b1) at (0,1){};
\node[bulk] (bu1) at (1.5,1) {}
edge[bubo] (b1);
\node[bdry] (b2) at (0,1.5){}
edge[bubo] (bu1);
\node[bdry] at (0,0.5) {} edge[bubo] (bu1);
\node[bdry] (b3) at (3,1.5) {};
\node[bdry] at (3,0.5) {} edge[bobo, bend left=60] (b3);
\draw[dotted] (3,0) -- (3,2);
\node[coordinate,label=above:{R}] at (3,2) {};
\end{scope}
\end{tikzpicture}
\caption{Some examples of Feynman graphs}\label{fig:examplegraphs}
\end{figure}
We shall also require the notion of automorphism of a graph. 
\begin{definition}
An automorphism $\varphi$ of a graph $\Gamma$ is given by a pair of bijections: 
\begin{align*} 
V(\varphi)\colon V &\to V 
\end{align*} and 
\begin{align*}
H(\varphi) \colon H &\to H
\end{align*}
which commute with the incidence map $i$ and the involution $\tau$, i.e. 
\[\begin{tikzcd}
H \arrow{r}{i}  \arrow{d}{H(\varphi)} & V  \arrow{d}{V(\varphi)}\\
H \arrow{r}{i} & V  
\end{tikzcd}
\]
$V$ is required to respect the decomposition $V = V_b \sqcup V_L \sqcup V_R$. 
\end{definition}

The automorphism also induces maps on the sets of edges, denoted by $E_i(\varphi)$ or $E(\varphi)$.

Below we will rely on the following simple observation: 
\begin{prop}
Suppose all bulk vertices are at least trivalent. Then $\ell(\Gamma):=|E(\Gamma)|-|V_b(\Gamma)| - \frac12|V_{\partial}(\Gamma)| \geq 0$, with equality if and only if there are no bulk vertices.
\end{prop}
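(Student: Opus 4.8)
The plan is to run a standard degree-counting (handshake lemma) argument. For a vertex $v$ write $d(v) = |i^{-1}(v)|$ for its valence, i.e. the number of half-edges attached to it. Since the edge set $E(\Gamma)$ consists of the orbits of the fixed-point-free involution $\tau$ on $H$, every edge contributes exactly two half-edges, so that $\sum_{v\in V} d(v) = |H| = 2|E(\Gamma)|$. I would then split this sum according to the decomposition $V = V_b \sqcup V_\partial$. By Definition \ref{def:FeynmanGraph} the boundary vertices are univalent, so $d(v) = 1$ for every $v \in V_\partial$ and their total contribution is $|V_\partial|$; by hypothesis the bulk vertices satisfy $d(v) \geq 3$, so their contribution is at least $3|V_b|$. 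This gives
\begin{equation*}
2|E(\Gamma)| = \sum_{v\in V_b} d(v) + |V_\partial(\Gamma)| \geq 3|V_b(\Gamma)| + |V_\partial(\Gamma)|.
\end{equation*}

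Dividing by $2$ and substituting into the definition of $\ell$ yields
\begin{equation*}
\ell(\Gamma) = |E(\Gamma)| - |V_b(\Gamma)| - \tfrac12|V_\partial(\Gamma)| \geq \tfrac32|V_b(\Gamma)| + \tfrac12|V_\partial(\Gamma)| - |V_b(\Gamma)| - \tfrac12|V_\partial(\Gamma)| = \tfrac12|V_b(\Gamma)| \geq 0,
\end{equation*}
which is the asserted inequality.

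For the equality statement I would argue both directions. Since $|V_b(\Gamma)|$ is a nonnegative integer, the bound $\ell(\Gamma) \geq \tfrac12|V_b(\Gamma)|$ shows that $\ell(\Gamma) = 0$ forces $|V_b(\Gamma)| = 0$, i.e. there are no bulk vertices. Conversely, suppose $V_b(\Gamma) = \varnothing$. Then every vertex is a univalent boundary vertex, hence incident to exactly one edge (univalence also rules out any isolated vertices), and since there are no bulk vertices the other endpoint of that edge is again a boundary vertex. Thus $E(\Gamma) = E_2(\Gamma)$ and the edges constitute a perfect matching of $V_\partial(\Gamma)$, giving $|E(\Gamma)| = \tfrac12|V_\partial(\Gamma)|$ and therefore $\ell(\Gamma) = 0$.

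This argument is elementary, so there is no genuine obstacle; the only point requiring mild care is the equality analysis. One must note that the intermediate bound $\ell(\Gamma) \geq \tfrac12|V_b(\Gamma)|$ is generally not tight (it is saturated only when all bulk vertices are exactly trivalent), yet the integrality of $|V_b(\Gamma)|$ still makes the ``if and only if'' clean: any nonempty set of bulk vertices already forces $\ell(\Gamma) \geq \tfrac12 > 0$, while the absence of bulk vertices reduces the graph to a perfect matching on the boundary, for which $\ell$ vanishes identically.
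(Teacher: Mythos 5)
Your proof is correct and follows essentially the same route as the paper's: counting half-edges (the handshake lemma) to get $2|E(\Gamma)| \geq 3|V_b(\Gamma)| + |V_\partial(\Gamma)|$ and deducing $\ell(\Gamma) \geq \tfrac12|V_b(\Gamma)| \geq 0$. You simply spell out the equality analysis more explicitly than the paper does, which is fine.
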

\begin{proof}
The assumption implies that the number of half-edges in the graph is at least $3|V_b(\Gamma)| + |V_{\partial}(\Gamma)|.$ This implies 
$$|E(\Gamma)| - \frac32 |V_b(\Gamma)| - \frac12|V_{\partial}(\Gamma)| \geq 0, $$ 
which in turn implies the statement. 
\end{proof}


\subsection{Feynman rules and the perturbative path integral}
\label{sec:Feynmanrules}
Associated to a graph $\Gamma$ is a certain configuration space $C_{\Gamma}$:
\begin{definition} Given a Feynman graph $\Gamma$, we define the associated configuration space of $\Gamma$ in a cobordism $(\Sigma, \partial_L\Sigma,\partial_R\Sigma)$ as
\begin{equation*}
C^\circ_{\Gamma}(\Sigma) \equiv C^\circ_{\Gamma} := \{f\colon V \to \Sigma, f \text{ injective}, f(V_{L}) \subset \partial_L \Sigma,f(V_{R}) \subset \partial_R \Sigma\}
\end{equation*}
\end{definition}


If $\Gamma$ has $k_l$ resp $k_r$ left resp. right boundary vertices and $l$ bulk vertices, picking an enumeration of $V_b, V_{L}, V_R$ identifies $C^\circ_{\Gamma}$ as the open subset of $\Sigma^l \times \partial_L\Sigma^{k_l} \times \partial_R\Sigma^{k_r}$ given by removing all diagonals. 
We now define the weight $F(\Gamma)$ as a functional of the boundary fields by associating a certain function (depending on the boundary fields) on $C^\circ_{\Gamma}$ to the graph and integrating it over $C^\circ_{\Gamma}$ against the measure $\dvol_{C^\circ_\Gamma(\Sigma)}$ induced by the embedding into $\Sigma^l \times \partial_L\Sigma^{k_l} \times \partial_R\Sigma^{k_r}$. Namely, $F(\Gamma)$ can be defined as follows:
\begin{definition}
Let $\Gamma$ be a Feynman graph with $n$ boundary vertices $v_1,\ldots, v_n$ {   and no short loops}. Let $\pi_i\colon C^\circ_\Gamma(\Sigma) \to Y$ denote the projection to the $i$-th boundary point.  Then $F(\Gamma)$ is the map $F(\Gamma)\colon C^\infty(\Sigma) \to \R$ defined by 
\begin{equation}F(\Gamma)[{\tilde{\eta}}] = \int_{C_\Gamma(\Sigma)}\omega_{\Gamma}\pi_1^*{\tilde{\eta}}\cdots \pi_n^*{\tilde{\eta}} \,\dvol_{C_{\Gamma}^\circ(\Sigma)}\label{eq:def_F_Gamma}
\end{equation}
where 
\begin{multline}\omega_{\Gamma} = \prod_{v\in V_b(\Gamma)}(-p_{val(v)})
\prod_{\{\alpha,\beta\} \in E_0} G_{\Sigma}(x_\alpha,x_\beta) \cdot \\ \cdot \prod_{\{x_\alpha,y_i\} \in E_1}
\Big( -\, \frac{\partial G_{\Sigma}(x_\alpha,y_i)}{\partial \nu(y_i)} \Big)
\prod_{\{y_i,y_j\} \in E_2}
\Big( - \,  \frac{\partial^2 G_{\Sigma}(y_i,y_j)}{\partial \nu(y_i)\partial\nu(y_j)} \Big)\label{eq:FeynmanRules}
\end{multline}   
is the product of propagators and their normal derivatives according to the combinatorics of the graph.\footnote{
In the case if $\Gamma$ contains boundary-boundary edges connecting a boundary component to itself, the integral (\ref{eq:def_F_Gamma}) needs to be regularized as in Remark \ref{rem: DN kernel singularity}.
}
\end{definition} 

\begin{remark}\label{rem:symmetry}
The configuration space $C^\circ_\Gamma$ has a natural map $p$ to $C^\partial_\Gamma(\Sigma):=C^\circ_{V_L}(\partial_L\Sigma) \times C^\circ_{V_R}(\partial_R  \Sigma)$ given by forgetting the bulk points. The fiber of this map over a pair $f_1,f_2$ of configurations is the open configuration space of $\Sigma \setminus (f_1(V_L)\cup f_2(V_R))$. Thus, we can define $\psi_\Gamma := p_*\omega_\Gamma$, where $p_*$ denotes integration (pushforward) along the fibers of $p$. $\psi_\Gamma$ is a function on $C^\partial_\Gamma(\Sigma)$ whose regularity we will study below. We can then rewrite \eqref{eq:def_F_Gamma} as 
\begin{equation}
F(\Gamma)[{\tilde{\eta}}] = \int_{C^\partial_\Gamma(\Sigma)}\psi_\Gamma \pi_1^*{\tilde{\eta}}\ldots \pi_n^*{\tilde{\eta}}\,\dvol_{(\partial\Sigma)^n}. \label{eq:def_F_2}
\end{equation}
We will call $\psi_\Gamma$ the \emph{wave function} associated to $\Gamma$. 
Even though $\psi_\Gamma$ does not need to be  symmetric under permutation of the boundary points, only its symmetric part will contribute to the integral \eqref{eq:def_F_2}. 
\end{remark}

We will now show that the coefficients of Feynman graphs have the nice regularity properties 
that we want. 
\begin{prop}\label{prop: Feynman singularities}
Let $\Gamma$ be a graph without short loops and without boundary-boundary edges connecting  $\dd_L$ to $\dd_L$ or $\dd_R$ to $\dd_R$. Then the corresponding wave function $\psi_\Gamma$ is a smooth function on the open configuration space with admissible singularities on diagonals, as in Definition \ref{def: adm singularities}:
$$ \psi_\Gamma\in \funlog(C^\dd_\Gamma(\Sigma)) $$
\end{prop}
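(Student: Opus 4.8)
The plan is to reduce the statement to a local analysis near the diagonals of $C^\dd_\Gamma(\Sigma)=C^\circ_{V_L}(\dd_L\Sigma)\times C^\circ_{V_R}(\dd_R\Sigma)$ and to read off the order of each singularity by a rescaling (power-counting) argument applied to the fibre integral $\psi_\Gamma=p_*\omega_\Gamma$. First I would dispose of the regularity on the \emph{open} configuration space. Away from all diagonals the boundary points are distinct, so every factor $\dd_\nu G_\Sigma$ and $\dd^2_\nu G_\Sigma$ in \eqref{eq:FeynmanRules} is smooth, and the only singularities of the integrand come from bulk points colliding with one another or with a fixed, distinct boundary point. By the local form $G_\Sigma=-\frac{1}{2\pi}\log d+H$ with $H\in C^1$ (Lemma \ref{DGreen's function}), these are a logarithmic singularity for bulk--bulk edges and a Poisson-kernel singularity $O(1/d)$ for bulk--boundary edges, both locally integrable in dimension two. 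Hence the fibre integral converges and, differentiating under the integral sign, defines a smooth function of the distinct boundary points.

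Next I would analyse a single collision. Suppose $k$ boundary points on one component collapse to $p\in\dd\Sigma$ at mutual distances of order $\epsilon$; in half-plane coordinates centred at $p$ write $y_i=p+\epsilon u_i$ and, for the bulk vertices dragged into the collision region, $x_a=p+\epsilon\xi_a$ with $\xi_a$ in the upper half plane. Let $\gamma\subset\Gamma$ be the subgraph spanned by these $k$ boundary vertices and the $b$ bulk vertices of the shrinking cloud, with $e_1=|E_1(\gamma)|$ internal bulk--boundary edges. The measure contributes $\epsilon^{2b}$, each internal bulk--boundary edge contributes $\epsilon^{-1}$ since $\dd_\nu G_\Sigma=O(1/d)$, each internal bulk--bulk edge contributes at most $O(\log\epsilon)$, and all edges leaving the cloud tend to finite smooth limits; so the contribution scales as $\epsilon^{\delta(\gamma)}$ up to bounded powers of $\log\epsilon$, with $\delta(\gamma)=2b-e_1$. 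Here the hypotheses enter decisively: excluding tadpoles makes the integrand well defined, and excluding $\dd_L$--$\dd_L$ and $\dd_R$--$\dd_R$ edges guarantees that a same-component cloud has no internal boundary--boundary edge (each of which would contribute $\epsilon^{-2}$, exactly the forbidden singularity of Remark \ref{rem: DN kernel singularity}). Since boundary vertices are univalent, $e_1\le k$, whence
$$\delta(\gamma)=2b-e_1\ \ge\ 2b-k\ \ge\ 2-k\qquad\text{whenever }b\ge 1,$$
while if $b=0$ there are no internal edges and no singularity. This is precisely the bound $g_k(\epsilon)$ of Definition \ref{def: adm singularities}. Moreover $\delta(\gamma)=2-k$ forces $b=1$, $e_1=k$ (all colliding points attached to one bulk vertex, no internal bulk--bulk edge), and in this unique extremal case the rescaled integral $\int_{\mathrm{UHP}}\prod_{i=1}^k\dd_\nu G_\Sigma(\xi,u_i)\,d^2\xi$ converges absolutely for $k\ge 3$ (the integrand decays like $|\xi|^{-k}$ at infinity and is integrable at each $u_i$), so the leading power carries no logarithm; for $k=2$ the same integral produces the marginal $O(\log\epsilon)$ of case (a). Every non-extremal configuration has $\delta(\gamma)>2-k$, and a strictly milder power times any bounded power of $\log\epsilon$ is still $O(\epsilon^{2-k})$, giving cases (a) and (b).

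For case (c) I would argue by factorization. When disjoint subsets $S_1,\dots,S_p$ collapse at distinct points, the bulk vertices partition into those pinched with a given cluster and those staying away from all of them, and every propagator joining two different clusters, or a cluster to the complement, tends to a finite smooth limit. The fibre integral therefore factorizes to leading order into the product of the single-cluster contributions times a bounded smooth remainder, yielding $\psi_\Gamma=O\!\big(\prod_{j}g_{|S_j|}(\epsilon)\big)$ as required.

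The hard part will be making the rescaling rigorous: one must verify that each rescaled fibre integral genuinely converges, controlling nested sub-collisions inside a single cloud, and that the logarithmic enhancements never upgrade the leading power $\epsilon^{2-k}$. The clean framework for this is the Fulton--MacPherson/Axelrod--Singer compactification of $C^\circ_\Gamma(\Sigma)$, on which each collision becomes a boundary hypersurface with an explicit defining function; the orders computed above are exactly the vanishing orders of the pulled-back propagators, and the nested structure of the boundary hypersurfaces organizes the induction over sub-clouds and bounds the number of logarithms. I would carry out the estimates in these coordinates, treating the degenerate strata one nesting level at a time.
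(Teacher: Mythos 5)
Your proposal is correct and follows essentially the same route as the paper's proof: integrability and smoothness on the open stratum, identification of the extremal collision configuration (a single bulk vertex attached to all $k$ coalescing boundary points, forced by univalence of the boundary vertices and the exclusion of same-component boundary--boundary edges), evaluation of flat half-disk/half-plane model integrals by a scaling argument that yields $\OO(\log d)$ for $k=2$ and the exact homogeneity $\OO(\epsilon^{2-k})$ for $k\ge 3$, and factorization over disjoint clusters for case (c). Your explicit power count $\delta(\gamma)=2b-e_1\ge 2-k$ makes systematic what the paper identifies by inspection of the worst case, and your appeal to the Fulton--MacPherson/Axelrod--Singer compactification goes beyond what the paper actually carries out (it is content with the model-integral estimates); neither difference changes the substance of the argument.
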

\begin{proof}
Given a graph $\Gamma$, we are interested in the integral 
\begin{equation} \label{prop 3.16 integral}
\begin{aligned}
 f(y_1,\ldots,y_n) &=\int_{\Sigma^N}d^2x_1\cdots d^2 x_N \prod_{(\alpha,\beta)\in E_0} G^0(x_\alpha,x_\beta)\cdot \\
 &\cdot \prod_{(\alpha,i)\in E_1}  G^1(x_\alpha,y_i)\cdot  \prod_{(i,j)\in E_2}   G^2(y_i,y_j) 
 \end{aligned}
\end{equation}
as a function of $n$ pairwise distinct boundary points $y_1,\ldots,y_n \in Y$. Note that $\psi_\Gamma= f\cdot \,(-1)^{\# E_1 + \# E_2}
\prod_{v\in V_b}(-p_{val(v)})$ -- a constant multiple of $f$.  
Here we denoted $N=|V_b|$ the number of bulk vertices, $n=|V_L|+|V_R|$ the number of boundary vertices; we denoted $G^k$ the Green's function ($k=0$),  its first ($k=1$) or second ($k=2$) normal derivative at the boundary appearing in (\ref{eq:FeynmanRules}). 
Note that we have the following asymptotics:
\begin{equation}
\begin{gathered}\label{G estimates}
G^0 (x_\alpha,x_\beta) \underset{x_\alpha\ra x_\beta}{\sim}  -\frac{1}{2\pi}\log d(x_\alpha,x_\beta),\\ 
G^1(x_\alpha, y_i) \underset{x_\alpha\ra y_i}{=} \mathrm{O}\Big(\frac{1}{d(x_\alpha,y_i)}\Big), \quad G^2(y_i,y_j) \underset{y_i\ra y_j}{=} \mathrm{O}\Big(\frac{1}{d(y_i,y_j)^2}\Big).  
\end{gathered}
\end{equation} 
In fact, the arguments of $G^2$  never approach each other in (\ref{prop 3.16 integral}), since in $\Gamma$ we didn't allow boundary-boundary edges connecting a boundary component to itself.

First note that, for $y_1,\ldots,y_n$ fixed pairwise distinct boundary points, (\ref{prop 3.16 integral}) is a convergent integral: if $p\geq 2$ bulk points $x_{\alpha_1},\ldots,x_{\alpha_p}$ coalesce at pairwise distances  $\epsilon < d(x_{\alpha_r},x_{\alpha_s}) <C\epsilon$, the integrand behaves as $\OO(\log^a\epsilon)$ (with $a$ the number of edges in the collapsing subgraph), which gives an integrable singularity. If $p\geq 1$ bulk points collapse at a boundary point $y_i$, the integrand behaves as $$\OO\Big(\frac{\log^a\epsilon}{\epsilon^b}\Big)$$ where $b=1$ if in $\Gamma$ the vertex $y_i$ is connected to one of the collapsing bulk points and $b=0$ otherwise. This again gives an integrable singularity.\footnote{Note that here it is essential that the boundary vertices are univalent - otherwise we could have gotten $b>1$ which would lead to a non-integrable singularity.}

Smoothness on the open configuration space of the boundary follows from standard arguments.

Next, we turn to the analysis of the singularities. 
Consider the asymptotic regime for (\ref{prop 3.16 integral}) when $y_i$ approaches $y_j$. In the limit $y_i\ra y_j$, the integral converges unless there is a bulk vertex $x_\alpha$ connected to both $y_i$ and $y_j$. If there is such a vertex $x_\alpha$, the integral is divergent as $y_i \to y_j$. Consider the integral over a half-disk $D^+_\delta$ of radius $\delta\gg d(y_i,y_j)$ centered at $y_j$, where $D^+_\delta$ is contained in a geodesic normal chart. The integral in $x_{\alpha}$ over the complement $\Sigma \setminus D^+_\delta$ does not create a singularity. The integral $\int_{D^+_\delta} \frac{d^2 x_\alpha}{d(x_\alpha,y_i) d(x_\alpha,y_j)}$ can be modeled by the corresponding integral for the flat case, since the leading singularities are the same in both cases and the metric in geodesic normal coordinates satisfies $g_{ij} = \delta_{ij} + \mathrm{O}(\delta^2)$. 
Thus, one obtains the estimate\footnote{
Indeed, denote that the integral in the l.h.s. of (\ref{prop 3.16 log model integral}) by $J$. It is a function of $\delta$ and the distance $d(y_i,y_j)$. Set $y_j=0$ for convenience. Making a rescaling $x_\alpha\mapsto\Lambda x_\alpha, y_i\mapsto \Lambda y_i $ under the integral, we see that $J(\Lambda\delta,\Lambda d(y_i,y_j))=J(\delta,d(y_i,y_j))$. Therefore, $J=J(\frac{\delta}{d(y_i,y_j)})$. Next, the integrand behaves in $J$ behaves as $\sim\frac{1}{r^2}$, with $r=d(x_\alpha,y_j)$, when $r\gg d(y_i,y_j)$. Therefore, in the asymptotic regime $\delta\gg d(y_i,y_j)$, we have $J\sim \int_0^\pi d\theta \int_{C d(y_i,y_j)}^\delta \frac{r dr}{r^2} \sim \pi \log \frac{\delta}{d(y_i,y_j)}$.
} 
\begin{equation}\label{prop 3.16 log model integral}
\int_{D^+_\delta} \frac{d^2 x_\alpha}{d(x_\alpha,y_i) d(x_\alpha,y_j)} \;\; \underset{y_i\ra y_j}{=} \;\; \OO\Big(\log\frac{\delta}{d(y_i,y_j)}\Big)
\end{equation}

Thus, the worst possible singularity of (\ref{prop 3.16 integral}) at a codimension one diagonal of the configuration space is the logarithmic one.

Consider a subset of boundary points $y_{i_1},\ldots, y_{i_k}$, with $k\geq 3$ coalescing at pairwise distances $\epsilon< d(y_{i_r},y_{i_s})<C\epsilon$. In this asymptotic regime, the strongest singularity in (\ref{prop 3.16 integral}) arises from the situation when a single bulk vertex $x_\alpha$ connected to each of the coalescing $y$'s by an edge, is colliding onto them. By a similar argument to the above, this situation is modeled by an integral over a (flat) half-plane
\begin{equation}\label{prop 3.16 power integral}
\int_{\Pi_+}\frac{d^2 x_\alpha}{d(x_\alpha,y_{i_1})\cdots d(x_\alpha,y_{i_k})}\;\;\underset{\epsilon\ra 0}{=}\;\; \OO\Big(\frac{1}{\epsilon^{k-2}}\Big)
\end{equation}
This estimate follows from a scaling argument: denoting the l.h.s. by $I(y_{i_1},\ldots, y_{i_k})$, we have $I(\Lambda y_{i_1},\ldots, \Lambda y_{i_k})=\Lambda^{2-k}I(y_{i_1},\ldots, y_{i_k})$ for any $\Lambda>0$, as follows from a scaling substitution in the integral $y_{i_r}\mapsto \Lambda y_{i_r}$, $x_\alpha\mapsto \Lambda x_\alpha$.

Finally, if we have a simultaneous collapse of several subsets of boundary points (at different points on the boundary), the respective worst-case-scenario asymptotics is given by a product of model integrals (\ref{prop 3.16 log model integral}), (\ref{prop 3.16 power integral}) corresponding to the collapsing subsets.
\end{proof}

It should be noted that naively extending the definition of the Feynman rules to diagrams with self-loops would yield  ill-defined results, as the Green's function is singular on the diagonal. One way to overcome this divergence problem is to not apply the formal integral to the exponential of the action, but to apply \emph{normal ordering} before applying the formal integral. Put simply, this has the effect of removing short loops\footnote{We refer to the literature, e.g. \cite{GJ} for an explanation of why this is the case.}. This leads to the following definition\footnote{This definition is just a neat way to rewrite the result of a formal computation of the path integral \eqref{eq:formalpathint} using the methods sketched in Section \ref{sec:formalints}, for a deeper discussion, we refer again to the literature, e.g. \cite{Polyak2005},\cite{Reshetikhin2010},\cite{Mnev2019}. }. 
 \begin{definition}
 We define the \emph{normal ordered perturbative partition function} \eqref{eq:formalpathint} by 
\begin{equation}\label{pert path int sigma} 
Z^\mathrm{no}_\Sigma({\tilde{\eta}}_L,{\tilde{\eta}}_R):=\frac{1}{\det(\Delta_{\Sigma}^{}+m^2)^{\frac12}}\sum_{\Gamma}\frac{\hbar^{\ell(\Gamma)}F(\Gamma)[{\tilde{\eta}}_L,{\tilde{\eta}}_R]}{|\Aut(\Gamma)|} 
\end{equation} 
where the sum is over all Feynman graphs without self-loops, $\ell(\Gamma) = |E(\Gamma)| - |V_b(\Gamma)| -|\frac12 V_{\partial}(\Gamma)|\in \frac12 \mathbb{Z}_{\geq 0}$ and $F(\Gamma)$ is the Feynman weight of the Feynman graph $\Gamma$. 
\end{definition}
%
{
\begin{remark}\label{rem:Zfactors}
Since boundary vertices are univalent, the contributions of the $E_2$ edges can be factored out. They yield precisely the exponential of $-S_0(\phi_{\tilde{\eta}})$. Hence, we can write
\begin{align}
Z^\mathrm{no}_\Sigma({\tilde{\eta}}_L,{\tilde{\eta}}_R)&=\frac{e^{-S_0(\phi_{{\tilde{\eta}}_L+{\tilde{\eta}}_R})}}{\det(\Delta_{\Sigma}^{}+m^2)^{\frac12}}\sum_{\{\Gamma\colon E_2(\Gamma)=\emptyset\}}\frac{\hbar^{\ell(\Gamma)}F(\Gamma)[{\tilde{\eta}}_L,{\tilde{\eta}}_R]}{|\Aut(\Gamma)|}. \notag \\ 
&= Z_{\Sigma}^\mr{free}(\til{\eta}_L,\til{\eta}_R)Z_\Sigma^{\mr{pert},\mr{no}}(\til{\eta}_L,\til{\eta}_R) \notag
\end{align} 
Here $Z_{\Sigma}^\mr{free}$ is the partition function of the free theory \eqref{eq:Zfree} and $Z_\Sigma^{\mr{pert},\mr{no}}(\til{\eta}_L,\til{\eta}_R)$ is given by the sum of all diagrams containing no boundary edges. By construction, there are only finitely many diagrams at each order in $\hbar$ contributing to $Z_\Sigma^{\mr{pert},\mr{no}}$. Thus, $Z_\Sigma^{\mr{pert},\mr{no}} \in H^\mr{pre}_{\dd \Sigma}$. 
Expanding $S_0(\phi_{{\tilde{\eta}}_L + {\tilde{\eta}}_R}) = S_0(\phi_{{\tilde{\eta}}_L}) + S_0(\phi_{{\tilde{\eta}}_R}) + S_{L,R}({\tilde{\eta}}_L,{\tilde{\eta}}_R)$, we see that the first two terms generate Feynman diagrams connecting the left resp. right boundary to themselves, while the third term generates diagrams connecting the two, see Figure \ref{fig:DNbdrybdry} below. This observation will be important in the proof of the gluing formula. 
\end{remark}
}

{
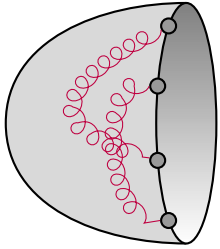
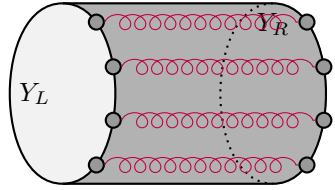
\begin{figure}[h]
\centering
\begin{subfigure}[b]{0.45\linewidth}
\begin{tikzpicture}[scale=.4]
   \coordinate (z0) at (1,4);
    \coordinate (z1) at (1,-4);
    \coordinate (z2) at (-5,0);
    \filldraw[fill=black!15, thick, draw=black] (z0) to[out=180, in=90] (z2) to[out=-90, in=180] (z1); 
    \shadedraw[ thick, draw=black] (2,0) arc [start angle=0,   
                  end angle=360,
                  x radius=1cm, 
                  y radius=4cm]
    node [bdry, pos=.35] (x1) {} node [pos = .45, bdry] (x2) {}
    node [bdry, pos=.55] (x3) {} node [pos = .65, bdry] (x4) {};
   \draw[bobo] (x1) ..controls (-3,1.5) and (-4,0) .. (x3);
    \draw[bobo] (x2) to[out=180,in=180] (x4);
  \end{tikzpicture}
 \caption{Graphs contributing to the exponential prefactor $e^{-\frac12\int_Y \eta D_{\Sigma}\eta \dvol_Y}$. Curled red edges are decorated with normal derivatives of Green's functions at both boundary points. }
 \end{subfigure}\hskip2mm \qquad
 \begin{subfigure}[b]{0.45\linewidth}
  \begin{tikzpicture}[scale=.4]
   \coordinate (a0) at (-1,-3);
  \coordinate (a1) at (-1,3);
  \coordinate (a2) at (6,3);
  \coordinate (x3) at (6,-3);
  \coordinate (x) at (0,0);  
    \filldraw[fill=black!30](a0) to[out=0, in=0] (a1)to[out=0, in=180](a2) node[below]{$Y_R$} to[in=0, out=0](x3)to[out=180, in=0]node[above]{} (a0) ;
  \filldraw[fill=black!5, draw=black!5](a0) to[out=180, in=180]node[right]{$Y_L$} (a1)to[in=0, out=0](a0);
    \draw[color=black, thick] (a0) to[out=0, in=0]
    node[bdry, pos=.2] (b1) {} node[bdry,pos=.4] (b2) {}
    node[bdry, pos=.6] (b3) {} node[bdry,pos=.8] (b4) {} (a1);
  \draw[color=black, thick] (a1) to[out=180, in=180] (a0);
  \draw[color=black, thick] (x3) to[out=0, in=0] 
   node[bdry, pos=.2] (c1) {} node[bdry,pos=.4] (c2) {}
    node[bdry, pos=.6] (c3) {} node[bdry,pos=.8] (c4) {} (a2) ;
  \draw[color=black, thick, dotted] (a2) to[out=180, in=180] (x3);
  \draw[color=black, thick](a1) to[out=0, in=180] (a2) ;
  \draw[color=black, thick](a0) to[out=0, in=180] (x3);
  \draw[bobo] (b1) -- (c1);
    \draw[bobo] (b2) -- (c2);

  \draw[bobo] (b3) -- (c3);

  \draw[bobo] (b4) -- (c4);
 \end{tikzpicture}
  \caption{Graphs contributing to the exponential factor containing the ``off-diagonal $Y_L$-$Y_R$'' block of the Dirichlet-to-Neumann operator.} \label{fig: bdry-bdry} 
\end{subfigure}
\caption{Expansion of different ``blocks'' of Dirichlet-to-Neumann operator}\label{fig:DNbdrybdry}
 \end{figure}
}

We further define the adjusted partition function as
\begin{equation} \label{Z hat def}
\begin{aligned}
\widehat{Z^\mr{no}_\Sigma}(\til\eta_L,\til\eta_R) &=e^{S_0(\phi_{\til\eta_L})+S_0(\phi_{\til\eta_R})}Z^\mr{no}_\Sigma(\til\eta_L,\til\eta_R)  \\
&= \frac{e^{-S_{L,R}(\til\eta_L,\til\eta_R)}}{\det(\Delta_{\Sigma}^{}+m^2)^{\frac12}}\sum_{\{\Gamma\colon E_2(\Gamma)=\emptyset\}}\frac{\hbar^{\ell(\Gamma)}F(\Gamma)[{\tilde{\eta}}_L,{\tilde{\eta}}_R]}{|\Aut(\Gamma)|} \qquad  \in H_{\partial_L\Sigma} \otimes H_{\partial_R\Sigma}
\end{aligned}
\end{equation}
\begin{remark}\label{rem: Z is not in H but Zhat is in H}
Note that (\ref{pert path int sigma}) is not an element of the space of states (due to bad singularity in the exponential prefactors, cf. Remark \ref{rem: DN kernel singularity}), whereas (\ref{Z hat def}) is in the space of states.\footnote{Note however that (\ref{Z hat def}) in general is not in the pre-space of states (which was was defined via direct sum, not a direct product) due to infinitely many diagrams of type \ref{fig: bdry-bdry} contributing in the order $\OO(\hbar^0)$. In the completion, (\ref{Z hat def}) is a legitimate element.
}
\end{remark}

\subsubsection{Digression: Allowing coefficients $p_0,p_1,p_2$}\label{sec:low valence vertices}
The assumption that $p_0 = p_1 = p_2 = 0$ is important to have finitely many diagrams contributing in every order in $\hbar$ (excluding the boundary-boundary edges). In fact, 
we can reduce the general case to the case where these coefficients are absent  by resumming the corresponding diagrams.
Consider again the path integral  
\begin{equation*}
Z^{m,p}_{\Sigma}(\eta,\hbar)=\int_{\pi_{\Sigma}^{-1}(\proj_{Y}^{-1}(\eta))} e^{-\frac{S_0(\phi)+ S_{\text{int}}(\phi)}{\hbar}}\, D\phi
\end{equation*}
for non-rescaled boundary field, with $p_0,p_1,p_2$ possibly non-zero.   

\begin{itemize}
\item The linear term $p_1 \phi $ in $p(\phi)$ shifts
 the critical point of $\frac12 m^2\phi^2+p(\phi)$ away from zero to some value $\phi_\mr{cr}\in p_1\mathbb{R}[[p_1]]$. 
Let 
$$p(\phi_\mr{cr}+\psi)=\til{p}(\psi)=\sum_{n\geq 0, n\neq 1}\frac{\til{p}_n}{n!}\psi^n$$
Note that $\til{p}(\psi)$ is a power series in the shifted field $\psi$ with vanishing linear term. Thus we obtain 
 $$ Z^{m,p}_{\Sigma}(\eta,\hbar) = Z^{m,\tilde{p}}(\eta - \phi_\mr{cr},\hbar).$$
 This corresponds to resumming the tree diagrams generated by the term $p_1\phi$ in the action. In fact, one can show\footnote{
This follows from writing the critical point equation $m^2\phi+p'(\phi)=0$ in the form $\phi=-\frac{p_1}{m^2}-\frac{1}{m^2}\sum_{n\geq 2}\frac{p_n}{(n-1)!}\phi^{n-1}=:\Xi(\phi)$. Its solution can be then written as the limit 
of iterations 
 $\phi_\mr{cr}=\lim_{N\ra\infty}\Xi^N(0)$ which can in turn be presented as a sum over rooted trees. Here we note that for $\phi$ a constant function, one has 
 $\Xi(\phi)(x)=\int d^2 y\, G(x,y) (-p_1)+\int d^2 y\, G(x,y) \sum_{n\geq 2}\frac{-p_n}{(n-1)!}\phi(y)^{n-1}$ (which is also a constant function).
 } that 
 $$\phi_\mr{cr}(x) = \sum_{T \text{ rooted tree }}F(T)$$ 
 where the root of the tree is labeled by $x \in \Sigma$, see Figure \ref{fig:trees}. 
 \item The constant term $\til{p}_0$ in the new potential $\til{p}(\psi)$ can be carried out of the path integral. This corresponds to resumming the disconnected vertices generated by $\til{p}_0$.  Thus, we obtain 
 $$  Z^{m,\tilde{p}}(\eta - \phi_\mr{cr},\hbar) = e^{-\frac{\til{p}_0}{\hbar}\mr{Area}(\Sigma)}\; Z^{m,\til{p}_{\geq 2}}_\Sigma(\eta-\phi_\mr{cr},\hbar).$$  
 \item Finally, the quadratic term can be accounted for as a shift of mass, $\til{m}^2=m^2+p_2$. Indeed, resumming diagrams containing a binary vertex (see Figure \ref{fig:binary vertices}) we obtain a new propagator $G' = \sum_{k\geq 0}(-p_2)^k G^{k+1}$ which coincides with the Neumann series for $A + p_2$, where $A = \Delta + m^2$: 
 $$ (A + p_2I)^{-1} = A^{-1}\left(I - (-p_2 A^{-1})\right)^{-1}= A^{-1}\sum_{k\geq 0} (-p_2)^kA^{-k}. $$ 
\end{itemize}
\begin{figure}
\begin{tikzpicture}[scale=1, bulk/.style={shape=circle, draw, inner sep=0pt,fill=black,minimum size=4pt}]
\begin{scope}[shift={(-10,0)}]
\node[bulk,label=right:{$\til{p}_k$}] (c) at (0,0) {};
\node[coordinate, label=above:{$\psi$}] at (120:1) {} edge[bulk] (c);
\node[coordinate,label=left:{$\psi$}] at (210:1) {} edge[bulk] (c);
\node[coordinate,label=below left:{$\psi$}] at (240:1) {} edge[bulk] (c);
\draw[dotted, thick] (130:0.75) arc (130:200:0.75) node[midway, left] {$k$ legs};
\node[] at (2,0) {\large{$=\sum\limits_l$}};
\end{scope}
\begin{scope}[shift={(-5.5,0)}]
\node[bulk] (c) at (0,0) {};
\node[coordinate, label=above:{$\psi$}] at (120:1) {} edge[bulk] (c);
\node[coordinate,label=left:{$\psi$}] at (210:1) {} edge[bulk] (c);
\node[coordinate,label=below left:{$\psi$}] at (240:1) {} edge[bulk] (c);
\draw[dotted, thick] (130:0.75) arc (130:200:0.75);
\node[coordinate, label=above:{$\phi_\mr{cr}$}] at (70:1) {} edge[bulk] (c);
\node[coordinate,label=right:{$\phi_\mr{cr}$}] at (-20:1) {} edge[bulk] (c);
\node[coordinate,label=right:{$\phi_\mr{cr}$}] at (-50:1) {} edge[bulk] (c);

\draw[dotted, thick] (-10:0.5) arc (-10:60:0.5) node[midway,right] {$l$ legs};
\node[] at (3,0) {\large{$=\sum\limits_{\substack{\text{trees} \\ T_1,\ldots,T_l}}$}};
\end{scope}

\node[bulk] (c) at (0,0) {};
\node[coordinate, label=above:{$\psi$}] at (120:1) {} edge[bulk] (c);
\node[coordinate,label=left:{$\psi$}] at (210:1) {} edge[bulk] (c);
\node[coordinate,label=below left:{$\psi$}] at (240:1) {} edge[bulk] (c);
\draw[dotted, thick] (130:0.75) arc (130:200:0.75);

\begin{scope}[scale=0.75,bulk/.style={shape=circle, draw, inner sep=0pt,fill=black,minimum size=3pt}]
\node[bulk] (v1) at (70:1) {} edge[bulk] (c);
\node[bulk] (v2) at (-20:1) {} edge[bulk] (c);
\node[bulk] (v3) at (-50:1) {} edge[bulk] (c);
\draw[dotted, thick] (-10:0.5) arc (-10:60:0.5);
  \node[bulk](v11) at ($(v1)+(100:1)$){} edge[bulk] (v1);
  \node[bulk](v111) at ($(v11)+(100:0.7)$){} edge[bulk] (v11);
  \node[bulk](v112) at ($(v11)+(70:0.7)$){} edge[bulk] (v11);
  \node[bulk](v1111) at ($(v111)+(100:0.7)$){} edge[bulk] (v111);
  \node[bulk](v1112) at ($(v111)+(70:0.7)$){} edge[bulk] (v111);
  \node[bulk](v12) at ($(v1)+(60:1)$){} edge[bulk] (v1);
  \node[bulk](v121) at ($(v12)+(60:0.7)$){} edge[bulk] (v12);
  \node[bulk](v122) at ($(v12)+(30:0.7)$){} edge[bulk] (v12);
  \node[bulk](v123) at ($(v12)+(0:0.7)$){} edge[bulk] (v12);
  \node[bulk](v21) at ($(v2)+(40:1)$){} edge[bulk] (v2);
  \node[bulk](v211) at ($(v21)+(70:0.7)$){} edge[bulk] (v21);
  \node[bulk](v212) at ($(v21)+(40:0.7)$){} edge[bulk] (v21);
  \node[bulk](v2111) at ($(v211)+(10:0.7)$){} edge[bulk] (v211);
  \node[bulk](v2112) at ($(v211)+(70:0.7)$){} edge[bulk] (v211);
  \node[bulk](v22) at ($(v2)+(10:1)$){} edge[bulk] (v2);
     \node[bulk](v221) at ($(v22)+(40:0.7)$){} edge[bulk] (v22);
  \node[bulk](v222) at ($(v22)+(10:0.7)$){} edge[bulk] (v22);
    \node[bulk](v2221) at ($(v222)+(10:0.7)$){} edge[bulk] (v222);
\node[bulk](v2222) at ($(v222)+(-10:0.7)$){} edge[bulk] (v222);

  \node[bulk](v31) at ($(v3)+(-40:1)$){} edge[bulk] (v3);
  \node[bulk](v311) at ($(v31)+(-70:0.7)$){} edge[bulk] (v31);
  \node[bulk](v312) at ($(v31)+(-40:0.7)$){} edge[bulk] (v31);
\node[bulk](v32) at ($(v3)+(-10:1)$){} edge[bulk] (v3);
     \node[bulk](v321) at ($(v32)+(-40:0.7)$
     ){} edge[bulk] (v32);
  \node[bulk](v322) at ($(v32)+(-10:0.7)$){} edge[bulk] (v32);
  \end{scope}
%
%
%
%
%
%
%
%
%
%
%
%
%
%
     
   \end{tikzpicture}
   \caption{Resumming trees into $\til{p}$.}\label{fig:trees}
\end{figure}
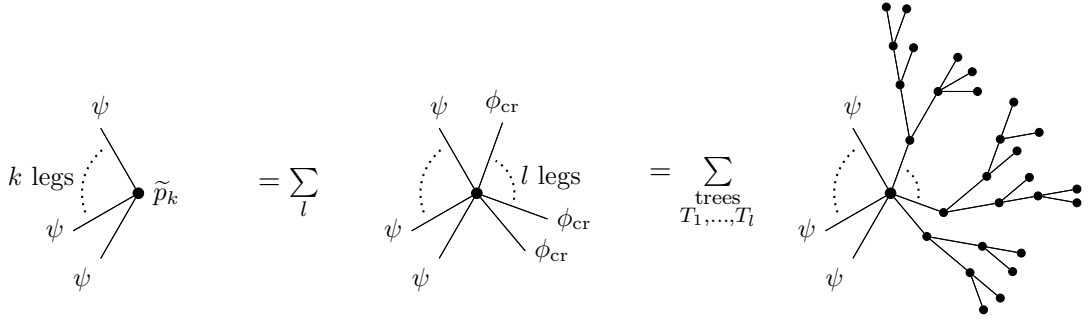
 \begin{figure}[h]
\centering
  \begin{tikzpicture}[scale=1]
   \node[bulk] (b11) at (0,0) {};
  \node[coordinate] at (120:1) {} edge[bulk] (b11);
\node[coordinate] at (210:1) {} edge[bulk] (b11);
\node[coordinate] at (240:1) {} edge[bulk] (b11);
\draw[dotted, thick] (130:0.75) arc (130:200:0.75);
  \begin{scope}[shift={(2,0)}]
  \node[bulk] (b12) at (0,0) {} edge[double,thick] node[above] {$G'$} (b11);
   \node[coordinate]  at (70:1) {} edge[bulk] (b12);
\node[coordinate]  at (-20:1) {} edge[bulk] (b12);
\node[coordinate]  at (-50:1) {} edge[bulk] (b12);
\draw[dotted, thick] (-10:0.75) arc (-10:60:0.75);
  \end{scope}
  \begin{scope}[shift={(5,0)}]
   \node at (-1.5,-0.25) {$=\sum\limits_{k\geq 0}$};
  \node[bulk] (b1) at (0,0) {};
    \node[coordinate] at (120:1) {} edge[bulk] (b1);
\node[coordinate] at (210:1) {} edge[bulk] (b1);
\node[coordinate] at (240:1) {} edge[bulk] (b1);
\draw[dotted, thick] (130:0.75) arc (130:200:0.75);
    \node[bulk,label=below:{$\til{p_2}$}] (b2) at (1,0) {} edge[bubu] node[above] {$G$} (b1);
    \node[bulk,label=below:{$\til{p_2}$}] (b3) at (2,0) {} edge[bubu] node[above] {$G$} (b2);
    \node[coordinate] (b4) at (3,0) {} edge[bubu] node[above] {$G$} (b3);
    \node[label=below:{($k$ vertices})] at (4,0) {$\cdots$};
    \node[coordinate] (b5) at (5,0) {}; 
     \node[bulk,label=below:{$\til{p_2}$}] (b6) at (6,0) {} edge[bubu] node[above] {$G$}(b5);
     \node[bulk] (b7) at (7,0) {} edge[bubu] node[above] {$G$}(b6);
 \node[coordinate]  at ($(b7)+(60:1)$) {} edge[bulk] (b7);
\node[coordinate]  at ($(b7)+(-30:1)$){} edge[bulk] (b7);
\node[coordinate]  at ($(b7)+(-60:1)$) {} edge[bulk] (b7);
\draw[dotted, thick] ($(b7)+(-20:0.75)$) arc (-10:50:0.75);
  \end{scope}
 \end{tikzpicture}
  \caption{Resumming binary vertices}\label{fig:binary vertices}
 \end{figure}
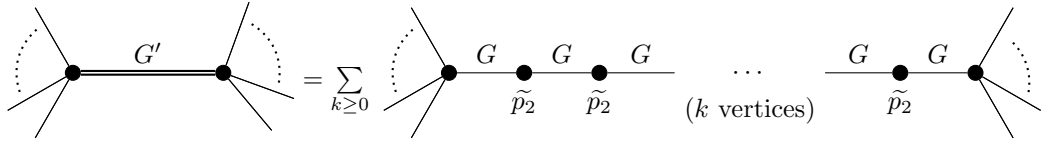 
 Thus, the path integral for the interaction potential $p$ reduces, by these manipulations (shift of the integration variable by a constant $\phi\ra \psi$, carrying a constant out and absorbing a quadratic term into the free part), to a path integral with interaction potential $\til{p}_{\geq 3}=\sum_{n\geq 3}\frac{\til{p}_n}{n!}\psi^n$ and mass $\til{m}$:
 $$ Z^{m,p}_\Sigma(\eta,\hbar)=  e^{-\frac{\til{p}_0}{\hbar}\mr{Area}(\Sigma)}\;\;Z^{\til{m},\til{p}_{\geq 3}}_\Sigma(\eta-\phi_\mr{cr},\hbar). $$
Mathematically, the r.h.s. here should be regarded as the definition of the left hand side. 

\section{Heuristic analysis of path integrals and gluing formulae}\label{sec:gluing}
In this section, we discuss the heuristic analysis of path integrals associated to the free massive scalar field theory on compact oriented Riemannian manifolds and explain how they lead to the gluing formula for zeta regularized determinants and Green's functions. Furthermore, we give a rigorous proof of the gluing formula (see Proposition (\ref{Proposition:gluing_formula_green}) below) for the Green's functions. 

In this section $\Sigma$ is a compact oriented Riemannian manifold, not necessarily of dimension two.
  
\subsection{BFK gluing formula for the zeta-regularized determinants}
First, we consider the partition function of the free massive scalar field theory and explain how it leads to the Burghelea-Friedlander-Kappeler (BFK) gluing formula for the zeta-regularized determinants \cite{BFK, Lee}. 

We are interested in the path integrals of the form:\footnote{For the purpose of this section we can set $\hbar = 1$.} 
\begin{equation}\label{eq:partition_function_free_theory} Z_{\Sigma}=\int_{\mathcal{F}_{\Sigma}}e^{-S_0(\phi)}\,D\phi
\end{equation}     
When $\Sigma$ is closed, we can rewrite
 \[S_0(\phi)=\frac12\int_{\Sigma}\phi(\Delta_{\Sigma}+m^2)\phi\, \dvol(\Sigma)
 \]
This means that the integral in (\ref{eq:partition_function_free_theory}) is a Gaussian integral. 
Hence, we have \[Z_{\Sigma}={\det}(\Delta_{\Sigma}+m^2)^{-\frac12}.\] 
More generally, if $\partial\Sigma=Y$ and $Y\ne\varnothing$, the partition function is defined by (\ref{eq:partition_function_free_theory_with_boundary}), which is:
\[Z_{\Sigma}({{\eta}})=(\det (\Delta_{\Sigma}^{}+m^2))^{-\frac12}e^{-S(\phi_{{\eta}})}.\] 

Let $\Sigma$ be a closed oriented Riemannian manifold obtained by gluing two compact oriented Riemannian manifolds glued along a common boundary component $Y$: $\Sigma=\Sigma_L\cup_{Y}\Sigma_R$, $\partial\Sigma_L=Y$ and $\partial\Sigma_R=\overline{Y}$.   Recall that for $\eta\in \Cn(Y)$, we have for $i\in\{L,R\},$ 
\[Z_{\Sigma_{i}}(\eta)=(\det(\Delta_{\Sigma_i}^{}+m^2))^{-\frac12}e^{-S_{i}(\phi_{{{\eta}}})}\] where $S_i$ are defined as in (\ref{eq:boundary_action}). Let us assume that there is a formal Fubini's theorem (also known as locality of path integrals): 
\begin{eqnarray}\label{formal_fubini_partition_function}
\begin{array}{lll}
\int_{\mathcal{F}_{\Sigma}} e^{-S(\phi)}\, D\phi 
 =\int_{\eta}\left(\int_{\pi^{-1}({{\eta}})} e^{-\left(S_L(\phi_L)+S_R(\phi_R)\right)}\,D\Phi\right)\,D{\eta}.
\end{array} 
\end{eqnarray} Then, this suggests the following gluing relation for the zeta-regularized determinants:

\begin{eqnarray}\label{eq:gluing_zeta_determinants}
\det(\Delta_{\Sigma}+m^2)=\det(\Delta_{\Sigma_L}^{}+m^2)\det(\Delta_{\Sigma_R}^{}+m^2)\det(D_{\Sigma_L,\Sigma_R})
\end{eqnarray}  
In summary, the locality of path integrals suggests a gluing formula for the zeta-regularized determinants. In fact, (\ref{eq:gluing_zeta_determinants}) is a theorem first proved by BFK in \cite{BFK} when $\Sigma$ is two-dimensional. It was later generalized to the 
case of arbitrary even dimension
by Lee \cite{Lee} under the assumption that the Riemannian metric is a product metric near the boundary. 

\begin{remark} 
The gluing relation (\ref{eq:gluing_zeta_determinants}) admits a generalization to the case when $Y$ is not necessarily a dividing hypersurface: for $Y\subset \Sigma$ any compact hypersurface, one has $\det(\Delta_\Sigma+m^2)=\det(\Delta_{\Sigma\backslash Y}+m^2)\,\det(D_{\Sigma})$. Here $\Sigma\backslash Y$ is understood as $\Sigma$ with two additional boundary components, $Y$ and $\bar{Y}$; $D_\Sigma$ is the sum of Dirichlet-to-Neumann operators at $Y$ and at $\bar{Y}$.
\end{remark}

\subsection{Path integral representation of Green's function and a gluing relation}
In this subsection, we again consider the free massive scalar theory. Let $\Sigma$ be a compact oriented Riemannian manifold with $\partial\Sigma=Y$ and $f\in\Cn(\Sigma).$ Let us define an observable $O_f$, which is by definition a function on the space of fields, by
\[O_f(\phi)=\int_{\Sigma}f\phi\,\dvol_{\Sigma}\] 
The expectation of value of $O_{f}$ defines a function on the space of boundary fields: 
\begin{eqnarray*}
\begin{array}{lll}
\left<O_{f}\right>(\eta)&=\int_{\pi^{-1}(\eta)} e^{-S(\phi)} O_{f}\,D\phi \\
                                     &=\int_{\pi^{-1}(\eta)} e^{-S(\hat\phi)-S(\phi_{\eta})} \left(\hat{O}_{f}(\hat\phi)+O_{f}(\phi_{\eta})\right)\,D\phi \\
                                     &={\det\left(\Delta_{\Sigma}^{}+m^2\right)^{-\frac12}}{e^{-S(\phi_{\eta})}O_{f}(\phi_{\eta})}\\
 \end{array} 
\end{eqnarray*} where $\hat{O}_{f}$ is the observable on the space of fields which vanish on the boundary and it is defined by 
\[\hat{O}_f(\hat\phi)=\int_{\Sigma}f\hat{\phi}\,\dvol_{\Sigma}.\] 
Given $f,g\in\Cn(\Sigma)$ and a boundary field $\eta$, one can show that: 
\begin{eqnarray*}
\begin{array}{lll} \left<O_{f} O_{g}\right>(\eta)&
                                                                       ={\det\left(\Delta_{\Sigma}+m^2\right)^{-\frac12}}{e^{-S(\phi_{\eta})}}\\&\left(O_{f}(\phi_{\eta})O_{g}(\phi_{\eta})
                                                 +\int_{\Sigma\times\Sigma}f(x)G_{\Sigma}(x,x')g(x')d^2x d^2x'\right)\,
\end{array}
\end{eqnarray*}
In particular,
\[
\left<O_{f} O_{g}\right>(0)={\det\left(\Delta_{\Sigma }+m^2\right)^{-\frac12}}\int_{\Sigma\times\Sigma}f(x)G_{\Sigma}(x,x')g(x')\,d^2x d^2x'
\]
 and taking $f=\delta_x$ and $g=\delta_{x'}$ we get the path integral represention of Green's function.

\subsubsection{Gluing relation for Green's functions}
The path integral representation of the Green's function and the formal Fubini type argument suggest that the Green's function with respect to the Dirichlet boundary condition satisfy a gluing relation and it can be proven rigorously (cf. Proposition \ref{Proposition:gluing_formula_green}). Let $\Sigma$ be a compact oriented Riemannian manifold obtained by gluing two compact oriented Riemannian manifolds glued along a common boundary component $Y$: $\Sigma=\Sigma_L\cup_{Y}\Sigma_R$, $\partial\Sigma_L=\overline{Y_L}\sqcup Y$ and $\partial\Sigma_R=\overline{Y}\sqcup Y_R$. Let $i\in\{L,R\}$.  Let $G_{\Sigma_i}$ be Green's functions on $\Sigma_i$ and $G_{\Sigma}^{}$ be the Green's function on $\Sigma.$ For $f,g\in\Cn(\Sigma)$ and $\eta_i\in \Cn(Y_i)$, by definition we have
\begin{eqnarray*}\left<O_{f} O_{g}\right>(\eta_L,\eta_R)
                          =\int_{\pi^{-1}(\eta_L,\eta_R)} e^{-S(\phi)} O_{f}O_{g}\,D\phi .
\end{eqnarray*}Let us assume that there is a formal Fubini's theorem:
\begin{eqnarray}\label{formal_fubini_green's_function}
\begin{array}{lll}
&\int_{\pi^{-1}(\eta_L,\eta_R)} e^{-S(\phi)} O_{f}O_{g}\,D\phi \\
 & =\int_{\eta}\left(\int_{\pi^{-1}(\eta,\eta_L,\eta_R)}e^{-\left(S(\phi_L)+S(\phi_R)\right)}O_{f}O_{g}\,D\Phi \right)\, D\eta,
\end{array}
\end{eqnarray} where $\Phi=(\phi_L,\phi_R)$, $\phi_i\in\Cn(\Sigma_i)$ and $\eta\in \Cn(Y)$. 
 
Next, we want to analyze (\ref{formal_fubini_green's_function}) in various situations. We first fix some notations. Given $\eta_i\in \Cn(Y_i),$  we use $\phi_{(\eta_L,\eta_R)}$ to denote the unique solution to Dirichlet boundary value problem on $\Sigma$ associated to $\Delta_{\Sigma}+m^2$ with boundary values $\eta_L$ and $\eta_R$. Similarly, given $\eta\in\Cn(Y)$, we will use $\phi^{(L)}_{(\eta_L,\eta)}$ and $\phi^{(R)}_{(\eta,\eta_R)}$ for the solutions to Dirichlet boundary value problems on $\Sigma_L$ and $\Sigma_R$ respectively. 
\\[1mm]
\underline{\textbf{Case (i):}} Assume both $f$ and $g$ are supported in $\Sigma_L.$ Then,
\begin{eqnarray*}&\left<O_{f} O_{g}\right>(\eta_L,\eta_R)\\
&={\det(\Delta_{\Sigma_L}^{}+m^2)^{-\frac12}\det(\Delta_{\Sigma_R}^{}+m^2)^{-\frac12}}\int_{\eta} e^{-\left(S(\phi_{\eta,\eta_R}^{(R)})+S(\phi_{\eta_L,\eta}^{(L)})\right)}\\
&\cdot\bigg\{\int_{\Sigma_L}f\phi_{\eta_L,\eta}^{(L)}\,\dvol_{\Sigma_L}\cdot\int_{\Sigma_L}g\phi_{\eta_L,\eta}^{(L)}\,\dvol_{\Sigma_L}\\
&+\int_{\Sigma_L\times\Sigma_L}f(x)g(x'))G_{\Sigma_L}^{}(x,x')\,d^2xd^2x'\bigg\}\,D\eta \\
&={\det(\Delta_{\Sigma_L}^{}+m^2)^{-\frac12}\det(\Delta_{\Sigma_R}^{}+m^2)^{-\frac12}\det D_{\Sigma_L,\Sigma_R}^{-\frac12}}
e^{-S(\phi_{(\eta_L,\eta_R)})}\\
&\bigg\{\int_{\Sigma_L\times\Sigma_L}f(x)g(x')G_{\Sigma_L}^{}(x,x')\,d^2x d^2 x'+\\
&\int_{\Sigma_L\times\Sigma_L}\left(\int_{Y\times Y}\dfrac{\partial G_{\Sigma_L}(x,y)}{\partial\nu(y)}f(x)g(x')\dfrac{\partial G_{\Sigma_L}(y',x')}{\partial\nu(y')}K(y,y')\,dy dy'\right)\,d^2x d^2x' \bigg\}\\
&=\det(\Delta_{\Sigma}^{}+m^2)^{-\frac12}e^{-S(\phi_{(\eta_L,\eta_R)})}
\bigg\{\int_{\Sigma_L\times\Sigma_L}f(x)g(x')G_{\Sigma_L}^{}(x,x')\,d^2x d^2x'+\\
&\int_{\Sigma_L\times\Sigma_L}\left(\int_{Y\times Y}\dfrac{\partial G_{\Sigma_L}(x,y)}{\partial\nu(y)}f(x)g(x')\dfrac{\partial G_{\Sigma_L}(y',x')}{\partial\nu(y')}K(y,y')\,dy dy' \right) \,d^2x d^2x'\bigg\},
\end{eqnarray*} 
where $K$ is the integral kernel of the inverse of the Dirichlet-to-Neumann operator $D_{\Sigma_L,\Sigma_R}$. We have used the gluing formula for the Dirichlet-to-Neumann operators \cite{SK2014} which amounts gluing of solutions of Helmholtz equations 
and the BFK gluing formula for the zeta-regularized determinants above. 
\\[1mm]
\underline{\textbf{Case (ii):}} Suppose $f$ is supported in $\Sigma_L$ and $g$ is supported in $\Sigma_R.$ Then as above,
\begin{eqnarray*}&\left<O_{f} O_{g}\right>(\eta_L,\eta_R)\\
&=\int_{\eta}\left(\int_{\pi^{-1}(\eta,\eta_L,\eta_R)} e^{-S(\phi_L)-S(\phi_R)}\int_{\Sigma_L}f\phi_L\,\dvol_{\Sigma_L}\int_{\Sigma_R}g\phi_R\,\dvol_{\Sigma_R}\,D\Phi\right)\,D\eta\\
&=\int_{\Sigma_L\times\Sigma_R}\left(\int_{Y\times Y}\dfrac{\partial G_{\Sigma_L}(x,y)}{\partial\nu(y)}f(x)g(x')\dfrac{\partial G_{\Sigma_R}(y',x')}{\partial\nu(y')}K(y,y')\,dydy'\right)d^2 xd^2x'
\end{eqnarray*}

If we take $f=\delta_x$ and $g=\delta_{x'}$ in the relations above, they suggest a gluing relation for the Green's function. This gluing relation can be proven mathematically which is the content of the following proposition.

\begin{prop} \label{Proposition:gluing_formula_green} The Green's functions satisfy the following gluing relation:
    \begin{enumerate}[(i)]
    \item For $i\in\{L,R\}$ and  $x,x'\in \Sigma_i$:
    \begin{align*}
    \!\!\!G_{\Sigma}^{}(x,x')-G_{\Sigma_i}^{}(x,x')=\int_{Y\times Y}\frac{\partial G_{\Sigma_i}^{}(x,y)}{\partial \nu(y)} K(y,y')\frac{\partial G_{\Sigma_i}^{}(y',x')}{\partial \nu(y')}\,dydy'
  \end{align*}
  \item For $x\in\Sigma_L$ and $x'\in\Sigma_R:$ 
  \begin{align*} 
  G_{\Sigma}^{}(x,x')=\int_{Y\times Y}\frac{\partial G_{\Sigma_L}^{}(x,y)}{\partial \nu(y)} K(y,y')\frac{\partial G_{\Sigma_R}^{}(y',x')}{\partial \nu(y')}\,dydy'
  \end{align*}
  \end{enumerate} 
  \begin{proof} This proposition follows from the proof of Theorem 2.1  \cite{Carron} and the Green's identity. Let us consider the case when $x,y\in\Sigma_L$, the other cases follow similarly.  In Theorem 2.1 \cite{Carron}, it is shown that $K(y,y')=G_{\Sigma}(y,y')$ on $Y.$ By the Green's identity, we have
\begin{eqnarray}\label{Gluing_Carron}
G_{\Sigma}^{}(x,x')-G_{\Sigma_L}^{}(x,x')= -
\int_{Y}G_{\Sigma}(x,y')\frac{\partial G_{\Sigma_L}^{}(y',x')}{\partial \nu(y')}\,dy'
\end{eqnarray}
and 
\begin{eqnarray}\label{Green's_identity}
-  \int_{Y}\frac{\partial G_{\Sigma_L}^{}(x,y)}{\partial \nu(y)}G_{\Sigma}(y,y')\, dy =G_{\Sigma}(x,y').
\end{eqnarray}
Now the proposition, when $x,x'\in \Sigma_L$, follows from combining (\ref{Gluing_Carron}) and (\ref{Green's_identity}). The other cases follow similarly.
\end{proof}
\end{prop}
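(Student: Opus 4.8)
The plan is to reduce both identities to two successive applications of Green's second identity on a single piece ($\Sigma_L$ or $\Sigma_R$), using as the one nontrivial input the identification of the interface kernel $K$ with the restriction of the full Green's function to $Y$. Concretely, I would first record the fact (this is Theorem 2.1 of \cite{Carron}, and can alternatively be derived from the jump relation for the single-layer potential together with the gluing formula $D_{\Sigma_L,\Sigma_R}=D_{\Sigma_L}+D_{\Sigma_R}$ of Remark \ref{rem:DtoNgluing}) that
\[
K(y,y')=G_\Sigma(y,y'),\qquad y,y'\in Y,
\]
where $K$ is the integral kernel of $(D_{\Sigma_L}+D_{\Sigma_R})^{-1}$. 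I expect this identification to be the conceptual heart of the argument; everything after it is bookkeeping with Green's identities.

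For part (i), fix $x,x'$ in the interior of $\Sigma_i$ (say $i=L$) and consider $z\mapsto G_\Sigma(z,x')$ and $z\mapsto G_{\Sigma_L}(z,x')$ on $\Sigma_L$. Both solve the Helmholtz equation with the same source $\delta_{x'}$, so their difference $u=G_\Sigma(\cdot,x')-G_{\Sigma_L}(\cdot,x')$ is a smooth homogeneous solution on $\Sigma_L$ which vanishes on the outer boundary and equals $G_\Sigma(\cdot,x')|_Y$ on $Y$ (since $G_{\Sigma_L}$ satisfies the Dirichlet condition there). Applying Green's second identity to the pair $G_{\Sigma_L}(\cdot,x)$, $G_\Sigma(\cdot,x')$ on $\Sigma_L$ — the bulk source terms cancel against each other, and only the $Y$-part of $\partial\Sigma_L$ survives because $G_{\Sigma_L}$ vanishes on all of $\partial\Sigma_L$ — yields
\[
G_\Sigma(x,x')-G_{\Sigma_L}(x,x')=-\int_Y G_\Sigma(x,y')\,\frac{\partial G_{\Sigma_L}(y',x')}{\partial\nu(y')}\,dy'.
\]

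The second step rewrites the factor $G_\Sigma(x,y')$ with $y'\in Y$. Since $y'$ lies on $Y$ and hence not in the interior of $\Sigma_L$, the function $z\mapsto G_\Sigma(z,y')$ restricts to $\Sigma_L$ as a genuine solution of the homogeneous Helmholtz equation, so by the Poisson representation of Lemma \ref{BVP} it is reproduced from its own boundary data:
\[
G_\Sigma(x,y')=-\int_Y \frac{\partial G_{\Sigma_L}(x,y)}{\partial\nu(y)}\,G_\Sigma(y,y')\,dy.
\]
Substituting $G_\Sigma(y,y')=K(y,y')$ and inserting into the previous display gives exactly the double integral claimed in (i). Part (ii) follows from the identical two-step scheme, except that now there is no term $G_{\Sigma_L}(x,x')$ to subtract (the source $x'$ lies in $\Sigma_R$, away from $\Sigma_L$), so the first representation already reads $G_\Sigma(x,x')=-\int_Y \tfrac{\partial G_{\Sigma_L}(x,y)}{\partial\nu(y)}\,G_\Sigma(y,x')\,dy$; one then expands $G_\Sigma(y,x')=G_\Sigma(x',y)$ by the same Poisson formula applied on $\Sigma_R$ and uses the symmetry of $K$.

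The steps I expect to require the most care are, first, the identification $K=G_\Sigma|_{Y\times Y}$ itself — this is where the gluing of Dirichlet-to-Neumann operators and the jump of normal derivatives across $Y$ genuinely enter — and second, the justification of the representation formulas up to and on the interface $Y$: both the Poisson kernel $\partial_\nu G_{\Sigma_L}$ and the interface kernel $K(y,y')=\mathrm{O}(\log d(y,y'))$ are singular on the diagonal, so one must verify convergence of the boundary integrals and the legitimacy of Green's identity as the source point approaches $Y$. Finally, a consistent choice of outward normal on each piece (so that the two contributions add to $D_{\Sigma_L}+D_{\Sigma_R}$) must be fixed throughout to keep all signs straight.
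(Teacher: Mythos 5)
Your proposal is correct and follows essentially the same route as the paper: cite Carron's Theorem 2.1 for the identification $K(y,y')=G_\Sigma(y,y')$ on $Y$, derive the two Green's-identity representations (the paper's Eqs.\ (\ref{Gluing_Carron}) and (\ref{Green's_identity}), the latter being exactly your Poisson-representation step via Lemma \ref{BVP}), and combine them. Your added remarks on the diagonal singularities of $K$ and $\partial_\nu G_{\Sigma_L}$ and on sign conventions for the outward normals are reasonable points of care that the paper's proof leaves implicit.
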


\begin{remark}\label{rem:greensfunctionfigure}
The gluing relation for the Green's function can be pictorially represented as in Figure \ref{fig:gluingrelation}. We represent the kernel of the inverse Dirichlet-to-Neumann operator by a zig-zag: \begin{tikzpicture} \node[bdry] (b1) at (0,0){}; \node[bdry] at(1,0){} edge[bobo2] (b1); \end{tikzpicture}

\begin{figure}[h!]
\centering
\begin{tikzpicture}[scale=.6]
\begin{scope}[shift={(-1,0)}]
\coordinate[label=left:$G_{\Sigma}$] (K) at (1.75, 2.2);
\coordinate[label=left:$\Sigma_L$] (G) at (3.5, 2.2);
\coordinate[label=left:$Y$] (M) at (4.6, 0);
\draw[ultra thick] (0,0) .. controls (1,2) and (2,2) .. (3,0);
\node[bulk] at (0,0){};
\node[bulk] at (3,0){};
\draw[densely dotted](4,0)--(4,3);
\end{scope}
\coordinate[label=left:${=}$] (i) at (5,1.5);
\begin{scope}[shift={(1,0)}]
\coordinate[label=left:$G_{\Sigma_L}$] (K) at (6.75, 2.2);
\draw[thick] (5,0) .. controls (6,2) and (7,2) .. (8,0);
\node[bulk] at (5,0){};
\node[bulk] at (8,0){};
\end{scope}
\coordinate[label=left:${+}$] (i) at (10,1.5);
\begin{scope}[shift={(2,0)}]
\coordinate[label=left:$\Sigma_L$] (G) at (10,2.4);
\coordinate[label=left:$Y$] (M) at (12.8, -.5);
\draw[densely dotted](12,0)--(12,3);
\node[bdry] at (12,.5){};
\node[bdry] at (12,2){};
\node[bulk] at (9,.5){};
\node[bulk] at (9,2){};
\draw[bubo]
(9, .5)--(12, .5);
\draw[bubo]
(9,2)--(12, 2);
\draw[bobo2]
(12, .5)--(12,2);
\end{scope}
\end{tikzpicture}
\vskip2mm
\begin{tikzpicture}[scale=.6]
\begin{scope}[shift={(-1,0)}]
\coordinate[label=left:$G_{\Sigma}$] (K) at (1.75, 2.2);
\coordinate[label=left:$\Sigma_L$] (G) at (0, 2.4);
\coordinate[label=left:$\Sigma_R$] (G) at (3.5, 2.4);
\coordinate[label=left:$Y$] (M) at (2, -0.4);
\draw[ultra thick] (0,0) .. controls (1,2) and (2,2) .. (3,0);
\node[bulk] at (0,0){};
\node[bulk] at (3,0){};

\draw[densely dotted](1.5,0)--(1.5,3);
\end{scope}
\coordinate[label=left:${=}$] (j) at (4,1.25);
\begin{scope}[shift={(1,0)}]
\draw[densely dotted](6.5,0)--(6.5,3);
\node[bdry] at (6.5,.5){};
\node[bdry] at (6.5,2){};
\node[bulk] at (8.5,.5){};
\node[bulk] at (4.5,2){};
\draw[bubo]
(4.5, 2)--(6.5, 2);
\draw[bubo]
(6.5,.5)--(8.5, .5);
\draw[bobo2]
(6.5, .5)--(6.5,2);
\coordinate[label=left:$\Sigma_L$] (G) at (5.5, 2.4);
\coordinate[label=left:$\Sigma_R$] (G) at (9.5, 2.4);
\coordinate[label=left:$Y$] (M) at (7, 0);
\end{scope}
\end{tikzpicture}
\caption{Gluing relation for the Green's function.
Thick lines mean one should associate the function corresponding to $\Sigma$.}
\label{fig:gluingrelation}
\end{figure}
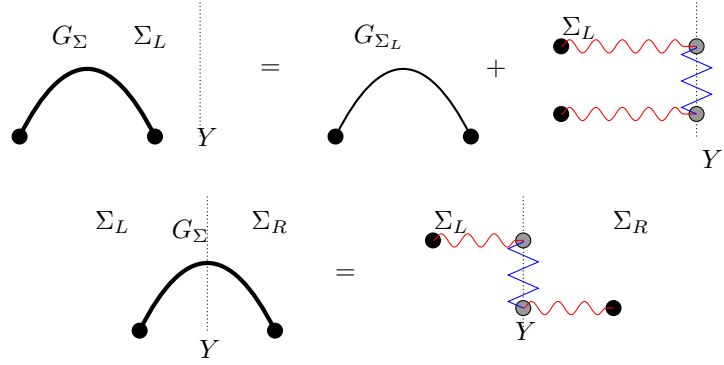
\end{remark}
\begin{remark}
The gluing formula implies similar formulae for the normal derivatives of the Green's function. These look schematically like the ones in Figure \ref{fig:gluingrelation2}.
\end{remark}
\begin{figure}[h!]
\centering
\begin{tikzpicture}[scale=.6]
\begin{scope}[shift={(-1,0)}]
\coordinate[label=left:$\partial_\nu G_{\Sigma}$] (K) at (1.75, 2.2);
\coordinate[label=left:$\Sigma$] (G) at (3.5, 2.2);
\coordinate[label=left:$Y$] (M) at (4.6, 0);
\node[bdry] (b1) at (0,1) {};
\node[bulk] at (2,2){} edge[bubo,ultra thick] (b1);
\draw[densely dotted](0,0)--(0,3);
\coordinate[label=below:$\partial_L\Sigma_L$] (a) at (0,0);
\draw[densely dotted](4,0)--(4,3);
\end{scope}
\coordinate[label=left:${=}$] (i) at (5,1.5);
\begin{scope}[shift={(1,0)}]
\coordinate[label=left:$\partial_\nu G_{\Sigma_L}$] (K) at (6.75, 2.2);
\node[bdry] at (5,1) {};
\node[bulk] at (7,2) {};
\draw[bubo] (5,1) -- (7,2);
\draw[densely dotted](5,0)--(5,3);
\draw[densely dotted] (9,0) --(9,3);
\coordinate[label=below:$\partial_L\Sigma_L$] (a) at (5,0);
\end{scope}
\coordinate[label=left:${+}$] (i) at (11,1.5);
\begin{scope}[shift={(4,0)}]
\coordinate[label=left:$\Sigma_L$] (G) at (10,2.4);
\coordinate[label=left:$Y$] (M) at (12.8, -.5);
\node[bdry] at (8,1) {};
\node[bulk] at (10,2) {};
\draw[densely dotted] (8,0) -- (8,3);
\coordinate[label=below:{$\partial_L\Sigma_L$}] (u) at (9,0);
\draw[densely dotted](12,0)--(12,3);
\node[bdry] at (12,.5) {} ;
\node[bdry] at (12,2) {} ;\draw[bubo]
(8, 1)--(12, .5);
\draw[bubo] (10,2)--(12, 2);
\draw[bobo2] (12, .5)--(12,2);
\end{scope}
\end{tikzpicture}
\vskip2mm
\begin{tikzpicture}[scale=.6]
\begin{scope}[shift={(-1,0)}]
\draw[densely dotted] (-1,0) -- (-1,3); 
\coordinate[label=below:$\partial_L\Sigma_L$] (i) at (-1,0);
\node[bdry] (b3) at (-1,.5) {};
\node[bdry] at (-1,2.5) {} edge [bobo, bend left, ultra thick] (b3);
\coordinate[label=right:$\Sigma_L$] (G) at (0, 2.4);
\coordinate[label=left:$\Sigma_R$] (G) at (3.5, 2.4);
\coordinate[label=left:$Y$] (M) at (2, -0.4);
\draw[densely dotted](1.5,0)--(1.5,3);
\end{scope}
\coordinate[label=left:${=}$] (j) at (4,1.25);
\begin{scope}[shift={(6,0)}]
\draw[densely dotted] (-1,0) -- (-1,3); 
\coordinate[label=below:$\partial_L\Sigma_L$] (i) at (-1,0);
\node[bdry] (b3) at (-1,.5) {};
\node[bdry] at (-1,2.5) {} edge [bobo, bend left] (b3);
\coordinate[label=right:$\Sigma_L$] (G) at (1, 3);
\coordinate[label=left:$\Sigma_R$] (G) at (3.5, 2.4);
\coordinate[label=left:$Y$] (M) at (2, -0.4);
\draw[densely dotted](1.5,0)--(1.5,3);
\end{scope}
\coordinate[label=left:${+}$] (j) at (10,1.25);

\begin{scope}[shift={(6,0)}]
\draw[densely dotted](6.5,0)--(6.5,3);
\node[bdry] at (6.5,.5){};
\node[bdry] at (6.5,2) {};
\node[bdry] at (4.5,.5){};
\node[bdry] at (4.5,2) {};
\draw[densely dotted] (4.5,0)--(4.5,3);
\draw[bobo] (4.5, 2)--(6.5, 2);
\draw[bobo]
(6.5,.5)--(4.5, .5);
\draw[bobo2]
(6.5, .5)--(6.5,2);
\coordinate[label=left:$\Sigma_L$] (G) at (5.5, 2.4);
\coordinate[label=left:$\Sigma_R$] (G) at (9.5, 2.4);
\coordinate[label=left:$Y$] (M) at (7, 0);
\end{scope}
\end{tikzpicture}
\caption{Gluing relation for normal derivatives of Green's functions. Thick lines mean one should associate the function corresponding to $\Sigma$.}
\label{fig:gluingrelation2}
\end{figure}
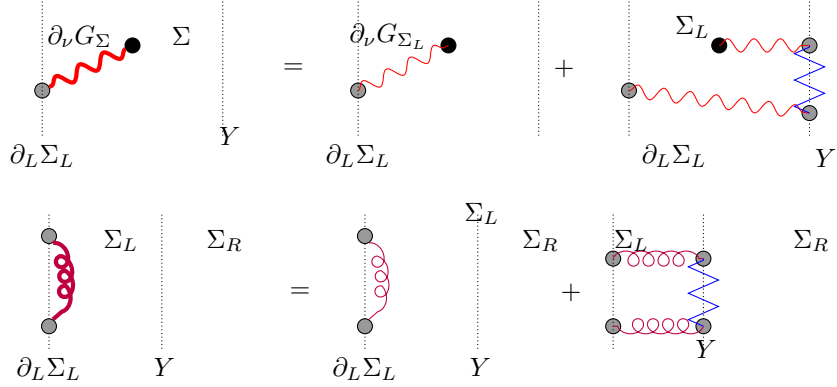
The goal of this paper is to show that the gluing formulae for determinants and Green's functions imply a gluing formula - a formal Fubini's theorem - for the perturbative partition functions. However, as it turns out, the gluing formula for the Green's function is not compatible with 
normal ordering, i.e. considering only Feynman diagrams without tadpoles (short loops) as in Section \ref{sec:pert_quant}. We discuss this issue in the next section.

\section{Regularization of tadpoles}\label{sec:tadpoles}
In principle, the formal application of Wick's theorem results in graphs with short loops. Under the usual Feynman rules, those would be assigned $G(x,x)$ - the (undefined) value of the Green's function on the diagonal. Normal ordering is tantamount to defining $G(x,x) = 0$, and with this assignment one obtains a well-defined perturbative partition function, as was shown in Section \ref{sec:pert_quant}. Below, we will explain why this definition is not quite satisfactory. We will then show how to overcome those problems by introducing more sophisticated regularizations $\tau(x)$ for $G(x,x)$. Finally, we discuss the relation between this approach and the ultra-violet cutoffs oftentimes used in quantum field theory.

\subsection{Why introduce tadpoles?} \label{sec: why tadpoles?}
First, the normal-ordered partition function does not satisfy the gluing formula: 
\begin{equation*}
Z_\Sigma \neq \langle \widehat{Z_{\Sigma_L}}({\tilde{\eta}}),\widehat{Z_{\Sigma_R}}({\tilde{\eta}}) \rangle_{\Sigma_L,Y,\Sigma_R}
\end{equation*}
Indeed, from Equation \eqref{eq:exp_value_bdry} we see that the right hand side contains terms of the form in Figure \ref{fig:gluingviolation}
$$ \int_{Y\times Y}\frac{\partial G_{\Sigma_i}^{}(x,y)}{\partial \nu(y')}K(y,y')\frac{\partial G_{\Sigma_i}^{}(y',x)}{\partial \nu(y')}\,dydy',$$
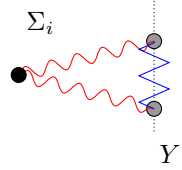
\begin{figure}[h!]
\begin{center}
\begin{tikzpicture}[scale=.6]
\coordinate[label=left:$\Sigma_i$] (G) at (10,2.4);
\coordinate[label=left:$Y$] (M) at (12.8, -.5);
\draw[densely dotted](12,0)--(12,3);
\node[bdry] at (12,.5){};
\node[bdry] at (12,2) {};
\draw[bubo]
(9, 1.25)--(12, .5);
\draw[bubo]
(9,1.25)--(12, 2);
\node[bulk] at (9,1.25){};
\draw[bobo2]
(12, .5)--(12,2);
\end{tikzpicture}
\end{center}
\caption{Diagrams which violate normal ordering when gluing}\label{fig:gluingviolation}
\end{figure}
which do not appear from the  gluing formula in Remark \ref{rem:greensfunctionfigure} for the Green's function if there are no tadpole diagrams. One can see in examples that they do not vanish.  \\
Second, defining $G(x,x) = 0$ is inconsistent with zeta-regularization of the determinant already at the level of the free theory, in the following sense. Namely, we can consider  a quadratic perturbation 
$$S = \frac{1}{2}\int_\Sigma d\phi \wedge *d\phi + m^2 \phi \wedge * \phi + \alpha  \phi \wedge *\phi.$$ 
If we include $\alpha$ in the free action, the corresponding partition function is 
$$Z = {\det (\Delta_\Sigma + m^2 + \alpha)^{-\frac12}}.$$
On the other hand, treating $\alpha$ as a perturbation, by the convention that $G(x,x)=0$ we obtain 
$$Z = {\det (\Delta_\Sigma + m^2)^{-\frac12}}.$$
We are thus led to look for another assignment $\tau(x) = G(x,x)$ which will resolve these issues. 
This motivates the definitions in the subsection below.

\subsection{Tadpole functions}
Given $\tau\in L^p(\Sigma)$ for arbitrarily large $p$, we can define the corresponding Feynman rules $F^{\tau}(\Gamma)$ where we evaluate short loops using $\tau$. Since short loops are often called \emph{tadpoles}, we will refer to $\tau$ as a \emph{tadpole function.}
\begin{lemma}
Let $\Gamma$ be a Feynman diagram, possibly with short loops. Then $F^{\tau}(\Gamma) \in 
\funlog(C^\circ_n(\dd \Sigma))
$. 
\end{lemma}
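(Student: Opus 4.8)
The plan is to reduce the statement to Proposition \ref{prop: Feynman singularities} by treating the tadpole decorations as a mild multiplicative perturbation of a loopless weight. First I would delete all short loops from $\Gamma$ to obtain a loopless graph $\Gamma_0$, so that the Feynman rules factor as
\[
\omega^\tau_\Gamma = \Big(\prod_{v\in V_b}\tau(x_v)^{s_v}\Big)\,\omega_{\Gamma_0},
\]
where $s_v\geq 0$ counts the short loops at the bulk vertex $v$ and $\omega_{\Gamma_0}$ is exactly the integrand treated in Proposition \ref{prop: Feynman singularities}. Thus $F^\tau(\Gamma)[\tilde\eta]$ is obtained by integrating out the bulk points the same object as before, multiplied by the function $\prod_v\tau(x_v)^{s_v}$ which depends \emph{only} on the bulk variables. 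The only input about $\tau$ I would use is that $\tau\in L^p(\Sigma)$ for every finite $p$ and that $\tau$ is smooth — hence bounded — on compact subsets of the open interior $\Sigma\setminus\partial\Sigma$, so that all of its singular behaviour is confined to a neighbourhood of $\partial\Sigma$.

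Next I would establish convergence of the bulk integral for fixed, pairwise distinct boundary points $y_1,\dots,y_n$. By the asymptotics \eqref{G estimates} each bulk--bulk factor is $\mathrm{O}(\log d)$ and each bulk--boundary factor is $\mathrm{O}(1/d)$; the former lies in $L^q_{\mathrm{loc}}$ for all $q$ and the latter in $L^q_{\mathrm{loc}}$ for all $q<2$, while each $\tau(x_v)$ lies in $L^p$ for all $p$. Choosing the Hölder exponents so that each bulk point carries total exponent strictly below the critical value $2$ — which is possible precisely because boundary vertices are univalent, so at most one $1/d$-singularity is attached to a given bulk point as it hits the boundary, and because $\tau$ can absorb an arbitrarily small share of the available exponent — Hölder's inequality shows that $\prod_v\tau(x_v)^{s_v}\,|\omega_{\Gamma_0}|$ is integrable over $\Sigma^N$. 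This is the tadpole-decorated analogue of the integrability statements underlying Proposition \ref{prop: Feynman singularities}.

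Then I would analyse the diagonal singularities as boundary points collide, following the scaling arguments in the proof of Proposition \ref{prop: Feynman singularities}. Away from $\partial\Sigma$ the factors $\tau(x_v)$ are bounded, so collisions of bulk points among themselves are unaffected and the worst behaviour is still governed by the model integrals \eqref{prop 3.16 log model integral} and \eqref{prop 3.16 power integral}. The genuinely new feature arises when a tadpole-carrying bulk vertex $x_\alpha$ is dragged toward the boundary together with a cluster of colliding boundary points; there the relevant model integral acquires an extra factor $\tau(x_\alpha)$. I would bound these weighted model integrals by splitting $\tau$ into a bounded part plus its explicit logarithmic boundary singularity (coming from the image contribution to $G_\Sigma$, cf. Lemma \ref{DGreen's function}) and integrating the singular part directly against the scaling kernel, tracking the dependence on the collision parameters $\epsilon$ and $\delta$ scale by scale.

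The main obstacle is exactly this last point. The logarithmic blow-up of $\tau$ near $\partial\Sigma$ multiplies the clean power/log behaviour of \eqref{prop 3.16 log model integral}--\eqref{prop 3.16 power integral} by additional logarithmic factors (a direct computation on the flat half-plane already upgrades a single logarithm to $\mathrm{O}(\log^2 d)$ on a codimension-one diagonal), so a crude Hölder estimate only yields $\mathrm{O}(d^{-\varepsilon})$ for every $\varepsilon>0$, which is too weak. I would therefore argue by the scale-by-scale integration above to pin down the sharp type of the singularity and confirm that it remains in the admissible class of Definition \ref{def: adm singularities}: at most logarithmic (up to integrable logarithmic corrections) on codimension-one diagonals, and of order $1/\epsilon^{k-2}$ up to such corrections on the collision of $k\geq 3$ points, with simultaneous collisions controlled by the corresponding product estimate. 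Since these polylogarithmic corrections are still integrable and remain harmless when paired against the logarithmic kernel $K$ of the gluing pairing, this verification suffices to conclude $F^\tau(\Gamma)\in\funlog(C^\circ_n(\dd\Sigma))$.
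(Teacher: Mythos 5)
Your route is the same as the paper's: the paper's entire proof of this lemma is the single sentence ``by inspection of the proof of Proposition \ref{prop: Feynman singularities}, we see that it adapts to this case,'' and what you have written is exactly that inspection carried out in detail --- factor out the tadpole decorations, check convergence of the bulk integral using $\tau\in\bigcap_p L^p$, and rerun the collision analysis. So there is no divergence of method, only of thoroughness; and in being thorough you have put your finger on the one point that ``adapts by inspection'' quietly glosses over.

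That point is where your argument does not quite close as written. You correctly compute that when a tadpole-bearing bulk vertex is dragged onto a pair of colliding boundary points, the factor $\tau(x_\alpha)\sim\frac{1}{2\pi}\log d(x_\alpha,\partial\Sigma)$ upgrades the model integral \eqref{prop 3.16 log model integral} from $\mathrm{O}(\log d(y_i,y_j))$ to $\mathrm{O}(\log^2 d(y_i,y_j))$. But Definition \ref{def: adm singularities}(a) literally requires $f=\mathrm{O}(\log d(y_i,y_j))$, and $\log^2$ is not $\mathrm{O}(\log)$; so your final sentence, asserting membership in $\funlog(C^\circ_n(\partial\Sigma))$, contradicts your own estimate under a strict reading. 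The honest conclusion is either (i) Definition \ref{def: adm singularities} must be read as allowing arbitrary powers of logarithms --- which is harmless for every downstream use, since polylogarithmic singularities are still integrable on codimension-one diagonals and still pair convergently against the $\mathrm{O}(\log)$ kernel $K$ --- or (ii) one imposes on $\tau$ the boundary behaviour $\tau=\mathrm{O}(\log d(\cdot,\partial\Sigma))$ enjoyed by $\tau^{\mathrm{reg}}$, $\tau^{\mathrm{split}}$ and their gluings, rather than only $\tau\in\bigcap_p L^p$; note that the $L^p$ hypothesis alone does not even bound $\tau$ polylogarithmically near $\partial\Sigma$, so without (ii) the scale-by-scale argument you propose cannot deliver the claimed polylog control. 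State one of these explicitly and the proof is complete; as written, the last step asserts more than the preceding analysis establishes.
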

{  
\begin{proof}
By inspection of the proof of Proposition \ref{prop: Feynman singularities}, we see that it adapts to this case.
\end{proof}
We can now define the partition function with tadpole $\tau$. 
\begin{definition} \label{def: Z tadpole}
We define the partition function with respect to $\tau$: 
\begin{equation}\label{Z^tau}
Z^{\tau}_\Sigma({\tilde{\eta}}_L,{\tilde{\eta}}_R) = \frac{1}{\det(\Delta_\Sigma+m^2)^\frac12}\sum_{\Gamma}\frac{F^{\tau}(\Gamma)}{|\mathrm{Aut}(\Gamma)|}\hbar^{\ell(\Gamma)}
\end{equation}
\end{definition} 
}

Suppose we have two manifolds $\Sigma_L$, $\Sigma_R$ with a common boundary component $Y$ and tadpole functions $\tau_i$ on $\Sigma_i$ for $i\in\{L,R\}$.  We can then define a function $\tau_L * \tau_R$ on $\Sigma_L \cup_Y \Sigma_R$ by setting for $x \in \Sigma_i$
\begin{equation}\label{tadpole gluing formula}
(\tau_L * \tau_R)(x) = \tau_i(x) + \int_{(y,y') \in Y\times Y}\frac{\partial G_{\Sigma_i}^{}(x,y)}{\partial \nu(y)} K(y,y')\frac{\partial G_{\Sigma_i}^{}(y',x)}{\partial \nu(y')}dydy'.
\end{equation}
\begin{lemma} The following holds:
\[\tau_L * \tau_R\in L^p(\Sigma_L \cup_Y \Sigma_R)\]
for any $p>0$.
\end{lemma}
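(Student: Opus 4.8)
The plan is to reduce the statement to a singularity estimate on the second summand in \eqref{tadpole gluing formula}. Since each $\tau_i$ lies in $L^p(\Sigma_i)$ for arbitrarily large $p$ by hypothesis, and since the gluing interface $Y$ has measure zero in $\Sigma_L\cup_Y\Sigma_R$, it suffices to show that for each $i\in\{L,R\}$ the function
\[
h_i(x) := \int_{(y,y')\in Y\times Y}\frac{\partial G_{\Sigma_i}(x,y)}{\partial\nu(y)}\,K(y,y')\,\frac{\partial G_{\Sigma_i}(y',x)}{\partial\nu(y')}\,dy\,dy'
\]
belongs to $L^p(\Sigma_i)$ for every $p>0$; then $\tau_L*\tau_R$ is, on each piece, a sum of $L^p$ functions and hence $L^p$ on the glued manifold.

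First I would locate the possible singularities of $h_i$. For $x$ in the interior of $\Sigma_i$ the integrand is smooth in $(y,y')$ away from the diagonal $y=y'$, where $K(y,y')=\OO(\log d(y,y'))$ is integrable, so $h_i$ is continuous there. Near a boundary component of $\Sigma_i$ other than $Y$ the kernels $\partial_\nu G_{\Sigma_i}(x,\cdot)$ are smooth (indeed $G_{\Sigma_i}(x,\cdot)$ satisfies the Dirichlet condition), so $h_i$ stays bounded. Thus the only place where $h_i$ can blow up is as $x\to Y$.

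The key step is the asymptotic analysis near $Y$. Here I would use that $-\partial G_{\Sigma_i}(x,y)/\partial\nu(y)$, for $y\in Y$ and $x$ in the bulk, is the Poisson kernel of the Dirichlet problem (cf.\ Lemma \ref{BVP}), with leading behavior $\OO\big(\epsilon/d(x,y)^2\big)$, where $\epsilon:=d(x,Y)$. Writing $a\in Y$ for the boundary point nearest to $x$ and using an arclength coordinate $t$ along $Y$ centered at $a$ (so that $d(x,y)^2\approx t^2+\epsilon^2$ in a geodesic normal chart), I substitute $t=\epsilon u$, $s=\epsilon v$ for the two integration variables. Then $\log d(y,y')=\log\epsilon+\log|u-v|+\OO(1)$, the two rescaled Poisson factors contribute $\int \frac{du}{u^2+1}\int\frac{dv}{v^2+1}=\pi^2<\infty$ (the $\epsilon^{2}$ from the measure cancelling the $\epsilon^{-2}$ from the kernels), the $\log|u-v|$ term integrates to a finite constant against this kernel, and only the $\log\epsilon$ term survives to leading order. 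This yields the bound
\[
h_i(x)=\OO\big(|\log d(x,Y)|\big)\qquad\text{as } x\to Y.
\]

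Finally, integrability follows since a logarithmic singularity along the codimension-one hypersurface $Y$ lies in every $L^p$: in boundary-adapted coordinates the volume element is comparable to $d\epsilon\,dt$ and $\int_0^{\delta}|\log\epsilon|^p\,d\epsilon<\infty$ for all $p>0$. Hence $h_i\in L^p(\Sigma_i)$, and together with $\tau_i\in L^p$ this proves $\tau_L*\tau_R\in L^p(\Sigma_L\cup_Y\Sigma_R)$. I expect the main obstacle to be making the Poisson-kernel scaling estimate rigorous—in particular, controlling the subleading corrections to $\partial_\nu G_{\Sigma_i}$ and to the metric so that the naive pointwise bound $\partial_\nu G=\OO(1/d)$, which would be too crude here, is replaced by the sharper concentrating bound $\OO(\epsilon/d^2)$ responsible for the merely logarithmic, rather than power-law, blow-up.
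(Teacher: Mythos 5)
Your proposal is correct and follows essentially the same route as the paper: the paper's proof likewise observes that the correction term is smooth away from $Y$, behaves as $\OO(\log d(x,Y))$ near $Y$ via a model integral on a half-space (which you work out explicitly through the Poisson-kernel scaling $t=\epsilon u$, $s=\epsilon v$), and then invokes the integrability of $\log d(\cdot,\partial\Sigma)$ in every $L^p$ (Lemma \ref{lemma:log of distance to the boundary}). Your version simply supplies the details that the paper leaves as a remark.
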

\begin{proof} 
 The second term on the r.h.s. in (\ref{tadpole gluing formula}) is smooth away from $Y$ and behaves as $\OO(\log d(x,Y))$ near $Y$ (as can be shown by considering a model integral on a half-space), which implies the statement, cf. Lemma \ref{lemma:log of distance to the boundary} below.
   \end{proof}
\begin{definition}
We call an assignment of tadpole functions $\tau_\Sigma$ to surfaces $\Sigma$  
a \emph{local} assignment if it satisfies the gluing formula  
\begin{equation}
\tau_{\Sigma_L \cup_Y \Sigma_R} = \tau_L * \tau_R
\end{equation}
\end{definition}
Pictorially, this gluing formula can be represented as in Figure \ref{fig:gluingtadpole}.
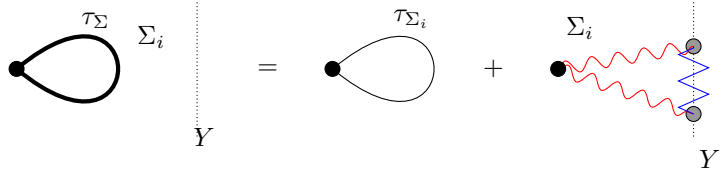
\begin{figure}[h!]
\begin{tikzpicture}[scale=.6]
\begin{scope}[shift={(-1,0)}]
\coordinate[label=above:$\tau_{\Sigma}$] (K) at (1.75, 2.2);
\coordinate[label=left:$\Sigma_i$] (G) at (3.5, 2.2);
\coordinate[label=left:$Y$] (M) at (4.6, 0);
\draw[ultra thick] (0,1.5) .. controls  (3,4) and (3,-1) .. (0,1.5);
\draw[densely dotted](4,0)--(4,3);
\node[bulk] at (0,1.5) {};
\end{scope}
\coordinate[label=left:${=}$] (i) at (5,1.5);
\begin{scope}[shift={(1,0)}]
\coordinate[label=above:$\tau_{\Sigma_i}$] (K) at (6.75, 2.2);
\draw (5,1.5) .. controls  (8,4) and (8,-1) .. (5,1.5);
\node[bulk] at (5,1.5) {};
\end{scope}
\coordinate[label=left:${+}$] (i) at (10,1.5);
\begin{scope}[shift={(2,0)}]
\coordinate[label=left:$\Sigma_i$] (G) at (10,2.4);
\coordinate[label=left:$Y$] (M) at (12.8, -.5);
\draw[densely dotted](12,0)--(12,3);
\node[bdry] at (12,.5){};
\node[bdry] at (12,2){};
\draw[bubo]
(9, 1.5)--(12, .5);
\draw[bubo]
(9,1.5)--(12, 2);
\draw[bobo2]
(12, .5)--(12,2);
\node[bulk] at (9,1.5) {};
\end{scope}
\end{tikzpicture}
\caption{Gluing relation for local tadpole functions}
\label{fig:gluingtadpole}
\end{figure}

This definition assures compatibility with the gluing formula. Another property we can ask for is the consistency with zeta-regularization:
\begin{definition}[Compatibility with zeta-regularization]\label{def:compatibility}
Let $\tau_\Sigma$ be a tadpole function.
\begin{enumerate}[i)]
\item We say that $\tau_\Sigma$ is \emph{weakly compatible} with zeta-regularization if 
\begin{equation*}
\int_\Sigma \tau_{\Sigma}\,\dvol_{\Sigma} = \frac{d}{dm^2}\log\det(\Delta_{\Sigma} + m^2). 
\end{equation*} 
\item  Let $F\colon C_c^\infty(\Sigma \setminus \partial \Sigma) \supset U \to \R$ be given by $$F(\alpha) = \log\det(\Delta_{\Sigma}+m^2+\alpha).$$ Then $F$ is Fr\'echet differentiable in a neighbourhood of $0\in U$. We say that $\tau_\Sigma$ is \emph{strongly compatible} with zeta-regularization if $$DF(0)\alpha = \int_\Sigma \tau_\Sigma(x)\alpha(x)d^2x,$$ 
i.e. $\tau_\Sigma$ is the distribution representing $DF(0)$.  
\end{enumerate}
\end{definition}
In the next two subsections, we will show that there exists a local assignment  which is consistent with the zeta-regularization. 

\subsection{Zeta-regularized tadpole}\label{tadpole:zeta-regularized}

Let $\Sigma$ be a closed and oriented two dimensional Riemannian manifold. Let $\theta_\A(x,x',t)$ be the integral kernel of $e^{-t\A}$ i.e. the heat kernel. Let 
$\theta_\A(x,t)$ denote $\theta_\A(x,x,t)$. Then, we define the \emph{local zeta function} associated to $\A$ as follows:

\begin{definition} The local zeta function associated to $\A$ is denoted by $\zeta_{\A}(s,x)$ and defined as
\begin{equation}\label{eq:zeta_function}
\zeta_{\A}(s,x):=\dfrac{1}{\Gamma(s)}\int_{0}^{\infty}t^{s-1}\theta_\A(x,t)\,\mathrm{d}t.
\end{equation}
\end{definition}
The relation between the zeta function of $A$ and local zeta function of $A$ is given by 
\[\zeta_\A(s)=\int_{\Sigma}\zeta_{\A}(s,x)\,d^2x
.\]

We can use the small time asymptotics of the heat kernel to investigate the local zeta function. We first  recall that \cite{Dewitt, Gilkey, Mckean}
\[\theta_\A(x,x',t)=e^{-m^2t}\frac{e^{-d(x,x')^2/4t}}{4\pi t}\left(a_0(x,x')+a_1(x,x')t + \mathrm{O}(t^2)\right)
\] for $t\to0.$
In particular, when $t\to0$,  
\begin{equation}\label{eq:local heat trace expansion}
\theta_\A(x,t)=\dfrac{e^{-m^2t}}{4\pi t}\left(1+a_1(x)t + \mathrm{O}(t^2)\right)
\end{equation}
where $a_1(x)=a_1(x,x)$. It is known that \cite{Mckean} that $$a_1(x)=\dfrac{1}{6}\mathrm{R}(x)$$ where $\mathrm{R}$ is the scalar curvature of $\Sigma$. We can use these properties of the heat kernel together with its large time behavior to show that $\zeta_\A(s,x)$ is holomorphic for $\mathrm{Re}(s)>1$, it has meromorphic extension to $\mathbb{C}$ and $\zeta_\A(s,x)$ is holomorphic at $s=0$. This is the content of the following lemma.

\begin{lemma} \label{lemma:local_zeta_meromorphic_extension}
For each $x\in \Sigma$, $\zeta_\A(s,x)$ is holomorphic for $\mathrm{Re}(s)>1$ and $\zeta_\A(s,x)$ has meromorphic extension to $\mathbb{C}$ and it is holomorphic at $s=0$. Moreover, 
\begin{align*}
&\dfrac{\d}{\d s}\Big|_{s=0}\zeta_\A(s,x)
=\dfrac{1}{4\pi}m^2(\log m^2-1)  -\dfrac{1}{4\pi} a_1(x) \,\log{m^2} +
\int_{0}^{\infty}\dfrac{e^{-m^2t}\left(\theta_{\Delta_{\Sigma}}(x,t)-g(x,t)\right)}{t}\,\dt,
\end{align*}
where $g(x,t)=\dfrac{1}{4\pi t}
+a_1(x)$. Furthermore, the assignment $x\mapsto \dfrac{\d}{\d s}\Big|_{s=0}\zeta_A(s,x)$ is smooth.
\begin{proof} 
For large $\mathrm{Re}(s)$, a simple computation shows
\begin{eqnarray}\label{eq:local_zeta_function_expression}
\begin{array}{lll}
\zeta_\A(s,x)&=\dfrac{1}{4\pi}\dfrac{1}{(s-1)m^{2s-2}}  + \dfrac{a_1(x)}{4\pi m^{2s}} 
\\
&\displaystyle +\dfrac{1}{\Gamma (s)}\int_{0}^{\infty}t^{s-1}e^{-tm^2}\left(\theta_{\Delta_{\Sigma}}(x,t)-g(x,t)\right)\,\dt
\end{array}
\end{eqnarray}
This representation of $\zeta_A(s,x)$ proves the first part of the lemma. Now, the expression for the derivative at $s=0$ follows from using the fact that $\Gamma$ has a pole of order one at $s=0$ while differentiating $\zeta_A(s,x)$. Finally, smoothness of coefficients of the local heat kernel expansion in the interior of $\Sigma$ implies the assignment $x\mapsto\dfrac{\d}{\d s}\Big|_{s=0}\zeta_\A(s,x)$ is smooth.
\end{proof}
\end{lemma}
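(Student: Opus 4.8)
The plan is to control the Mellin-type integral defining $\zeta_\A(s,x)$ in its two asymptotic regimes and then extract the meromorphic structure by subtracting the leading terms of the small-time heat expansion. First I would record that, since $\A=\Delta_\Sigma+m^2$, the diagonal heat kernel factorizes as $\theta_\A(x,t)=e^{-m^2t}\theta_{\Delta_\Sigma}(x,t)$. As $t\to 0^+$, the expansion \eqref{eq:local heat trace expansion} gives $\theta_\A(x,t)=\frac{1}{4\pi t}+\mathrm{O}(1)$, so $t^{s-1}\theta_\A(x,t)=\mathrm{O}(t^{s-2})$ is integrable near $0$ precisely when $\mathrm{Re}(s)>1$. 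As $t\to\infty$, because $\Sigma$ is closed and $m>0$ the operator $\A$ has spectrum bounded below by $m^2$, so $\theta_\A(x,t)=\mathrm{O}(e^{-m^2t})$ and the integral converges at infinity for every $s$. Holomorphy on $\{\mathrm{Re}(s)>1\}$ then follows by differentiating under the integral sign (the $s$-derivatives of the integrand are dominated locally uniformly), and division by the entire, zero-free function $1/\Gamma(s)$ preserves holomorphy.

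For the meromorphic continuation I would subtract the two leading terms of the small-time expansion, namely $g(x,t)=\frac{1}{4\pi t}+\frac{a_1(x)}{4\pi}$, so that $R(x,t):=\theta_{\Delta_\Sigma}(x,t)-g(x,t)=\mathrm{O}(t)$ as $t\to0$. Writing $\theta_{\Delta_\Sigma}=g+R$ and evaluating the two explicit pieces with the elementary identity $\int_0^\infty t^{a-1}e^{-m^2t}\,dt=\Gamma(a)m^{-2a}$ reproduces exactly \eqref{eq:local_zeta_function_expression}: the $\frac{1}{4\pi t}$-piece yields $\frac{1}{4\pi}\frac{1}{(s-1)m^{2s-2}}$ (using $\Gamma(s-1)/\Gamma(s)=1/(s-1)$), and the constant piece yields $\frac{a_1(x)}{4\pi m^{2s}}$. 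The first is meromorphic on $\mathbb{C}$ with a single simple pole at $s=1$, the second is entire, and the remainder $h(s,x):=\int_0^\infty t^{s-1}e^{-m^2t}R(x,t)\,dt$ is holomorphic for $\mathrm{Re}(s)>-1$ since $t^{s-1}R=\mathrm{O}(t^{s})$ near $0$. All three terms are thus regular at $s=0$; moreover $1/\Gamma(s)$ has a simple zero there, which annihilates any would-be pole of $\tfrac{1}{\Gamma(s)}h(s,x)$ and makes $\zeta_\A(s,x)$ holomorphic at $s=0$. Continuation to all of $\mathbb{C}$ is obtained in the standard way by subtracting further coefficients $a_k(x)$ of the expansion, each of which contributes, after the Mellin transform, an explicit ratio of Gamma functions times a power of $m$, hence a meromorphic function of $s$.

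To extract the derivative at $s=0$ I would differentiate the three terms of \eqref{eq:local_zeta_function_expression}. Writing $m^{2-2s}=m^2e^{-2s\log m}$, a direct computation gives $\frac{d}{ds}\big|_{0}\,\frac{1}{4\pi}\frac{m^{2-2s}}{s-1}=\frac{1}{4\pi}m^2(\log m^2-1)$ and $\frac{d}{ds}\big|_{0}\,\frac{a_1(x)}{4\pi}m^{-2s}=-\frac{1}{4\pi}a_1(x)\log m^2$. For the last term I use $\frac{1}{\Gamma(s)}=s+\gamma s^2+\mathrm{O}(s^3)$ near $s=0$, so that $\frac{1}{\Gamma(0)}=0$ and $\frac{d}{ds}\big|_0\frac{1}{\Gamma(s)}=1$; the product rule then gives $\frac{d}{ds}\big|_0\frac{h(s,x)}{\Gamma(s)}=h(0,x)=\int_0^\infty\frac{e^{-m^2t}\bigl(\theta_{\Delta_\Sigma}(x,t)-g(x,t)\bigr)}{t}\,dt$, where convergence at $0$ is guaranteed by $R=\mathrm{O}(t)$ and at $\infty$ by exponential decay. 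Summing the three contributions reproduces the stated formula.

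Finally, for smoothness of $x\mapsto\frac{d}{ds}\big|_0\zeta_\A(s,x)$ I would note that the first two terms are smooth because $a_1(x)=\tfrac16\mathrm{R}(x)$ is smooth, and that the integral term may be differentiated under the integral sign in $x$. The justification is that in the interior of $\Sigma$ the small-time expansion \eqref{eq:local heat trace expansion} holds together with all its $x$-derivatives, uniformly on compact sets, so that $R(x,t)$ and each $\partial_x^\alpha R(x,t)$ are $\mathrm{O}(t)$ as $t\to0$, while exponential decay controls large $t$; these bounds supply the locally uniform domination required. I expect the main obstacle to be precisely this last point --- obtaining the uniform-in-$x$ (and in all $x$-derivatives) remainder estimates $\partial_x^\alpha R(x,t)=\mathrm{O}(t)$ from the parametrix construction of the heat kernel, which is what ultimately legitimizes differentiation under the integral and hence the asserted smoothness; the $s$-plane analysis itself is routine once these asymptotics are established.
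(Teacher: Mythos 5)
Your proof is correct and follows essentially the same route as the paper's: subtract the first two terms of the small-time heat expansion, Mellin-transform the explicit pieces to obtain the decomposition \eqref{eq:local_zeta_function_expression}, and use the simple zero of $1/\Gamma(s)$ at $s=0$ to read off the derivative, merely supplying the convergence and differentiation-under-the-integral details that the paper compresses into ``a simple computation.'' One remark: your normalization $g(x,t)=\tfrac{1}{4\pi t}+\tfrac{a_1(x)}{4\pi}$ is the one actually consistent with the term $\tfrac{a_1(x)}{4\pi m^{2s}}$ in \eqref{eq:local_zeta_function_expression} and with integrability of the remainder at $t=0$; the statement's $g(x,t)=\tfrac{1}{4\pi t}+a_1(x)$ appears to be a misprint.
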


From the proof of Lemma \ref{lemma:local_zeta_meromorphic_extension}, we see that $\zeta_\A(s,x)$ has a simple pole at $s=1$. However, we can consider the finite part of the local zeta function $s=1$. This motivates the following definition. 

\begin{definition}\label{def: tau zeta-reg} We define the tadpole function via zeta regularization by 
\begin{equation}\label{tau zeta-reg definition}
\tau^{\mathrm{reg}}_{\Sigma}(x)=\fp_{s=1}\zeta_\A(s,x):=\dfrac{\d}{\d s}\Big|_{s=1}(s-1)\zeta_\A(s,x).
\end{equation} 
Note that  $\tau^{\mathrm{reg}}_{\Sigma}(x)$ is the constant term in the Laurent series expansion of $\zeta_\A(s,x)$ at $s=1$.
\end{definition}
One can think of $\tau^\mathrm{reg}_\Sigma(x)$ as a regularization\footnote{
In fact, it would be more appropriate to call it \emph{renormalization}: we first regularize by shifting $s$ away from $1$ and then we subtract the singular part $\frac{\mathrm{const}}{s-1}$.
} of the value of the Green's function $\zeta_A(1,x)=G_\Sigma(x,x)$ on the diagonal.

In fact, we can write $\tau^{\mathrm{reg}}_{\Sigma}(x)$ more explicitly as shown in the following lemma.

\begin{lemma}\label{lem:tau zeta} The following holds:
\[\tau^{\mathrm{reg}}_{\Sigma}(x)=-\dfrac{1}{4\pi}\log m^2+\int_{0}^{\infty}e^{-m^2t}\left(\theta_{\Delta_{\Sigma}}(x,t)-\frac{1}{4\pi t}\right)\,\dt
.\]
Furthermore, 
\[\tau^{\mathrm{reg}}_{\Sigma}(x)=-\dfrac{d}{d m^2}\dfrac{\d}{\d s}\Big|_{s=0}\zeta_\A(s,x)
.\]        
\begin{proof}
Using the proof of Lemma \ref{lemma:local_zeta_meromorphic_extension}, we observe that
\[\zeta_\A(1+\varepsilon,x)=\dfrac{1}{4\pi\varepsilon}-\dfrac{1}{4\pi}\log m^2+\int_{0}^{\infty}e^{-m^2t}\left(\theta_{\Delta_{\Sigma}}(x,t)-\frac{1}{4\pi t}\right)\,\dt+\mathrm{O}(\varepsilon)
\] as $\varepsilon\to 0$. Now, the first part of the lemma follows from the definition of the finite part. The second part of the lemma follows from 
differentiating in $m^2$ the explicit representation of $\dfrac{\d}{\d s}\Big|_{s=0}\zeta(s,x)$ in Lemma \ref{lemma:local_zeta_meromorphic_extension}.  
\end{proof}
\end{lemma}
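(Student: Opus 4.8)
The plan is to read both identities off the meromorphic representation \eqref{eq:local_zeta_function_expression} of $\zeta_\A(s,x)$ from Lemma \ref{lemma:local_zeta_meromorphic_extension}, since both the finite part at $s=1$ and the $s$-derivative at $s=0$ are extracted from the same three explicit summands. Throughout I use the small-time expansion \eqref{eq:local heat trace expansion}, which gives $\theta_{\Delta_\Sigma}(x,t)=\frac{1}{4\pi t}+\frac{a_1(x)}{4\pi}+\mathrm{O}(t)$ as $t\to 0$; this is exactly the subtraction $g(x,t)$ appearing in \eqref{eq:local_zeta_function_expression}, and it guarantees that $\theta_{\Delta_\Sigma}(x,t)-g(x,t)=\mathrm{O}(t)$ near $t=0$, while the factor $e^{-m^2 t}$ with $m>0$ controls the large-$t$ tail.

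For the first identity, I would set $s=1+\varepsilon$ in \eqref{eq:local_zeta_function_expression} and expand each summand to order $\varepsilon^0$. The first term equals $\frac{1}{4\pi\varepsilon}\,m^{-2\varepsilon}=\frac{1}{4\pi\varepsilon}-\frac{1}{4\pi}\log m^2+\mathrm{O}(\varepsilon)$, contributing the simple pole together with the constant $-\frac{1}{4\pi}\log m^2$. The second and third summands are holomorphic at $s=1$ (for the third, because $t^{s-1}(\theta_{\Delta_\Sigma}-g)=\mathrm{O}(t^{s})$ is integrable near $t=0$ and $1/\Gamma(s)$ is entire), so each contributes its value at $s=1$, namely $\frac{a_1(x)}{4\pi m^2}$ and $\int_0^\infty e^{-m^2 t}\bigl(\theta_{\Delta_\Sigma}(x,t)-\frac{1}{4\pi t}-\frac{a_1(x)}{4\pi}\bigr)\,\dt$. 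Using $\int_0^\infty e^{-m^2 t}\,\dt=m^{-2}$ to split off the $a_1(x)$-piece of the last integral shows that the two $a_1(x)$-contributions cancel, so the constant term of the Laurent expansion — which is $\fp_{s=1}\zeta_\A(s,x)$ by Definition \ref{def: tau zeta-reg} — reduces to the asserted right-hand side.

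For the second identity, I would differentiate the closed form of $\frac{\d}{\d s}\big|_{s=0}\zeta_\A(s,x)$ from Lemma \ref{lemma:local_zeta_meromorphic_extension} with respect to $m^2$. Writing $u=m^2$, the term $\frac{1}{4\pi}u(\log u-1)$ differentiates to $\frac{1}{4\pi}\log u$, the term $-\frac{1}{4\pi}a_1(x)\log u$ to $-\frac{a_1(x)}{4\pi u}$, and differentiation under the integral brings down a factor $-t$ that cancels the $1/t$, yielding $-\int_0^\infty e^{-ut}(\theta_{\Delta_\Sigma}-g)\,\dt$. Splitting off the $a_1(x)$-piece of $g$ again produces a $+\frac{a_1(x)}{4\pi u}$ that cancels the earlier one, so that $-\frac{d}{dm^2}\frac{\d}{\d s}\big|_{s=0}\zeta_\A(s,x)=-\frac{1}{4\pi}\log m^2+\int_0^\infty e^{-m^2 t}\bigl(\theta_{\Delta_\Sigma}(x,t)-\frac{1}{4\pi t}\bigr)\,\dt$, which is precisely the first identity.

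The only genuinely analytic point, as opposed to bookkeeping, is justifying the differentiation under the integral sign in the second part. I would dominate $t\,e^{-m^2 t}\lvert\theta_{\Delta_\Sigma}(x,t)-g(x,t)\rvert$ uniformly for $m^2$ in a neighborhood of the given mass: the small-$t$ bound $\theta_{\Delta_\Sigma}(x,t)-g(x,t)=\mathrm{O}(t)$ from \eqref{eq:local heat trace expansion} handles $t\to 0$, while the factor $e^{-m^2 t}$ with $m>0$ controls the large-$t$ tail (the heat trace itself staying bounded, since $\Delta_\Sigma$ has nonnegative spectrum), after which dominated convergence applies; the analogous interchange used to extract $\fp_{s=1}$ is similarly routine. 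Everything else is the Laurent and Taylor expansion bookkeeping indicated above.
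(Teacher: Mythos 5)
Your proof is correct and follows essentially the same route as the paper: both identities are read off the meromorphic representation \eqref{eq:local_zeta_function_expression} from Lemma \ref{lemma:local_zeta_meromorphic_extension}, extracting the Laurent expansion at $s=1$ for the first and differentiating the explicit $s=0$ derivative in $m^2$ for the second. You merely spell out more of the bookkeeping (the cancellation of the $a_1(x)$ contributions and the domination argument for differentiating under the integral) than the paper's terser proof does.
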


\begin{remark} The zeta-regularized tadpole function $\tau_{\Sigma}^{\mathrm{reg}}$ is an invariant for the action of the group of isometries of the Riemannian metric on $\Sigma$. In particular, if the group of isometries acts transitively on $\Sigma$ then $\tau_{\Sigma}^{\mathrm{reg}}$ is constant.  
\end{remark}

More generally, we can define the local zeta function on a compact manifold with boundary. Let $\theta_\A(x,x',t)$ be the integral kernel of $A$ with respect to the Dirichlet boundary condition. Then, it is well known \cite{Mckean} for small $t$ that
\[\theta_\A(x,x',t)=e^{-m^2t}\frac{e^{-d(x,x')^2/4t}}{4\pi t}\left(a_0(x,x')+b_0(x,x')\sqrt{t}+a_1(x,x')t + \mathrm{O}(t^{3/2})\right)
.\] 
Here $b_0$ appears as a contribution from the boundary and it is supported at the boundary\footnote{For any fixed $x,x'$ away the boundary, $b_0$ will not appear in the asymptotic expansion. However, upon restricting to the diagonal and integrating, one will have a contribution coming from $b_0$. See e.g. \cite{Mckean}, for a detailed statement see \cite[Theorem 3.12]{Grieser2014}. }. Hence, 
\begin{equation}\label{local heat trace expansion}
\theta_\A(x,t)=\dfrac{e^{-m^2t}}{4\pi t}\left(1+b_0(x)\sqrt{t}+a_1(x)t+ \mathrm{O}(t^{3/2})\right) 
\end{equation}
as $t\to0$ where $b_0(x)=b_0(x,x)$ and $a_1(x)=a_1(x,x)$.

The {local zeta} function and the tadpole function $\tau^{\mathrm{reg}}_{\Sigma}(x)$ via zeta regularization are defined as above. The discussion above regarding the meromorphic extension of $\zeta_\A(s,x)$ does not change. In particular, $\zeta_\A(s,x)$ is holomorphic at $s=0$. Furthermore: 

\begin{lemma} \label{lemma:local_zeta_meromorphic_extension_boundary_condition}
Let $\Sigma$ be a two dimensional compact Riemannian manifold with smooth boundary. Then, the following holds:
\begin{enumerate}[(i)]
\item 
\begin{align*}
\dfrac{\d}{\d s}\Big|_{s=0}\zeta_\A(s,x)&=\dfrac{1}{4\pi}m^2(\log m^2-1)
-\dfrac{mb_0(x)}{2\sqrt{\pi}}
- \frac{a_1(x) \log m^2}{4 \pi} \\
&+
\int_{0}^{\infty}\dfrac{e^{-m^2t}\left(\theta_{\Delta}(x,t)-\tilde{g}(x,t)\right)}{t}\,\dt
\end{align*}
where $\tilde{g}(x,t)=\dfrac{1}{4\pi t}\big(1+b_0(x)\sqrt{t}+a_1(x) \,t \big)$.
\item 
\[\tau^{\mathrm{reg}}_{\Sigma}(x)=-\dfrac{\d}{\d m^2}\dfrac{\d}{\d s}\Big|_{s=0}\zeta_\A(s,x)
\]
\end{enumerate}
\begin{proof} 
One can show that
\begin{eqnarray}
\begin{array}{lll}\label{eq:local_zeta_function_expression_boundarycondition}
\zeta_\A(s,x)&=\dfrac{1}{4\pi}\left(\dfrac{1}{(s-1)m^{2s-2}}+\dfrac{\Gamma(s-1/2)\, b_0(x)}{\Gamma(s)\, m^{2s-1}}  + \dfrac{a_1(x)}{m^{2s}}
\right)\\
&+\dfrac{1}{\Gamma(s)}\int_{0}^{\infty}t^{s-1}e^{-tm^2}\left(\theta_{\Delta_{\Sigma}}(x,t)-\tilde{g}(x,t)\right)\,\d t
\end{array}
\end{eqnarray} Differentiating at $s=0$ and using $\Gamma(-1/2) = -2\sqrt{\pi}$, we get the first part of the lemma. From this expression, it also follows that
\begin{align*}\zeta_A(1+\varepsilon,x)=\dfrac{1}{4\pi\varepsilon}-\dfrac{1}{4\pi}\log m^2
+\int_{0}^{\infty}e^{-m^2t}\left(\theta_{\Delta_{\Sigma}}(x,t)-\dfrac{1}{4\pi t} 
\right)\,\dt +\mathrm{O}(\varepsilon)
\end{align*}
as $\varepsilon\to0$. This shows that
\begin{align*}\tau^{\mathrm{reg}}_{\Sigma}(x)= -\dfrac{1}{4\pi}\log m^2
+\int_{0}^{\infty}e^{-m^2t}\left(\theta_{\Delta_{\Sigma}}(x,t)- \dfrac{1}{4\pi t} 
\right)\,\dt.
\end{align*}
The second part of the lemma follows from a simple computation.
\end{proof}
\end{lemma}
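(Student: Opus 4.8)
The plan is to follow the strategy of the closed-case Lemma \ref{lemma:local_zeta_meromorphic_extension}, the only genuinely new feature being the half-integer power $\sqrt{t}$ produced by the boundary coefficient $b_0(x)$ in \eqref{local heat trace expansion}. First I would establish the representation \eqref{eq:local_zeta_function_expression_boundarycondition}. Starting from the definition \eqref{eq:zeta_function} and using $\theta_\A(x,t) = e^{-m^2 t}\theta_{\Delta_\Sigma}(x,t)$, I split $\theta_{\Delta_\Sigma}(x,t) = \tilde g(x,t) + \big(\theta_{\Delta_\Sigma}(x,t)-\tilde g(x,t)\big)$. By \eqref{local heat trace expansion} the remainder is $\mathrm{O}(t^{1/2})$ as $t\to 0$, so $\int_0^\infty t^{s-1}e^{-tm^2}\big(\theta_{\Delta_\Sigma}-\tilde g\big)\,\dt$ converges and is holomorphic for $\mathrm{Re}(s)>-1/2$; dividing by $\Gamma(s)$ produces the last summand. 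For the $\tilde g$-part I would use the elementary Mellin integrals $\int_0^\infty t^{s-1+k}e^{-tm^2}\,\dt=\Gamma(s+k)\,m^{-2(s+k)}$ for $k\in\{-1,-\tfrac12,0\}$, together with $\Gamma(s-1)=\Gamma(s)/(s-1)$, to obtain exactly the three explicit terms in \eqref{eq:local_zeta_function_expression_boundarycondition}.

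From this representation the meromorphic continuation and the claims of the lemma follow by routine bookkeeping. Since $1/\Gamma(s)$ is entire with a simple zero at $s=0$, the remainder summand is holomorphic at $s=0$, while the three explicit terms are meromorphic with their only pole at $s=1$ coming from $1/(s-1)$. To prove (i) I would differentiate at $s=0$: using $1/\Gamma(s)\big|_{s=0}=0$ and $\tfrac{\d}{\d s}\big(1/\Gamma(s)\big)\big|_{s=0}=1$, the derivative of the remainder term is just the integral evaluated at $s=0$, and elementary differentiation of the explicit terms (invoking $\Gamma(-1/2)=-2\sqrt\pi$ for the $b_0$-term) yields precisely $\frac{1}{4\pi}m^2(\log m^2-1)$, $-\frac{mb_0(x)}{2\sqrt\pi}$ and $-\frac{a_1(x)\log m^2}{4\pi}$.

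For (ii) I would proceed in two independent ways and match the results. On one hand, expanding \eqref{eq:local_zeta_function_expression_boundarycondition} around $s=1+\varepsilon$ shows the pole arises solely from the first explicit term; extracting the constant term of the Laurent expansion gives $\tau^{\mathrm{reg}}_\Sigma(x)=-\tfrac{1}{4\pi}\log m^2+\int_0^\infty e^{-m^2 t}\big(\theta_{\Delta_\Sigma}(x,t)-\tfrac{1}{4\pi t}\big)\,\dt$, with the $b_0$- and $a_1$-contributions now folded into the convergent integral (which needs only the single subtraction $\tfrac{1}{4\pi t}$ for $\mathrm{Re}(s)$ near $1$). On the other hand, applying $-\tfrac{\d}{\d m^2}$ to the formula from (i) and using the integrals $\int_0^\infty e^{-m^2 t}t^{-1/2}\,\dt=\sqrt\pi/m$ and $\int_0^\infty e^{-m^2 t}\,\dt=1/m^2$, I would check that the extra terms $\frac{b_0(x)}{4\sqrt\pi m}+\frac{a_1(x)}{4\pi m^2}$ are exactly those that convert the subtraction of $\tilde g$ into the subtraction of $\tfrac{1}{4\pi t}$, so the two expressions coincide.

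The main obstacle is purely organizational rather than conceptual: one must keep careful track of the half-integer Gamma factors ($\Gamma(s-1/2)$, $\Gamma(-1/2)=-2\sqrt\pi$, $\Gamma(1/2)=\sqrt\pi$) arising from the boundary term, and verify the consistency between the subtraction $\tilde g$ natural at $s=0$ and the coarser subtraction $\tfrac{1}{4\pi t}$ natural at $s=1$. No new analytic input beyond the standard heat-kernel expansion \eqref{local heat trace expansion} and the exponential decay of $\theta_\A$ for large $t$ (furnished by the mass $m>0$) is required.
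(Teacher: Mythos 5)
Your proposal is correct and follows essentially the same route as the paper's proof: establish the representation \eqref{eq:local_zeta_function_expression_boundarycondition} by splitting off $\tilde g$ and computing the three elementary Mellin integrals, differentiate at $s=0$ using the simple zero of $1/\Gamma(s)$ and $\Gamma(-1/2)=-2\sqrt{\pi}$ for part (i), and extract the finite part at $s=1$ (absorbing the $b_0$- and $a_1$-contributions into the single subtraction $\frac{1}{4\pi t}$) for part (ii). Your explicit verification that $-\frac{\d}{\d m^2}$ applied to the formula in (i) reproduces $\tau^{\mathrm{reg}}_\Sigma$ is exactly the ``simple computation'' the paper leaves to the reader, and your bookkeeping of the half-integer Gamma factors checks out.
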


The behavior of $\tau^{\mathrm{reg}}_{\Sigma}$ in the interior of $\Sigma$ is similar to the case when there is no boundary. However, it is not clear how it behaves near the boundary. We will show that the behavior of $\tau^{\mathrm{reg}}_{\Sigma}$ is comparable to that of the function $x\mapsto\log(d(x,\partial\Sigma))$ up to a bounded function. We begin the analysis with the following lemma. 

\begin{lemma}\label{lemma:log of distance to the boundary}
The function $x\mapsto \log d(x,\partial\Sigma)$ is in $L^p(\Sigma)$ for any $p\geq1$.
\begin{proof} Let $f(x)=\log d(x,\partial\Sigma)$. Assume that the Riemannian metric is a product metric near the boundary. Then, the volume form on a collar neighborhood the boundary can be written as $dt\wedge\dvol_{\partial\Sigma}$ and consequently it is possible to find $C>0$ and $a>0$ such that 
\[\int_{\Sigma}|f|^p\leq C\int_{0}^{a}|\log t|^p\,dt.
\]  
For the general case, let $T_r(\partial\Sigma)=\{x\in\Sigma:d(x,\partial\Sigma)\leq r\}$ be the tube around $\partial\Sigma$ constructed  using the normal vector field \cite[Chapter 3]{Gray2004}. Then, for small $r>0$, the volume form on $T_r$ can be written, using Fermi coordinates, in the form $h(t) dt\wedge \dvol_{\partial\Sigma}$  \cite[Theorem 9.22]{Gray2004}, where $h(t)=1+\mathrm{O}(t)$. In these coordinates, we have $d(x,\partial\Sigma)=t$,  which implies $x \mapsto \log d(x,\partial\Sigma)$ is integrable in this coordinate neighborhood by the result for the product metric. 
\end{proof}
\end{lemma}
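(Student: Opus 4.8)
The plan is to reduce everything to the one-variable integrability of $|\log t|^p$ near $t=0$. First I would observe that the integrand $|\log d(x,\partial\Sigma)|^p$ is bounded away from the boundary: on the set $\{x : d(x,\partial\Sigma) \geq r\}$ the distance is bounded below by $r>0$, so $|\log d(x,\partial\Sigma)|^p$ is bounded there, and since $\Sigma$ is compact its total volume is finite, making the contribution of this region to $\int_\Sigma |f|^p\,\dvol_\Sigma$ finite. Hence all the work is concentrated in a tube $T_r(\partial\Sigma) = \{x : d(x,\partial\Sigma) \leq r\}$ around the boundary, for $r$ small.

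Next, on $T_r(\partial\Sigma)$ I would pass to Fermi coordinates $(t,y)$, where $t = d(x,\partial\Sigma) \in [0,r]$ is the normal distance and $y \in \partial\Sigma$. The decisive geometric input, which I would quote from Gray's treatment of tubes (\cite[Theorem 9.22]{Gray2004}), is that in these coordinates the volume form takes the shape $h(t)\, dt \wedge \dvol_{\partial\Sigma}$ with $h(t) = 1 + \mathrm{O}(t)$; in particular $h$ is continuous on $[0,r]$ and hence bounded by some constant $C$ once $r$ is chosen small. Using that $d(x,\partial\Sigma) = t$ holds \emph{exactly} in these coordinates, the tube integral becomes
\[
\int_{T_r(\partial\Sigma)} |\log d(x,\partial\Sigma)|^p \, \dvol_\Sigma = \int_{\partial\Sigma}\int_0^r |\log t|^p h(t)\, dt\, \dvol_{\partial\Sigma} \leq C\,\mathrm{Vol}(\partial\Sigma)\int_0^r |\log t|^p\, dt.
\]

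Finally, the remaining one-dimensional integral converges: the substitution $t = e^{-u}$ turns $\int_0^r |\log t|^p\, dt$ into $\int_{-\log r}^\infty u^p e^{-u}\,du$, a tail of the Gamma integral, which is finite for every $p \geq 1$. Combining the bounded exterior contribution with this finite tube contribution yields $f \in L^p(\Sigma)$. I expect the only genuine obstacle to be the justification of the Fermi-coordinate volume form together with the identity $d(x,\partial\Sigma)=t$ on $T_r(\partial\Sigma)$; for a product metric near the boundary this is immediate with $h \equiv 1$ (so one may wish to record that special case first, as a warm-up), while in general it follows from the standard structure theory of tubular neighborhoods. The analytic heart of the statement — integrability of a logarithmic singularity against Lebesgue measure — is by contrast entirely elementary.
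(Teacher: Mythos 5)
Your proof is correct and follows essentially the same route as the paper's: both reduce to Fermi coordinates on a tube around $\partial\Sigma$, invoke Gray's Theorem 9.22 for the volume form $h(t)\,dt\wedge\dvol_{\partial\Sigma}$ with $h(t)=1+\mathrm{O}(t)$ and the identity $d(x,\partial\Sigma)=t$, and then use integrability of $|\log t|^p$ near $t=0$. You merely make explicit two points the paper leaves implicit (boundedness of the integrand away from the boundary, and the Gamma-integral evaluation of $\int_0^r|\log t|^p\,dt$).
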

Next, we compare the behavior of $\tau^{\mathrm{reg}}_{\Sigma}$ to that of the function $\kappa(x)=\fp_{s=1}\xi(s,x)$ where
\[\xi(s,x)=\dfrac{1}{\Gamma(s)}\int_{0}^{1}t^{s-1}\Xi(x,t)\,\dt
\] and
\begin{align*}
\Xi(x,t)=\dfrac{1}{4\pi t}\left(1-\textrm{exp}\left({-\dfrac{d(x,\partial\Sigma)^2}{t}}\right)\right).
\end{align*}
First, we show:
\begin{lemma}\label{lemma:model behavior near boundary}
 The function $\kappa-\dfrac{1}{2\pi}\log(d(\cdot ,\partial\Sigma))$ is bounded in $\Sigma$. 
\begin{proof}Let $u=d(x,\partial\Sigma)^2$. Then, it follows that
\[\xi(s,x)=-\dfrac{u^{s-1}}{4\pi \Gamma(s)}\left(\Gamma(1-s)-\int_{0}^{u}(e^{-t}-1)t^{-s}\,dt\right).
\]
Hence, 
\begin{align*}\xi(1+\varepsilon,x)=-\dfrac{u^{\varepsilon}}{4\pi\varepsilon \Gamma(\varepsilon)}\left(\Gamma(-\varepsilon)-\int_{0}^{u}(e^{-t}-1)t^{-(1+\varepsilon)}\,dt\right)
\end{align*}
Now, the lemma follows from the definition of $\kappa$ using the fact that $u^{\varepsilon}=1+\varepsilon\log u+\mathrm{O}(\varepsilon^2)$ as $\varepsilon\to0$. 
\end{proof}
\end{lemma}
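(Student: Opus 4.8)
The plan is to obtain a closed form for $\xi(s,x)$ in which the dependence on $d(x,\partial\Sigma)$ is completely explicit, and then to read off the finite part at $s=1$. Writing $u=d(x,\partial\Sigma)^2$, so that $\Xi(x,t)=\frac{1}{4\pi t}(1-e^{-u/t})$, we have
\[\xi(s,x)=\frac{1}{4\pi\Gamma(s)}\int_0^1 t^{s-2}\big(1-e^{-u/t}\big)\,\dt.\]
The first step is the substitution $r=u/t$, which gathers all the $u$-dependence into a prefactor and the lower endpoint of integration:
\[\xi(s,x)=\frac{u^{s-1}}{4\pi\Gamma(s)}\int_u^\infty r^{-s}\big(1-e^{-r}\big)\,dr.\]

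The second step is to evaluate the tail integral by analytic continuation. For $1<\mathrm{Re}(s)<2$ the full integral converges, and an integration by parts (with the boundary terms vanishing at both ends precisely on this strip) gives $\int_0^\infty r^{-s}(1-e^{-r})\,dr=-\Gamma(1-s)$, using $\Gamma(2-s)=(1-s)\Gamma(1-s)$. Subtracting the piece on $[0,u]$ yields the meromorphic closed form
\[\xi(s,x)=-\frac{u^{s-1}}{4\pi\Gamma(s)}\left(\Gamma(1-s)-\int_0^u (e^{-t}-1)t^{-s}\,\dt\right),\]
in which the incomplete integral is entire in $s$ and the only poles come from $\Gamma(1-s)$.

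The third step is to set $s=1+\varepsilon$ and Laurent-expand at $\varepsilon=0$. The only singular ingredient is $\Gamma(-\varepsilon)=-\frac{1}{\varepsilon}-\gamma+\OO(\varepsilon)$; the prefactor $\frac{1}{\varepsilon\Gamma(\varepsilon)}=\frac{1}{\Gamma(1+\varepsilon)}$ is regular with value $1$ at $\varepsilon=0$, and the incomplete integral tends to $I_0(u):=\int_0^u\frac{e^{-t}-1}{t}\,\dt$. The decisive point is that $u^{\varepsilon}=1+\varepsilon\log u+\OO(\varepsilon^2)$, so the product of $u^\varepsilon$ with the pole $-\frac{1}{\varepsilon}$ deposits a finite $-\log u$ into the constant coefficient. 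Collecting terms gives
\[\kappa(x)=\fp_{s=1}\xi(s,x)=\frac{1}{4\pi}\big(\log u+I_0(u)+2\gamma\big),\]
and since $\log u=2\log d(x,\partial\Sigma)$ this becomes $\kappa(x)=\frac{1}{2\pi}\log d(x,\partial\Sigma)+\frac{1}{4\pi}\big(I_0(d(x,\partial\Sigma)^2)+2\gamma\big)$, whence
\[\kappa(x)-\frac{1}{2\pi}\log d(x,\partial\Sigma)=\frac{\gamma}{2\pi}+\frac{1}{4\pi}I_0\!\big(d(x,\partial\Sigma)^2\big).\]

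The final step is boundedness, which is now immediate: $I_0$ is continuous on $[0,\infty)$ with $I_0(0)=0$, and because $\Sigma$ is compact the function $d(\cdot,\partial\Sigma)$ takes values in the bounded interval $[0,\mathrm{diam}\,\Sigma]$, so $x\mapsto I_0(d(x,\partial\Sigma)^2)$ is bounded. I expect the main obstacle to be the careful bookkeeping in the finite-part extraction: tracking exactly which terms of the $\varepsilon$-expansion feed the constant coefficient, and verifying that $\log u$ enters with coefficient $\frac{1}{4\pi}$ (equivalently $\frac{1}{2\pi}\log d$ after $u=d^2$) so that it is cancelled by the subtracted term. A secondary technical point is justifying the change of variables and the integration-by-parts identity on the strip $1<\mathrm{Re}(s)<2$ before continuing meromorphically to $s=1$.
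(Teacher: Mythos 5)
Your proposal is correct and follows essentially the same route as the paper: the change of variables $r=u/t$ and integration by parts yield exactly the closed form $\xi(s,x)=-\frac{u^{s-1}}{4\pi\Gamma(s)}\bigl(\Gamma(1-s)-\int_0^u(e^{-t}-1)t^{-s}\,dt\bigr)$ that the paper asserts, and the finite-part extraction via $u^{\varepsilon}=1+\varepsilon\log u+\mathrm{O}(\varepsilon^2)$ is the same. You merely make explicit what the paper leaves implicit, namely the derivation of the closed form and the resulting identity $\kappa(x)-\frac{1}{2\pi}\log d(x,\partial\Sigma)=\frac{\gamma}{2\pi}+\frac{1}{4\pi}\int_0^{u}\frac{e^{-t}-1}{t}\,dt$, whose boundedness on the compact $\Sigma$ is immediate.
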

Following a strategy similar to the study of the heat kernel expansion on a manifold with boundary \cite{Mckean}, we can check that $\tau^{\mathrm{reg}}_{\Sigma}-\kappa$ is bounded. Now, combining the discussion above, we have the following result. 
 \begin{prop} 
 The zeta-regularized tadpole behaves near $\dd\Sigma$ as 
 $$ \tau^\mr{reg}_\Sigma= \frac{1}{2\pi}\log d(\cdot,\dd \Sigma)+ f $$
 with $f$ a bounded function.
 \end{prop}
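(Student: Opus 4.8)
The plan is to write the target function as a sum of two pieces whose boundedness is handled separately:
$$ f = \tau^{\mathrm{reg}}_\Sigma - \frac{1}{2\pi}\log d(\cdot,\partial\Sigma) = \Big(\tau^{\mathrm{reg}}_\Sigma - \kappa\Big) + \Big(\kappa - \frac{1}{2\pi}\log d(\cdot,\partial\Sigma)\Big). $$
The second summand is bounded by Lemma \ref{lemma:model behavior near boundary}, which is already established. Thus the whole statement reduces to proving that $\tau^{\mathrm{reg}}_\Sigma - \kappa$ is a bounded function on $\Sigma$, and this is the analytic heart of the argument. The conclusion is then immediate, with $f := \tau^{\mathrm{reg}}_\Sigma - \frac{1}{2\pi}\log d(\cdot,\partial\Sigma)$.

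To control $\tau^{\mathrm{reg}}_\Sigma - \kappa$ I would work directly with the heat-trace representations underlying the two finite parts. By Definition \ref{def: tau zeta-reg} and the explicit formula of Lemma \ref{lemma:local_zeta_meromorphic_extension_boundary_condition}, the behavior of $\tau^{\mathrm{reg}}_\Sigma(x)$ as $x\to\partial\Sigma$ is governed entirely by the small-$t$ behavior of $e^{-m^2t}\theta_{\Delta_\Sigma}(x,t)$; the contribution of the region $t\geq 1$ is smooth and uniformly bounded up to the boundary, since there the diagonal heat kernel is bounded. The key observation is that $\Xi(x,t)$ is exactly the on-diagonal Dirichlet heat kernel of the flat, massless half-plane obtained by the method of images: if $r=d(x,\partial\Sigma)$, then $\frac{1}{4\pi t}\big(1-e^{-r^2/t}\big)$ is the value on the diagonal of the reflected kernel $\frac{1}{4\pi t}\big(1-e^{-(2r)^2/4t}\big)$. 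Hence $\Xi$ is precisely the universal boundary model dictating the divergence of the local heat trace as $x$ approaches $\partial\Sigma$.

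Working in Fermi coordinates on a collar of $\partial\Sigma$ (as in the proof of Lemma \ref{lemma:log of distance to the boundary}), I would compare $e^{-m^2t}\theta_{\Delta_\Sigma}(x,t)$ with $\Xi(x,t)$ through a boundary parametrix in the style of McKean--Singer. Each discrepancy yields a bounded contribution after taking the finite part at $s=1$: the factor $e^{-m^2t}=1+\mathrm{O}(t)$ differs from $1$ by a term for which $(e^{-m^2t}-1)\theta_{\Delta_\Sigma}(x,t)$ is $\mathrm{O}(1)$ near $t=0$ uniformly in $x$, hence integrable against $t^{s-1}$; the mismatch of upper integration limits ($[0,1]$ for $\xi$ versus $[0,\infty)$ for $\zeta_\A$) contributes only the uniformly bounded tail; and the subleading heat coefficients $b_0(x)\sqrt t$, $a_1(x)t$, together with the exponentially small curvature corrections to the flat image approximation, contribute finite parts that remain bounded as $r\to 0$. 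What survives the singular limit is exactly the model term captured by $\Xi$, so $\tau^{\mathrm{reg}}_\Sigma - \kappa$ extends to a bounded function.

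The main obstacle is the last point: one needs estimates on $\theta_{\Delta_\Sigma}(x,t)-\Xi(x,t)$ that are uniform in $x$ as $x\to\partial\Sigma$, not merely valid for each fixed interior point, together with uniform control on how the $b_0,a_1$ terms feed into the finite part near the boundary. Establishing such uniform small-$t$ asymptotics near $\partial\Sigma$ is precisely where the boundary heat-kernel parametrix construction of McKean--Singer is needed, and it is the only genuinely technical step. Once it is in place, combining the boundedness of $\tau^{\mathrm{reg}}_\Sigma - \kappa$ with Lemma \ref{lemma:model behavior near boundary} shows $f$ is bounded and hence yields the stated asymptotic $\tau^{\mathrm{reg}}_\Sigma = \frac{1}{2\pi}\log d(\cdot,\partial\Sigma) + f$.
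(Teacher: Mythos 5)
Your proposal is correct and follows essentially the same route as the paper: the same decomposition $f=(\tau^{\mathrm{reg}}_\Sigma-\kappa)+(\kappa-\tfrac{1}{2\pi}\log d(\cdot,\partial\Sigma))$, with the second piece handled by Lemma \ref{lemma:model behavior near boundary} and the first by a McKean--Singer-style boundary heat-kernel comparison, which is exactly the step the paper also only sketches ("Following a strategy similar to the study of the heat kernel expansion on a manifold with boundary, we can check that $\tau^{\mathrm{reg}}_{\Sigma}-\kappa$ is bounded"). Your identification of $\Xi$ as the on-diagonal image-charge Dirichlet heat kernel of the flat half-plane, and your explicit flagging of the need for uniformity in $x$ near $\partial\Sigma$, are accurate elaborations of the paper's argument rather than a departure from it.
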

 As a consequence of this proposition and Lemma \ref{lemma:log of distance to the boundary}, we have:
\begin{corollary}\label{corollary:tau reg integrable}
 $\tau^{reg}_{\Sigma}$ is in $L^p$ for all $p\geq1$.
\end{corollary}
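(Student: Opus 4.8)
The plan is to derive this statement immediately by combining the two results that directly precede it. By the Proposition just above, we have the pointwise decomposition
\begin{equation*}
\tau^\mr{reg}_\Sigma = \frac{1}{2\pi}\log d(\cdot,\dd\Sigma) + f,
\end{equation*}
where $f$ is a bounded function on $\Sigma$. Since $L^p(\Sigma)$ is a vector space, it therefore suffices to check that each of the two summands on the right-hand side lies in $L^p(\Sigma)$ for every $p \geq 1$.

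For the first summand, I would simply invoke Lemma \ref{lemma:log of distance to the boundary}, which asserts precisely that $x \mapsto \log d(x,\dd\Sigma) \in L^p(\Sigma)$ for all $p \geq 1$; the scalar factor $\tfrac{1}{2\pi}$ is irrelevant. For the second summand, I would use compactness of $\Sigma$: since $f$ is bounded, say $|f| \leq M$, one has $\int_\Sigma |f|^p\,\dvol_\Sigma \leq M^p\,\mathrm{Vol}(\Sigma) < \infty$, so $f \in L^p(\Sigma)$ for every $p \geq 1$ as well. Adding the two and applying the triangle (Minkowski) inequality in $L^p$ gives $\tau^\mr{reg}_\Sigma \in L^p(\Sigma)$ for all $p \geq 1$, which is the claim.

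There is essentially no obstacle here: all the analytic difficulty has already been absorbed into the preceding Proposition (controlling the boundary behavior of $\tau^\mr{reg}_\Sigma$ by that of $\log d(\cdot,\dd\Sigma)$, via comparison with the model function $\kappa$ and the heat-kernel expansion near $\dd\Sigma$) and into Lemma \ref{lemma:log of distance to the boundary} (the integrability of $\log d(\cdot,\dd\Sigma)$, established using Fermi coordinates in a collar). The corollary is then a one-line consequence, and I would present it as such rather than reproving any integrability estimate from scratch.
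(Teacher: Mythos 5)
Your proposal is correct and follows exactly the paper's route: the corollary is stated there as an immediate consequence of the preceding Proposition (the decomposition $\tau^\mr{reg}_\Sigma=\frac{1}{2\pi}\log d(\cdot,\partial\Sigma)+f$ with $f$ bounded) together with Lemma \ref{lemma:log of distance to the boundary}, with no further argument given. Your spelled-out version (boundedness of $f$ plus compactness of $\Sigma$, then Minkowski) fills in precisely the intended one-line deduction.
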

Also,  the consistency of $\tau^{\mathrm{reg}}_{\Sigma}$ with the zeta-regularization is immediate.
\begin{corollary}\label{corollary: tau via d/dm^2 log det}
The zeta-regularized tadpole is consistent with the zeta-regularization: we have
\[\int_\Sigma \tau^{\mathrm{reg}}_{\Sigma}\,\dvol_{\Sigma}=  \dfrac{\d}{\d m^2}\log \det (\de_{\Sigma} + m^2).
\] 
\end{corollary}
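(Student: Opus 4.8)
The statement is essentially a combination of the explicit formula for $\tau^{\mathrm{reg}}_{\Sigma}(x)$ obtained above with the definition of the zeta-regularized determinant, together with an interchange of the spatial integration with differentiation in $s$ and $m^2$.

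First I would recall the two ingredients. On one hand, the zeta-regularized determinant of $\A=\de_\Sigma+m^2$ is defined by $\log\det\A=-\frac{\d}{\d s}\big|_{s=0}\zeta_\A(s)$, and the global zeta function is the spatial integral of the local one, $\zeta_\A(s)=\int_\Sigma\zeta_\A(s,x)\,d^2x$. On the other hand, the second part of Lemma \ref{lem:tau zeta} (closed case) and of Lemma \ref{lemma:local_zeta_meromorphic_extension_boundary_condition} (boundary case) gives the pointwise identity
$$\tau^{\mathrm{reg}}_{\Sigma}(x)=-\frac{\d}{\d m^2}\frac{\d}{\d s}\Big|_{s=0}\zeta_\A(s,x).$$

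The plan is then to integrate this identity over $\Sigma$ and push the integral inside the two differentiations:
$$\int_\Sigma\tau^{\mathrm{reg}}_{\Sigma}\,\dvol_\Sigma=-\frac{\d}{\d m^2}\frac{\d}{\d s}\Big|_{s=0}\int_\Sigma\zeta_\A(s,x)\,d^2x=-\frac{\d}{\d m^2}\frac{\d}{\d s}\Big|_{s=0}\zeta_\A(s)=\frac{\d}{\d m^2}\log\det(\de_\Sigma+m^2),$$
where the final equality uses the definition of the determinant, the two minus signs cancelling.

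The only real work is justifying the interchange of $\int_\Sigma$ with $\frac{\d}{\d s}\big|_{s=0}$ and with $\frac{\d}{\d m^2}$; this is the step I expect to be the main (if mild) obstacle. Here I would use the explicit representations of $\frac{\d}{\d s}\big|_{s=0}\zeta_\A(s,x)$ from the two lemmas, which exhibit this quantity as a finite sum of terms that are smooth in $x$ in the interior plus a convergent $t$-integral of the form $\int_{0}^{\infty}\frac{e^{-m^2t}(\theta_{\Delta_\Sigma}(x,t)-\tilde g(x,t))}{t}\,\dt$. The integrand is controlled uniformly in $x$ on the compact manifold $\Sigma$ once the mild boundary behaviour is accounted for: by Corollary \ref{corollary:tau reg integrable} and the preceding proposition the relevant functions lie in $L^p(\Sigma)$ for every $p\geq 1$, in particular they are integrable, so dominated convergence and the standard theorem on differentiation under the integral sign apply to both the $s$- and the $m^2$-derivative. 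This legitimizes the two interchanges above and completes the argument.
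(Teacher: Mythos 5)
Your proof is correct and is essentially the argument the paper has in mind: the corollary is obtained by integrating the pointwise identity $\tau^{\mathrm{reg}}_{\Sigma}(x)=-\tfrac{\d}{\d m^2}\tfrac{\d}{\d s}\big|_{s=0}\zeta_\A(s,x)$ from Lemma \ref{lem:tau zeta} (resp.\ Lemma \ref{lemma:local_zeta_meromorphic_extension_boundary_condition}) over $\Sigma$ and using $\zeta_\A(s)=\int_\Sigma\zeta_\A(s,x)\,d^2x$ together with $\log\det\A=-\zeta'_\A(0)$, which is why the paper labels it ``immediate.'' Your added justification of the interchange of $\int_\Sigma$ with the $s$- and $m^2$-derivatives, via the explicit representations and the $L^p$ control of Corollary \ref{corollary:tau reg integrable}, fills in exactly the step the paper leaves implicit.
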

Furthermore, we also have compatibilty with the zeta-regularization in the strong sense.
\begin{lemma}\label{lem: tau zeta-reg is variational derivative in mass}
$\tau^{\mathrm{reg}}_{\Sigma}$ is compatible with zeta-regularization in the strong sense.
\end{lemma}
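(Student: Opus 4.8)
The plan is to compute the G\^ateaux derivative $\frac{\d}{\d\epsilon}\big|_{\epsilon=0}F(\epsilon\alpha)$ and to check that it equals $\int_\Sigma \tau^{\mathrm{reg}}_\Sigma(x)\,\alpha(x)\,d^2x$. Since Definition \ref{def:compatibility} already grants that $F$ is Fr\'echet differentiable in a neighbourhood of $0$, the Fr\'echet derivative $DF(0)$ necessarily agrees with this G\^ateaux derivative, so the latter computation identifies the distribution representing $DF(0)$ with $\tau^{\mathrm{reg}}_\Sigma$. Writing $\A_\epsilon = \Delta_\Sigma + m^2 + \epsilon\alpha$ (with $\alpha$ acting by multiplication), we use $F(\epsilon\alpha) = \log\det\A_\epsilon = -\frac{\d}{\d s}\big|_{s=0}\zeta_{\A_\epsilon}(s)$, where $\zeta_{\A_\epsilon}(s) = \frac{1}{\Gamma(s)}\int_0^\infty t^{s-1}\tr(e^{-t\A_\epsilon})\,\dt$.

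First I would differentiate the heat trace in $\epsilon$. By Duhamel's formula together with the cyclicity of the trace,
\[
\frac{\d}{\d\epsilon}\Big|_{\epsilon=0}\tr(e^{-t\A_\epsilon}) = -t\,\tr(\alpha\, e^{-t\A}) = -t\int_\Sigma \alpha(x)\,\theta_\A(x,t)\,d^2x,
\]
with $\A = \Delta_\Sigma + m^2$. Because $\alpha\in C_c^\infty(\Sigma\setminus\partial\Sigma)$ is supported in the interior, only the interior small-$t$ expansion \eqref{eq:local heat trace expansion} of $\theta_\A(x,t)$ enters here (the boundary contribution $b_0$ is absent for fixed interior points), so $t\mapsto\int_\Sigma\alpha\,\theta_\A(\cdot,t)\,d^2x$ carries only integer powers of $t$ beyond the leading $\tfrac{1}{4\pi t}$, which keeps the subsequent Mellin continuation clean. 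Inserting this into the definition of $\zeta_{\A_\epsilon}$ and recognizing the extra factor of $t$ as shifting the argument of the local zeta function,
\[
\frac{\d}{\d\epsilon}\Big|_{\epsilon=0}\zeta_{\A_\epsilon}(s) = -\frac{1}{\Gamma(s)}\int_0^\infty t^{s}\int_\Sigma \alpha(x)\,\theta_\A(x,t)\,d^2x\,\dt = -s\int_\Sigma \alpha(x)\,\zeta_\A(s+1,x)\,d^2x =: -s\,g(s).
\]

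Next I would read off the pole structure at $s=0$. By Lemma \ref{lemma:local_zeta_meromorphic_extension} (applicable on $\operatorname{supp}\alpha$, which lies in the interior) together with Definition \ref{def: tau zeta-reg}, $\zeta_\A(\sigma,x)$ has a simple pole at $\sigma=1$ with residue $\tfrac{1}{4\pi}$ and constant term $\tau^{\mathrm{reg}}_\Sigma(x)$, whence
\[
g(s) = \frac{1}{4\pi s}\int_\Sigma \alpha(x)\,d^2x + \int_\Sigma \alpha(x)\,\tau^{\mathrm{reg}}_\Sigma(x)\,d^2x + \OO(s).
\]
Thus $\frac{\d}{\d\epsilon}\big|_{\epsilon=0}\zeta_{\A_\epsilon}(s) = -s\,g(s) = -\tfrac{1}{4\pi}\int_\Sigma\alpha - s\int_\Sigma\alpha\,\tau^{\mathrm{reg}}_\Sigma + \OO(s^2)$, so $\frac{\d}{\d s}\big|_{s=0}\frac{\d}{\d\epsilon}\big|_{\epsilon=0}\zeta_{\A_\epsilon}(s) = -\int_\Sigma\alpha\,\tau^{\mathrm{reg}}_\Sigma$. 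Combining with $F(\epsilon\alpha) = -\frac{\d}{\d s}\big|_{s=0}\zeta_{\A_\epsilon}(s)$ yields $DF(0)\alpha = \int_\Sigma \tau^{\mathrm{reg}}_\Sigma(x)\,\alpha(x)\,d^2x$, which is the claim.

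The hard part will be justifying the two interchanges of limits: differentiating the heat trace under $\int_0^\infty\dt$ jointly with the meromorphic continuation in $s$, and swapping $\frac{\d}{\d\epsilon}$ with $\frac{\d}{\d s}\big|_{s=0}$. Both are controlled precisely because $\alpha$ is smooth with compact support in the interior: this forces the purely interior heat asymptotics noted above, so $g(s)$ is meromorphic with a single simple pole at $s=0$ and the Mellin integral converges and continues exactly as in the proof of Lemma \ref{lemma:local_zeta_meromorphic_extension}. Uniform estimates on the $\epsilon$-difference quotients of $\tr(e^{-t\A_\epsilon})$ (again available since $\alpha$ is a bounded, compactly supported interior perturbation) then legitimize differentiation under the integral and the exchange of the $\epsilon$- and $s$-derivatives.
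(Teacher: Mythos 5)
Your proposal is correct and follows essentially the same route as the paper's proof: Duhamel's formula for the first-order perturbation of the heat trace, the Mellin transform with the extra factor of $t$ shifting the argument of the local zeta function by one, and the identification of the resulting constant term at $s=0$ with $\fp_{s=1}\zeta_\A(s,x)=\tau^{\mathrm{reg}}_\Sigma(x)$. Your version is somewhat more explicit about the pole structure of $g(s)$ and the interchange-of-limits justifications (which the paper glosses over), but the argument is the same.
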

\begin{proof}
As usual, we write $A=\de_{\Sigma}+m^2$. We consider the function $F(\alpha) = \log\det(A+\alpha)$, which is defined for $\alpha \in U \subset C_c^\infty(\Sigma \setminus \partial \Sigma)$, where $U$ is a neighborhood of $0$. Then we have to show that 
$$DF(0)\alpha = \int_\Sigma \tau^{\mathrm{reg}}_{\Sigma}(x)\alpha(x) d^2 x$$ 
where $D$ is the Fr\'echet 
derivative.
Applying again the Mellin transform, we have that  
\begin{align*}F(\varepsilon \alpha) &= -\dfrac{\d}{\d s}\Big|_{s=0}\int_\Sigma \zeta_{A+\epsilon \alpha}(s,x) =  -\dfrac{\d}{\d s}\Big|_{s=0}\int_\Sigma \frac{1}{\Gamma(s)}\int_0^\infty t^{s-1}\theta_{A+\varepsilon\alpha}(x,t)dt d^2 x 
\end{align*}
The heat operator for $A + \epsilon\alpha$ can be represented by perturbation theory as 
$$e^{-t(A+\epsilon \alpha)} = \sum_{k=0}^\infty \varepsilon^kW_k(t)$$ where  $W_0 = e^{-tA}$. The relevant object for us is the trace of the first order perturbation $ W_1(t)$ which can be computed as 
$$\mathrm{tr}\, W_1(t) = -t \int_\Sigma d^2 x\, \alpha(x)\,\theta_A(x,t).$$ We therefore conclude that 
\begin{align*}
 F(\varepsilon\alpha) = F(0) + \varepsilon  \dfrac{\d}{\d s}\Big|_{s=0}\int_\Sigma \frac{1}{\Gamma(s)}\int_0^\infty t^{s}\alpha(x)\theta_\A(x,t)dt d^2 x + O(\varepsilon^2)
\end{align*}
Now, exploiting $s\Gamma(s) = \Gamma(s+1)$ we realize that the term of order $\varepsilon$ is 
\begin{align*}DF(0)\alpha &= \int_\Sigma\underbrace{\dfrac{\d}{\d s}\Big|_{s=0} \frac{s}{\Gamma(s+1)}\int_0^\infty t^{s}\theta_\A(x,t)dt }_{=\tau^{\mathrm{reg}}_{\Sigma}(x)}\alpha(x) d^2 x \\
&=\int_\Sigma \tau^{\mathrm{reg}}_{\Sigma}(x)\alpha(x) d^2 x. 
\end{align*}
\end{proof}

\begin{prop}\label{prop: tau reg locality}
The assignment $\Sigma \mapsto \tau^\Sigma_{reg}$ is a local assignment. 
\end{prop}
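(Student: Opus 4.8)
The plan is to reduce the gluing identity $\tau^{\mathrm{reg}}_{\Sigma_L\cup_Y\Sigma_R}=\tau^{\mathrm{reg}}_{\Sigma_L}*\tau^{\mathrm{reg}}_{\Sigma_R}$ to the gluing formula for the Green's function (Proposition \ref{Proposition:gluing_formula_green}) evaluated on the diagonal. Fix $i\in\{L,R\}$ and a point $x$ in the interior of $\Sigma_i$; by the definition of the convolution $*$ it suffices to show
$$\tau^{\mathrm{reg}}_{\Sigma}(x)-\tau^{\mathrm{reg}}_{\Sigma_i}(x)=\int_{Y\times Y}\frac{\partial G_{\Sigma_i}(x,y)}{\partial\nu(y)}\,K(y,y')\,\frac{\partial G_{\Sigma_i}(y',x)}{\partial\nu(y')}\,dy\,dy'.$$
By Proposition \ref{Proposition:gluing_formula_green}(i) the right-hand side is precisely $\big(G_\Sigma-G_{\Sigma_i}\big)(x,x')$ evaluated at $x'=x$. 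Since for $x,x'$ in the interior of $\Sigma_i$ both $G_\Sigma(x,\cdot)$ and $G_{\Sigma_i}(x,\cdot)$ solve the same Helmholtz equation $(\Delta+m^2)u=\delta_x$ on $\Sigma_i$, their difference is a smooth solution of the homogeneous equation near $x'=x$; in particular the diagonal value $[G_\Sigma-G_{\Sigma_i}](x,x)$ is well defined and finite. Thus the whole statement reduces to the identity $\tau^{\mathrm{reg}}_{\Sigma}(x)-\tau^{\mathrm{reg}}_{\Sigma_i}(x)=[G_\Sigma-G_{\Sigma_i}](x,x)$.

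To prove this I would work with the heat-kernel representations. By Lemma \ref{lem:tau zeta} (and its boundary analogue, Lemma \ref{lemma:local_zeta_meromorphic_extension_boundary_condition}),
$$\tau^{\mathrm{reg}}_{\Sigma}(x)=-\frac{1}{4\pi}\log m^2+\int_0^\infty e^{-m^2 t}\Big(\theta_{\Delta_\Sigma}(x,t)-\frac{1}{4\pi t}\Big)\,\dt,$$
and similarly for $\Sigma_i$. Because $x$ is an interior point of $\Sigma_i$, a neighbourhood of $x$ in $\Sigma$ is isometric to a neighbourhood of $x$ in $\Sigma_i$; hence the short-time diagonal asymptotics of $\theta_{\Delta_\Sigma}(x,t)$ and $\theta_{\Delta_{\Sigma_i}}(x,t)$ coincide to all orders, the contributions distinguishing the two being $\OO(e^{-d(x,Y)^2/t})$ as $t\to0$ (paths that reach the interface $Y$). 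Consequently the universal subtractions $\tfrac1{4\pi t}$ and the terms $-\tfrac1{4\pi}\log m^2$ cancel in the difference, leaving the convergent integral
$$\tau^{\mathrm{reg}}_{\Sigma}(x)-\tau^{\mathrm{reg}}_{\Sigma_i}(x)=\int_0^\infty e^{-m^2 t}\big(\theta_{\Delta_\Sigma}(x,t)-\theta_{\Delta_{\Sigma_i}}(x,t)\big)\,\dt,$$
the integrand being exponentially small as $t\to0$ and controlled by $e^{-m^2 t}$ as $t\to\infty$.

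Finally I would identify this with the diagonal Green's-function difference through $\A^{-1}=\int_0^\infty e^{-t\A}\,dt$ (valid since $\A=\Delta+m^2$ has spectrum bounded below by $m^2>0$), whose kernel is $G(x,x')=\int_0^\infty e^{-m^2 t}\theta_{\Delta}(x,x',t)\,dt$. Writing this for $\Sigma$ and $\Sigma_i$ at $x'\neq x$ and subtracting, the off-diagonal difference $\theta_{\Delta_\Sigma}(x,x',t)-\theta_{\Delta_{\Sigma_i}}(x,x',t)$ is again exponentially small as $t\to0$, uniformly for $x'$ in a neighbourhood of $x$, so dominated convergence permits passing to the limit $x'\to x$ under the $t$-integral; this yields $[G_\Sigma-G_{\Sigma_i}](x,x)=\int_0^\infty e^{-m^2 t}\big(\theta_{\Delta_\Sigma}(x,t)-\theta_{\Delta_{\Sigma_i}}(x,t)\big)\,\dt$, which combined with the previous display gives the desired identity; Proposition \ref{Proposition:gluing_formula_green}(i) then closes the argument. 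The case $x\in\Sigma_R$ follows by symmetry, and since both sides are $L^p$ functions (Corollary \ref{corollary:tau reg integrable}) the identity extends across $Y$. I expect the main obstacle to be the uniform short-time estimate on $\theta_{\Delta_\Sigma}-\theta_{\Delta_{\Sigma_i}}$ that simultaneously justifies the cancellation of the subtraction terms and the interchange of limit and integral; this is precisely where one must invoke the locality of the heat-kernel parametrix at an interior point together with the exponential off-diagonal decay, i.e.\ the fact that the two heat kernels differ only through paths reaching the interface $Y$.
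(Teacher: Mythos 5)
Your proof is correct and follows essentially the same route as the paper: both reduce the claim to the gluing formula for Green's functions (Proposition \ref{Proposition:gluing_formula_green}) evaluated on the diagonal, exploiting that the regularization is the same universal subtraction on $\Sigma$ and on $\Sigma_i$ so that it cancels in the difference. Your heat-kernel derivation of $\tau^{\mathrm{reg}}_\Sigma(x)-\tau^{\mathrm{reg}}_{\Sigma_i}(x)=\lim_{x'\to x}\bigl(G_\Sigma-G_{\Sigma_i}\bigr)(x,x')$ is precisely what the paper packages as the point-splitting tadpole $\tau^{\mathrm{split}}$ together with Corollary \ref{corollary:tadpole difference} (that $\tau^{\mathrm{reg}}-\tau^{\mathrm{split}}$ is a universal constant).
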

For the proof we investigate another tadpole function, given by the so-called point splitting.
\subsection{Point-splitting tadpole}\label{sec: point-splitting tadpole}
Let $\Sigma$ be a compact oriented Riemannian manifold possibly with boundary. Let us consider the operator $A=\Delta_{\Sigma}+m^2$. We will always impose the Dirichlet boundary condition if $\Sigma$ has boundary  and we assume boundary to be closed. We recall that the Green's function on $\Sigma$ associated to $A$ 
can be written as
\[G_{\Sigma}(x,x')=-\frac{1}{2\pi}\log(d(x,x'))+\mathrm{H}(x,x')
\] 
near the diagonal of $\Sigma\times\Sigma$, where 
$\mathrm{H}$ is in $C^1$ in a neighborhood of the diagonal in $\Sigma\setminus\partial\Sigma\times\Sigma\setminus\partial\Sigma$.  The singular support of the distribution $G_{\Sigma}$ is the diagonal. 

\begin{definition}\label{tadpole:splitting} We define
$\tau^{\mathrm{split}}_{\Sigma}(x):=\lim_{x'\to x}\left[G_{\Sigma}(x,x')+\frac{1}{2\pi}\log(d(x,x'))\right].$ 
\end{definition}

We can think of $\tau^{\mathrm{split}}_{\Sigma}$ as a way to regularize $G_{\Sigma}$ on the diagonal via ``point-splitting". In the following lemma, we state some properties of $\tau^{\mathrm{split}}_{\Sigma}.$

\begin{lemma}\label{lemma:splitting_tadpole_behavior} 
The point-splitting tadpole $\tau^{\mathrm{\sp}}_{\Sigma}$ is $C^1$ on $\Sigma\setminus\partial\Sigma$ and $\tau^{\mathrm{\sp}}_{\Sigma}\in \mathrm{L}^p(\Sigma)$ for any $p\geq 1$. 
Moreover,
\[\tau^{\mathrm{\sp}}_{\Sigma}(x)=-\dfrac{\log m^2}{4\pi}+\dfrac{\log 2-\gamma}{2\pi}+\int_{0}^{\infty}e^{-tm^2}\left(\theta_{\Delta_{\Sigma}}(x,t)-\dfrac{1}{4\pi t}\right)\,\dt
\] where $\gamma$ is the Euler's constant.
\begin{proof}
The fact that the point-splitting tadpole $\tau^{\mathrm{\sp}}_{\Sigma}$ is $C^1$ on $\Sigma\setminus\partial\Sigma$ follows from the definition. To show $\tau^{\mathrm{\sp}}_{\Sigma}\in L^p(\Sigma)$, it suffices to show the function $f$ on $\Sigma$ defined by $x\mapsto \frac{1}{2\pi}\log(d(x,\partial\Sigma))$ is in $L^p(\Sigma)$ as $f-\tau^{\mathrm{\sp}}_{\Sigma}$ is locally bounded in a collar neighborhood of the boundary. We have already shown in Lemma \ref{lemma:log of distance to the boundary} that $f\in L^p(\Sigma)$.
This completes the proof of $\tau^{\mathrm{\sp}}_{\Sigma}\in L^p(\Sigma)$. Another proof of this assertion can be given by comparing $\tau^{\mathrm{\sp}}_{\Sigma}$ with $\tau^{\mathrm{reg}}_{\Sigma}$ using Corollary \ref{corollary:tadpole difference}, which uses only the second part of this lemma, and applying Corollary \ref{corollary:tau reg integrable}.

For the last part, let us recall that 
\[G_{\Sigma}(x,x')=\int_{0}^{\infty}e^{-tm^2}\theta_{\Delta_{\Sigma}}(x,x',t)\,\dt \] and 
\[\dfrac{1}{2\pi}K_0(md(x,x'))=\int_{0}^{\infty}\dfrac{1}{4\pi t}e^{-tm^2-d(x,x')^2/4t}\,\dt
\] where $K_0(z)$ is the modified Bessel's function. The $z\rightarrow 0$ asymptotics of $K_0(z)$ implies 
\[\dfrac{1}{2\pi}K_0(md(x,x'))=-\dfrac{\log(md(x,x'))}{2\pi}+\dfrac{\log2-\gamma}{2\pi}+\mathrm{O}(m^2d(x,x')^2)
\] as $d(x,x')\to 0$. 
We can rewrite this as:
\[\dfrac{\log(d(x,x'))}{2\pi}=-\dfrac{1}{2\pi}K_0(md(x,x'))-\dfrac{\log m^2}{4\pi}+
\dfrac{\log 2-\gamma}{2\pi}+\mathrm{O}(m^2d(x,x')^2)
\] as $d(x,x')\to 0$.
Using this, we see that
\begin{align*}
&\lim_{x'\to x}\left(G_{\Sigma}(x,x')+\frac{1}{2\pi}\log (d(x,x'))\right)\\
&=-\dfrac{\log m^2}{4\pi}+\dfrac{\log 2-\gamma}{2\pi}+\int_{0}^{\infty}e^{-tm^2}\left(\theta_{\Delta_{\Sigma}}(x,t)-\dfrac{1}{4\pi t}\right)\,\dt.
\end{align*}
\end{proof}
\end{lemma}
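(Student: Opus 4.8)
The plan is to treat the three assertions in turn, deriving the explicit heat-kernel formula first and then reading off regularity and integrability from it together with the near-diagonal structure of $G_\Sigma$.

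First I would record the $C^1$ regularity on $\Sigma\setminus\partial\Sigma$. By Lemma \ref{DGreen's function}(ii), near the diagonal and away from the boundary one has $G_\Sigma(x,x')=-\frac{1}{2\pi}\log d(x,x')+H(x,x')$ with $H\in C^1$, so that $G_\Sigma(x,x')+\frac{1}{2\pi}\log d(x,x')=H(x,x')$ and the limit defining $\tau^{\mathrm{split}}_\Sigma(x)$ is simply $H(x,x)$. Since $H$ is $C^1$ in a neighborhood of the diagonal in $(\Sigma\setminus\partial\Sigma)\times(\Sigma\setminus\partial\Sigma)$, its restriction to the diagonal is $C^1$, which gives the first claim and simultaneously shows that the defining limit exists pointwise in the interior.

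For the explicit formula I would combine the heat-kernel representation $G_\Sigma(x,x')=\int_0^\infty e^{-tm^2}\theta_{\Delta_\Sigma}(x,x',t)\,dt$ with the Bessel identity $\frac{1}{2\pi}K_0(md(x,x'))=\int_0^\infty\frac{1}{4\pi t}e^{-tm^2-d(x,x')^2/4t}\,dt$. Subtracting, the short-time singularities cancel:
\[
G_\Sigma(x,x')-\frac{1}{2\pi}K_0(md(x,x'))=\int_0^\infty e^{-tm^2}\Big(\theta_{\Delta_\Sigma}(x,x',t)-\frac{e^{-d(x,x')^2/4t}}{4\pi t}\Big)\,dt.
\]
Using the $z\to0$ asymptotics $\frac{1}{2\pi}K_0(z)=-\frac{1}{2\pi}\log z+\frac{\log 2-\gamma}{2\pi}+\mathrm{O}(z^2)$ to trade $\frac{1}{2\pi}\log d(x,x')$ for $K_0$, I would substitute into the definition of $\tau^{\mathrm{split}}_\Sigma$ and let $x'\to x$; the $d(x,x')^2$-dependent terms drop out in the limit, leaving the asserted expression with $\theta_{\Delta_\Sigma}(x,t)=\theta_{\Delta_\Sigma}(x,x,t)$. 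The main obstacle is justifying the exchange of $\lim_{x'\to x}$ with the $t$-integral. I would control this by the short-time heat-kernel expansion: once the free leading term $\frac{e^{-d^2/4t}}{4\pi t}$ is subtracted, the remaining integrand is $\mathrm{O}(1)$ as $t\to0$, uniformly for $x'$ in a small ball around $x$ (the $1/t$ singularity is gone and the $a_1$-contribution is bounded); at large $t$, the factor $e^{-tm^2}$ with $m>0$, against which both $\theta_{\Delta_\Sigma}(x,x',t)$ and $\frac{1}{4\pi t}$ are integrated, forces uniform convergence of the tail. Dominated convergence then licenses passing to the limit.

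Finally, integrability follows by comparison near the boundary. On the interior $\tau^{\mathrm{split}}_\Sigma$ is $C^1$, hence locally bounded, so the only issue is a collar of $\partial\Sigma$. There I would argue that $\tau^{\mathrm{split}}_\Sigma-\frac{1}{2\pi}\log d(\cdot,\partial\Sigma)$ stays bounded and invoke Lemma \ref{lemma:log of distance to the boundary}, which gives $\log d(\cdot,\partial\Sigma)\in L^p(\Sigma)$ for every $p\geq1$; hence $\tau^{\mathrm{split}}_\Sigma\in L^p(\Sigma)$. An alternative route, once the explicit formula is in hand, is to compare $\tau^{\mathrm{split}}_\Sigma$ with $\tau^{\mathrm{reg}}_\Sigma$ (whose integrand differs only by the constant $\frac{\log 2-\gamma}{2\pi}$) and apply Corollary \ref{corollary:tau reg integrable}.
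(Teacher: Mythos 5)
Your proposal is correct and follows essentially the same route as the paper: the $C^1$ statement from the near-diagonal decomposition of $G_\Sigma$, the explicit formula via the heat-kernel representation of $G_\Sigma$ together with the $K_0$ integral and its small-argument asymptotics, and the $L^p$ statement by comparison with $\frac{1}{2\pi}\log d(\cdot,\partial\Sigma)$ near the boundary (with the same alternative via $\tau^{\mathrm{reg}}_\Sigma$ and Corollary \ref{corollary:tau reg integrable}). The only difference is that you spell out the dominated-convergence justification for exchanging $\lim_{x'\to x}$ with the $t$-integral, which the paper leaves implicit; that is a welcome addition but not a different argument.
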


\begin{corollary} \label{corollary:tadpole difference}
For $x\in\Sigma\setminus\partial\Sigma$, we have 
$\tau^{\mathrm{reg}}_\Sigma(x) - \tau^{\sp}_{\Sigma}(x) = \dfrac{\gamma-\log 2}{2\pi}.$ 
\end{corollary}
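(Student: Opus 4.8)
The plan is to observe that this corollary is an immediate consequence of the two explicit closed-form representations already established in the preceding lemmas, so essentially no new work is required beyond a subtraction. The key point is that Lemma \ref{lem:tau zeta} (valid both in the closed case and, via Lemma \ref{lemma:local_zeta_meromorphic_extension_boundary_condition}, in the case with boundary) gives
$$
\tau^{\mathrm{reg}}_{\Sigma}(x) = -\frac{1}{4\pi}\log m^2 + \int_{0}^{\infty} e^{-m^2 t}\left(\theta_{\Delta_{\Sigma}}(x,t) - \frac{1}{4\pi t}\right)\,\dt,
$$
while Lemma \ref{lemma:splitting_tadpole_behavior} gives
$$
\tau^{\mathrm{\sp}}_{\Sigma}(x) = -\frac{\log m^2}{4\pi} + \frac{\log 2 - \gamma}{2\pi} + \int_{0}^{\infty} e^{-tm^2}\left(\theta_{\Delta_{\Sigma}}(x,t) - \frac{1}{4\pi t}\right)\,\dt.
$$

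First I would note that both expressions hold pointwise for $x \in \Sigma \setminus \partial\Sigma$, where the heat-trace integrals converge: the integrand $\theta_{\Delta_\Sigma}(x,t) - \tfrac{1}{4\pi t}$ is controlled at small $t$ by the local heat-kernel expansion \eqref{eq:local heat trace expansion} (the leading $\tfrac{1}{4\pi t}$ singularity is subtracted, leaving an $\OO(1)$ contribution), and at large $t$ by the exponential decay $e^{-m^2 t}$. Then I would simply subtract the second identity from the first: the term $-\tfrac{1}{4\pi}\log m^2$ cancels, and the two Mellin-type integrals are literally identical and hence cancel as well. What remains is exactly the constant $-\frac{\log 2 - \gamma}{2\pi} = \frac{\gamma - \log 2}{2\pi}$, which is the claimed value.

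There is no genuine obstacle here; the entire content has already been absorbed into the two lemmas, and in particular into the $z \to 0$ asymptotics of the modified Bessel function $K_0$ used in the proof of Lemma \ref{lemma:splitting_tadpole_behavior}, which is precisely where the Euler constant $\gamma$ and the $\log 2$ enter. The only minor care needed is to confirm that the \emph{same} integral $\int_0^\infty e^{-m^2 t}(\theta_{\Delta_\Sigma}(x,t) - \tfrac{1}{4\pi t})\,\dt$ appears in both formulas with identical normalization, which is manifest from the statements of the two lemmas; once this is checked the cancellation is exact and the identity follows for every interior point $x$.
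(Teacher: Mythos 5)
Your proof is correct and follows exactly the route the paper intends: the statement is presented as an immediate corollary of Lemma \ref{lem:tau zeta} (respectively Lemma \ref{lemma:local_zeta_meromorphic_extension_boundary_condition} in the boundary case) and Lemma \ref{lemma:splitting_tadpole_behavior}, obtained by subtracting the two explicit integral representations so that the common term $-\tfrac{1}{4\pi}\log m^2$ and the identical heat-kernel integral cancel, leaving the constant $\tfrac{\gamma-\log 2}{2\pi}$.
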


We note that the point-splitting tadpole function is invariant under the action of the group of the isometries of the Riemannian metric. From this it follows that:

\begin{prop} Let $\Sigma$ be a closed oriented Riemannian manifold such that the group of isometries act transitively on $\Sigma$, then $\tau_{\Sigma}^{\sp}$ is constant on $\Sigma$. In particular, if $\Sigma$ is a closed Riemannian surface with constant scalar curvature, then $\tau_{\Sigma}^{\sp}$ is constant. 
\end{prop}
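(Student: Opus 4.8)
The plan is to deduce the constancy of $\tau^{\sp}_\Sigma$ from its invariance under the isometry group, so that the real content of the statement is to verify that invariance and then combine it with transitivity. I would work from the heat-kernel representation established in Lemma \ref{lemma:splitting_tadpole_behavior},
\[ \tau^{\sp}_\Sigma(x) = -\frac{\log m^2}{4\pi} + \frac{\log 2 - \gamma}{2\pi} + \int_0^\infty e^{-tm^2}\Big(\theta_{\Delta_\Sigma}(x,t) - \frac{1}{4\pi t}\Big)\,\dt, \]
which exhibits the dependence of $\tau^{\sp}_\Sigma$ on the point $x$ \emph{entirely} through the on-diagonal heat kernel $\theta_{\Delta_\Sigma}(x,t)=\theta_{\Delta_\Sigma}(x,x,t)$, every other term being a constant independent of $x$.

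The first step is to show $\theta_{\Delta_\Sigma}(\varphi(x),t)=\theta_{\Delta_\Sigma}(x,t)$ for every isometry $\varphi$ of $\Sigma$. Since an isometry preserves the Riemannian volume and the Laplace-de Rham operator, the pullback $\varphi^*$ commutes with $\Delta_\Sigma$ and hence with the heat semigroup $e^{-t\Delta_\Sigma}$; by uniqueness of the heat kernel this forces $\theta_{\Delta_\Sigma}(\varphi(x),\varphi(y),t)=\theta_{\Delta_\Sigma}(x,y,t)$ for all $x,y,t$. Setting $y=x$ gives the invariance on the diagonal, and substituting into the formula above yields $\tau^{\sp}_\Sigma(\varphi(x))=\tau^{\sp}_\Sigma(x)$. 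Equivalently, one can argue straight from Definition \ref{tadpole:splitting}, using that the Green's function $G_\Sigma$ (the kernel of $(\Delta_\Sigma+m^2)^{-1}$, which commutes with $\varphi^*$) and the geodesic distance $d(\cdot,\cdot)$ are both isometry invariant, so that the defining point-splitting limit is preserved under $\varphi$.

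The second step is then immediate: if the isometry group acts transitively, for any two points $x,y\in\Sigma$ there is $\varphi$ with $\varphi(x)=y$, whence $\tau^{\sp}_\Sigma(y)=\tau^{\sp}_\Sigma(x)$, so $\tau^{\sp}_\Sigma$ is constant. For the final assertion I would note that a closed surface of constant scalar (equivalently Gaussian) curvature is, after uniformization, a space form modeled on $S^2$, $\mathbb{R}^2$ or $\mathbb{H}^2$; for the homogeneous representatives where the isometry group acts transitively — the round sphere and the flat tori — constancy of $\tau^{\sp}_\Sigma$ follows directly from the transitivity argument.

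I do not anticipate a serious obstacle: $\tau^{\sp}_\Sigma$ is a geometric invariant built from objects (the Green's function, or equivalently the on-diagonal heat kernel, together with the distance function) that transform naturally under isometries. The only point requiring a little care is the interchange of the isometry with the point-splitting limit of Definition \ref{tadpole:splitting}, which is harmless because $\varphi$ is a smooth isometry fixing the diagonal and preserving $d$. I would also flag that transitivity of the isometry group is genuinely needed: on a constant-curvature surface of genus $\geq 2$ the isometry group is only finite, and there $\tau^{\sp}_\Sigma$ need not be constant, so the ``in particular'' is to be understood for the transitively symmetric constant-curvature models.
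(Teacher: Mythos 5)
Your argument is correct and is essentially the paper's own: the paper simply notes, immediately before the proposition, that $\tau^{\sp}_{\Sigma}$ is invariant under isometries of the metric and lets the proposition "follow from" that remark, which is exactly your combination of isometry invariance (of the heat kernel, or equivalently of $G_\Sigma$ and $d(\cdot,\cdot)$ in the point-splitting limit) with transitivity. Your caveat about the ``in particular'' clause is well taken and identifies a genuine weakness in the statement rather than in your proof: a closed hyperbolic surface of genus $\geq 2$ has constant scalar curvature but only a finite isometry group, and there $\theta_{\Delta_\Sigma}(x,x,t)$ genuinely depends on $x$ (by the pretrace formula it involves the displacement functions $d(x,\gamma x)$ of the deck transformations), so $\tau^{\sp}_\Sigma$ need not be constant; the second assertion therefore does not follow from the first and should be read as restricted to the homogeneous constant-curvature models (round sphere, flat tori), exactly as you say. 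The paper offers no separate argument for that clause, so your reading is the only one under which the proposition is justified by the given reasoning.
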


Finally, we can give two examples of local assignments.
\begin{prop}
The assignment $\Sigma \mapsto \tau^{\sp}_\Sigma$ is a local assignment of tadpole functions. In particular, $\tau^{\mathrm{reg}}_{\Sigma}(x)$ is also a local assignment of tadpole functions.
\end{prop}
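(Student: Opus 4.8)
The plan is to reduce the locality of $\tau^{\sp}$ to the gluing relation for the Green's function already established in Proposition~\ref{Proposition:gluing_formula_green}, by passing to the coincidence limit $x' \to x$. Fix a decomposition $\Sigma = \Sigma_L \cup_Y \Sigma_R$ and a point $x$ in the interior of one of the pieces, say $x \in \Sigma_i \setminus \partial \Sigma_i$. For a second interior point $x' \in \Sigma_i$, part (i) of Proposition~\ref{Proposition:gluing_formula_green} gives
\begin{equation*}
G_\Sigma(x,x') = G_{\Sigma_i}(x,x') + \int_{Y\times Y}\frac{\partial G_{\Sigma_i}(x,y)}{\partial\nu(y)}K(y,y')\frac{\partial G_{\Sigma_i}(y',x')}{\partial\nu(y')}\,dy\,dy'.
\end{equation*}

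First I would add $\frac{1}{2\pi}\log d(x,x')$ to both sides and let $x' \to x$. By Definition~\ref{tadpole:splitting} the left-hand side converges to $\tau^{\sp}_\Sigma(x)$, while $G_{\Sigma_i}(x,x') + \frac{1}{2\pi}\log d(x,x')$ converges to $\tau^{\sp}_{\Sigma_i}(x)$. For the integral term, since $x$ lies in the interior of $\Sigma_i$ and $y,y' \in Y$, the factor $\frac{\partial G_{\Sigma_i}(y',x')}{\partial\nu(y')}$ is smooth in $x'$ uniformly for $x'$ in a small neighborhood of $x$ bounded away from $Y$; hence the integral depends continuously on $x'$ and the limit may be taken under the integral sign, yielding exactly the correction term in the definition of $\tau^{\sp}_L * \tau^{\sp}_R$. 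This gives
\begin{equation*}
\tau^{\sp}_\Sigma(x) = \tau^{\sp}_{\Sigma_i}(x) + \int_{Y\times Y}\frac{\partial G_{\Sigma_i}(x,y)}{\partial\nu(y)}K(y,y')\frac{\partial G_{\Sigma_i}(y',x)}{\partial\nu(y')}\,dy\,dy' = (\tau^{\sp}_L * \tau^{\sp}_R)(x),
\end{equation*}
which is precisely the required gluing identity, establishing that $\Sigma \mapsto \tau^{\sp}_\Sigma$ is local.

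For the zeta-regularized tadpole I would invoke Corollary~\ref{corollary:tadpole difference}, which says $\tau^{\mathrm{reg}}_\Sigma(x) - \tau^{\sp}_\Sigma(x) = \frac{\gamma - \log 2}{2\pi}$ is a universal constant, independent of $\Sigma$. Substituting $\tau^{\mathrm{reg}} = \tau^{\sp} + \frac{\gamma-\log 2}{2\pi}$ into the gluing identity just proved, the additive constant appears identically on both $\Sigma$ and $\Sigma_i$ and therefore cancels, leaving the convolution relation $\tau^{\mathrm{reg}}_\Sigma = \tau^{\mathrm{reg}}_L * \tau^{\mathrm{reg}}_R$ intact. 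Hence $\tau^{\mathrm{reg}}$ is also a local assignment.

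The only genuine subtlety is the interchange of limit and integral in the middle step, which I expect to be the main point requiring care; however it is mild, since the points $x,x'$ never approach the interface $Y$, so the $x'$-dependence of the integrand is through a smooth (non-singular) kernel and dominated convergence applies at once. The remainder is formal bookkeeping of constants.
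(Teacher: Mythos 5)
Your proposal is correct and follows essentially the same route as the paper: the paper's proof is exactly ``a direct application of the gluing formula for $G_\Sigma$'' (Proposition \ref{Proposition:gluing_formula_green}) followed by Corollary \ref{corollary:tadpole difference} to transfer locality from $\tau^{\sp}$ to $\tau^{\mathrm{reg}}$. You have merely spelled out the coincidence-limit step and the (harmless) interchange of limit and integral that the paper leaves implicit.
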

\begin{proof}
 We can show $\tau^{\sp}_\Sigma$ is a local assignment by a direct application of the gluing formula for $G_{\Sigma}$ and we can complete the proof using Corollary \ref{corollary:tadpole difference}.
\end{proof}

We end this subsection with the following remark concerning the appearance of tadpole functions in other context in the literature. 
\begin{remark}
The zeta-regularized tadpole and the point-splitting tadpole appear in the study of conformally covariant elliptic operators such as Yamabe operator and Paneitz operator on a closed Riemannian manifold $\Sigma$ \cite{AH2005}. In this context, these functions are known as the mass function of the operators and they are used in the study of mass theorems, regularized traces, and conformal variation of regularized traces, we refer to \cite{SY1979,SY1981,AH2005,Kategafa2008,Kate2008,Ludewig2017} for details.  
\end{remark}

\subsection{Some comments on the relation to RG flow in 2D scalar field theory}\label{sec: RG flow}
In this paper we regularize 2d partition functions by choosing a tadpole function. In the above we constructed two particular examples of tadpole functions, the ``zeta-regularized'' and the ``point-splitting'' one. The purpose of this subsection is to relate these to the 
setup of renormalization group and 
 RG flow. 
\subsubsection{Renormalized action (action with counterterms)}\label{ss: renormalized action}
Consider the 2D scalar field theory with action
\begin{equation*}
S(\phi)= \underbrace{\int_\Sigma \frac12 d\phi\wedge *d\phi +\frac{m^2}{2}\phi^2 \,dvol}_{S_{free}} + p(\phi)\, dvol
\end{equation*}
with $p(\phi)=\sum_n \frac{p_n}{n!}\phi^n$ a polynomial interaction (or more generally a formal power series).

Say, we are interested in the normalized path integral 
\begin{equation}\label{Z_norm}
Z_{norm}=\frac{\int \DD\phi\; e^{-\frac{1}{\hbar}S(\phi)}}{\int\DD\phi\;e^{-\frac{1}{\hbar}S_{free}(\phi)}}
\end{equation}
and we define it by a perturbative expansion with Green's function regularized via proper time cut-off:
\begin{equation*}
G_\Lambda(x,x')=\int_{1/\Lambda^2}^\infty dt\; e^{-m^2 t} \theta_\Delta(x,x',t)
\end{equation*}
with $\theta_\Delta$ the heat kernel for the Laplacian, $x,x'\in\Sigma$ and $\Lambda$ a very large cut-off having the dimension of mass.
We note that, for $x'\neq x$, $\lim_{\Lambda\ra\infty}G_\Lambda(x,x')=G(x,x')$ exists, whereas on the diagonal we have the asymptotic behavior
\begin{equation}\label{G_Lambda}
G_\Lambda(x,x)\underset{\Lambda\ra\infty}{\sim} \frac{\log \Lambda}{2\pi} + \underbrace{\widetilde{\tau}(x)}_{\mr{finite}}
\end{equation}

\begin{lemma}
The finite part $\til{\tau}(x)$ appearing in the r.h.s. of (\ref{G_Lambda}) differs from the zeta-regularized tadpole by a universal constant:
\begin{equation}\label{T tilde = Treg + C}
\til{\tau}(x)=\tau^\mr{reg}(x)-\frac{\gamma}{4\pi}
\end{equation}
with $\gamma$ the Euler constant.
\end{lemma}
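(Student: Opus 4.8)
The plan is to evaluate $G_\Lambda(x,x)$ directly from its definition and read off the finite part. Restricting to the diagonal we have
$$G_\Lambda(x,x)=\int_{1/\Lambda^2}^\infty \dt\; e^{-m^2 t}\,\theta_{\Delta_\Sigma}(x,t),$$
and the strategy is to add and subtract the leading short-time singularity $\frac{1}{4\pi t}$ of the heat kernel, splitting the integral as
$$G_\Lambda(x,x)=\frac{1}{4\pi}\int_{1/\Lambda^2}^\infty \frac{e^{-m^2 t}}{t}\,\dt+\int_{1/\Lambda^2}^\infty e^{-m^2 t}\Big(\theta_{\Delta_\Sigma}(x,t)-\frac{1}{4\pi t}\Big)\,\dt.$$
The first term will carry the $\Lambda$-divergence together with a universal finite constant, while the second will converge in the limit.

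For the second integral, the local heat-trace expansion \eqref{eq:local heat trace expansion} shows that $\theta_{\Delta_\Sigma}(x,t)-\frac{1}{4\pi t}=\mathrm{O}(1)$ as $t\to 0$, so the integrand is dominated by an integrable function uniformly in $\Lambda$; by dominated convergence the lower limit may be sent to $0$ and this term converges to $\int_0^\infty e^{-m^2 t}\big(\theta_{\Delta_\Sigma}(x,t)-\frac{1}{4\pi t}\big)\,\dt$, which is exactly the integral appearing in the explicit formula for $\tau^{\mathrm{reg}}_\Sigma(x)$ of Lemma \ref{lem:tau zeta}. For the first integral I substitute $u=m^2 t$ to recognize the exponential integral,
$$\frac{1}{4\pi}\int_{1/\Lambda^2}^\infty \frac{e^{-m^2 t}}{t}\,\dt=\frac{1}{4\pi}\int_{m^2/\Lambda^2}^\infty \frac{e^{-u}}{u}\,du=\frac{1}{4\pi}E_1\!\Big(\frac{m^2}{\Lambda^2}\Big),$$
and then invoke the small-argument asymptotics $E_1(z)=-\gamma-\log z+\mathrm{O}(z)$ as $z\to 0^+$.

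Expanding with $z=m^2/\Lambda^2$ gives $\frac{1}{4\pi}\big(-\gamma-\log m^2+2\log\Lambda\big)+\mathrm{O}(\Lambda^{-2})$, whose divergent piece is precisely $\frac{\log\Lambda}{2\pi}$, matching \eqref{G_Lambda}. Collecting the $\Lambda$-independent terms then yields
$$\til\tau(x)=-\frac{\gamma}{4\pi}-\frac{\log m^2}{4\pi}+\int_0^\infty e^{-m^2 t}\Big(\theta_{\Delta_\Sigma}(x,t)-\frac{1}{4\pi t}\Big)\,\dt=\tau^{\mathrm{reg}}_\Sigma(x)-\frac{\gamma}{4\pi},$$
where the last equality is exactly the explicit formula of Lemma \ref{lem:tau zeta}. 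This is \eqref{T tilde = Treg + C}. The only points requiring care are the interchange of limit and integral in the finite piece, handled by dominated convergence via the short-time expansion, and the precise constant in the $E_1$ asymptotics, which is the sole source of the universal $-\gamma/(4\pi)$; the boundary case is identical, since by Lemma \ref{lemma:local_zeta_meromorphic_extension_boundary_condition} the zeta-regularized tadpole there is given by the very same subtraction of $\frac{1}{4\pi t}$ (the $b_0(x)\sqrt{t}$ boundary term being integrable at $t=0$ and hence not affecting the finite part).
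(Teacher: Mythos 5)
Your proof is correct and follows essentially the same route as the paper: the identical split of $G_\Lambda(x,x)$ by adding and subtracting $\frac{1}{4\pi t}$, recognition of the singular piece as $\frac{1}{4\pi}E_1\big(\frac{m^2}{\Lambda^2}\big)$ with its small-argument asymptotics $E_1(u)\sim-\log u-\gamma$, and comparison with the explicit formula for $\tau^{\mathrm{reg}}_\Sigma$ from Lemma \ref{lem:tau zeta}. The extra remarks on dominated convergence and on the boundary case are correct but not needed beyond what the paper records.
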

\begin{proof}
Indeed, we find
\begin{multline*}
 G_\Lambda(x,x)=\int_{1/\Lambda^2}^\infty dt\, e^{-m^2 t} \theta_\Delta(x,t) = \int_{1/\Lambda^2}^\infty dt\, e^{-m^2 t} \left(\theta_\Delta(x,t)-\frac{1}{4\pi t} \right) +\\
 + \underbrace{\int_{1/\Lambda^2}^\infty dt\, e^{-m^2 t} \frac{1}{4\pi t} }_{\frac{1}{4\pi} E_1\big(\frac{m^2}{\Lambda^2}\big)} \; \underset{\Lambda\ra\infty}{\sim}\; \frac{\log\Lambda}{2\pi}\underbrace{-\frac{\gamma+\log m^2}{4\pi}+\int_0^\infty dt\, e^{-m^2 t} \left(\theta_\Delta(x,t)-\frac{1}{4\pi t} \right)}_{\til{\tau}(x)} +O\left(\frac{m^2}{\Lambda^2}\right).
 \end{multline*}
Here $E_1(u)=\int_u^\infty dt\, \frac{e^{-t}}{t}$ is the exponential integral and we used its asymptotic behavior $E_1(u)\sim -\log u-\gamma+O(u)$ at $u\ra 0$. Comparing this formula for $\til{\tau}$ with the result for $\tau^\mr{reg}$ (Lemma \ref{lem:tau zeta}), we obtain (\ref{T tilde = Treg + C}).
\end{proof}

For the normalized path integral (\ref{Z_norm}) to be finite, we must assume that the coefficients of $p(\phi)=p_\Lambda(\phi)$ in the numerator depend on $\Lambda$ in such a way that the limit $\lim_{\Lambda\ra \infty}Z_{norm}$ exists.  For that to happen,  $p_\Lambda(\phi)$ must have the following form:
\begin{multline}\label{p_Lambda}
p_\Lambda(\phi)=\sum_{n\geq 0} \frac{p_n}{n!} \sum_{k=0}^{\left[\frac{n}{2}\right]}(-1)^k (2k-1)!! \left(\begin{array}{c}n \\ 2k\end{array} \right) \cdot \left(\frac{\hbar}{2\pi}\log\Lambda\right)^k \phi^{n-2k}= \\
= \sum_{n\geq 0} p_n \sum_{k=0}^{\left[\frac{n}{2}\right]}\frac{1}{k!} \left(-\frac{\hbar}{4\pi}\log\Lambda\right)^k\cdot \frac{\phi^{n-2k}}{(n-2k)!}
\end{multline}
-- here we are essentially subtracting from $p_\mr{naive}(\phi)=\sum \frac{p_n}{n!}\phi^n$ the ``counterterms'' compensating for the tadpole divergencies encountered when computing the path integral (\ref{Z_norm}) using $p_\mr{naive}$.

Note that (\ref{p_Lambda}) satisfies the differential equation
\begin{equation}\label{RG flow eq for p_Lambda}
\frac{\dd}{\dd\log\Lambda}\; p_\Lambda(\phi)=-\frac{\hbar}{4\pi}\;\frac{\dd^2}{\dd\phi^2}\, p_\Lambda(\phi)
\end{equation}
- one can see at as a heat equation with ``time'' coordinate $\log\Lambda$ and ``space'' coordinate $\phi$, and (\ref{p_Lambda}) is the general solution with initial condition given by $p_\mr{naive}(\phi)$ at ``time" $\log\Lambda=0$.\footnote{
One can call (\ref{RG flow eq for p_Lambda}) the RG flow equation ``at the ultraviolet end'': it tells how the local counterterms change when the cut-off is changed infinitesimally.
}

Some examples of solutions:
\vspace{0.2cm}

\begin{tabular}{c|l}
$\frac{\phi^2}{2}-\frac{\hbar}{4\pi}\log\Lambda$
& shift of mass (gets additively renormalized), \\
$	 \Lambda^{-\frac{\hbar}{4\pi}\alpha^2} e^{\alpha \phi}$ 
&
potential of Liouville theory, 
 \\
 $	 \Lambda^{\frac{\hbar}{4\pi}\alpha^2} \cos(\alpha \phi)$ &
 potential of sine-Gordon theory.
\end{tabular}
\vspace{0.2cm}

In the last two examples the potential is multiplicatively renormalized (attains an anomalous dimension).\footnote{Cf. the fact that in free massless scalar field theory -- a prototypical CFT -- $:e^{\alpha\phi}:$ is a vertex operator of holomorphic/antiholomorphic dimension $h=\bar{h}=-\frac{\hbar}{8\pi}\alpha^2$ (in our normalization convention), and thus total scaling dimension $h+\bar{h}=-\frac{\hbar}{4\pi}\alpha^2$ and spin $h-\bar{h}=0$.
}

\subsubsection{Tadpoles vs. RG flow (``petal diagram resummation'')}
Let $Z^{\tau,p}$ be the partition function on a surface $\Sigma$ (possibly with boundary) for the massive scalar field with interaction potential 
\begin{equation} \label{p(phi)}
p(\phi)=\sum_n \frac{p_n}{n!}\phi^n, 
\end{equation} 
defined using the tadpole function $\tau=\tau(x)$. 

We denote $\Fun$ 
the space of smooth functions in the interior of $\Sigma$ which behave as $O(|\log d(x,\partial \Sigma)|^N)$ near the boundary, for some power $N$.

We will consider the setup where the coefficients $p_n$ of the interaction potential themselves are allowed to be functions on $\Sigma$ valued in power series in $\hbar$, i.e., $p=p(\phi,x,\hbar)\in \Fun[[\phi,\hbar]]$.\footnote{The reason for introducing this extended setup is that the transformation (\ref{phi tilde}) below generally (for a non-constant tadpole function $\tau$) transforms a potential with constant coefficients to one with non-constant coefficients.}
We have the following.
\begin{prop}[Petal diagram resummation] \label{prop: petal resummation} \leavevmode
\begin{enumerate}[(i)]
\item \label{prop petal i} We have the equality of partition functions
\begin{equation}\label{Z with tadpole via Z without tadpole}
 Z^{\tau,p}=Z^{0,\til{p}} 
\end{equation}
Here the right hand side is defined with zero tadpole and 
\begin{equation}\label{phi tilde}
\til{p}(\phi,x,\hbar):=\sum_{n\geq 0} p_n(x,\hbar) \sum_{k=0}^{\left[\frac{n}{2}\right]} \frac{1}{k!}\left(\frac{\hbar\, \tau(x)}{2}\right)^k \cdot \frac{\phi^{n-2k}}{(n-2k)!}
\end{equation}

\item  \label{prop petal ii} Denote the r.h.s. of (\ref{phi tilde}) by $\RG_\tau(p)$. We have 
$\RG_{\tau_1+\tau_2}(p)=\RG_{\tau_1}(\RG_{\tau_2}(p))$.\footnote{
Thus, 
$\RG$ defines an action of the additive group $\Fun$ on interaction potentials $p\in \Fun[[\phi,\hbar]]$ -- the ``local RG flow.'' 
Note that, for a constant function $\tau$, $\RG_\tau$ transforms potentials with constant coefficients $p\in \RR[[\phi,\hbar]]$ to potentials with constant coefficients.
}

\item  \label{prop petal iii} If $q=\RG_{\tau_1-\tau_2}(p)$, then  $Z^{\tau_1,p}=Z^{\tau_2,q}$.

\item \label{prop petal RG flow eq} The transformed potential (\ref{phi tilde}) satisfies the ``local RG flow equation'':
\begin{equation}\label{RG flow eq local}
\frac{\partial}{\partial \tau} \RG_\tau(p) = \frac{\hbar}{2}\, \frac{\partial^2}{\partial \phi^2} \RG_\tau(p)
\end{equation}
which holds pointwise on $\Sigma$.

\item  \label{prop petal iv} For a potential $p(\phi)=\sum_{j} c_j e^{\alpha_j \phi}$ given by a sum of exponents, with $\alpha_j,c_j$ independent on $\phi$ (but possibly depending on $x,\hbar$), the corresponding transformed potential (\ref{phi tilde}) is:
\begin{equation*}
\RG_\tau(p)
=\sum_j c_j e^{\frac{\hbar \tau}{2} \alpha_j^2} e^{\alpha_j \phi}
\end{equation*}
\end{enumerate}
\end{prop}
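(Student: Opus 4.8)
The plan rests on a single observation: the operator $\RG_\tau$ of \eqref{phi tilde} is the heat semigroup in the field variable, $\RG_\tau=\exp\!\big(\tfrac{\hbar\tau}{2}\,\partial_\phi^2\big)$. Indeed, applying $\exp(\tfrac{\hbar\tau}{2}\partial_\phi^2)$ to a monomial $\phi^n$ and using $(\partial_\phi^2)^k\phi^n=\tfrac{n!}{(n-2k)!}\phi^{n-2k}$ reproduces the inner sum in \eqref{phi tilde} term by term, so for $p(\phi)=\sum_n\tfrac{p_n}{n!}\phi^n$ one has $\RG_\tau(p)=e^{\frac{\hbar\tau}{2}\partial_\phi^2}p$. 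Granting this identification, parts (ii), (iv) and (v) become formal consequences of the semigroup structure, and the entire content of the proposition lies in the combinatorial identity (i).

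For part (i) I would reorganise the Wick (equivalently, formal Gaussian) expansion underlying \eqref{Z^tau} by performing, at each vertex, all self-contractions before any cross-contractions. A vertex carrying the monomial $\tfrac{p_n}{n!}\phi(x)^n$ admits $\binom{n}{2k}(2k-1)!!$ ways to pair $2k$ of its $n$ fields into $k$ short loops, each loop evaluated by $\hbar\tau(x)$ and leaving $\phi(x)^{n-2k}$ for the remaining edges; summing over $k$ turns $\tfrac{p_n}{n!}\phi^n$ into $p_n\sum_k\tfrac1{k!}\big(\tfrac{\hbar\tau}2\big)^k\tfrac{\phi^{n-2k}}{(n-2k)!}$, and summing over $n$ yields exactly $\til{p}=\RG_\tau(p)$. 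After this step only cross-contractions survive, which is precisely the tadpole-free ($\tau=0$) theory with potential $\til{p}$; this is $Z^{0,\til{p}}$, establishing \eqref{Z with tadpole via Z without tadpole}.

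The step requiring genuine care is the bookkeeping of symmetry factors: I must verify that ``self-contractions first'' is a legitimate reorganisation of the graph sum in \eqref{Z^tau}, i.e. that the weights $1/|\Aut(\Gamma)|$ are correctly reproduced. I expect the cleanest route is to remain inside the labeled Wick expansion, where the splitting of a matching into self- and cross-pairs is manifest and the $1/|\Aut|$ weights of the resulting unlabeled graphs arise automatically. Concretely, the $1/n!$ in the Taylor coefficient of $p$ is consumed by the freedom of attaching the $n$ fields to the half-edges of a vertex, while the symmetric factor $\tfrac1{k!}\big(\tfrac{\hbar\tau}2\big)^k$ in $\til{p}$ records exactly the $2^k k!$ flips and permutations of the $k$ petals at a vertex -- the same factor by which $|\Aut(\Gamma)|$ exceeds $|\Aut(\Gamma_0)|$ on the tadpole-free reduction $\Gamma_0$. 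I would also note that, as a formal power series in $\hbar$, each petal carries one power of $\hbar$, so only finitely many $k$ contribute in each order and $\til{p}\in\Fun[[\phi,\hbar]]$ is well defined.

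Finally I would dispatch the remaining parts from the semigroup structure. Part (ii) holds because the operators $\tfrac{\hbar\tau}2\partial_\phi^2$ all commute, giving $\RG_{\tau_1+\tau_2}=\RG_{\tau_1}\RG_{\tau_2}$ (or directly from \eqref{phi tilde} by a binomial computation). Part (iii) then follows by combining (i) and (ii): $Z^{\tau_2,q}=Z^{0,\RG_{\tau_2}(q)}=Z^{0,\RG_{\tau_2}\RG_{\tau_1-\tau_2}(p)}=Z^{0,\RG_{\tau_1}(p)}=Z^{\tau_1,p}$. Part (iv) is the heat equation $\partial_\tau\,e^{\frac{\hbar\tau}2\partial_\phi^2}p=\tfrac{\hbar}2\partial_\phi^2\,e^{\frac{\hbar\tau}2\partial_\phi^2}p$, which holds pointwise in $x$ since $\tau=\tau(x)$ enters only as a parameter. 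Part (v) follows because $e^{\alpha_j\phi}$ is an eigenvector of $\partial_\phi^2$ with eigenvalue $\alpha_j^2$, so $\RG_\tau(e^{\alpha_j\phi})=e^{\frac{\hbar\tau}2\alpha_j^2}e^{\alpha_j\phi}$, and one extends by linearity.
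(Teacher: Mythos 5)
Your proof is correct, and for part (i) it coincides with the paper's: the same count $\binom{n}{2k}(2k-1)!!$ of ways to form $k$ petals at an $n$-valent vertex, with each petal worth $\hbar\tau(x)$, simplifying to the $\frac{1}{k!}\bigl(\frac{\hbar\tau}{2}\bigr)^k$ coefficients of $\til{p}$ in \eqref{phi tilde}. Where you genuinely differ is in the packaging of parts (ii), (iv) and (v): the paper verifies each by a separate direct manipulation of coefficients, whereas you first identify $\RG_\tau=\exp\bigl(\frac{\hbar\tau}{2}\partial_\phi^2\bigr)$ and then read off the semigroup law, the flow equation and the action on exponentials as formal corollaries of one identity; this is a cleaner organization that also makes the name ``RG flow'' self-explanatory, at no cost in rigor since everything is a formal power series. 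One caveat on your symmetry-factor discussion: the parenthetical claim that $|\Aut(\Gamma)|$ exceeds $|\Aut(\Gamma_0)|$ by exactly $\prod_v 2^{k_v}k_v!$ is false graph-by-graph --- e.g.\ for two bulk vertices joined by a single edge and carrying one and two petals respectively, the reduced graph admits the swap automorphism while $\Gamma$ does not, so $|\Aut(\Gamma)|=16$ but $|\Aut(\Gamma_0)|\cdot\prod_v 2^{k_v}k_v!=32$ --- hence identity (i) cannot be checked one unlabeled graph at a time with that factorization. Your preferred route through the labeled Wick expansion, where the $1/|\Aut|$ weights emerge only after summing over all labelings, is the one that actually works, and is what the paper's terse ``resummation of perturbation theory'' implicitly relies on as well.
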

\begin{proof}
For (\ref{prop petal i}), one shows (\ref{Z with tadpole via Z without tadpole}) as a resummation of perturbation theory: summation of the ``petal diagrams'' 
for the theory with potential $p$ yields the vertices of the theory with potential 
\begin{equation}\label{til p in proof}
\til{p}=\sum_{n\geq 0}\sum_{k=0}^{\left[\frac{n}{2}\right]} \frac{p_n}{n!}\phi^{n-2k} \left(\begin{array}{c}
n \\ 2k
\end{array}\right) (2k-1)!!\; (\hbar \tau)^k
\end{equation}
where the combinatorial coefficient $\left(\begin{array}{c}
n \\ 2k
\end{array}\right) (2k-1)!!$ counts the number of ways to attach $k$ edges to a vertex with $n$ incident half-edges. Expression (\ref{til p in proof}) simplifies to (\ref{phi tilde}).
\begin{figure}[ht]
\begin{tikzpicture}
\node[bulk] (x0) at (0,0) {};
\node[shape=coordinate] at (20:1) {} edge[bubu] (x0);
\node[shape=coordinate] at (90:1) {} edge[bubu] (x0);
\node[shape=coordinate] at (160:1) {} edge[bubu] (x0);
\node[shape=coordinate] at (230:1) {} edge[bubu] (x0);
\node[shape=coordinate] at (300:1) {} edge[bubu] (x0);
\node at (1.5,0) {$+$};
\begin{scope}[shift={(3,0)}]
\node at (-1.5,0) {$+$};
\node[bulk] (x0) at (0,0) {};
\node[shape=coordinate] at (20:1) {} edge[bubu] (x0);
\node[shape=coordinate] at (90:1) {} edge[bubu] (x0);
\node[shape=coordinate] at (160:1) {} edge[bubu] (x0);
\draw[bubu] (x0) .. controls (230:1.5) and (300:1.5) .. (x0);
\end{scope}
\begin{scope}[shift={(6,0)}]
\node at (-1.5,0) {$+$};
\node[bulk] (x0) at (0,0) {};
\draw[bubu] (x0) .. controls (20:1.5) and (90:1.5) .. (x0);
\node[shape=coordinate] at (160:1) {} edge[bubu] (x0);
\draw[bubu] (x0) .. controls (230:1.5) and (300:1.5) .. (x0);
\end{scope}
\end{tikzpicture}
\caption{Petal diagram resummation}
\end{figure}
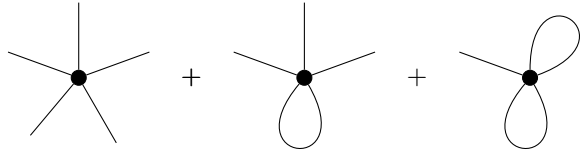

Item (\ref{prop petal ii}) is straightforward: denoting the $n$-th coefficient $p_n$ in $p(\phi)$ (normalized as in (\ref{p(phi)})) by $[p]_n$, we have from (\ref{phi tilde}) that
\begin{equation*}
[\RG_T(p)]_n  \sum_{k\geq 0} \left(\frac{\hbar\, \tau}{2}\right)^k p_{n+2k}
\end{equation*}
and therefore
\begin{multline*}
 [\RG_{\tau_1}(\RG_{\tau_2}(p))]_n=\sum_{k_1,k_2\geq 0} \frac{1}{k_1!}  \left(\frac{\hbar\, \tau_1}{2}\right)^{k_1} \frac{1}{k_2!}  \left(\frac{\hbar\, \tau_2}{2}\right)^{k_2} p_{n+2k_1+2k_2} \\
 \underset{l=k_1+k_2}{=}\quad
\sum_{l\geq 0}  \left(\frac{\hbar\, (\tau_1+\tau_2)}{2}\right)^l p_{n+2l} = [\RG_{\tau_1+\tau_2}(p)]_n  
\end{multline*}

Item (\ref{prop petal iii}) is a generalization of (\ref{prop petal i}) (since the case $q=0$ is (\ref{prop petal i})) and it follows from (\ref{prop petal i}) and (\ref{prop petal ii}):
$$
Z^{\tau_2,q} \underset{(\ref{prop petal i})}{=} Z^{0, \RG_{\tau_2}(q)} = Z^{0,\RG_{\tau_2}(\RG_{\tau_1-\tau_2}(p))}
\underset{(\ref{prop petal ii})}{=} Z^{0, \RG_{\tau_1}(p)} \underset{(\ref{prop petal i})}{=} Z^{\tau_1,p}
$$

The equation (\ref{prop petal RG flow eq}) follows immediately from (\ref{phi tilde}) by applying the relevant derivatives to the r.h.s.

Item (\ref{prop petal iv}) is the observation that when applied to an exponential $p(\phi)=e^{\alpha \phi}=\sum_{n\geq 0}\frac{\alpha^n}{n!}\phi^n$, the transformation (\ref{phi tilde}) yields
$$ \sum_{n\geq 0} \sum_{k=0}^{\left[\frac{n}{2}\right]} \frac{1}{k!}\left(\frac{\hbar \tau}{2}\right)^k \alpha^n \frac{\phi^{n-2k}}{(n-2k)!} = \sum_{n\geq 0} \sum_{k=0}^{\left[\frac{n}{2}\right]} \frac{1}{k!}\left(\frac{\hbar \tau}{2}\alpha^2\right)^k  \frac{(\alpha \phi)^{n-2k}}{(n-2k)!}
=e^{\frac{\hbar \tau}{2}\alpha^2} e^{\alpha \phi}
$$
Then (\ref{prop petal iv}) follows by $\Fun[[\hbar]]$-linearity of the transformation $\RG_T$. 
\end{proof}

Proposition \ref{prop: petal resummation} implies the following.
\begin{corollary} One has the equality
\begin{equation} \label{Z^(Treg,p) = Z^(T_Lambda,p_Lambda)}
Z^{\tau^\mr{reg},p}=Z^{\tau_\Lambda,p_\Lambda} 
\end{equation}
Here the l.h.s. is the partition function for an interaction potential $p(\phi)$, calculated using the zeta-regularized tadpole $\tau^\mr{reg}(x)$. The r.h.s. is the partition function with the tadpole $\tau_\Lambda(x):=G_\Lambda(x,x)$ (with $G_\Lambda$ defined via proper time cut-off, as in (\ref{G_Lambda})) and with the ``renormalized'' interaction potential (or ``potential with counterterms'') given by
\begin{equation*}
p_\Lambda=\mathcal{R}_{\tau^\mr{reg}-\tau_\Lambda}(p) \quad \underset{\Lambda\ra\infty }{\sim}\quad  
\RG_{-\frac{\log\Lambda}{2\pi}+\frac{\gamma}{4\pi}}(p)
\end{equation*}
\end{corollary}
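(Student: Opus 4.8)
The plan is to obtain the Corollary as an immediate consequence of Proposition \ref{prop: petal resummation}, combined with the asymptotic expansion (\ref{G_Lambda}) of the cut-off Green's function on the diagonal and the Lemma identifying $\til{\tau}$ with $\tau^\mr{reg}$ (equation (\ref{T tilde = Treg + C})).

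First I would apply part (\ref{prop petal iii}) of Proposition \ref{prop: petal resummation} with $\tau_1 = \tau^\mr{reg}$ and $\tau_2 = \tau_\Lambda$, where $\tau_\Lambda(x) = G_\Lambda(x,x)$. For each finite $\Lambda$ the difference $\tau^\mr{reg} - \tau_\Lambda$ is a legitimate tadpole function and lies in $\Fun$: $\tau^\mr{reg}$ is controlled by Corollary \ref{corollary:tau reg integrable}, and $\tau_\Lambda$ obeys the same log-type bound near $\dd\Sigma$ (inherited from $G$). Setting $q = p_\Lambda := \RG_{\tau^\mr{reg} - \tau_\Lambda}(p)$, part (\ref{prop petal iii}) yields directly the desired identity
$$ Z^{\tau^\mr{reg}, p} = Z^{\tau_\Lambda, p_\Lambda}. $$

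It then remains to extract the large-$\Lambda$ behaviour of $p_\Lambda$. Here I would substitute the expansion (\ref{G_Lambda}), namely $\tau_\Lambda(x) = \frac{\log\Lambda}{2\pi} + \til{\tau}(x) + \OO(m^2/\Lambda^2)$, and then invoke (\ref{T tilde = Treg + C}), $\til{\tau}(x) = \tau^\mr{reg}(x) - \frac{\gamma}{4\pi}$, to get
$$ \tau^\mr{reg}(x) - \tau_\Lambda(x) = -\frac{\log\Lambda}{2\pi} + \frac{\gamma}{4\pi} + \OO\!\left(\frac{m^2}{\Lambda^2}\right). $$
Because each coefficient $[\RG_\tau(p)]_n$ depends on $\tau$ through the finite polynomial expression (\ref{phi tilde}), the map $\tau \mapsto \RG_\tau(p)$ is entire in $\tau$, so the $\OO(m^2/\Lambda^2)$ correction drops out in the limit and
$$ p_\Lambda = \RG_{\tau^\mr{reg} - \tau_\Lambda}(p) \underset{\Lambda\ra\infty}{\sim} \RG_{-\frac{\log\Lambda}{2\pi} + \frac{\gamma}{4\pi}}(p), $$
which is exactly the asserted form of the renormalized potential.

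The only genuine care needed --- and what I would treat as the main point rather than a real obstacle --- is to pin down the sense in which the asymptotic $\sim$ holds. I would interpret it coefficient-by-coefficient: fixing the monomial degree in $\phi$ and the power of $\hbar$, the corresponding coefficient of $p_\Lambda$ is a finite sum over $k$ of terms proportional to $\frac{1}{k!}\big(\frac{\hbar(\tau^\mr{reg}-\tau_\Lambda)}{2}\big)^k p_{n+2k}$, each an entire function of $(\tau^\mr{reg}-\tau_\Lambda)(x)$; substituting the expansion above and collecting the leading behaviour reproduces $\RG_{-\frac{\log\Lambda}{2\pi}+\frac{\gamma}{4\pi}}(p)$ up to terms vanishing pointwise on $\Sigma\setminus\dd\Sigma$ as $\Lambda\ra\infty$. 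Since the semigroup property (Proposition \ref{prop: petal resummation}(\ref{prop petal ii})) and the exponential formula (Proposition \ref{prop: petal resummation}(\ref{prop petal iv})) already encode all the algebraic content of $\RG$, no further computation is needed beyond this bookkeeping.
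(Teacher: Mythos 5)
Your proposal is correct and follows the paper's own route exactly: the identity is Proposition \ref{prop: petal resummation}(\ref{prop petal iii}) applied with $\tau_1=\tau^\mr{reg}$, $\tau_2=\tau_\Lambda$, and the asymptotic form of $p_\Lambda$ comes from combining the expansion (\ref{G_Lambda}) with the identification (\ref{T tilde = Treg + C}). The extra care you take about the sense of the asymptotics and the admissibility of $\tau_\Lambda$ as a tadpole function is consistent with, and slightly more explicit than, what the paper records.
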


Here in the last point we used the result (\ref{T tilde = Treg + C}).

The r.h.s. of (\ref{Z^(Treg,p) = Z^(T_Lambda,p_Lambda)}) is \emph{almost} the same as the computation of the perturbative path integral for the theory using the cut-off-regularized Green's function (\ref{G_Lambda}), with the action including counterterms, which are fine-tuned -- see (\ref{Z^(Treg,p) = Z^(T_Lambda,p_Lambda)}) -- so that the path integral is finite. ``Almost'' -- because in (\ref{Z^(Treg,p) = Z^(T_Lambda,p_Lambda)}) only the Green's functions in the tadpoles are regularized while the Green's functions between distinct vertices are the exact ones. However, the distinction between these two regularizations for Feynman diagrams becomes negligible as $\Lambda\ra\infty$.

Another way to present the result (\ref{Z^(Treg,p) = Z^(T_Lambda,p_Lambda)}) is:
\begin{equation} \label{Z^(Ttil,p') = Z^(T_Lambda, p_Lambda)}
Z^{\til{\tau},p'} \underset{\Lambda\ra\infty}{\leftarrow} Z^{\tau_\Lambda,p_\Lambda}
\end{equation}
where $\til{\tau}$ is as in (\ref{G_Lambda}) -- the cut-off-renormalized tadpole (i.e., with cut-off regularization imposed and with the singular term subtracted), with $p'=\RG_{\frac{\gamma}{4\pi}}(p)$ a finite transformation of the potential (arising from the difference in zeta vs cut-off renormalization schemes). Here $p_\Lambda=\RG_{-\frac{\log\Lambda}{2\pi}}(p')$. Note that this is the same as formula (\ref{p_Lambda}) from Section \ref{ss: renormalized action} if we identify $p'=p_\mr{naive}$.  Thus, in the asymptotic equality (\ref{Z^(Ttil,p') = Z^(T_Lambda, p_Lambda)}) we either subtract a singular part from the tadpole (in the l.h.s.), or we add counterterms to the action (in the r.h.s.).

\textbf{Cutting into tiny squares (a heuristic picture).}\footnote{We call it a heuristic picture because it relies on cutting with corners which is yet to be fully understood. 
} Consider a cellular subdivision $X_\epsilon$ of the surface $\Sigma$ into small squares $sq_i$, of linear size of order $\epsilon=\frac{1}{\Lambda}$, with $\epsilon \ra 0$. One can consider two different pictures (local assignments of tadpoles):
\begin{enumerate}[I]
\item Set the tadpole functions to zero for each small square, $\tau_{sq_i}=0$. By the gluing formula for tadpoles, this leads to a glued tadpole $\tau_\Sigma\sim -\frac{\log \epsilon}{2\pi}+\mr{(finite\; part)}$ on the surface; $\tau_\Sigma$ is a version of cut-off regularized tadpole $G_\Lambda(x,x)$, see (\ref{G_Lambda}).
\item Set the tadpole functions for the small squares and for $\Sigma$ to their zeta-regularized values. Then on a small square,
we have $\tau_{sq_i}\sim \frac{\log \epsilon}{2\pi}+\mr{(finite\; part)}$ (this is the $\epsilon\ra 0$ asymptotics of an explicit answer for a flat square) 
and $\tau_\Sigma$ is finite ($\epsilon$-independent).
\end{enumerate}
In the first picture, we need to define the partition function using the renormalized potential $p_\Lambda$, in order to have a finite result; in the second picture, we are taking the non-renormalized potential $p$ and have a finite result.

\section{Formal Fubini Theorem and Atiyah-Segal gluing}\label{sec:Fubini}
In this section, we finally prove the gluing formula for the perturbative partition function. It comes in different flavors, according our choice of tadpole function. If $\tau$ is a local assignment of tadpole functions, we write $Z^\tau_\Sigma := Z^{\tau_\Sigma}_\Sigma$. 

\subsection{The gluing formula}
Let $\Sigma$ be a two-dimensional compact Riemannian cobordism with boundary $\partial \Sigma = Y_L \sqcup Y_R$. Let $Y$ be a collection of circles in $\Sigma$ such that $\Sigma=\Sigma_{{L}}\cup_{Y}\Sigma_{{R}}$, with $Y_L \subset \Sigma_L$ and $Y_R \subset \Sigma_R$. 
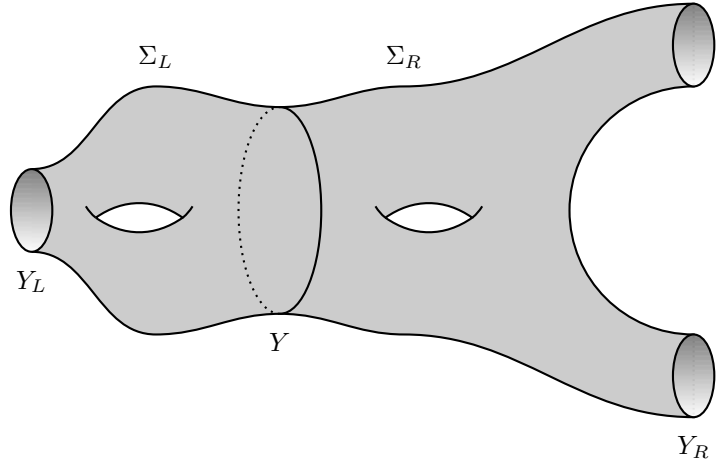
\begin{figure}
\centering
 \begin{tikzpicture}[scale=.55]
  \coordinate (x0) at (-6,0);
  \coordinate (x0p) at (-6,1);
    \coordinate (x0pp) at (-6,-1);

  \coordinate (x1) at (-3,3);
  \coordinate (x2) at (0,2.5);
  \coordinate (x3) at (3,3);
  \coordinate (x35)at (10,5);
  \coordinate (x4) at (7,0);
  \coordinate (x45)at (10,-5);
  \coordinate (x5) at (3,-3);
  \coordinate (x6) at (0,-2.5);
  \coordinate (x7) at (-3,-3);
  
  \coordinate (b1) at (-4,-.5);
  \coordinate (b2) at (3,-.5);
  
  \coordinate (y0) at (-2,1.5);
  \coordinate (y1) at (0,1.3);
  \coordinate (y2) at (2,1.1);
  \coordinate (y3) at (-1,-.5);
  \coordinate (y4) at (1,-.7);
  \coordinate (y5) at (0,-2);

    \node[label=above:{$\Sigma_L$}] at (x1){};
    \node[label=above:{$\Sigma_R$}] at (x3){};

  \filldraw[fill=black!20, thick, draw=black] (x0p) to[in=180,out=0] 
   (x1)  to[in=180,out=0] (x2) to[in=180,out=0] (x3) to[in=180,out=0] (x35) to ($(x35)+(0,-2)$) to[out=180,in=90]  (x4) to[out=270, in=180] ($(x45) + (0,2)$) to (x45) to[in=0,out=180] (x5) to[in=0,out=180] (x6) to[in=0,out=180] (x7) to[in=0,out=180] (x0pp);
    \draw[thick] (x6) arc (270:450:1cm and 2.5cm);
    \draw[thick,dotted] (x6) arc (270:90:1cm and 2.5cm);
    \node[label=below:{$Y$}] at (x6){};

    \shadedraw[thick] (x0) ellipse (0.5cm and 1cm);
    \node[label=below:{$Y_L$}] at ($(x0)+(0,-1)$){};

    \shadedraw[thick] ($(x35)+(0,-1)$) ellipse (0.5cm and 1cm);
    \shadedraw[thick] ($(x45)+(0,1)$) ellipse (0.5cm and 1cm);
\node[label=below:{$Y_R$}] at (x45){};
  \filldraw[fill=white, draw=black, thick]($(b1)+(-.45,.325)$) to[out=35, in=145] ($(b1)+(1.65,.325)$)to[in=-35, out=215]($(b1)+(-.45,.325)$);
  \draw[color=black, thick]($(b1)+(-.45,.325)$) to[out=141, in=145+180-22] ($(b1)+(-.69,.6)$);
  \draw[color=black, thick]($(b1)+(1.65,.325)$) to[out=39, in=35+180+22] ($(b1)+(1.89,.6)$);
  
  \filldraw[fill=white, draw=black, thick]($(b2)+(-.45,.325)$) to[out=35, in=145] ($(b2)+(1.65,.325)$)to[in=-35, out=215]($(b2)+(-.45,.325)$);
  \draw[color=black, thick]($(b2)+(-.45,.325)$) to[out=141, in=145+180-22] ($(b2)+(-.69,.6)$);
  \draw[color=black, thick]($(b2)+(1.65,.325)$) to[out=39, in=35+180+22] ($(b2)+(1.89,.6)$);

\end{tikzpicture}
\caption{A cobordism $\Sigma$ from $Y_L$ to $Y_R$ with a decomposition $\Sigma = \Sigma_L \cup_Y \Sigma_R$. }
\end{figure}
The main goal of this section is to prove the following theorem.

\begin{theorem}\label{thm:ASgluing}
Let $\Sigma = \Sigma_{{L}} \cup_Y \Sigma_{{R}}$ as above  and $\tau_L,\tau_R$ be tadpole functions on $\Sigma_L$,$\Sigma_R$ respectively.
Then\footnote{The slightly unwieldy notation is due to the fact that when gluing only over a part of the boundary, there are in general three different ``blocks'' of the Dirichlet-to-Neumann operator that play different roles in the gluing: The block supported on the gluing interface is formally absorbed in the Gaussian measure, and therefore deleted from the partition function, the off-diagonal block participates in the gluing as in Figure \ref{fig:gluingrelation2}, and the block supported on the remaining boundary component does not interact with gluing  gets corrected into the Dirichlet-to-Neumann operator of the glued bulk (as in the second part of Figure \ref{fig:gluingrelation2}). }
\begin{equation}
e^{-S_0(\phi^{\Sigma_L}_{{\tilde{\eta}}_L})} 
e^{-S_0(\phi^{\Sigma_R}_{{\tilde{\eta}}_R})}
 \langle \widehat{Z^{\tau_L}_{\Sigma_L}},\widehat{Z^{\tau_R}_{\Sigma_R}}\rangle_{\Sigma_L,Y,\Sigma_R} = Z^{\tau_L * \tau_R}_{\Sigma}.
\end{equation}
\end{theorem}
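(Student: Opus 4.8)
The plan is to prove the identity by expanding both sides into sums of Feynman diagrams and exhibiting a weight-preserving bijection between the two expansions, the dictionary being exactly the collection of gluing formulas established above. First I would use that the gluing interface $Y$ has measure zero, so in every diagram $\Gamma$ contributing to $Z^{\tau_L*\tau_R}_\Sigma$ the bulk vertices may be assumed to avoid $Y$; this assigns to each bulk vertex a side, giving a coloring $V_b(\Gamma)=V_b^L\sqcup V_b^R$ with $V_b^i$ lying in $\Sigma_i$. I would then expand every decoration of $\Gamma$ through the gluing dictionary: each bulk-bulk propagator $G_\Sigma(x_\alpha,x_\beta)$ via Proposition \ref{Proposition:gluing_formula_green}, each bulk-boundary and boundary-boundary factor via its normal-derivative counterpart (Figure \ref{fig:gluingrelation2}), and each short loop decorated by $(\tau_L*\tau_R)(x)$ via the tadpole gluing formula \eqref{tadpole gluing formula}. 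In each case the expansion produces a \emph{whole} term, in which the decoration stays on one side and is expressed through $G_{\Sigma_i}$, $\partial_\nu G_{\Sigma_i}$, $\partial^2_\nu G_{\Sigma_i}$ or $\tau_i$, plus an \emph{opened} term, in which the decoration is cut and routed across $Y$ by inserting two boundary legs on $Y$ joined by the interface kernel $K=D_{\Sigma_L,\Sigma_R}^{-1}$.

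Next I would read off from this expansion the triple $(\Gamma_L,\Gamma_R,\mathfrak{m})$: the whole terms together with the $Y$-legs of the opened terms on side $i$ assemble into a graph $\Gamma_i$ on $\Sigma_i$, which we regard as a state in $H_Y$ with wave function $\psi_{\Gamma_i}$ (in the sense of Remark \ref{rem:symmetry}) depending on the $Y$-points, the external $Y_i$-points being spectators carrying ${\tilde{\eta}}_i$; the cuts determine a perfect matching $\mathfrak{m}$ of the full set of $Y$-legs, each matched pair carrying a factor $K(y,y')$. This is exactly the data summed over by the expectation value map \eqref{eq:exp_value_bdry} applied to $\Psi_{\Gamma_L}\odot\Psi_{\Gamma_R}$: an opened pair whose two legs lie on opposite sides reconstructs a crossing propagator of $\Sigma$, a pair with both legs on one side at distinct vertices reconstructs a same-side propagator, and a pair with both legs at one vertex reconstructs a tadpole, these being precisely the three cases of the gluing formulas read backwards. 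I would then check the gradings: the BFK formula \eqref{eq:gluing_zeta_determinants} splits $\det(\Delta_\Sigma+m^2)^{-1/2}=\det(\Delta_{\Sigma_L}+m^2)^{-1/2}\det(\Delta_{\Sigma_R}+m^2)^{-1/2}\det(D_{\Sigma_L,\Sigma_R})^{-1/2}$, the last factor being exactly the prefactor of the expectation value map, and a direct count gives $\ell(\Gamma)=\ell(\Gamma_L)+\ell(\Gamma_R)$, since the $M$ matched pairs contribute $M$ edges to $\Gamma$ while the corresponding $2M$ interface legs contribute $-\tfrac12\cdot 2M=-M$ to $\ell(\Gamma_L)+\ell(\Gamma_R)$. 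Thus no spurious power of $\hbar$ is created and the pairing carries no explicit $\hbar$, consistent with \eqref{eq:exp_value_bdry}.

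The main obstacle, as always in such cut-and-glue arguments, is the bookkeeping of symmetry factors: I must show that the global factor $1/|\Aut(\Gamma)|$ attached to $\Gamma$ on $\Sigma$ reorganizes, after summing over all $(\Gamma_L,\Gamma_R,\mathfrak{m})$ reconstructing a fixed $\Gamma$, into $\frac{1}{|\Aut(\Gamma_L)|}\frac{1}{|\Aut(\Gamma_R)|}$ together with the $\frac{1}{(n+m)!}$-symmetrization of $\odot$ in \eqref{eq:def symm tens} and the matching sum in \eqref{eq:exp_value_bdry}. I would handle this by passing to the labeled-graph model of Section \ref{sec:formalints}, where diagram sums with $1/|\Aut|$ are rewritten as sums over graphs with labeled vertices and half-edges divided by the full labeling symmetry; in that model, cutting along $Y$ and matching the interface legs is a free operation, and the numerical factors match automatically, the multiplicity with which a given $\Gamma$ arises being the index of the subgroup of $\Aut(\Gamma)$ preserving both the $L/R$-coloring and the set of cut edges.

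Finally I would treat the exponential prefactors. By Remark \ref{rem:Zfactors} and the definition \eqref{Z hat def} of the hat, each $\widehat{Z^{\tau_i}_{\Sigma_i}}$ retains the off-diagonal factor $e^{-S_{Y_i,Y}}$ (the $Y_i$-$Y$ block of $S_0$ for $\Sigma_i$), whose expansion contributes additional $Y_i$-to-$Y$ legs, decorated by $\partial^2_\nu G_{\Sigma_i}$, to the pairing. Matching these legs through $K$ generates, via the normal-derivative gluing formulas of Figure \ref{fig:gluingrelation2}, the $Y_L$-$Y_L$, $Y_R$-$Y_R$ and $Y_L$-$Y_R$ boundary contributions of $\Sigma$; combined with the explicitly restored factors $e^{-S_0(\phi^{\Sigma_L}_{{\tilde{\eta}}_L})}e^{-S_0(\phi^{\Sigma_R}_{{\tilde{\eta}}_R})}$, which supply the bare $D_{\Sigma_i}$-blocks, these resum, by the Dirichlet-to-Neumann gluing of Remark \ref{rem:DtoNgluing}, into the single prefactor $e^{-S_0(\phi^\Sigma_{{\tilde{\eta}}_L+{\tilde{\eta}}_R})}$ of $Z^{\tau_L*\tau_R}_\Sigma$, the bare blocks $D_{\Sigma_i}|_{Y_i}$ being upgraded to $D_\Sigma|_{Y_i}$ and the off-diagonal $Y_L$-$Y_R$ block being produced entirely by the pairing. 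This is the ``three blocks'' bookkeeping of the theorem's footnote, and it completes the identification of the two diagram expansions.
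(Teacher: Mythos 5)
Your proposal is correct and follows essentially the same route as the paper's proof: you expand each propagator, normal derivative and tadpole on $\Sigma$ via the gluing formulas into ``whole'' and ``cut'' pieces (the paper's $u$/$c$ decorations), identify the resulting data with a unique pair $(\Gamma_L,\Gamma_R)$ plus a perfect matching of interface legs (the paper's Lemma \ref{lem:unique} and the $*$ operation), handle the automorphism factors by an orbit-stabilizer count (your labeled-graph reformulation is equivalent to Eqs.~\eqref{eq:decor} and \eqref{eq:auto_gluing}), and close with BFK for the determinants and Remark \ref{rem:Zfactors} for the exponential prefactors. The only difference is that you run the bijection from $Z_\Sigma$ toward the pairing rather than the reverse, and you make explicit the check $\ell(\Gamma)=\ell(\Gamma_L)+\ell(\Gamma_R)$, which the paper leaves implicit.
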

  
The following is an immediate corollary.
\begin{corollary}
If $\tau$ is an assignment of local tadpole functions, we have 
\begin{equation}
e^{-S_0(\phi^{\Sigma_L}_{{\tilde{\eta}}_L})}
e^{-S_0(\phi^{\Sigma_R}_{{\tilde{\eta}}_R})}
\langle \widehat{Z^{\tau}_{\Sigma_L}},\widehat{Z^{\tau}_{\Sigma_R}}\rangle_{\Sigma_L,Y,\Sigma_R} = Z^{\tau}_{\Sigma}.
\end{equation}
In particular, this holds for the 
perturbative partition function defined using the zeta-regularized tadpole. 
\end{corollary}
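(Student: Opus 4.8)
The plan is to prove the identity by a \emph{formal Fubini theorem}: the pairing $\langle\widehat{Z^{\tau_L}_{\Sigma_L}},\widehat{Z^{\tau_R}_{\Sigma_R}}\rangle_{\Sigma_L,Y,\Sigma_R}$ is, by construction \eqref{eq:exp_value_bdry}, nothing but the formal Gaussian integral of the product of the two adjusted partition functions over the interface field $\tilde\eta_Y\in C^\infty(Y)$ with covariance $K=D_{\Sigma_L,\Sigma_R}^{-1}=(D_{\Sigma_L}+D_{\Sigma_R})^{-1}$ (recall \eqref{eq:DtoNgluing}). I would first reduce the statement to a term-by-term comparison of two Feynman-graph expansions: the right-hand side $Z^{\tau_L*\tau_R}_\Sigma$ is a sum over graphs $\Gamma$ on $\Sigma$ (with boundary vertices only on $Y_L\sqcup Y_R$), while the left-hand side is a sum over triples $(\Gamma_L,\Gamma_R,\mathfrak m)$ consisting of a graph on each piece together with a perfect matching of their $Y$-boundary vertices. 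The three analytic inputs are already available: the BFK formula \eqref{eq:gluing_zeta_determinants} for the determinant prefactors, the Green's-function gluing relations of Proposition \ref{Proposition:gluing_formula_green}, and the tadpole gluing formula \eqref{tadpole gluing formula} defining $\tau_L*\tau_R$.

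The determinant prefactors match immediately: the two factors $\det(\Delta_{\Sigma_i}+m^2)^{-1/2}$ coming from $\widehat{Z}_{\Sigma_i}$ together with the factor $\det(D_{\Sigma_L,\Sigma_R})^{-1/2}$ from the pairing \eqref{eq:exp_value_bdry} reconstruct $\det(\Delta_\Sigma+m^2)^{-1/2}$ by \eqref{eq:gluing_zeta_determinants}. For the graph sums I would set up a bijection by \emph{cutting $\Gamma$ along $Y$}: each bulk vertex lies in $\Sigma_L$ or $\Sigma_R$ (the integration domain $\Sigma$ splits), and each decoration is expanded through the gluing formulas. A bulk--bulk edge $G_\Sigma(x,x')$ with both endpoints on the same side contributes, by Proposition \ref{Proposition:gluing_formula_green}(i), a ``local'' term $G_{\Sigma_i}(x,x')$ plus a $K$-mediated correction; with endpoints on opposite sides it contributes, by part (ii), a single $K$-mediated crossing term; and a tadpole $(\tau_L*\tau_R)(x)$ at $x\in\Sigma_i$ splits by \eqref{tadpole gluing formula} into $\tau_i(x)$ plus a $K$-mediated correction. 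Expanding all decorations of $\Gamma$ produces exactly the three types of matched pairs appearing in the pairing: a $K$-contraction of a $Y$-vertex of $\Gamma_L$ with one of $\Gamma_R$ (crossing edge), of two $Y$-vertices on the same side attached to distinct bulk vertices (same-side correction), or of two $Y$-vertices attached to a single bulk vertex (opened tadpole). Each $K$-mediated piece creates exactly two univalent $Y$-boundary vertices, so the pieces assemble into honest Feynman graphs $\Gamma_L,\Gamma_R$ as in Definition \ref{def:FeynmanGraph}.

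A short computation confirms that the powers of $\hbar$ are additive under this cutting: replacing one internal edge or tadpole of $\Gamma$ by the corresponding two bulk--boundary edges plus two $Y$-vertices changes $|E|$ and $\tfrac12|V_\partial|$ each by $+1$ while leaving $|V_b|$ fixed, so that $\ell$ is unchanged and $\ell(\Gamma)=\ell(\Gamma_L)+\ell(\Gamma_R)$, consistent with the absence of explicit $\hbar$-factors per matching in \eqref{eq:exp_value_bdry}. The exponential prefactors are handled by the same mechanism applied to boundary--boundary ($E_2$) edges on $Y_L\sqcup Y_R$: the seed $e^{-S_0(\phi^{\Sigma_L}_{\tilde\eta_L})}e^{-S_0(\phi^{\Sigma_R}_{\tilde\eta_R})}$ encodes $D_{\Sigma_L}|_{Y_L}\oplus D_{\Sigma_R}|_{Y_R}$, and the $K$-mediated corrections promote it to the full Dirichlet--to--Neumann operator $D_\Sigma$ on $Y_L\sqcup Y_R$ exactly as in the second halves of Figure \ref{fig:gluingrelation2}, thereby reproducing the prefactor $e^{-S_0(\phi^\Sigma_{\tilde\eta_L+\tilde\eta_R})}$ of $Z_\Sigma$. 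The corollary then follows by specializing $\tau_L=\tau_R=\tau$ and using locality $\tau_{\Sigma_L\cup_Y\Sigma_R}=\tau_L*\tau_R$.

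The step I expect to be the main obstacle is the bookkeeping of symmetry factors: one must show that summing over graphs $\Gamma$ on $\Sigma$ weighted by $1/|\Aut(\Gamma)|$ reproduces the sum over triples $(\Gamma_L,\Gamma_R,\mathfrak m)$ weighted by $1/(|\Aut(\Gamma_L)|\,|\Aut(\Gamma_R)|)$ together with the symmetrization $\odot$ and the sum over matchings $\mathfrak M_n$. This is the usual orbit-counting argument — the stabilizer of a cut inside $\Aut(\Gamma)$ maps onto the product of the piecewise automorphism groups and the residual matching symmetry — but it requires care in the degenerate cases (matchings pairing two legs of a single bulk vertex, and matchings internal to one side), where an automorphism may exchange the two half-edges of a single $K$-contraction. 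A secondary technical point is that the boundary--boundary integrals on the outer boundary are only conditionally convergent (Remark \ref{rem: DN kernel singularity}), so one must check that the regularization $[\,\cdot\,]_{\mathrm{reg}}$ commutes with the cutting; this should follow because the gluing corrections are smooth up to the outer boundary and do not affect the coincidence-limit singularity being subtracted.
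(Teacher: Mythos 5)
Your proposal is correct and follows essentially the same route as the paper: the corollary is an immediate consequence of Theorem \ref{thm:ASgluing} together with locality $\tau_{\Sigma_L\cup_Y\Sigma_R}=\tau_L*\tau_R$, and your cutting-of-graphs argument — expanding each decoration of $\Gamma$ via the determinant, Green's function, and tadpole gluing formulas and then re-assembling the pieces into pairs $(\Gamma_L,\Gamma_R)$ with a perfect matching — is precisely the paper's proof of that theorem, phrased there in terms of decorated graphs and the gluing operation $*$. The symmetry-factor bookkeeping you flag as the main obstacle, including the degenerate matchings, is resolved in the paper by exactly the orbit-stabilizer arguments you describe (one for the action of $\Aut(\Gamma)$ on decorations, one identifying the stabilizer of a matching under $\Aut(\Gamma_L)\times\Aut(\Gamma_R)$ with $\Aut^{dec}(\Gamma^{dec})$).
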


To prove this theorem we introduce some auxiliary structures on Feynman diagrams.

\subsection{More on Feynman diagrams}
{  This strategy of the proof was inspired by Johnson-Freyd's paper \cite{JF}.}
\begin{definition}[Decorated Feynman graphs]\leavevmode
\begin{enumerate}[i)]
\item A \emph{decoration} of a Feynman graph is a pair of functions\footnote{``u'' is for ``uncut'', ``c'' is for ``cut''.}  $$f=(dec_V\colon V(\Gamma) \to \{L,R\},dec_E \colon E(\Gamma) \to \{u,c\}).$$
\item A decoration is \emph{admissible} if $f(V_X) \subset \{X\}, X=L,R$, and all edges between a vertex decorated $L$ and a vertex decorated $R$ are decorated by $c$. 
\item A \emph{decorated Feynman graph} is a pair $(\Gamma,f)$ of a Feynman graph $\Gamma$ and a decoration $f$ of $\Gamma$. 
\end{enumerate} 
\end{definition}
The automorphism group $\mathrm{Aut}(\Gamma)$ acts on the set of decorations. Two decorations of $\Gamma$ that are related by an automorphism of $\Gamma$ are called \emph{isomorphic.} The set of isomorphism classes of decorations is denoted $dec(\Gamma)$. 
\begin{definition}
An \emph{automorphism} of a decorated Feynman graph is an automorphism $\varphi \in \mathrm{Aut}(\Gamma)$ that fixes the decoration: $dec_V(V_I(\varphi)(v)) = dec_V(v)$, $dec_E(E(\varphi)(e)) = dec_E(e)$, i.e. the decorated automorphism group is the stabilizer of the decoration under the action of $\Aut(\Gamma)$ on the set of decorations. We denote the set of automorphisms of a decorated graph by $\Aut^{dec}(\Gamma)$.
\end{definition}
Notice that all edge types - $E_0,E_1,E_2$ - can be cut. 
We introduce a set of Feynman rules for decorated graphs. 
{  
\begin{definition}
Let $(\Gamma,f)$ be a decorated Feynman graph and let $\Sigma = \Sigma_L \cup_Y \Sigma_R$. Also, let $\tau_L,\tau_R$ be tadpole functions on $\Sigma_L,\Sigma_R$. Then we define the weight $F^{dec,\tau_L,\tau_R}_\Sigma$ of the decorated Feynman graph as follows: 
For a bulk vertex labeled $X \in \{L,R\}$, we integrate over $\Sigma_X$. Edges decorated by $u$ between different $X$ vertices are assigned $G_{\Sigma_{X}}$ or its appropriate derivatives, or the tadpole function $\tau_X$. Edges labeled by $c$ are assigned the second term in the gluing formula for the appropriate derivative of the Green's function. Tadpoles labeled by $c$ are assigned the second term in gluing formula for tadpoles. 
\end{definition}
}
\begin{lemma} 
For all graphs $\Gamma$, we have 
\begin{equation}
\frac{F^{\tau_L*\tau_R}_\Sigma(\Gamma)}{|\Aut(\Gamma)|} = \sum_{f \in dec(\Gamma)} \frac{F_\Sigma^{dec,\tau_L,\tau_R}(\Gamma^{f})}{|\Aut^{dec}(\Gamma^{f})|}.\label{eq:decor}
\end{equation}
\end{lemma}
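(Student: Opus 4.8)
The plan is to prove the identity in two stages: first a \emph{distributivity} step, in which the glued weight $F^{\tau_L*\tau_R}_\Sigma(\Gamma)$ is expanded into a sum over all admissible decorations (understood as honest labelings, not yet isomorphism classes), and then a \emph{combinatorial} step using the orbit--stabilizer theorem to repackage this sum as a sum over isomorphism classes weighted by automorphism factors. The whole content is the interplay between the analytic gluing formulae and the symmetry factors $|\Aut(\Gamma)|$.

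For the distributivity step, recall that $F^{\tau_L*\tau_R}_\Sigma(\Gamma)$ is, by \eqref{eq:def_F_Gamma}--\eqref{eq:FeynmanRules}, an integral over $C^\circ_\Gamma(\Sigma)$ of a product of factors, one per edge (a Green's function or a normal derivative thereof) and one per short loop (the tadpole $\tau_L*\tau_R$), with the bulk-vertex positions integrated over $\Sigma$. I would first split each bulk-vertex integration as $\int_\Sigma=\int_{\Sigma_L}+\int_{\Sigma_R}$ (the interface $Y$ has measure zero); this produces the datum $dec_V$ assigning a side to each bulk vertex, while the boundary vertices are forced onto their sides since $\partial_L\Sigma\subset\Sigma_L$ and $\partial_R\Sigma\subset\Sigma_R$. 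Next, for each edge I would insert the gluing formula of Proposition \ref{Proposition:gluing_formula_green} (together with its boundary analogues for $\partial_\nu G$ and $\partial_\nu\partial_\nu G$, cf. Figure \ref{fig:gluingrelation2}), and for each short loop the tadpole gluing formula \eqref{tadpole gluing formula}. When both endpoints of an edge lie on the same side $\Sigma_X$, the corresponding factor splits into a \emph{local} term ($G_{\Sigma_X}$, or its derivative, or $\tau_X$) and a \emph{correction} term supported on $Y\times Y$; when the endpoints lie on opposite sides only the correction term survives. Reading the local/correction choice as $dec_E\in\{u,c\}$, the absence of a local term for opposite-side edges is exactly the admissibility condition that cross edges carry $c$. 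Expanding the product term by term thus yields
\begin{equation*}
F^{\tau_L*\tau_R}_\Sigma(\Gamma)=\sum_{f\ \mathrm{admissible}} F^{dec,\tau_L,\tau_R}_\Sigma(\Gamma^{f}),
\end{equation*}
the sum running over all admissible decorations, each summand being precisely the decorated weight of the definition preceding the lemma.

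For the combinatorial step, I would use that $\Aut(\Gamma)$ acts on the (finite) set of admissible decorations, that this action preserves the decorated weight — an automorphism merely relabels integration variables and permutes identical factors, leaving the integral invariant — and that $dec(\Gamma)$ is by definition the set of orbits, with $\Aut^{dec}(\Gamma^{f})$ the stabilizer of $f$. By orbit--stabilizer the orbit of $f$ has $|\Aut(\Gamma)|/|\Aut^{dec}(\Gamma^{f})|$ elements, all contributing the same weight; grouping the sum over decorations into orbits and dividing by $|\Aut(\Gamma)|$ then gives \eqref{eq:decor} directly.

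The main obstacle I anticipate is not the symmetry bookkeeping but the careful verification that the term-by-term expansion is legitimate and that it matches admissibility exactly. Concretely, one must check that the boundary versions of the gluing formula (for the $\partial_\nu G$ on $E_1$ edges and the $\partial_\nu\partial_\nu G$ on $E_2$ edges) split into local-plus-correction with the same combinatorial pattern as the bulk formula, that cross-side edges genuinely contribute no local term so that no inadmissible decoration can arise, and that all resulting integrals remain absolutely convergent so that Fubini justifies interchanging the (finite) sum with the integration over $C^\circ_\Gamma(\Sigma)$. This last point follows from the regularity of the wave functions established in Proposition \ref{prop: Feynman singularities} together with the integrability of the tadpole in Corollary \ref{corollary:tau reg integrable}.
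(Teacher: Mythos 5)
Your proposal is correct and follows essentially the same route as the paper's proof: decompose the bulk integrations as $\int_\Sigma=\int_{\Sigma_L}+\int_{\Sigma_R}$, expand each propagator (and tadpole) via the gluing formulae so that every resulting term is labeled by an admissible decoration, observe that isomorphic decorations give equal weights, and apply the orbit--stabilizer theorem to pass from the sum over decorations to the sum over isomorphism classes. Your additional remarks on the boundary-derivative versions of the gluing formula and on absolute convergence justifying the interchange of sum and integral are details the paper leaves implicit, but they do not change the argument.
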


\begin{proof}
Decompose $F_\Sigma(\Gamma)$ using $\int_\Sigma = \int_{\Sigma_L} + \int_{\Sigma_R}$ and the gluing formula for the propagator $G_\Sigma = G_u + G_c$ between $L$ and $L$ (resp. $R$ and $R$) vertices and $G_\Sigma = G_c$ between $L$ and $R$ vertices. Every term in the resulting sum is labeled by a decoration $f$ of $\Gamma$.  Isomorphic decorations will evaluate to the same weight.  Thus, we obtain  
$$\frac{F^{\tau_L*\tau_R}_\Sigma(\Gamma)}{|\Aut(\Gamma)|} = \sum_{f \in dec(\Gamma)} F_\Sigma^{dec,\tau_L,\tau_R}(\Gamma^{f})\frac{|\Aut(\Gamma)\cdot f|}{|\Aut(\Gamma)|}.$$
By the orbit-stabilizer theorem, we obtain 
$$\frac{|\Aut(\Gamma)\cdot f|}{|\Aut(\Gamma)|} = \frac{1}{|\Aut^{dec}(\Gamma^f)|}$$ 
and the claim follows. 
\end{proof}

We define a gluing operation $*$ on Feynman graphs: Denote Feynman graphs with no $R - R$ edges by $\Gr_R$ and Feynman graphs with no $L - L$ edges by $\Gr_L$. For $\Gamma_L \in \Gr_R, \Gamma_R \in \Gr_L$ we define
$$\Gamma_L * \Gamma_R := \sum_{\sigma \text{ perfect matching of } V_{R}(\Gamma_L) \sqcup V_{L}(\Gamma_R)} \Gamma^{dec}(\sigma,\Gamma_L,\Gamma_
R),$$
where $\Gamma^{dec}(\sigma,\Gamma_L,\Gamma_R)$ is the decorated graph obtained by decorating vertices in $\Gamma_{X}$ with $X$, edges in $\Gamma_{X}$ by $u$ (for ``uncut'') and connecting the boundary vertices specified by $\sigma$ to an edge, decorated $c$ (for ``cut''), between the bulk vertices attached to these boundary vertices. In the language of Definition \ref{def:FeynmanGraph} we set $V_L(\Gamma^{dec}) = V_L(\Gamma_L)$,$V_R(\Gamma^{dec}) = V_R(\Gamma_R)$, $V_b(\Gamma^{dec}) = V_b(\Gamma_L) \sqcup V_b(\Gamma_R)$. The set of half-edges is the union of all half-edges incident to these vertices. The map $\tau$ specifying the edges is extended by the perfect matching $\sigma$. These new edges are decorated $c$, all other edges are decorated $u$, the vertices carry the obvious decorations. See Figure \ref{fig:graphgluing}. 
\begin{figure}[h]
\centering
\begin{tikzpicture}
\coordinate[label=above:$\Gamma_L$] (a) at (0,1.5);
\node[bulk] (b1) at (0,0){};
\node[bdry] at (1,1) {} edge[bubo] (b1);
\node[bdry] at (1,-1){} edge[bubo] (b1);
\draw[densely dotted] (1,-2) -- (1,2);
\coordinate[label=right:{\huge $*$}] (c) at (1.2,0);
\draw[densely dotted] (2,-2) -- (2,2);
\coordinate[label=above:$\Gamma_R$] (a) at (3,1.5);
\node[bulk] (b2) at (3,1){};
\node[bulk] (b3) at (3,-1) {} edge[bubu, bend left] (b2) edge[bubu,bend right] (b2);
\node[bdry] at (2,1){} edge[bubo] (b2);
\node[bdry] at (2,-1){} edge[bubo] (b3);
\coordinate[label=right:{\large $=$}] (c) at (3.5,0);
\begin{scope}[shift={(5,0)}]
\coordinate[label=above:$\Gamma_1$] (a) at (1.5,1.5);
\node[bulk,label=above:$L$] (b1) at (0,0){} edge[tadpole,"c"] (b1);
\node[bulk,label=above:$R$] (b2) at (3,1){};
\node[bulk,label=below:$R$] (b3) at (3,-1) {}edge[bubu, bend left,"u"] (b2) edge[bubu,bend right,"u"] (b2) edge[bubu,bend left=120,looseness=1.5,"c"] (b2);
\coordinate[label=right:{\large $+2$}] (c) at (3.5,0);
\end{scope}
\begin{scope}[shift={(10,0)}]
\coordinate[label=above:$\Gamma_2$] (a) at (1.5,1.5);
\node[bulk,label=above:$L$] (b1) at (0,0){};
\node[bulk,label=above:$R$] (b2) at (3,1){} edge[bubu,"c"] (b1);
\node[bulk,label=below:$R$] (b3) at (3,-1) {} edge[bubu,"c"] (b1) edge[bubu, bend left, "u"] (b2) edge[bubu,bend right,"u"] (b2);
\end{scope}
\end{tikzpicture}

\caption{The gluing operation on graphs. The first term corresponds to the perfect matching which matches vertices on either side, the second term to the two matchings identifying the vertices on different sides. }\label{fig:graphgluing}
\end{figure}
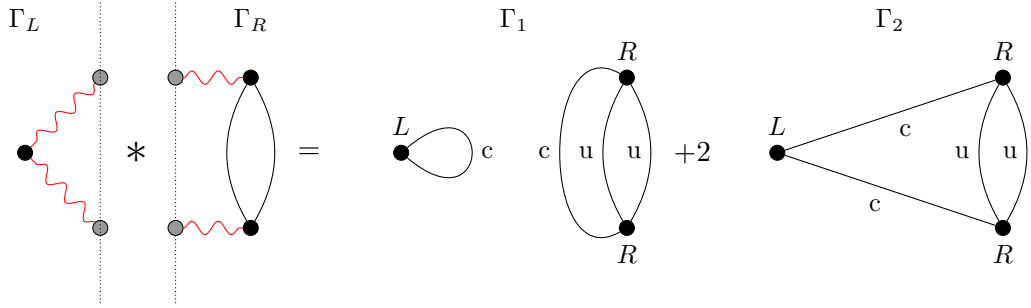 
\begin{remark} We can glue graphs with different amounts of boundary vertices. For instance, the graph $\Gamma_L$ in Figure \ref{fig:graphgluing} could be glued to the empty graph on the right hand side, and these terms are important for the gluing formula for the Green's (or tadpole) function. 
\end{remark}
The gluing operation lands in formal linear combinations of decorated graphs.
$$ \Gamma_L * \Gamma_R = \sum_{\Gamma^{dec} \in \mathrm{supp}(\Gamma_L * \Gamma_R)}m^{\Gamma^{dec}}_{\Gamma_L,\Gamma_R}\Gamma^{dec}.$$ For instance, in the example of Figure \ref{fig:graphgluing}, we have $m^{\Gamma_{1}}_{\Gamma_L,\Gamma_R} = 1$ and $m^{\Gamma_{2}}_{\Gamma_L,\Gamma_R}=2$. Then:
\begin{lemma}\label{lem:unique}
Let $\Gamma^{dec}$ be a decorated Feynman graph. Then there is a unique $  \Gamma_L\in \Gr_R,\Gamma_R\in \Gr_L$ such that $\Gamma^{dec} \in \mathrm{supp}(\Gamma_L * \Gamma_R)$.
\end{lemma}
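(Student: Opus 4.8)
The plan is to construct an explicit inverse to the gluing operation $*$ at the level of decorated graphs: a canonical \emph{cutting} procedure reconstructing the pair $(\Gamma_L,\Gamma_R)$ from $\Gamma^{dec}$, followed by the observation that the output of this procedure is forced by the very definition of $*$, which gives uniqueness. First I would record that membership $\Gamma^{dec}\in\mathrm{supp}(\Gamma_L*\Gamma_R)$ forces $\Gamma^{dec}$ to be admissible: by its definition $*$ decorates every internal edge of $\Gamma_L$ or $\Gamma_R$ by $u$ (these join two $L$- or two $R$-vertices) and every matching edge by $c$, so no edge joining an $L$-vertex to an $R$-vertex can carry $u$. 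Thus the statement is to be read for admissible $\Gamma^{dec}$ — equivalently, for those in the image of $*$ — and for these I establish both existence and uniqueness.

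For existence I would define the cutting directly in the half-edge formalism of Definition~\ref{def:FeynmanGraph}. For each edge decorated $c$ I split it into its two half-edges and cap each loose half-edge with a fresh univalent boundary vertex; a cap sitting at an $L$-decorated endpoint is declared an element of $V_R(\Gamma_L)$ (the interface $Y$ being the right boundary of $\Sigma_L$ in the matching convention), while a cap at an $R$-decorated endpoint becomes an element of $V_L(\Gamma_R)$. I then let $\Gamma_L$ consist of all $L$-decorated bulk vertices, all vertices of $V_L(\Gamma^{dec})$, the $u$-edges among them, and the new caps with their attaching half-edges, and define $\Gamma_R$ symmetrically. The cases of a $c$-edge with endpoints decorated $LL$, $LR$, $RR$, and of a $c$-tadpole (two caps on one vertex), are handled uniformly by this surgery. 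Since every cap is univalent and attached to a non-cap vertex, neither $\Gamma_L$ nor $\Gamma_R$ has an edge between two interface vertices, so $\Gamma_L\in\Gr_R$ and $\Gamma_R\in\Gr_L$; moreover the valence of each retained vertex is unchanged, so boundary vertices stay univalent and the graphs are legitimate. Taking $\sigma$ to be the perfect matching of $V_R(\Gamma_L)\sqcup V_L(\Gamma_R)$ that pairs the two caps born from each common $c$-edge, one verifies $\Gamma^{dec}(\sigma,\Gamma_L,\Gamma_R)=\Gamma^{dec}$, whence $\Gamma^{dec}\in\mathrm{supp}(\Gamma_L*\Gamma_R)$.

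For uniqueness I would show that any pair $(\Gamma_L',\Gamma_R')$ with $\Gamma^{dec}\in\mathrm{supp}(\Gamma_L'*\Gamma_R')$ is necessarily produced by this same recipe, each ingredient being read off from $\Gamma^{dec}$ without choice. Indeed, in $\Gamma^{dec}(\sigma',\Gamma_L',\Gamma_R')$ the bulk vertices of $\Gamma_L'$ are exactly the $L$-decorated bulk vertices and $V_L(\Gamma_L')=V_L(\Gamma^{dec})$; the non-interface edges of $\Gamma_L'$ are exactly the $u$-edges among $L$-vertices; and the interface vertices of $\Gamma_L'$, with their attaching edges, are in bijection with the $c$-edge-halves incident to $L$-vertices, so their number and attachments are pinned down as well. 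The symmetric statements hold for $\Gamma_R'$. Hence $(\Gamma_L',\Gamma_R')=(\Gamma_L,\Gamma_R)$ as isomorphism classes of Feynman graphs.

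The step demanding the most care is the half-edge surgery itself: one must check within the bookkeeping of $(H,i,\tau)$ that splitting a $c$-edge and inserting caps yields genuine Feynman graphs respecting univalence of boundary vertices, and that the $\Gr_R$/$\Gr_L$ conditions persist in every edge-type case ($E_0,E_1,E_2$) and for tadpoles. For the uniqueness half, the only potential trap is that distinct matchings $\sigma'$ of a \emph{fixed} pair can reproduce the same decorated graph — this is precisely the multiplicity $m^{\Gamma^{dec}}_{\Gamma_L,\Gamma_R}$ — but this does not threaten uniqueness of the underlying pair, which the reconstruction above shows to be intrinsic to $\Gamma^{dec}$.
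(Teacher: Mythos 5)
Your proof is correct and follows essentially the same route as the paper's: cut each $c$-decorated edge into its two half-edges, cap them with fresh univalent boundary vertices assigned to $V_R(\Gamma_L)$ or $V_L(\Gamma_R)$ according to the decoration of the adjacent vertex, and observe that every ingredient of the resulting pair is read off from $\Gamma^{dec}$ without choice, which gives uniqueness. Your added remark that the statement should be read for admissible decorated graphs (equivalently, those in the image of $*$) is a fair clarification of a point the paper leaves implicit.
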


\begin{proof}
Delete every  $c$-decorated edge $e=\{v_1,v_2\}$ in $\Gamma^{dec}$ and replace it with  vertices $v_1',v_2'$ and edges $e_1={v_1,v_1'},e_2=\{v_2,v_2'\}$. This results in two disconnected graphs $\Gamma_L$ and $\Gamma_R$ containing vertices decorated $L$ and $R$ respectively (they are disconnected since edges between $L$ and $R$ vertices are decorated $c$). The newly added vertices are declared right resp. left boundary vertices if connected to a vertex decorated $L$ resp. $R$. Forgetting the decorations, this is the unique combination of graphs that will contain $\Gamma^{dec}$ in its support after gluing. 
\end{proof}
Notice that the pairing $\left\langle  F^{\tau_{L}}_{\Sigma_L}(\Gamma_L),F^{\tau_R}_{\Sigma_R}(\Gamma_R)\right\rangle_{\Sigma_L,Y,\Sigma_R}$ makes sense also if the graphs $\Gamma_L$ and $\Gamma_R$ have $L - L$ (resp. $R - R$ edges).
\begin{lemma} Using notation as above, for $\Gamma_L \in \Gr_R$ and $\Gamma_R \in \Gr_L$ we have
\begin{equation}\left\langle  F^{\tau_{L}}_{\Sigma_L}(\Gamma_L),F^{\tau_R}_{\Sigma_R}(\Gamma_R)\right\rangle_{\Sigma_L,Y,\Sigma_R}= \frac{F_{\Sigma}^{dec,\tau_L,\tau_R}(\Gamma_L * \Gamma_R)}{\det(D_{\Sigma_L,\Sigma_R})^\frac12}\label{eq:int_pert_vs_gluing}
\end{equation}
Here we extend $F_\Sigma^{dec}$ linearly to formal linear combinations of decorated graphs. 
\end{lemma}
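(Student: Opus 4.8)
The plan is to unfold both sides of \eqref{eq:int_pert_vs_gluing} from their definitions and match them term by term, with the perfect matchings appearing in the expectation value map playing the role of the matchings $\sigma$ in the gluing operation $*$. First I would expand the left-hand side using Definition \ref{def:pairing} and the expectation value map \eqref{eq:exp_value_bdry}. Writing $p=|V_R(\Gamma_L)|$ and $q=|V_L(\Gamma_R)|$ for the numbers of boundary vertices of $\Gamma_L,\Gamma_R$ lying on the gluing interface $Y$, and letting $\psi_{\Gamma_L},\psi_{\Gamma_R}$ be the (raw, un-symmetrized) wave functions of Remark \ref{rem:symmetry}, the pairing reads
$$\left\langle F^{\tau_L}_{\Sigma_L}(\Gamma_L),\, F^{\tau_R}_{\Sigma_R}(\Gamma_R)\right\rangle_{\Sigma_L,Y,\Sigma_R} = \frac{1}{\det(D_{\Sigma_L,\Sigma_R})^{\frac12}}\sum_{\mathfrak{m}\in\mathfrak{M}_{p+q}}\int_{C^\circ_{p+q}(Y)}(\psi_{\Gamma_L}\odot\psi_{\Gamma_R})\prod_{\{a,b\}\in\mathfrak{m}}K(y_a,y_b)\,dy_1\cdots dy_{p+q}.$$
The prefactor $\det(D_{\Sigma_L,\Sigma_R})^{-\frac12}$ is exactly the one appearing on the right-hand side, so the determinants match and the content of the lemma reduces to identifying the remaining sum with $F^{dec,\tau_L,\tau_R}_\Sigma(\Gamma_L*\Gamma_R)$.

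Next I would discard the symmetrization $\odot$. Since $\sum_{\mathfrak{m}}\prod_{\{a,b\}\in\mathfrak{m}}K(y_a,y_b)$ is a symmetric function of $(y_1,\dots,y_{p+q})$ and the measure is symmetric, relabelling the integration variables shows that $\psi_{\Gamma_L}\odot\psi_{\Gamma_R}$ may be replaced by the plain product $\psi_{\Gamma_L}(y_1,\dots,y_p)\,\psi_{\Gamma_R}(y_{p+1},\dots,y_{p+q})$: the $(p+q)!$ permutations each contribute the same integral and cancel the factor $\tfrac{1}{(p+q)!}$ in the definition \eqref{eq:def symm tens} of $\odot$. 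At this stage the index set $\{1,\dots,p\}$ is precisely $V_R(\Gamma_L)$ and $\{p+1,\dots,p+q\}$ is $V_L(\Gamma_R)$, so $\mathfrak{m}$ ranges over perfect matchings of $V_R(\Gamma_L)\sqcup V_L(\Gamma_R)$ — the very index set over which the gluing operation $*$ sums its matchings $\sigma$. This yields a tautological bijection $\mathfrak{m}\leftrightarrow\sigma$.

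The crucial geometric input is then the gluing formula of Proposition \ref{Proposition:gluing_formula_green}. Because every boundary vertex is univalent, each interface point $y_i$ enters $\psi_{\Gamma_L}$ (or $\psi_{\Gamma_R}$) through the single factor $-\,\partial G_{\Sigma_L}(x_{\alpha_i},y_i)/\partial\nu(y_i)$ coming from the unique edge joining $y_i$ to its bulk neighbour $x_{\alpha_i}$. For a matched pair $\{i,j\}\in\mathfrak{m}$ I would carry out the two integrations over $y_i,y_j\in Y$ against $K(y_i,y_j)$; by Proposition \ref{Proposition:gluing_formula_green} the outcome is exactly the second term in the gluing formula for $G_\Sigma$, i.e. the value the decorated Feynman rules assign to a $c$-edge joining $x_{\alpha_i}$ and $x_{\alpha_j}$ — with two $\partial_\nu G_{\Sigma_L}$ factors when both neighbours lie in $\Sigma_L$ (case (i) of the proposition) and with one $\partial_\nu G_{\Sigma_L}$ and one $\partial_\nu G_{\Sigma_R}$ for a cross pair (case (ii)). Interchanging the interface integrations with the bulk integrations over $\Sigma_L^{N_L}\times\Sigma_R^{N_R}$ by Fubini (legitimate by the integrability established in Proposition \ref{prop: Feynman singularities} together with the logarithmic behaviour of $K$), the integrand for fixed $\mathfrak{m}$ becomes precisely the one defining $F^{dec,\tau_L,\tau_R}_\Sigma(\Gamma^{dec}(\sigma,\Gamma_L,\Gamma_R))$: bulk $L$-vertices integrated over $\Sigma_L$ and $R$-vertices over $\Sigma_R$, the uncut edges of $\Gamma_L,\Gamma_R$ decorated by $G_{\Sigma_L},G_{\Sigma_R}$ (and the surviving self-loops by $\tau_L,\tau_R$), and the new $c$-edges decorated by the second gluing term. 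Summing over $\mathfrak{m}=\sigma$ and using the linear extension of $F^{dec}_\Sigma$ gives $F^{dec,\tau_L,\tau_R}_\Sigma(\Gamma_L*\Gamma_R)$, as claimed.

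I expect the main obstacle to be purely bookkeeping: verifying that no combinatorial factor is lost in passing from the symmetrized pairing to the sum over $\sigma$, and that the signs from the $E_1$ normal-derivative decorations $-\partial_\nu G$ combine with the signs in Proposition \ref{Proposition:gluing_formula_green} so that each cut edge acquires exactly the decorated Feynman rule. A secondary point requiring care is the case in which an interface vertex is joined to another boundary vertex rather than a bulk one; there the identical computation applies with $x_{\alpha_i}$ replaced by that neighbour, and one checks that $\Gamma^{dec}(\sigma,\Gamma_L,\Gamma_R)$ is defined accordingly.
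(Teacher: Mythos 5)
Your proposal is correct and follows essentially the same route as the paper: the paper's own proof consists of the single observation that the kernel $K$ defining the pairing is the same kernel appearing in the gluing formula for the Green's function, with the rest being "a matter of plugging in the definitions" — which is exactly the crucial geometric input you identify, and your expansion of the bookkeeping (cancellation of the symmetrization against relabelling, the bijection $\mathfrak{m}\leftrightarrow\sigma$, the sign matching for $E_1$ decorations) is the intended unwinding of those definitions.
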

\begin{proof}
The only nontrivial point here is that the integral kernel we used to define the pairing is the same kernel as the one appearing in the gluing formula for the Green's function. Apart from this fact the proof is a matter of plugging in the definitions. 
\end{proof}
Finally we require the following combinatorial lemma.
\begin{lemma}
Using notation as above
\begin{equation}\frac{\Gamma_L}{|\Aut(\Gamma_L)|}*\frac{\Gamma_R}{|\Aut(\Gamma_R)|} = \sum_{\Gamma^{dec} \in \mathrm{supp}\Gamma_L * \Gamma_R} \frac{\Gamma^{dec}}{|\Aut^{dec}(\Gamma^{dec})|}.\label{eq:auto_gluing}\end{equation}
\end{lemma}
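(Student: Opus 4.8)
The plan is to reduce \eqref{eq:auto_gluing} to a single orbit-counting identity for each decorated graph on the right-hand side and then to prove that identity by the orbit--stabilizer theorem. Comparing the coefficient of a fixed $\Gamma^{dec} \in \mathrm{supp}(\Gamma_L*\Gamma_R)$ on both sides and writing $m := m^{\Gamma^{dec}}_{\Gamma_L,\Gamma_R}$ for its multiplicity, the claim is equivalent to
$$\frac{m}{|\Aut(\Gamma_L)|\,|\Aut(\Gamma_R)|} = \frac{1}{|\Aut^{dec}(\Gamma^{dec})|}.$$
To produce a group action, I would first delete the univalent interface vertices $V_R(\Gamma_L)$ and $V_L(\Gamma_R)$, retaining the half-edges attached to their neighbouring bulk vertices as ``free legs''; call the resulting graphs-with-legs $\Gamma_L^\circ,\Gamma_R^\circ$, with leg sets $F_L,F_R$. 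Since a univalent vertex carries exactly the data of its unique half-edge, this does not change the automorphism groups, so $\Aut(\Gamma_L^\circ)=\Aut(\Gamma_L)$ and similarly on the right. A perfect matching $\sigma$ of $V_R(\Gamma_L)\sqcup V_L(\Gamma_R)$ is then the same as a fixed-point-free involution on $F := F_L\sqcup F_R$, and $\Gamma^{dec}(\sigma)$ is obtained from $\Gamma_L^\circ\sqcup\Gamma_R^\circ$ by joining the $\sigma$-matched legs into $c$-decorated edges.

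The group $\mathcal{G}:=\Aut(\Gamma_L)\times\Aut(\Gamma_R)$ acts on $F$ (the factors acting on $F_L,F_R$) and hence by conjugation $\varphi\cdot\sigma=\varphi\sigma\varphi^{-1}$ on the set of matchings. I would then establish two claims by half-edge bookkeeping. First, $\Gamma^{dec}(\sigma)\cong\Gamma^{dec}(\sigma')$ as decorated graphs if and only if $\sigma,\sigma'$ are in the same $\mathcal{G}$-orbit: a decoration-preserving isomorphism sends $u$-edges to $u$-edges and respects $L/R$ labels, so it restricts to an element $\varphi\in\Aut(\Gamma_L^\circ)\times\Aut(\Gamma_R^\circ)=\mathcal{G}$ of the uncut subgraph, and carrying the $c$-edges of $\Gamma^{dec}(\sigma)$ to those of $\Gamma^{dec}(\sigma')$ is exactly the condition $\varphi\sigma\varphi^{-1}=\sigma'$; conversely any such $\varphi$ induces an isomorphism. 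Here Lemma \ref{lem:unique} guarantees that the pair $(\Gamma_L,\Gamma_R)$ attached to $\Gamma^{dec}$ is well defined, so the preimage of $\Gamma^{dec}$ in the set of matchings is genuinely a single $\mathcal{G}$-orbit. Second, taking $\sigma'=\sigma$ yields a group isomorphism $\Aut^{dec}(\Gamma^{dec}(\sigma))\cong\mathrm{Stab}_{\mathcal{G}}(\sigma)$: the restriction map $\Psi\mapsto\Psi|_{\text{uncut}}$ lands in the stabilizer, it is injective because every half-edge of $\Gamma^{dec}(\sigma)$ already belongs to the uncut subgraph (so $\Psi$ is determined by its restriction), and it is surjective because any stabilizing $\varphi$ preserves the $u$-edges and, via $\varphi\sigma\varphi^{-1}=\sigma$, the $c$-edges as well.

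With these two claims the conclusion is immediate: fixing any $\sigma_0$ with $\Gamma^{dec}(\sigma_0)\cong\Gamma^{dec}$, the multiplicity $m$ equals the number of matchings $\sigma$ with $\Gamma^{dec}(\sigma)\cong\Gamma^{dec}$, which by the first claim is the $\mathcal{G}$-orbit of $\sigma_0$, so by orbit--stabilizer and the second claim
$$m = |\mathcal{G}\cdot\sigma_0| = \frac{|\mathcal{G}|}{|\mathrm{Stab}_{\mathcal{G}}(\sigma_0)|} = \frac{|\Aut(\Gamma_L)|\,|\Aut(\Gamma_R)|}{|\Aut^{dec}(\Gamma^{dec})|}.$$
Dividing by $|\Aut(\Gamma_L)|\,|\Aut(\Gamma_R)|$ and summing over $\Gamma^{dec}\in\mathrm{supp}(\Gamma_L*\Gamma_R)$ then recovers \eqref{eq:auto_gluing}. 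The main obstacle is the half-edge bookkeeping behind the two claims, in particular verifying that deleting the univalent interface vertices leaves the automorphism groups unchanged and that a decorated isomorphism is rigidly determined by its restriction to the uncut subgraph; once these are in place, the identity follows from a clean application of orbit--stabilizer.
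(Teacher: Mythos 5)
Your proof is correct and follows essentially the same route as the paper: the coefficient identity $m^{\Gamma^{dec}}_{\Gamma_L,\Gamma_R}=|\Aut(\Gamma_L)||\Aut(\Gamma_R)|/|\Aut^{dec}(\Gamma^{dec})|$, the action of $\Aut(\Gamma_L)\times\Aut(\Gamma_R)$ on perfect matchings, identification of the stabilizer with $\Aut^{dec}(\Gamma^{dec})$, and orbit--stabilizer. Your version is somewhat more careful than the paper's (you verify the ``if and only if'' between orbits and isomorphism classes and invoke Lemma \ref{lem:unique} for well-definedness of the pair $(\Gamma_L,\Gamma_R)$), but the underlying argument is the same.
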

\begin{proof}
This is essentially a consequence of the orbit-stabilizer theorem. Notice we can rewrite \eqref{eq:auto_gluing} as 
\begin{equation}
m^{\Gamma^{dec}}_{\Gamma_L,\Gamma_R} \stackrel{!}{=} \frac{|\Aut(\Gamma_L)||\Aut(\Gamma_R)|}{|\Aut^{dec}(\Gamma^{dec})|}\label{eq:orbitstabilizer}
\end{equation}
which is already suggestive of the group action we want to consider. Namely, the group $\Aut \Gamma_L \times \Aut \Gamma_R$ acts on the set $\mathfrak{m}$ of perfect matchings of $V_R(\Gamma_L) \sqcup V_L(\Gamma_R)$. Two perfect matchings related by this group action will define the same decorated graph, so that for any $m^{\Gamma^{dec}_{\Gamma_L,\Gamma_R}} = |(\Aut \Gamma_L \times \Aut \Gamma_R)\cdot \sigma |$ for any $\sigma \in \mathfrak{m}$ that defines $\Gamma^{dec}$. The main realization is that the stabilizer group of $\sigma$ is isomorphic to the automorphism group of the decorated graph $\Gamma^{dec}$: The condition that a pair $(\varphi_L,\varphi_R)$ stabilizes $\sigma$ is equivalent to asking that $\varphi_L \sqcup \varphi_R$ preserves incidence of cut edges. Equation \eqref{eq:orbitstabilizer} is then just the orbit-stabilizer theorem. 
\end{proof}
\subsection{Proof of the gluing formula}
We now have all the necessary ingredients to prove Theorem \ref{thm:ASgluing} (we suppress dependence on arguments after the first line). 
\begin{proof}[Proof of Theorem \ref{thm:ASgluing}]
Recall that we denote by $\Gr_L$ (resp. $\Gr_R$) the set of all Feynman graphs containing \emph{no} edges between left (resp. right) boundary vertices. Then, as noted in Remark \ref{rem:Zfactors}, we have 
\begin{equation*}
\sum_{\Gamma} \frac{F_\Sigma^\tau(\Gamma)}{|\mathrm{Aut(\Gamma)}|} = \sum_{\Gamma \in \Gr_R} e^{-S_0(\phi^\Sigma_{{\tilde{\eta}}_R})}\frac{F_\Sigma^\tau(\Gamma)}{|\mathrm{Aut(\Gamma)}|}
\end{equation*}
In particular, for a surface $\Sigma$ with $\partial\Sigma = \partial_L\Sigma \cup \partial_R\Sigma$, we have 
\begin{equation}
e^{-S_0(\phi^\Sigma_{{\tilde{\eta}}_L})}
\widehat{Z^\tau_\Sigma} = \frac{1}{\det(\Delta_{\Sigma_L}+m^2)^{\frac12}}\sum_{\Gamma \in  \Gr_R}\frac{F^\tau_\Sigma(\Gamma)}{|\mathrm{Aut}(\Gamma)|}\label{eq:lasteq}.
\end{equation}
The proof of the gluing formula is now a simple consequence of our previous work. 
Consider again a cobordism $(\Sigma,\partial_L\Sigma,\partial_R\Sigma)$ with a decomposition $\Sigma = \Sigma_L \cup_Y \Sigma_R$, 
where $(\Sigma_L,\partial_L\Sigma,Y), (\Sigma_R,Y,\partial_R\Sigma_R)$ are two cobordisms with common boundary component $Y$. Then, we have 
\begin{align*}
&e^{-S_0(\phi^{\Sigma_L}_{{\tilde{\eta}}_L})}
e^{-S_0(\phi^{\Sigma_R}_{{\tilde{\eta}}_R})}
\left\langle  \widehat{Z^{\tau_L}_{\Sigma_L}}({\tilde{\eta}}_L,{\tilde{\eta}}),\widehat{Z^{\tau_R}_{\Sigma_R}}({\tilde{\eta}},{\tilde{\eta}}_R)
\right\rangle_{\Sigma_L,Y,\Sigma_R} = \\
  \textsf{(Eq. \eqref{eq:lasteq})} &= \frac{1}{\det(\Sigma_L+m^2)^\frac12\det(\Sigma_R + m^2)^\frac12}\sum_{\substack{\Gamma_L \in \Gr_R \\ \Gamma_R \in \Gr_L}}\left\langle \frac{F^{\tau_L}_{\Sigma_L}(\Gamma_L)}{|\Aut(\Gamma_L)|},\frac{F^{\tau_R}_{\Sigma_R}(\Gamma_R)}{|\Aut(\Gamma_R)|}\right\rangle_{\Sigma_L,Y,\Sigma_R} \\ 
\textsf{(Eq. \eqref{eq:int_pert_vs_gluing})} &= \frac{1}{\det(\Sigma_L+m^2)^{\frac{1}{2}}\det(\Sigma_R+m^2)^\frac12\det(D_{\Sigma_L,\Sigma_R})^\frac12}\sum_{\substack{\Gamma_L \in \Gr_R \\ \Gamma_R \in \Gr_L}}\frac{F_\Sigma^{dec,\tau_L,\tau_R}(\Gamma_L*\Gamma_R)}{|\Aut(\Gamma_L)|\cdot | \Aut(\Gamma_R)|} \\
\textsf{(Eqs. \eqref{eq:gluing_zeta_determinants},\eqref{eq:auto_gluing})}&= \frac{1}{\det(\Delta_{\Sigma}+m^2)^\frac12}\sum_{\substack{\Gamma_L \in \Gr_R \\ \Gamma_R \in \Gr_L}}\sum_{\Gamma^{dec} \in \Gamma_L*\Gamma_R}\frac{ F_\Sigma^{dec,\tau_L,\tau_R}(\Gamma^{dec})}{|\Aut^{dec}(\Gamma^{dec})|} \\
\textsf{(Lemma \ref{lem:unique})}&= \frac{1}{\det(\Delta_{\Sigma}+m^2)^\frac12}\sum_{\Gamma^{dec}} \frac{F_\Sigma^{dec,\tau_L,\tau_R}(\Gamma^{dec})}{|\Aut^{dec}(\Gamma^{dec})|} \\ 
\textsf{(Eq. \eqref{eq:decor})}&= \frac{1}{\det(\Delta_{\Sigma}+m^2)^\frac12}\sum_{\Gamma}\frac{F^{\tau_L*\tau_R}_\Sigma(\Gamma)}{|\Aut(\Gamma)|} = Z^{\tau_L*\tau_R}_{\Sigma}.
\end{align*}
\end{proof}

\section{Functoriality}\label{sec:functor} 
Returning to the discussion of the introduction, it is a natural question whether the assignment $Y\mapsto H_Y, \Sigma \mapsto Z_\Sigma$ has an interpretation as a functor. It turns out that the answer to this question is positive, provided source and target category are adequately defined, and one introduces the correct mathematical setup. The main problem is that in the treatment of Section \ref{sec:pert_quant} the pairing on the space of boundary states depends on the bulk. We'll briefly describe the idea how to remedy this. Remember that heuristically we want the pairing to be integration against the ``Lebesgue measure'' on the space of boundary fields: 
$$\langle\Psi_1,\Psi_2\rangle = \int_{{\tilde{\eta}}_Y}\Psi_1({\tilde{\eta}}_L,{\tilde{\eta}}_Y)\Psi_2({\tilde{\eta}}_Y,{\tilde{\eta}}_R).$$ 
Of course this formal Lebesgue measure does not depend on the bulk, but 
it is not mathematically well-defined, 
so another idea is needed. From the point of view of perturbation theory, it was natural to use the factor $e^{-S_0(\phi_{{\tilde{\eta}}_Y})}$ to define a formal measure with respect to which we were defining the pairing, but this factor depends on the bulk, since $S_0(\phi_{{\tilde{\eta}}_Y}) = \frac12\int_Y {\tilde{\eta}} D_\Sigma {\tilde{\eta}} \,\dvol_Y$. The trick to obtain a functor is to realize that dependence on the bulk is ``small'' in the sense that $D_\Sigma = \sqrt{\Delta_Y + m^2} + S$, where $S$ is a compact
 operator\footnote{In fact, $\delta=(\Delta_Y+m^2)^{-\frac12} S$ is a 
 trace-class 
 operator, see Proposition \ref{prop:delta regularity}. Moreover, if the metric $\Sigma$ in the neighborhood of $Y$ is the product metric, then $S$ is  a smoothing operator. We refer to Appendix \ref{App: examples delta} for examples.
 } 
 that contains all the bulk dependence. We can thus use the operator $\sqrt{\Delta_Y + m^2}$ to define an \emph{actual Gaussian measure} on a completion of $C^\infty(Y)$. This Gaussian measure will then be corrected to the one induced by the Dirichlet-to-Neumann operator by a ``small'' contribution from the bulk. Thus, we will multiply the partition function by a factor to obtain its correct normalization. But on a heuristic level, nothing happens at all: we are merely splitting the Gaussian factor in the partition function in a different way. In this section we will spell out the details of this idea and prove that this is enough to make the partition function functorial.
\subsection{The source category}  The source category is the semicategory  (i.e. category without identity morphisms) $\mathbf{Riem}^2$ of 2-dimensional Riemannian cobordisms defined as follows:
\begin{itemize}
\item  Objects are closed Riemannian 1-manifolds with two-sided collars $Y \times (-\epsilon,\epsilon)$ with an arbitrary metric restricting to the metric on $Y$ on $Y\times \{0\}$.
\item A morphism from $Y_L\times (-\epsilon_L,\epsilon_L)$ to $Y_R \times (-\epsilon_R,\epsilon_R)$ is a Riemannian cobordism $\Sigma$ with $\partial_X\Sigma = Y_X$ such that $\partial_L\Sigma$ has a collar (tubular neighborhood) in $\Sigma$ isometric to $Y_L \times [0,\epsilon_L)$ and $\partial_R\Sigma$ has a collar isometric to $Y_R \times (-\epsilon_R,0]$.
\end{itemize}
Composition of morphisms is well-defined since the 2-sided collars ensure that metrics can be glued smoothly. 

We refer the reader to \cite{HST,Stolz} for a detailed discussion of the Riemannian cobordism category.

\subsection{The space of boundary states revisited and the target category}
 
We turn to describing the space of states associated to a closed $1$-manifold.  
It is constructed as
as in Section \ref{sec:pert_quant}, but with Dirichlet-to-Neumann operator replaced by the square root of the Helmholtz operator on the boundary. 

In order to prove functoriality of the appropriately adjusted partition function, we also introduce the measure-theoretic formulation of the space of states and compare it with the conventional Fock space formulation.
\subsubsection{Space of boundary states - perturbative picture}
Consider again the vector spaces $H^{(n)}_Y$ of Definition \ref{def:bdry_space}: functionals $\Psi$ on $C^{\infty}(Y)$ of the form 
$$\Psi(\til{\eta}) = \int_{C^\circ_n(Y)} \psi(y_1,\ldots,y_n)\tilde{\eta}(y_1)\cdots\til{\eta}(y_n)\, dy_1 \ldots dy_n$$ 
where $\psi \in \funlog(C^\circ_n(Y))[[\hbar^{1/2}]]$ is symmetric and has admissible singularities (Definition \ref{def: adm singularities}) on diagonals, and set $H^\mr{pre}_Y = \bigoplus_{n \geq  0}H^{(n)}_Y$. Then we define a new pairing on $H^\mr{pre}$, similar in form to the one of Definition \ref{def:pairing}, but with the Dirichlet-to-Neumann operator replaced by (twice) the square root of the Helmholtz operator on $Y$. Notice that this pairing will be intrinsic to the metric on $Y$, in particular, it can be defined without reference to any bulk manifold. 
We define the space of states as the completion of $H^\mr{pre}$ with respect to that pairing. 
\begin{definition}\label{def:pairing 2 kappa}
Let $g$ be a Riemannian metric on $Y$ and $m>0$ and consider the Helmholtz operator $\Delta_g + m^2$ on $Y$. Denote $\varkappa$ the square root of this operator. 
We define the pairing 
\begin{align}\label{<>_kappa def}
\langle\cdot,\cdot\rangle_{2\varkappa}\colon H^{(n)}_Y \times H^{(m)}_Y &\to \R[[\hbar^{1/2}]] \notag \\
\langle \Psi, \Psi'\rangle_{2\varkappa} &= \langle \Psi \odot \Psi' \rangle_{2\varkappa} 
\end{align}
where $\odot$ is the symmetric tensor product defined in \eqref{eq:def symm tens} and $\langle\cdot\rangle_{2\varkappa}\colon H^{(k)} \to \R[[\hbar^{1/2}]]$ the $2\varkappa$-expectation value map 
\begin{equation}
\left\langle \Psi\right\rangle_{2\varkappa}  = \dfrac{1}{(\det2\varkappa)^{\frac12}}\sum_{\mathfrak{m}\in \mathfrak{M}_n}\int_{C_n^\circ(Y)}\psi(y_1,\ldots,y_n)\prod_{\{v_1,v_2\}\in\mathfrak{m}}
(2\varkappa)^{-1}(y_{v_1},y_{v_2})\; dy_1\cdots dy_n.\label{eq:exp_value_bdry_kappa}
\end{equation}
We extend the pairing bilinearly to $H^\pre_Y = \bigoplus_{n\geq 0} H^{(n)}_Y$. 
\end{definition}
The completion of $H^\pre_Y$ with respect to the pairing $\langle\cdot,\cdot \rangle_{2\varkappa}$ coincides (as a topological vector space) with $H_Y$, see Definition \ref{def: H completion DN}, since operators $2\varkappa$ and $D_{\Sigma_L,\Sigma_R}$ are sufficiently close.

 Notice that $H_Y$ has the decomposition 
\begin{equation*}
H_Y = \bigoplus \overline{H}^{(n)}_Y
\end{equation*} 
where $\overline{H}^{(n)}_Y$ is understood as $H^{(n)}_Y$ 
completed with respect to the restriction of $\langle\cdot,\cdot \rangle_{2\kappa}$ degree-wise in $\hbar$. 
However, this is \emph{not} a decomposition into orthogonal subspaces.

\subsubsection{Space of boundary states - Fock space picture}
We now want to compare our model of the space of states to the more classical notion of Fock space. This is a space naturally associated to a pseudodifferential operator on $C^\infty(Y)$ defined as follows.
%
\begin{definition}
Denote $V_{2\varkappa}$ the completion of $C^\infty(Y)$ with respect to the pairing\footnote{Notice this is almost the same pairing as $\langle f,g\rangle_{2\varkappa}$ on $H^{(1)}_Y = C^\infty(Y)[[\hbar^{1/2}]]$ but without the prefactor $\dfrac{1}{(\det2\varkappa)^{\frac12}}$.} 
\begin{equation}\label{eqon:pairing single-particle}
\langle f,g\rangle'_{2\varkappa} = \frac12\int_{Y\times Y}f(x)\, \varkappa^{-1}(x,y)\, g(y)dxdy.
\end{equation}
The \emph{Fock space model} of the space of boundary states  associated with $Y$  is 
\begin{equation*} 
F_+({Y}) := \widehat{\bigoplus_{k=0}^\infty} S^kV_{2\varkappa} \;\otimes \RR[[\hbar^{1/2}]]
\end{equation*}
where $S^kV$ is the $k$-th symmetric power of the Hilbert space $V$ and $\widehat{\bigoplus}$ denotes the completed orthogonal sum of Hilbert spaces. 
\end{definition}
\begin{remark}\label{remark:L2representation of Fock space}
The pairing (\ref{eqon:pairing single-particle}) on $C^{\infty}(Y)$ is the covariance of a Gaussian probability measure $\mu_{2\varkappa}$ on $D'(Y)$--the space of distributions on $Y$. This means that $\widehat{\bigoplus_{k=0}^\infty} S^kV_{2\varkappa}$ is  isomorphic to $L^2(D'(Y),\mu_{2\varkappa})$ via a canonical isomorphism  \cite{V, AG, SimonB}. The isomorphism is constructed by considering the Wiener chaos decomposition \cite{janson} of $L^2(D'(Y),\mu_{2\varkappa})$ obtained from the so-called normal odering procedure. From this, it follows that $F_+({Y})$ is isomorphic to $L^2(D'(Y),\mu_{2\varkappa})[[\hbar^{1/2}]]$. 
\end{remark}
Notice that in the decomposition of $H_Y = \bigoplus H^{(n)}_Y$, the individual components $H^{(n)}_Y$ (``$n$-particle sectors'') are not orthogonal to each other. On the other hand, in the Fock space we have $S^kV_{2\varkappa} \perp S^l V_{2\varkappa}$ for $l \neq k$. Nevertheless, 
the completion $H_Y$ of $H^\pre_Y$ can also be decomposed into an orthogonal direct sum. 
This can be done by using the normal ordering, which for example, for\footnote{Here superscript $S_n$ denotes functions invariant under the natural action of $S_n$ on $C_n^\circ(Y)$, i.e. symmetric functions.} $\psi \in \funlog(C^\circ_n(Y))^{S_n}[[\hbar^{1/2}]]$ is defined by: 
\begin{align*}
:\psi: = \psi(y_1,\ldots,y_n) &- {n \choose 2}\int_{Y \times Y}\psi(y_1,\ldots,y_n)(2\kappa)^{-1}(y_1,y_2)\, dy_1dy_2 \\
&+{n \choose 4}\int_{Y^4}\psi(y_1,\ldots,y_n)(2\kappa)^{-1}(y_1,y_2)(2\kappa)^{-1}(y_3,y_4)\, dy_1dy_2dy_3dy_4  \notag\\
& - \ldots \notag
\end{align*}
We then have the following. 
\begin{prop}\label{prop:Normal ordering}
Denote the pairing on the Fock space by $\langle\cdot,\cdot\rangle_{F}$. 
Then, there is a canonical isomorphism 
$$\big(H_Y,\langle\cdot,\cdot\rangle_{2\varkappa}\big) \cong \left(F_+(Y), \frac{1}{\det(2\varkappa)^\frac12}\langle\cdot,\cdot\rangle_F \right).$$
\end{prop}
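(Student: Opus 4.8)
The plan is to exhibit an explicit isomorphism between the perturbative model $H_Y$ (with its $2\varkappa$-pairing) and the Fock space $F_+(Y)$, and to verify it intertwines the two pairings up to the scalar $\det(2\varkappa)^{-1/2}$. First I would recall the structure on both sides. On the perturbative side, we have the $\hbar$-graded-completed direct sum $H_Y=\widehat{\bigoplus}_n \overline{H}^{(n)}_Y$, where a state $\Psi$ is represented by a symmetric wave function $\psi\in\funlog(C^\circ_n(Y))^{S_n}[[\hbar^{1/2}]]$; the map $\psi\mapsto\Psi$ is, up to relabeling, just the pairing of $\psi$ against $\til\eta^{\otimes n}$. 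On the Fock space side, $F_+(Y)=\widehat{\bigoplus}_k S^k V_{2\varkappa}\otimes\R[[\hbar^{1/2}]]$, where $V_{2\varkappa}$ is the completion of $C^\infty(Y)$ in the single-particle inner product $\langle\cdot,\cdot\rangle'_{2\varkappa}$ of \eqref{eqon:pairing single-particle}. The natural candidate isomorphism sends the class of a symmetric kernel $\psi(y_1,\ldots,y_n)$ to the corresponding element of $S^n V_{2\varkappa}$ (viewing $\psi$ as a symmetric ``$n$-particle'' kernel), and the key device for making this respect the orthogonal grading is the normal ordering $\psi\mapsto{:}\psi{:}$ displayed just before the statement.

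The main steps I would carry out, in order, are as follows. \textbf{(1)} Define the map $\Phi\colon H_Y\to F_+(Y)$ degree-by-degree in $\hbar$ by sending $\Psi\in H^{(n)}_Y$ with wave function $\psi$ to the element of $\widehat{\bigoplus}_k S^kV_{2\varkappa}$ whose components are obtained from the Wiener chaos / normal-ordering decomposition; equivalently, via Remark \ref{remark:L2representation of Fock space}, identify $F_+(Y)\cong L^2(D'(Y),\mu_{2\varkappa})[[\hbar^{1/2}]]$ and send $\Psi$ to the polynomial observable $\til\eta\mapsto\Psi(\til\eta)$ on $D'(Y)$. \textbf{(2)} Show $\Phi$ is well-defined: a monomial $\int\psi\,\til\eta(y_1)\cdots\til\eta(y_n)$ is an $L^2$ function on $(D'(Y),\mu_{2\varkappa})$ precisely because $\psi$ has admissible singularities and $\varkappa^{-1}$ is only logarithmically singular on the diagonal, so the Gaussian moments defining $\langle\Psi,\Psi\rangle_{2\varkappa}$ converge — exactly the integrability already used for the gluing pairing. \textbf{(3)} Compute that the $2\varkappa$-pairing is the Gaussian $L^2$ pairing: by \eqref{eq:exp_value_bdry_kappa}, $\langle\Psi,\Psi'\rangle_{2\varkappa}$ is the sum over perfect matchings of $\psi\odot\psi'$ contracted with copies of $(2\varkappa)^{-1}$, divided by $(\det 2\varkappa)^{1/2}$; this is Wick's theorem for the Gaussian measure $\mu_{2\varkappa}$ whose covariance is $\langle\cdot,\cdot\rangle'_{2\varkappa}$, so it equals $\det(2\varkappa)^{-1/2}$ times $\int_{D'(Y)}\overline{\Phi(\Psi)}\,\Phi(\Psi')\,d\mu_{2\varkappa}=\det(2\varkappa)^{-1/2}\langle\Phi(\Psi),\Phi(\Psi')\rangle_F$. \textbf{(4)} Check surjectivity and that $\Phi$ is an isometry onto its image: the normal-ordered monomials span the Wiener chaos summands $S^kV_{2\varkappa}$, so $\Phi$ is a bijection after completion, and by construction it carries the (non-orthogonal) decomposition $\bigoplus\overline{H}^{(n)}_Y$ to the orthogonal Fock grading via normal ordering.

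The heart of the proof — and the step I expect to be the main obstacle — is \textbf{(3)} together with the well-definedness in \textbf{(2)}: one must justify that the formal Gaussian moment formula \eqref{eq:exp_value_bdry_kappa}, which was introduced combinatorially via perfect matchings in the style of Section \ref{sec:formalints}, genuinely coincides with honest $L^2$ integration against the probability measure $\mu_{2\varkappa}$ on $D'(Y)$. This requires the identification of $\langle\cdot,\cdot\rangle'_{2\varkappa}$ as the covariance of $\mu_{2\varkappa}$ (Remark \ref{remark:L2representation of Fock space}, citing \cite{V,AG,SimonB,janson}) and an integrability argument controlling the diagonal singularities of $\psi$ against the logarithmic kernel $\varkappa^{-1}$ so that all Wick contractions are finite and match term by term. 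The remaining points — that normal ordering implements the Wiener chaos decomposition and hence diagonalizes the grading, and that the scalar prefactors $\det(2\varkappa)^{-1/2}$ match — are then essentially bookkeeping, following from the standard theory of Gaussian measures and the Hermite/Wick calculus, which I would invoke rather than reprove.
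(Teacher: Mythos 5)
Your proposal is correct and follows essentially the same route as the paper: identify $H_Y$ with $L^2(D'(Y),\mu_{2\varkappa})[[\hbar^{1/2}]]$ by viewing states as polynomial observables, match the normal-ordering decomposition of $H_Y$ with the Wiener chaos decomposition, and conclude via Remark \ref{remark:L2representation of Fock space}. The paper's proof is much terser (it calls the isomorphism ``obvious'' and does not spell out the Wick-theorem identification of the matching-sum pairing with Gaussian $L^2$ integration), so your steps (2) and (3) are a legitimate filling-in of details the paper delegates to the cited standard references rather than a different argument.
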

\begin{proof}
First, we observe that, if we ignore the determinant factor in the pairing in Definition \ref{def:pairing 2 kappa}, then there is an obvious isomorphism from $H_Y$ onto  $L^2(D'(S^1),\mu_{2\varkappa})[[\hbar^{1/2}]]$  that respects orthogonal decomposition, where the orthogonal decomposition of $H_Y$ comes from the normal ordering as discussed above and the one on $L^2(D'(S^1),\mu_{2\varkappa})[[\hbar^{1/2}]]$ comes from Wiener chaos decomposition. Now, the proposition follows from  Remark \ref{remark:L2representation of Fock space}.
\end{proof}
For us it will be convenient to use the following normalization of Gaussian measures. 
\begin{definition}
We define the ``unnormalized'' Gaussian measure corresponding to a symmetric positive operator $A$ to be 
\begin{equation*}
\mu'_A = \frac{\mu_A}{\det(A)^{1/2}}.
\end{equation*}
Here we assume $\det A$ exists in the zeta-regularized sense.
\end{definition}
\begin{remark}\label{rem:HSoperators}
Given a cobordism $(\Sigma,\partial_L\Sigma,\partial_R\Sigma)$ let $Y = \partial \Sigma = \partial_L \Sigma \sqcup \partial_R\Sigma$. Then the associated space of boundary states $H_Y$ satisfies $$H_Y \cong \mathrm{HS}(H_{\partial_L\Sigma},H_{\partial_R\Sigma}),$$ where $\mathrm{HS}$ denotes Hilbert-Schmidt operators (this is a standard property of the tensor product of Hilbert spaces). Given the three compact Riemannian 1-manifolds $Y_L,Y,Y_R$, the pairing
 $H_Y \otimes H_Y \to \R[[\hbar^{1/2}]]$ extends to the composition map $H_{Y_L \sqcup Y} \otimes H_{Y \sqcup Y_R} \to H_{Y_L \sqcup Y_R}$. 
\end{remark}
We now define the target category as follows.
\begin{definition} 
The category $\mathbf{Hilb}^{\mathrm{form}}$ is the category where 
\begin{itemize}
\item objects are 
real Hilbert spaces tensored with $\RR[[\hbar^{1/2}]]$,
\item morphisms are 
$\RR[[\hbar^{1/2}]]$-linear Hilbert-Schmidt operators.
\end{itemize}
\end{definition}
\subsection{Proof of functoriality} \label{sec: proof of functoriality, measure-theoretic}

As was explained in the discussion above, we need to slightly adjust the partition function to account for the pairing on the space of boundary states: 
\begin{definition}
Let $\Sigma \equiv(\Sigma,\partial_L\Sigma,\partial_R\Sigma)$ be a cobordism and $\tau$ be a tadpole function on $\Sigma$. For a 1-dimensional manifold $Y$, denote $\varkappa:= \sqrt{\Delta_Y +m^2}$. The \emph{functorial partition function} of $\Sigma$ is 
\begin{equation}
\overline{Z^\tau_\Sigma}[{\tilde{\eta}}] = e^{\frac{1}{2}\int_{\partial\Sigma}{\tilde{\eta}} \varkappa{\tilde{\eta}}\, \dvol_{\dd\Sigma}}Z^\tau_\Sigma[\tilde{\eta}].
\end{equation}
\end{definition} 
\begin{prop}
We have $\overline{Z^\tau_\Sigma} \in \mathbf{Hilb}^{\mathrm{form}}(H_{\dd_L \Sigma},H_{\dd_R\Sigma})$.
\end{prop}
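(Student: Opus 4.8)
The plan is to observe that the statement is equivalent to $\overline{Z^\tau_\Sigma}\in H_{\partial\Sigma}$ and then to verify finiteness of its $2\varkappa$-norm. Indeed, by Remark \ref{rem:HSoperators} there are canonical identifications $H_{\partial\Sigma}\cong H_{\partial_L\Sigma}\otimes H_{\partial_R\Sigma}\cong \mathrm{HS}(H_{\partial_L\Sigma},H_{\partial_R\Sigma})$, and $\RR[[\hbar^{1/2}]]$-linearity is automatic, so membership of $\overline{Z^\tau_\Sigma}$ in $\mathbf{Hilb}^{\mr{form}}(H_{\partial_L\Sigma},H_{\partial_R\Sigma})$ amounts exactly to $\overline{Z^\tau_\Sigma}\in H_{\partial\Sigma}$. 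To analyze the latter I would first isolate the Gaussian prefactor: combining the adjustment $e^{\frac12\int_{\partial\Sigma}\tilde{\eta}\varkappa\tilde{\eta}}$ with the factorization of the $E_2$-edges from Remark \ref{rem:Zfactors} and \eqref{Z hat def}, which produce $e^{-S_0(\phi_{\tilde\eta})}=e^{-\frac12\int_{\partial\Sigma}\tilde{\eta} D_\Sigma\tilde{\eta}}$, gives
\begin{equation*}
\overline{Z^\tau_\Sigma}[\tilde\eta]=
\underbrace{e^{\frac12\int_{\partial\Sigma}\tilde\eta(\varkappa-D_\Sigma)\tilde\eta\,\dvol_{\partial\Sigma}}}_{A[\tilde\eta]}
\cdot
\underbrace{\frac{1}{\det(\Delta_\Sigma+m^2)^{1/2}}\sum_{\{\Gamma\,:\,E_2(\Gamma)=\emptyset\}}\frac{\hbar^{\ell(\Gamma)}F^\tau(\Gamma)[\tilde\eta]}{|\Aut(\Gamma)|}}_{B[\tilde\eta]}.
\end{equation*}

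By Proposition \ref{prop: Feynman singularities} and the tadpole lemma accompanying Definition \ref{def: Z tadpole}, each wave function $F^\tau(\Gamma)$ lies in $\funlog(C^\circ_n(\partial\Sigma))$, and since $p_0=p_1=p_2=0$ only finitely many bulk graphs contribute at each order in $\hbar$; hence $B\in H^{\mr{pre}}_{\partial\Sigma}$. The factor $A$ is the delicate one, and here I would invoke Proposition \ref{prop:delta regularity}: on the one-manifold $\partial\Sigma$ the difference $D_\Sigma-\varkappa$ is a pseudodifferential operator of order $\le -2$, so its Schwartz kernel is continuous, hence bounded, on $\partial\Sigma\times\partial\Sigma$, and $\delta:=\varkappa^{-1}(D_\Sigma-\varkappa)$ is trace class. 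Consequently every $2k$-particle wave function of $A$ is a symmetrized product of bounded kernels, therefore bounded and a fortiori of admissible-singularity type, so that $A^{(2k)}\in H^{(2k)}_{\partial\Sigma}$; moreover, by positivity of the Dirichlet-to-Neumann operator the relative determinant $\det(I+\delta)=\det(\varkappa^{-1}D_\Sigma)$ is finite and nonzero.

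I would then pass to the $L^2$-model $H_{\partial\Sigma}\cong L^2(D'(\partial\Sigma),\mu_{2\varkappa})[[\hbar^{1/2}]]$ of Remark \ref{remark:L2representation of Fock space} and Proposition \ref{prop:Normal ordering}, under which $\langle\cdot,\cdot\rangle_{2\varkappa}$ is a $\det(2\varkappa)^{-1/2}$-multiple of the $L^2(\mu_{2\varkappa})$ inner product, and compute $\langle\overline{Z^\tau_\Sigma},\overline{Z^\tau_\Sigma}\rangle_{2\varkappa}$ order by order in $\hbar$. The point is that $A[\tilde\eta]^2=e^{\int_{\partial\Sigma}\tilde\eta(\varkappa-D_\Sigma)\tilde\eta}$ precisely converts the Gaussian weight of $\mu_{2\varkappa}$ (quadratic form $\int\tilde\eta\varkappa\tilde\eta$) into that of the Dirichlet-to-Neumann measure $\mu_{D_\Sigma}$ (quadratic form $\int\tilde\eta D_\Sigma\tilde\eta$). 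Completing the square therefore identifies $\langle\overline{Z^\tau_\Sigma},\overline{Z^\tau_\Sigma}\rangle_{2\varkappa}$, up to the finite nonzero scalar $\det(I+\delta)^{-1/2}$, with the pairing $\langle B,B\rangle_{D_\Sigma}$ built from the kernel $D_\Sigma^{-1}$ exactly as in \eqref{eq:exp_value_bdry} (with $2\varkappa$ replaced by $D_\Sigma$). Order by order in $\hbar$ this is a finite application of Wick's theorem and needs no measure theory beyond the trace-class property securing the determinant factor. Since $D_\Sigma$ has order $+1$ its inverse kernel is $O(\log d)$ on $\partial\Sigma\times\partial\Sigma$, and $B\in H^{\mr{pre}}_{\partial\Sigma}$ contributes only finitely many terms per $\hbar$-degree with wave functions in $\funlog$; thus $\langle B,B\rangle_{D_\Sigma}$ is a finite sum of convergent configuration-space integrals, the convergence being the one already noted after \eqref{eq:exp_value_bdry}. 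Hence $\langle\overline{Z^\tau_\Sigma},\overline{Z^\tau_\Sigma}\rangle_{2\varkappa}$ is finite in every $\hbar$-degree, so $\overline{Z^\tau_\Sigma}\in H_{\partial\Sigma}$ and the proposition follows.

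The step I expect to be the main obstacle is controlling the infinitely many boundary--boundary insertions carried by $A$ at order $\hbar^0$: this is exactly where the sharp order $\le -2$ bound of Proposition \ref{prop:delta regularity} (equivalently, $\delta$ trace class) is indispensable, since an order $-1$ discrepancy between $D_\Sigma$ and $\varkappa$ would make $\det(I+\delta)$ diverge and invalidate the completing-the-square identity. All remaining manipulations are algebraic and reduce, degree by degree in $\hbar$, to the finitely many convergent integrals supplied by Proposition \ref{prop: Feynman singularities}.
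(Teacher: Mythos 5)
Your proof is correct and takes essentially the same route as the paper: the same reduction via the Hilbert--Schmidt identification, the same decomposition of $\overline{Z^\tau_\Sigma}$ into the exponential of $-\frac12\int_{\partial\Sigma}\tilde\eta\, S\,\tilde\eta$ (with $S=D_\Sigma-\varkappa$) times the sum over diagrams without boundary--boundary edges, with Proposition \ref{prop: Feynman singularities} controlling the latter and the trace-class property of $\delta=\varkappa^{-1}S$ from Proposition \ref{prop:delta regularity} controlling the former. You go further than the paper's (very terse) proof by actually carrying out the completing-the-square computation of the $2\varkappa$-norm; the one loose point is your claim that this ``needs no measure theory beyond the trace-class property'': at fixed $\hbar$-order the factor $A$ contributes infinitely many $S$-insertions, and resumming them into the dressed propagator requires either the measure-theoretic change of Gaussian measure (which your passage to the $L^2$ model implicitly supplies) or the additional norm bound $\|\delta\|<1$ of Assumption \ref{assump: ||delta|| <1}, exactly the subtlety the paper isolates in its two proofs of functoriality.
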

\begin{proof}
By Remark \ref{rem:HSoperators}, it is enough to show that $\overline{Z^\tau_\Sigma} \in H_{\partial \Sigma}$. Notice that 
\begin{align*}
\overline{Z^\tau_\Sigma}[{\tilde{\eta}}] &=
  \frac{1}{\det(\Delta_\Sigma + m^2)^\frac12}e^{-\frac{1}{2}\int_{\partial\Sigma}{\tilde{\eta}} S{\tilde{\eta}}\, \dvol_{\dd\Sigma}}Z^{\mr{pert},\tau}_\Sigma[\tilde{\eta}]
\end{align*}
where $S = D_\Sigma - \varkappa$ and $Z^{\mr{pert},\tau}_\Sigma$ is given by summing over all diagrams with no boundary-boundary edges. Let $\Gamma$ be a Feynman diagram with $n$ boundary vertices and no boundary-boundary edges. By Proposition \ref{prop: Feynman singularities}, we have $F(\Gamma) \in H^{(n)}_{\partial \Sigma}$. Denote $(H_{\partial\Sigma})_{k/2}$ the order $k/2$ part in $\hbar$ of $H_{\partial\Sigma}$. Then we have that $
e^{-\frac{1}{2}\int_{\partial\Sigma}{\tilde{\eta}} S{\tilde{\eta}}\, \dvol_{\dd\Sigma}}F(\Gamma) \in (H_{\partial\Sigma})_{\ell(\Gamma)}$, because  the operator $\delta = \varkappa^{-1}S$ is trace-class (see Proposition \ref{prop:delta regularity}). Since at any order in $\hbar$ there are only finitely many diagrams with no boundary-boundary edges, we conclude that $\overline{Z^\tau_\Sigma} \in H_{\partial \Sigma}$. 
\end{proof}
The gluing formula can then be re-interpreted as the fact that, if $\tau$ is a local assignment of tadpole functions, then 
partition functions $\overline{Z^\tau}$ assemble into a functor.
\begin{theorem}\label{thm: Zbar functoriality}
Let $\tau_\Sigma$ be a local assignment of tadpole functions. 
The assignment 
$$\overline{Z^\tau} \colon \mathbf{Riem}^2 \to \mathbf{Hilb}^{\mathrm{form}}$$ given on objects by 
$$\overline{Z^\tau}(Y \times (-\epsilon,\epsilon)) = H_Y$$
and on morphisms by 
$$\overline{Z^\tau}(\Sigma) = \overline{Z^\tau_\Sigma}$$ is a functor. 
\end{theorem}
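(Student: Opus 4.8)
The plan is to verify the single functor axiom that remains, namely compatibility with composition. Since $\mathbf{Riem}^2$ is a semicategory (no identity morphisms) and the preceding proposition already establishes $\overline{Z^\tau_\Sigma}\in\mathbf{Hilb}^{\mathrm{form}}(H_{\dd_L\Sigma},H_{\dd_R\Sigma})$, it is enough to show that for a decomposition $\Sigma=\Sigma_L\cup_Y\Sigma_R$ with $\dd_L\Sigma=Y_L$, $\dd_R\Sigma=Y_R$, one has $\overline{Z^\tau_{\Sigma_R}}\circ\overline{Z^\tau_{\Sigma_L}}=\overline{Z^\tau_\Sigma}$, where $\circ$ is composition of Hilbert--Schmidt operators. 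By Remark \ref{rem:HSoperators} this composition is computed by the pairing $\langle\cdot,\cdot\rangle_{2\varkappa}$ over the interface $Y$, so the goal reduces to the identity $\langle\overline{Z^\tau_{\Sigma_L}},\overline{Z^\tau_{\Sigma_R}}\rangle_{2\varkappa,Y}=\overline{Z^\tau_\Sigma}$. A conceptually important preliminary point, which I would emphasize, is that $\varkappa=\sqrt{\Delta_Y+m^2}$ is intrinsic to the metric on $Y$ (fixed by the two-sided collar), so the pairing $\langle\cdot,\cdot\rangle_{2\varkappa}$ is bulk-independent and agrees whether $Y$ is seen from $\Sigma_L$ or from $\Sigma_R$; this is exactly the property that the Dirichlet-to-Neumann pairing of Definition \ref{def:pairing} lacked and the whole reason for the adjustment by $e^{\frac12\int_{\dd\Sigma}{\tilde{\eta}}\varkappa{\tilde{\eta}}}$.

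The heart of the argument is to reduce this identity to the already-proven gluing formula (Theorem \ref{thm:ASgluing}) by a ``completing the square'' manipulation on the $Y$-Gaussian. First I would record $D_{\Sigma_L,\Sigma_R}=D_{\Sigma_L}+D_{\Sigma_R}=2\varkappa+(S_L+S_R)$, where $S_i=D_{\Sigma_i}-\varkappa$ collects the bulk-dependent correction on $Y$. Writing $\overline{Z^\tau_{\Sigma_i}}=e^{\frac12\int_{\dd\Sigma_i}{\tilde{\eta}}\varkappa{\tilde{\eta}}}Z^\tau_{\Sigma_i}$ and using Remark \ref{rem:Zfactors} together with \eqref{Z hat def} to peel off the boundary--boundary factors, the $Y$-diagonal Gaussian of $\overline{Z^\tau_{\Sigma_L}}$ becomes $e^{\frac12\int_Y{\tilde{\eta}}_Y\varkappa{\tilde{\eta}}_Y}e^{-\frac12\int_Y{\tilde{\eta}}_Y D_{\Sigma_L}{\tilde{\eta}}_Y}=e^{-\frac12\int_Y{\tilde{\eta}}_Y S_L{\tilde{\eta}}_Y}$, and similarly for $\Sigma_R$. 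Expanding the $2\varkappa$-expectation value (Definition \ref{def:pairing 2 kappa}) as the formal Gaussian integral over ${\tilde{\eta}}_Y$ with weight $e^{-\frac12\int_Y{\tilde{\eta}}_Y(2\varkappa){\tilde{\eta}}_Y}$, the product of the two correction factors combines with this weight to give precisely $e^{-\frac12\int_Y{\tilde{\eta}}_Y D_{\Sigma_L,\Sigma_R}{\tilde{\eta}}_Y}$. I thus expect the intermediate identity
\begin{equation*}
\langle\overline{Z^\tau_{\Sigma_L}},\overline{Z^\tau_{\Sigma_R}}\rangle_{2\varkappa,Y}=e^{\frac12\int_{Y_L}{\tilde{\eta}}_L\varkappa{\tilde{\eta}}_L}e^{\frac12\int_{Y_R}{\tilde{\eta}}_R\varkappa{\tilde{\eta}}_R}\,e^{-S_0(\phi^{\Sigma_L}_{{\tilde{\eta}}_L})}e^{-S_0(\phi^{\Sigma_R}_{{\tilde{\eta}}_R})}\langle\widehat{Z^\tau_{\Sigma_L}},\widehat{Z^\tau_{\Sigma_R}}\rangle_{\Sigma_L,Y,\Sigma_R},
\end{equation*}
after which Theorem \ref{thm:ASgluing} in its local-tadpole form ($\tau_L*\tau_R=\tau_\Sigma$) identifies the last three factors with $Z^\tau_\Sigma$ and the leading factors with $e^{\frac12\int_{\dd\Sigma}{\tilde{\eta}}\varkappa{\tilde{\eta}}}$, giving $\overline{Z^\tau_\Sigma}$. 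I would note that the matching of determinant prefactors $\det(2\varkappa)^{-1/2}$ versus $\det(D_{\Sigma_L,\Sigma_R})^{-1/2}$ is automatic, since both pairings are built from the same underlying flat measure on $C^\infty(Y)$ and completing the square merely changes which quadratic form is exponentiated.

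The step I expect to be the main obstacle is the rigorous justification of this completing-the-square manipulation: multiplying by the ``wrong-sign'' factor $e^{+\frac12\int_Y{\tilde{\eta}}_Y\varkappa{\tilde{\eta}}_Y}$ inside the $2\varkappa$-Gaussian pairing must be shown legitimate and convergent order-by-order in $\hbar$. The resolution is the same mechanism as in the preceding proposition: after cancellation the effective interface covariance is governed by $S_L+S_R$, and $\delta=\varkappa^{-1}(S_L+S_R)$ is trace-class by Proposition \ref{prop:delta regularity}, so only finitely many boundary--boundary contractions enter at each order in $\hbar$ and every formal Gaussian integral converges. I would isolate this as a stand-alone ``change of covariance'' lemma for the formal expectation-value maps of Section \ref{sec:formalints}, proved by comparing the perfect-matching (moment) sums on both sides, and then feed it into the computation above. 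Finally, compatibility with iterated compositions follows formally from associativity of gluing in $\mathbf{Riem}^2$ and of Hilbert--Schmidt composition in $\mathbf{Hilb}^{\mathrm{form}}$, so no check beyond the single-interface gluing is needed.
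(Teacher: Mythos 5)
Your overall strategy is the right one and matches the paper's perturbative proof of this theorem: reduce composition in $\mathbf{Hilb}^{\mathrm{form}}$ to the identity $\langle\overline{Z^\tau_{\Sigma_L}},\overline{Z^\tau_{\Sigma_R}}\rangle_{2\varkappa}=\overline{Z^\tau_\Sigma}$, peel off the $\varkappa$-factors so that the interface carries $e^{-\frac12\int_Y\tilde\eta\, S_X\tilde\eta}$ with $S_X=D_{\Sigma_X}-\varkappa$, prove a change-of-covariance lemma converting $\langle\cdot\rangle_{2\varkappa}$ into $\langle\cdot\rangle_{\Sigma_L,Y,\Sigma_R}$, and then invoke Theorem \ref{thm:ASgluing}. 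This is precisely Lemma \ref{lemma: <>_kappa vs <>_D} in the paper. However, your justification of that lemma has a genuine gap. You claim that because $\delta=\varkappa^{-1}(S_L+S_R)$ is trace-class, ``only finitely many boundary--boundary contractions enter at each order in $\hbar$.'' This is false: the exponent $\frac12\int_Y\tilde\eta(S_L+S_R)\tilde\eta$ carries no power of $\hbar$ (the field is the rescaled $\tilde\eta=\hbar^{-1/2}\eta$), so at \emph{every} fixed order in $\hbar$ infinitely many Wick contractions contribute (cf.\ the footnote to Remark \ref{rem: Z is not in H but Zhat is in H}). One must resum a geometric series $\sum_{k\geq0}\bigl(-(2\varkappa)^{-1}(S_L+S_R)\bigr)^k(2\varkappa)^{-1}$ for the dressed interface propagator and a series $\sum_{p\geq1}\frac{1}{2p}\tr\bigl(-(2\varkappa)^{-1}(S_L+S_R)\bigr)^p$ for the determinant correction. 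The trace-class property makes the individual terms finite, but convergence of these series requires the operator-norm bound $\|\delta\|<1$, which is a genuine geometric restriction (it fails, e.g., for short cylinders). The paper isolates this as Assumption \ref{assump: ||delta|| <1} and only obtains the perturbative proof on the subcategory of cobordisms satisfying it; the toy example $\int e^{-(1+C)x^2/2}dx$ expanded in $C$ illustrates exactly why term-by-term integration fails for $\|\delta\|\geq1$.

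To close the gap you would need one of the two devices the paper uses: either (i) the measure-theoretic argument, in which $\langle\cdot,\cdot\rangle_{2\varkappa}$ is genuine integration against the Gaussian measure $\mu'_{2\varkappa}$ on $D'(Y)$ and one invokes the absolute-continuity statement $e^{-\frac12\int_Y\tilde\eta(S_L+S_R)\tilde\eta}\mu'_{2\varkappa}=\mu'_{D_{\Sigma_L,\Sigma_R}}$ (valid for trace-class $\delta$ with $D_{\Sigma_L,\Sigma_R}>0$, no norm restriction), or (ii) a mixed picture splitting off the finite-dimensional span of eigenfunctions of $\delta$ with eigenvalue $\geq1$ and treating it by an honest finite-dimensional Gaussian integral. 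As written, your ``comparison of perfect-matching sums'' only establishes functoriality on the subcategory $\mathbf{Riem}^2_{\|\delta\|<1}$, not on all of $\mathbf{Riem}^2$.
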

\begin{proof}
Let $\Sigma = \Sigma_L \cup_Y \Sigma_R$ be a Riemannian cobordism. Since $ \mathbf{Riem}^2$ is a semicategory, we actually only have to check the composition rule
$$ \overline{Z^\tau}(\Sigma_L \cup_Y \Sigma_R)= \overline{Z^\tau}(\Sigma_L) \circ \overline{Z^\tau}(\Sigma_R).$$ We denote $Y_L = \dd_L\Sigma = \dd_L\Sigma_L, Y=\dd_R\Sigma_L = \dd_L\Sigma_R, Y_R = \dd_R\Sigma = \dd_R\Sigma_R$. 
To see this, recall that (this is Remark \ref{rem:HSoperators}) composition of morphisms $F_1\colon H_1 \to H_2$, $F_2\colon H_2 \to H_3$ in the category $\mathbf{Hilb}^\mathrm{form}$ is given by extension of the pairing on $H_2$. On the other hand, this pairing is given by integrating against the Gaussian measure $\mu'_{2\varkappa}$ in the $L^2$ space corresponding to $H_2$:
\begin{align}  \overline{Z^\tau}(\Sigma_L) &\circ \overline{Z^\tau}(\Sigma_R) = \int_{D'(Y)}  \overline{Z^\tau}(\Sigma_L) \overline{Z^\tau}(\Sigma_R)d\mu'_{2\varkappa} \notag \\
&= e^{\frac12 \int_{Y_L}\tilde{\eta}\varkappa\tilde{\eta}\,\dvol_{Y_L}}e^{\frac12\int_{Y_R}\tilde{\eta}\varkappa\tilde{\eta}\,\dvol_{Y_R}}\int_{D'(Y)} e^{\int_{Y}\tilde{\eta}\varkappa\tilde{\eta}\,\dvol_Y}Z^\tau_{\Sigma_L}Z^\tau_{\Sigma_R}d\mu'_{2\varkappa} \notag\\
&= e^{\frac12\int_{Y_L}\tilde{\eta}\varkappa\tilde{\eta}\,\dvol_{Y_L}}e^{\frac12\int_{Y_R}\tilde{\eta}\varkappa\tilde{\eta}\,\dvol_{Y_R}}
e^{-S_0(\phi^{\Sigma_L}_{{\tilde{\eta}}_L})}
e^{-S_0(\phi^{\Sigma_R}_{{\tilde{\eta}}_R})} 
\notag \\
&\qquad\int_{D'(Y)} \widehat{Z^\tau_{\Sigma_L}}\widehat{Z^\tau_{\Sigma_R}}e^{-\frac12 \int_{Y}\tilde{\eta}(S_L +S_R)\tilde{\eta}\,\dvol_Y}d\mu'_{2\varkappa} \label{eq:proof functor}
\end{align}
Here $S_X = D_{\Sigma_X} - \varkappa$ for $X \in \{L,R\}$.
 Since $\varkappa^{-1}S_X$ is a 
 trace-class operator by Proposition \ref{prop:delta regularity}, by known results on Gaussian measures (see e.g. \cite[Chapter 7]{AG} or \cite{V}),we have $e^{-\frac12 \int_{Y}\tilde{\eta}(S_L +S_R)\tilde{\eta}\,\dvol_Y}\mu'_{2\varkappa} = \mu'_{D_{\Sigma_L,\Sigma_R}}$ as measures on $D'(S^1)$.\footnote{With our convention for the Gaussian measure, this can be proven similarly to Theorem 7.3 in \cite{AG} by showing the Fourier transforms of the two measures are equal.}
However, integration against the latter Gaussian measure is just the pairing defined in Definition \ref{def:pairing}. So, expression \eqref{eq:proof functor} can be rewritten as 
$$ e^{\frac12\int_{\partial\Sigma}\til{\eta}\varkappa\til{\eta}}
e^{-S_0(\phi^{\Sigma_L}_{{\tilde{\eta}}_L})}
e^{-S_0(\phi^{\Sigma_R}_{{\tilde{\eta}}_R})}
\langle\widehat{Z^\tau_{\Sigma_L}},\widehat{Z^\tau_{\Sigma_R}}\rangle_{\Sigma_L,Y,\Sigma_R}.$$ Hence the gluing formula \ref{thm:ASgluing} implies the composition law for $\overline{Z^\tau}$. 
\end{proof}

\begin{remark}
The Hilbert-Schmidt norm 
of a partition function of a Riemannian cobordism $\Sigma$  admits the following interpretation: its square is the partition function of the closed ``doubled surface'' $\widetilde{\Sigma}=\Sigma\cup_{\partial \Sigma}\overline\Sigma$ (assuming the glued metric on $\til\Sigma$ is smooth):
$$ ||\overline{Z^\tau}(\Sigma)||_{HS}^2=Z^{\til\tau}(\til\Sigma)  $$
Here we endow $\til\Sigma$ with the tadpole function $\til\tau=\tau*\tau$ -- the gluing of $\tau$ (some a priory fixed tadpole function) on $\Sigma$ and its reflection on the second copy, $\overline{\Sigma}$.
\end{remark}

\subsection{Another proof of functoriality}
For the reader's convenience, here we give another proof of Theorem \ref{thm: Zbar functoriality}, under an extra assumption on admissible cobordisms. 
It is a direct proof using perturbation theory and not relying on infinite-dimensional measure theory.

\begin{assumption}\label{assump: ||delta|| <1}
For each boundary component $Y$ of the cobordism $\Sigma$, the operator $\delta=\varkappa^{-1} D_\Sigma-1$ on $Y$ has the operator norm 
\begin{equation}\label{||delta|| < 1}
||\delta|| <1
\end{equation} 
(i.e. eigenvalues of $\delta$ are in the interval $(-1,1)$).
\end{assumption}

\begin{remark}
\begin{enumerate}[(a)]
\item \label{assump remark (a)} Cobordisms satisfying Assumption \ref{assump: ||delta|| <1} form a \emph{subcategory} of $\mathbf{Riem}^2$.\footnote{
The reason is that if $\til\Sigma=\Sigma\cup_{Y'} \Sigma'$ and $Y$ is a boundary component of $\Sigma$ disjoint from $Y'$, then we have $D_{\Sigma}>D_{\til\Sigma}>0$ (we mean the $YY$ block of both Dirichlet-to-Neumann operators; an inequality of operators $A>B$ means that $A-B$ is a positive operator). This inequality follows from the gluing formula for Green's functions (Proposition \ref{Proposition:gluing_formula_green}) upon taking the second normal derivative. Therefore, $\delta_{\Sigma}>\delta_{\til\Sigma}>-1$. Thus, if $||\delta_{\Sigma}||<1$, then also $||\delta_{\til\Sigma}||<1$.
}
We denote this subcategory $\mathbf{Riem}^2_{||\delta||<1}$.
\item \label{assump remark (b)} Assumption \ref{assump: ||delta|| <1} is not vacuous. E.g., a cylinder of height $H$, cf. (\ref{lambda/omega examples}), satisfies it iff $H> \frac{c}{m}$ where $c=\mr{arccoth}\,2 \approx 0.5493$. Thus, \emph{short cylinders} fail the assumption. \\
Another example: a spherical sector satisfies the assumption if the cone angle satisfies $\phi<c'$, with $c'\approx 0.9023 \pi$. For $c'<\phi<\pi $, the assumption fails for $mR$ in a certain interval ($R$ is the sphere radius).
\item As implied by (\ref{assump remark (a)}) and (\ref{assump remark (b)}), if we attach to the boundaries of any surface $\Sigma$ sufficiently long cylinders, the resulting surface will satisfy Assumption \ref{assump: ||delta|| <1}.
\end{enumerate}
\end{remark}

Let $\Sigma=\Sigma_L\cup_Y \Sigma_R$ be a Riemannian cobordism cut into two by $Y$. Let $S_i=D_{\Sigma_i}-\varkappa$, with $i\in\{L,R\}$. 
We assume that $\Sigma_L, \Sigma_R
$ satisfy Assumption \ref{assump: ||delta|| <1}.

\begin{lemma} \label{lemma: <>_kappa vs <>_D}
\begin{enumerate}[(i)]
\item \label{lemma: <>_kappa vs <>_D  (i)}
For $\Psi\in H_Y$ any state on $Y$, one has the following comparison of expectation values (\ref{eq:exp_value_bdry_kappa}) and (\ref{eq:exp_value_bdry}):
\begin{equation}\label{eq: <>_kappa vs <>_D}
 \big\langle \Psi e^{-\frac12 \int_Y \til\eta (S_L+S_R)\til\eta\,\dvol_Y }\big\rangle_{2\varkappa} = \big\langle \Psi \big\rangle_{\Sigma_L,Y,\Sigma_R} 
\end{equation}
\item \label{lemma: <>_kappa vs <>_D (ii)}
For $\Psi_1,\Psi_2 \in H_Y$ any pair of states on $Y$, one has the following comparison of pairings (\ref{<>_kappa def}) and (\ref{<>_D def}):
\begin{equation}
\langle \Psi_1 e^{-\frac12 \int_Y \til\eta S_L(\til\eta)\,\dvol_Y }\, , \, \Psi_2 e^{-\frac12 \int_Y \til\eta S_R(\til\eta)\,\dvol_Y } \rangle_{2\varkappa}=
\big\langle \Psi_1, \Psi_2 \big\rangle_{\Sigma_L,Y,\Sigma_R}
\end{equation}
\end{enumerate}
\end{lemma}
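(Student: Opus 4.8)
The plan is to prove (i) by a purely combinatorial resummation of perturbation theory and then to deduce (ii) as a formal consequence. Write $A=2\varkappa$ and $B=S_L+S_R$, so that $A+B=D_{\Sigma_L}+D_{\Sigma_R}=D_{\Sigma_L,\Sigma_R}$ by Remark \ref{rem:DtoNgluing}, and $A^{-1}B=\frac12(\delta_L+\delta_R)$ with $\delta_i=\varkappa^{-1}S_i$. By Definition \ref{def:pairing 2 kappa}, the left-hand side of (i) is $\frac{1}{\det(2\varkappa)^{1/2}}$ times the Wick contraction (sum over perfect matchings, propagator $(2\varkappa)^{-1}$) of the product $\Psi\cdot e^{-\frac12\int_Y\til\eta B\til\eta\,\dvol_Y}$. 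First I would expand the exponential into a sum of ``binary vertices,'' each carrying two legs decorated by the kernel $-B(z,z')$ and a factor $\frac{1}{k!}$ at order $k$, and then apply Wick's theorem to the $n+2k$ field insertions.

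The second step is to organize the resulting matchings into connected components: the legs of $\Psi$ are joined to one another through chains of binary vertices, while the remaining vertices form closed ``vacuum'' necklaces (a degree-$2$ vertex can only sit on a simple cycle). Exactly as in the $n=2$, $k=1$ computation, the factor $\frac12$ carried by each binary vertex is cancelled by the two ways of attaching its legs, so a chain of $j$ vertices linking two external legs evaluates to $(2\varkappa)^{-1}\big(-B(2\varkappa)^{-1}\big)^{j}$; summing over $j$ gives the dressed propagator $(2\varkappa)^{-1}\big(I+B(2\varkappa)^{-1}\big)^{-1}=(2\varkappa+B)^{-1}=D_{\Sigma_L,\Sigma_R}^{-1}=K$, the kernel appearing in $\langle\cdot\rangle_{\Sigma_L,Y,\Sigma_R}$. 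Meanwhile a connected vacuum necklace of $j$ vertices contributes $\frac{(-1)^j}{2j}\tr\big((2\varkappa)^{-1}B\big)^{j}$, so the disconnected vacuum diagrams exponentiate to $\exp\big(-\tfrac12\tr\log(I+(2\varkappa)^{-1}B)\big)=\det\big(I+(2\varkappa)^{-1}B\big)^{-1/2}=\frac{\det(2\varkappa)^{1/2}}{\det(D_{\Sigma_L,\Sigma_R})^{1/2}}$. Multiplying by the prefactor $\frac{1}{\det(2\varkappa)^{1/2}}$ converts the normalization into $\frac{1}{\det(D_{\Sigma_L,\Sigma_R})^{1/2}}$, while the external contractions become $\sum_{\mathfrak{m}\in\mathfrak{M}_n}\int_{C^\circ_n(Y)}\psi\prod_{\{v_1,v_2\}\in\mathfrak{m}}K$, which is precisely $\langle\Psi\rangle_{\Sigma_L,Y,\Sigma_R}$; this proves (i).

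For (ii) I would use that the product of two functionals corresponds to the symmetric tensor product of their wave functions, so that $\big(\Psi_1 e^{-\frac12\int_Y\til\eta S_L\til\eta}\big)\odot\big(\Psi_2 e^{-\frac12\int_Y\til\eta S_R\til\eta}\big)$ equals, as a functional of $\til\eta$, the product $(\Psi_1\odot\Psi_2)\,e^{-\frac12\int_Y\til\eta(S_L+S_R)\til\eta}$. Applying $\langle\cdot\rangle_{2\varkappa}$ and invoking part (i) with $\Psi=\Psi_1\odot\Psi_2$ yields $\langle\Psi_1\odot\Psi_2\rangle_{\Sigma_L,Y,\Sigma_R}=\langle\Psi_1,\Psi_2\rangle_{\Sigma_L,Y,\Sigma_R}$, which is (ii).

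The hard part will be the analytic justification of the two resummations in the infinite-dimensional setting. Convergence of the geometric series producing the dressed propagator requires $\|(2\varkappa)^{-1}B\|<1$, and this is exactly where Assumption \ref{assump: ||delta|| <1} enters, since $\|(2\varkappa)^{-1}B\|=\frac12\|\delta_L+\delta_R\|\le\frac12(\|\delta_L\|+\|\delta_R\|)<1$. Well-definedness of the vacuum factor requires $(2\varkappa)^{-1}B$ to be trace-class, which holds by Proposition \ref{prop:delta regularity}, and I would further invoke the standard identification of the ratio of zeta-regularized determinants $\det(D_{\Sigma_L,\Sigma_R})/\det(2\varkappa)$ with the Fredholm determinant $\det\big(I+(2\varkappa)^{-1}B\big)$ for a trace-class perturbation. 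Finally I should confirm that $\Psi\,e^{-\frac12\int_Y\til\eta B\til\eta}$ genuinely lies in the completed space $H_Y$ and that the reorderings of sums (finite at each order in $\hbar$, but generally infinite at fixed $\hbar$-order) are legitimate, for which the continuity of the kernel of $B$ — a pseudodifferential operator of order $\le -2$ on the $1$-manifold $Y$ — is the key input.
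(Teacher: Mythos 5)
Your proposal is correct and follows essentially the same route as the paper's proof: expand the Gaussian insertion $e^{-\frac12\int_Y\til\eta(S_L+S_R)\til\eta}$, resum the chains of insertions into the dressed propagator $(2\varkappa+S_L+S_R)^{-1}=K$ and the closed loops into the Fredholm determinant $\det\bigl(1+(2\varkappa)^{-1}(S_L+S_R)\bigr)^{-1/2}=\bigl(\det(2\varkappa)/\det(D_{\Sigma_L,\Sigma_R})\bigr)^{1/2}$, with convergence justified by exactly the two inputs the paper uses (trace-class via Proposition \ref{prop:delta regularity} and $\|\delta^{\mathrm{tot}}\|<1$ via Assumption \ref{assump: ||delta|| <1}). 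Part (ii) is likewise obtained in the paper by setting $\Psi=\Psi_1\odot\Psi_2$ in part (i), just as you do.
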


\begin{proof} For (\ref{lemma: <>_kappa vs <>_D  (i)}), assume that $\Psi$ is given by a wave function $\psi(y_1,\ldots,y_n)$. The l.h.s. of (\ref{eq: <>_kappa vs <>_D}) evaluates to
\begin{multline*}
\frac{1}{\det(2\varkappa)^{1/2}}\sum_{\mathfrak{m}\in \mathfrak{M}_n} \int_{C^\circ_n(Y)}dy_1\cdots dy_n\;\psi(y_1,\ldots,y_n)\cdot \\
\cdot  \prod_{\{i,j\}\in\mathfrak{m}} \Big(
\sum_{k=0}^\infty  \big(-(2\varkappa)^{-1}(S_L+S_R)\big)^k (2\varkappa)^{-1}
\Big)(y_i,y_j)
\cdot \exp\Big(\sum_{p=1}^\infty \frac{1}{2p}\tr \big(-(2\varkappa)^{-1}(S_L+S_R)\big)^p \Big)
\end{multline*}
The sum over $k$ -- the ``dressed boundary propagator'' -- evaluates to 
$(2\varkappa+S_L+S_R)^{-1}=K$, the Green's function of $D_{\Sigma_L,\Sigma_R}$. The sum over $p$ evaluates to 
$$-\frac12 \mr{tr}\log (1+(2\varkappa)^{-1}(S_L+S_R))=-\frac12 \log\det (1+(2\varkappa)^{-1}(S_L+S_R))= -\frac12 \log \frac{\det (D_{\Sigma_L,\Sigma_R})}{\det (2\varkappa)}$$ 
Here  $\det(1+\cdots )$ is understood as a Fredholm determinant. 
Therefore, the l.h.s. of (\ref{eq: <>_kappa vs <>_D}) coincides with the r.h.s.

Here the convergence of the sum over $k$ and the sum over $p$, relies on two ``smallness'' properties of the operator  $\delta^\mr{tot}=(2\varkappa)^{-1}(S_L+S_R) = \frac12(\delta_L+\delta_R)$:
\begin{itemize}
\item the trace-class property $\mr{tr}\, \delta^\mr{tot}<\infty$ (we have it by Proposition \ref{prop:delta regularity}) which is needed for the individual terms in the sum over $p$ to be well-defined and
\item the property $||\delta^\mr{tot}||<1$ (implied by Assumption \ref{assump: ||delta|| <1} for $\Sigma_L,\Sigma_R$) needed for the convergence of the sums over $k$ and $p$.
\end{itemize}

Part (\ref{lemma: <>_kappa vs <>_D (ii)}) follows trivially from (\ref{lemma: <>_kappa vs <>_D  (i)}) by setting $\Psi=\Psi_1\odot\Psi_2$.
\end{proof}

\begin{proof}[Proof of Theorem \ref{thm: Zbar functoriality} 
under Assumption \ref{assump: ||delta|| <1}
]
Functoriality of $\overline{Z}$ 
restricted to $\mathbf{Riem}^2_{||\delta||<1}$ 
follows from the gluing formula have already proven (Theorem \ref{thm:ASgluing}) and from Lemma \ref{lemma: <>_kappa vs <>_D}. Indeed, for any $\Sigma$ we have 
$$\overline{Z^\tau_\Sigma} = 
e^{-\frac12 \int_{\dd \Sigma} \til\eta S(\til\eta)\,\dvol_{\dd \Sigma}}
\widehat{Z^\tau_\Sigma}$$
Therefore,
\begin{align*}
\big\langle  \overline{Z}_{\Sigma_L} &, \overline{Z}_{\Sigma_R} \big\rangle_{2\varkappa}= e^{-\frac12 \int_{Y_L} \til\eta_L S_L(\til\eta_L)\,\dvol_{Y_L}}e^{-\frac12 \int_{Y_R} \til\eta_R S_R(\til\eta_R)\,\dvol_{Y_R}} \cdot \\
&\cdot \big\langle e^{-\frac12 \int_Y \til\eta S_L(\til\eta)\,\dvol_Y } \widehat{Z}_{\Sigma_L} \, ,\,  e^{-\frac12 \int_Y \til\eta S_R(\til\eta)\,\dvol_Y } \widehat{Z}_{\Sigma_R} \big\rangle_{2\varkappa}  \\
&\underset{\mr{Lemma}\; \ref{lemma: <>_kappa vs <>_D}}{=} 
e^{-\frac12 \int_{Y_L} \til\eta_L S_L(\til\eta_L)\,\dvol_{Y_L}}e^{-\frac12 \int_{Y_R} \til\eta_R S_R(\til\eta_R)\,\dvol_{Y_R}} \big\langle \widehat{Z}_{\Sigma_L} \, ,\, \widehat{Z}_{\Sigma_R}  \big\rangle_{\Sigma_L,Y,\Sigma_R}\\
&\underset{\mr{Theorem}\; \ref{thm:ASgluing}}{=} \overline{Z}_\Sigma
\end{align*}
Here we are suppressing the tadpoles in the notations. Thus, $\overline{Z}$ satisfies the gluing formula with respect to $\langle\cdot ,\cdot\rangle_{2\varkappa}$, which proves functoriality.
\end{proof}

\begin{remark} 
The fact that in the perturbative approach we needed an additional assumption (\ref{||delta|| < 1}) on cobordisms while this assumption was not needed in the measure-theoretic approach of Section \ref{sec: proof of functoriality, measure-theoretic} can be modeled on on the following toy example. Consider a 1-dimensional Gaussian integral
\begin{align*}
\int_{-\infty}^\infty dx\, e^{-(1+C) x^2/2} &= \int dx\, e^{-x^2/2} e^{-C x^2/2} 
=  \int dx\, e^{-x^2/2}\sum_{k=0}^\infty \frac{(-C)^k}{2^k k!} x^{2k} \\
& ``=" \sum_{k=0}^\infty   \frac{(-C)^k}{2^k k!} \int dx\, e^{-x^2/2} x^{2k} 
\underset{\mr{Wick's\ lemma}}{=} \sqrt{2\pi} \sum_{k=0}^\infty   \frac{(-C)^k}{2^k k!} (2k-1)!!\\
& = \sqrt{2\pi} (1-\frac{1}{2} C+ \frac{1}{2!}\frac{1}{2}\frac{3}{2} C^2-\cdots) 
\quad =
\quad  \sqrt{2\pi} (1+C)^{-\frac12}
\end{align*}
Here the l.h.s. defined measure-theoretically makes sense for any $C>-1$ whereas the sum after $``="$ is only convergent for $-1<C<1$, i.e. an additional restriction on $C$ arises. The point here is that in the equality $``="$ we are interchanging an integral and a sum which is only valid under this additional restriction on $C$.
\end{remark}

\begin{remark}
One can remove the restrictive Assumption \ref{assump: ||delta|| <1} in the perturbative proof of functoriality by considering the following ``mixed'' picture for the pairing $\langle \cdot ,\cdot \rangle_{2\varkappa}$. One can\footnote{
This strategy is inspired by the proof of Theorem 7.3 in  \cite{AG}.
} 
split functions on $Y$ into the span of eigenfunctions of $\delta$ wth eigenvalues $\geq 1$ and the span of eigenfunctions with eigenvalues in the interval $(-1,1)$:
$$C^\infty(Y)=[C^\infty(Y)]_{||\delta||\geq 1} \oplus [C^\infty(Y)]_{||\delta||< 1}$$
Here the first term on the right is a \emph{finite-dimensional} vector space. Then, one can define the pairing  $\langle \cdot ,\cdot \rangle_{2\varkappa}$ as a combination of a finite-dimensional measure-theoretic Gaussian integral over $[C^\infty(Y)]_{||\delta||\geq 1}$ and a perturbatively defined, via Wick contractions, Gaussian integral over $[C^\infty(Y)]_{||\delta||< 1}$ where one does not have a convergence problem.
\end{remark}

\section{Discussion and outlook}
In this paper we have defined the perturbative partition function of two-dimensional scalar field theory as a formal power series, and shown that it satisfies an Atiyah-Segal type gluing relation. In particular, this shows that the perturbatively defined path integral in our model
satisfies a crucial property expected from the path integral -- a Fubini-type theorem.

To obtain this result we used gluing formulae for the zeta-regularized determinants and the Green's function of the Helmholtz operator, together with some combinatorics of Feynman diagrams. Naturally, one is led to the expectation that similar techniques will allow to prove gluing formulae for other theories. 

As explained above, the gluing pairing can be thought of as a mathematical definition of a functional integral over boundary fields. One can think of this functional integral as an expectation value with respect to a non-local boundary theory. 
A similar perspective was advocated in \cite{D1},\cite{D2}.

Similar results for 
first-order gauge theories in BV formalism have been obtained by Cattaneo, Reshetikhin and the second author
in \cite{CMRQ}. 
We choose a slightly different way to define the partition function on a manifold with boundary.\footnote{We use the (unique) harmonic extension of the boundary field, while \cite{CMRQ} uses a discontinuous extension of the boundary fields,
i.e. it drops to zero immediately outside the boundary.} However, we expect the two approaches to be ultimately equivalent. We plan to explore this relation in the future. 
%

We have also proven that the perturbative quantization in our model gives rise to a functor from the category of Riemannian 2-cobordisms to the category of Hilbert spaces and Hilbert-Schmidt operators.

The cutting-gluing formula for partition functions underlying the functoriality result relies on the careful treatment of tadpole diagrams and their interaction with locality.

The following questions naturally arise from our treatment of scalar theory.
\begin{enumerate}[I.]
\item Compare with the treatment in the first order formalism. In particular: is the non-local ``gluing theory'' on the boundary the effective theory for some local gluing theory that arises in the first order formalism? 
\item The adjusted partition function $\overline{Z}$ entering in the functorial formulation modifies the standard partition function $Z$ (corresponding to quantization with Dirichlet polarization on the boundary) by a factor $e^{\frac{1}{2\hbar}\int_{\partial \Sigma}\dvol_{\partial \Sigma}\eta \varkappa(\eta)}$. It begs an interpretation in terms of a new ``Helmholtz'' polarization imposed on the boundary, where $\partial_n\phi-\varkappa(\phi)$ is fixed on $\partial \Sigma$. This new polarization can be seen as a complex polarization on the boundary phase space, whereas Dirichlet condition gives a real polarization; the two are connected by a Segal-Bargmann transform which can be represented by a partition function of a short cylinder with Dirichlet polarization on one side and Helmholtz condition on the other side.\footnote{In the context of Chern-Simons theory, (generalized) Segal-Bargmann transform via attaching a cylinder with appropriate boundary polarizations is studied in the paper in preparation \cite{CMW}.}
\item It would be very interesting to extend our treatment of 2-dimensional scalar theory to allow cutting and gluing with corners. Correspondingly, we expect the functorial picture to generalize to a fully extended FQFT out of an appropriate Riemannian cobordism 2-category. In topological case, this formalism is known from Baez-Dolan-Lurie \cite{Baez1995},\cite{Lurie2009}. A related question is enrichment of the theory by defects supported on strata.
\item A big open problem is the compatibility of renormalization with locality in more general setting and in higher-dimensional theories. In particular, it would be natural to try to extend our treatment of scalar theory to higher dimension (but restricting the potential $p$ to be renormalizable, e.g. $p(\phi)=\phi^4$ in dimension $\leq 4$ or $p(\phi)=\phi^3$ in dimension $\leq 6$) and studying renormalization and RG flow there.
\item This paper and \cite{Iraso2019},\cite{CMRQ} suggest that there is certain algebraic structure on Feynman graphs which is responsible for Atiyah-Segal type gluing formulae for perurbative partition functions. One can consider graph-valued partition function (in the spirit of LMO invariant or Kontsevich-Kuperberg-Thurston-Lescop construction) and one expects this version of partition function to be an idempotent (or, dually, a group-like element) w.r.t. the gluing operation on graphs.\footnote{A similar problem is currently being investigated in \cite{KY}.}
\end{enumerate}

\appendix

\section{Examples}\label{App: Examples}
In this section we provide some explicit examples of determinants, tadpole functions, and gluing formulae. Even though the main focus of this paper is two-dimensional scalar field theory, we consider also 1-dimensional examples, where answers are simpler and more explicit. All the constructions in this paper are valid, with minor adjustments, also for 1-dimensional scalar field theory. 
\subsection{One-dimensional examples}
\subsubsection{Interval}
Denote $A^{DD}_{m,l} = (-\frac{\d^2}{\d x^2}+m^2)$ with Dirichlet boundary conditions.
The Green's function of the operator $A^{DD}_{m,l}$ can be explicitly computed and yields 
\begin{equation*}
G(x,y) = \frac{1}{m}\frac{\sinh mx \sinh m(l-y)\theta(y-x)+\sinh m (l-x)\sinh my \theta (x-y)}{\sinh ml },
\end{equation*}
where $\theta$ is the Heaviside function, which leads to the tadpole function\footnote{Here $\theta(0) = 1/2$. Notice that the Green's function is actually continuous across the diagonal.  }
\begin{equation*}
\tau(x) = G(x,x) = \frac{\sinh mx \sinh m (l-x)}{m\sinh ml}. 
\end{equation*}
The zeta-regularized determinant of $A$ can be computed\footnote{See e.g. \cite[Example 3, p.220]{Q1993} for a derivation.} as
\begin{equation*}
\det A^{DD}_{m,l} = \frac{2 \sinh ml }{m}.
\end{equation*}
Notice that in the limit as $m \to 0$ we obtain $2l$ - for Dirichlet boundary conditions, the operator $A^{DD}_{0,L}$ has no kernel and we obtain its nonzero determinant.  In particular, we see that the tadpole is consistent (in the weak sense, Definition \ref{def:compatibility}) with zeta-regularization: We have 
\begin{equation*}
\frac{\d}{\d m^2} \log \det (A_{m,l}^{DD}) = 
\frac{1}{4m}\left(\frac{l\cosh ml}{\sinh ml} - \frac{1}{m}\right) = \int_0^l \tau(x)dx.
\end{equation*}
Now consider the gluing of the two intervals $I_{l_1} = [0,1_1]$ and $I_{l_2} = [l_1,l_1+l_2]$ over the the point $Y = \{l\}$. Then the Dirichlet-to-Neumann operator along $Y$ is given by 
$$D_{I_{l_1},Y,I_{l_2}}\colon \eta \mapsto m (\coth ml_1 + \coth ml_2)\eta =m\frac{\sinh m(l_1 +l_2)}{\sinh ml_1 \sinh ml_2}\eta.$$ Then we compute 
\begin{align*}&\det(A^{DD}_{ml_1})\det(A^{DD}_{ml_2})\det\frac12 D_{I_{l_1},Y,I_{l_2}} \\
&= \left(\frac{2}{m}\sinh ml_1\right)\left(\frac{2}{m}\sinh ml_2 \right)\frac12\left(m \coth ml_1 + \coth ml_2\right)  \\
&= \frac{2}{m}\sinh m(l_1 + l_2) = \det A^{DD}_{m,l_1+l_2}.
\end{align*}
The factor $\frac12$ which appears here - in contrast to the gluing formula \eqref{eq:gluing_zeta_determinants} -  arises because we are gluing 1-dimensional determinants. It is a correctional factor in the gluing formula for determinants that is present in \emph{odd} dimensions,\footnote{This factor hints at the fact that in odd dimensions one should normalize the ``measure'' of the path integral accordingly. The correct normalization of the path integral on spacetimes with boundaries should be such that it is compatible with gluing. A similar discussion for factors in gluing of partition functions in abelian BF theory is given in \cite{CMRQ}.} see \cite{Lee2}. Finally, let us consider the gluing of the tadpole function. We will check that for $x < l_1$, 
$ \tau_{l_1 +l_2}(x) = \tau_{l_1}(x) * \tau_{l_2}(x)$ (the other cases are similar). 
Indeed, the left hand side is 
$$\tau_{l_1 +l_2}(x) = \frac{\sinh mx \sinh m(l_1 +l_2 - x)}{m\sinh m(l_1+l_2)}$$ while the right hand side is 
\begin{align*}
\tau_{l_1}(x) &+ \left(\left.\frac{\d}{\d y}\right|_{y=l_1}G(x,y)\right)^2 D_{I_{l_1},Y,I_{l_2}}^{-1} \\ 
&= \frac{\sinh mx \sinh m (l_1 -x)}{m\sinh m l_1} + \left(\frac{\sinh mx}{\sinh ml_1}\right)^2\frac{\sinh ml_1\sinh ml_2}{m\sinh m(l_1 + l_2)} \\
&  =\left( \frac{\sinh mx}{m}\right)\frac{\sinh m(l_1-x)\sinh m(l_1+l_2)+ \sinh mx \sinh m l_2}{\sinh ml_1\sinh m(l_1 +l_2)} = \tau_{l_1+l_2}(x).
\end{align*}
\subsubsection{Circle}
Consider a circle of length $l$, and let $A_{m,l} = -\d^2/\d x^2 + m^2$. The spectrum of this operator is $\lambda_k = (2\pi k/l)^2 + m^2, k \in \mathbb{Z}.$ Then one can compute the determinant as 
\begin{equation}
\det A_{m,l} = 4 \sinh^2 \frac{ml}{2}.\label{eq:detAcircle}
\end{equation} 
For $x,y \in \R$ denote $d(x,y) =   l\left\lbrace\frac{x-y}{l}\right\rbrace$, then the Green's function can be expressed as  
 \begin{equation*}
G(x,y) = \frac{1}{2m}\frac{\cosh m (d(x,y)-l/2)}{\sinh \frac{ml}{2}}
\end{equation*}
and the tadpole function is 
\begin{equation*}
\tau(x) = G(x,x) = \frac{1}{2m}\coth \frac{ml}{2}.
\end{equation*}
Again, one immediately verifies that the tadpole function is (weakly) consistent with zeta-regularization, namely 
$$\frac{1}{2m}\frac{\d}{\d m}\log \det A_{m,l} = \frac{2}{2m}\frac{\d}{\d m}\log \sinh \frac{ml}{2} = \frac{l}{2m}\coth \frac{ml}{2}=\int_{S^1}\tau(x)\,dx$$
Next, we want to glue a circle out of two arcs $I_1,I_2$ of length $l_1,l_2$ along the interface $
Y = \{p,q\}$ ( see Figure \ref{fig:gluingcircle}). 
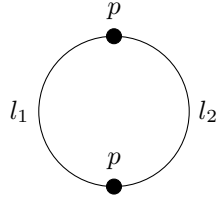
\begin{figure}[h]
\centering
\begin{tikzpicture}
\node[bulk,label=above:$p$] (p) at (0,1) {}; 
\node[bulk,label=above:$p$] (q) at (0,-1) {}; 
\draw (0,0) circle (1cm);
\node[coordinate,label=left:{$l_1$}] at (-1,0) {};
\node[coordinate,label=right:{$l_2$}] at (1,0) {};
\end{tikzpicture}
\caption{Gluing a circle from two intervals of length $l_1,l_2$. }\label{fig:gluingcircle}
\end{figure} The corresponding Dirichlet-to-Neumann operator is the sum of the two operators:
\begin{equation*}
D_N (\eta_p,\eta_q) = m\begin{pmatrix}\coth ml_1 + \coth ml_2 & - \left(\frac{1}{\sinh ml_1} + \frac{1}{\sinh ml_2}\right) \\
- \left(\frac{1}{\sinh ml_1} + \frac{1}{\sinh ml_2}\right) & \coth ml_1 + \coth ml_2\end{pmatrix}\begin{pmatrix}
\eta_p \\
\eta_q
\end{pmatrix}
\end{equation*}
and straightforward computation shows that its determinant is 
\begin{equation*}
\det D_N = \frac{4m^2}{\sinh ml_1 \sinh ml_2}\sinh^2\frac{m(l_1 + l_2)}{2}
\end{equation*}
Therefore, we obtain that the product of determinants is 
\begin{align*}
&\det A^{DD}_{m,l_1}\det A^{DD}_{m,l_1}\det\frac{1}{2} D_N \\
&= \frac{2\sinh ml_1}{m}\frac{2\sinh ml_2}{m}\frac{m^2}{\sinh ml_1 \sinh ml_2}\sinh^2\frac{m(l_1 + l_2)}{2} \\
&= 4\sinh^2 \frac{m(l_1+l_2)}{2} =  \det A_{m,l_1+l_2}
\end{align*}
where again, the factor $\frac12$ in the Dirichlet-to-Neumann operator turns out to be correct\footnote{The general formula says that the prefactor is $e^{\log 2(\zeta_{\Delta_Y}(0) + \dim\ker \Delta_Y)}$, where $\Delta_Y$ is the Laplacian on the interface. If $\dim Y = 0 $ we have $\Delta_Y=0$, and hence $\zeta_{\Delta_Y} = 0$ and $\dim \ker \Delta_Y = \dim C^{\infty}(Y) = |Y|$. In this case $|Y| =2$ - hence the prefactor $4$. } according to the gluing formula \cite{Lee2}. \\
Let us also check the gluing of tadpoles. Again, we will check the case where $x \in I_1$. Then, the gluing formula for the tadpole reads 
\begin{equation*}
\tau_{I_1} * \tau_{I_2}(x) = \tau_{I_1}(x) + \begin{pmatrix}-\frac{\d}{\d\nu}G_{I_1}(x,p) & \frac{\d}{\d\nu}G_{I_1}(x,q)\end{pmatrix}D_N^{-1}\begin{pmatrix}
-\frac{\d}{\d\nu}G_{I_1}(x,p) \\
\frac{\d}{\d\nu}G_{I_1}(x,q))
\end{pmatrix}
\end{equation*}
where matrix multiplication replaces integration.
The inverse of the Dirichlet-to-Neumann operator can be explicitly computed and yields 
$$D_N^{-1} = \begin{pmatrix}\frac{1}{2} \coth m(l_1 + l_2)/2 &  \frac{\sinh ml_1 + \sinh ml_2}{4\sinh^2m(l_1 + l_2)/2}  \\
\frac{\sinh ml_1 + \sinh ml_2}{4\sinh^2 m(l_1 + l_2)/2} & \frac{1}{2}\coth m(l_1 + l_2)/2
\end{pmatrix} $$
A straightforward computation then shows 
\begin{align*}
\tau_{I_1} * \tau_{I_2}(x) &= \frac{\sinh mx \sinh m(l_1 -x) }{m \sinh ml_1} \\
&+ \frac{1}{4\sinh ml_1}\left(2 \cosh m(l_1 -2x) + (\cosh ml_1 - \cosh ml_2 )\sinh^{-2}\frac{m(l_1+l_2)}{2}\right)  \\
&= \frac{1}{2m}\coth\frac{m(l_1 + l_2)}{2} = \tau(x)
\end{align*}

\subsection{Two-dimensional examples}
Now let us turn to two-dimensional examples. The main tool that we will use is the heat kernel of the Laplacian $K_\Delta(t,x,y)$ and the heat kernel for the corresponding Helmholtz operator, $K_{\Delta + m^2} = e^{-m^2t}K_{\Delta}(t,x,y)$. We recall some formulae for heat kernels of standard metrics. On the real line, the heat kernel is 
\begin{equation*}
K_\Delta^{\R}(t,x,y) = \frac{1}{\sqrt{4\pi t}}\, e^\frac{-(x-y)^2}{4t}.
\end{equation*}
From this, one can infer the heat kernel on the circle of length $L$ through periodic summation: 
\begin{equation}
K_\Delta^{S^1}(t,x,y) = \frac{1}{\sqrt{4\pi t}}\sum_{k=-\infty}^\infty e^{\frac{-(x-y-kL)^2}{4t}} \label{eq:heat_circle}
\end{equation}
and the heat kernel on an interval of length $L$ with Dirichlet boundary conditions (through image charges): 
\begin{equation}
K^{DD}_\Delta(t,x,y) = \frac{1}{\sqrt{4\pi t}}\sum_{k=-\infty}^\infty e^{\frac{-(x-y-2L)^2}{4t}}  -e^{\frac{-(x+y-2kL)^2}{4t}} \label{eq:heat_interval}
\end{equation}
In addition, we recall the fact that the heat kernel of the Laplacian of a product metric is the product of the heat kernels of the Laplacians associated to the two metrics. 
\subsubsection{Torus}
First, we consider a torus $T\equiv T_{L_1,L_2}$ of circumferences $L_1$ and $L_2$. Then the heat kernel is given by 
\begin{equation*}
K^T_{\Delta}(t,(x,x'),(y,y')) = \frac{1}{4\pi t}\sum_{k,l=-\infty}^\infty e^{\frac{-(x-x'-kL_1)^2}{4t}}e^{\frac{-(y-y'-lL_2)^2}{4t}}
\end{equation*}
Its restriction to the diagonal reads
\begin{equation*}
\theta^T_{\Delta}(t,(x,y)) \equiv \theta^T_\Delta(t) = \frac{1}{4\pi t}\sum_{k,l=-\infty}^\infty e^{\frac{-(kL_1)^2}{4t}}e^{\frac{-(lL_2)^2}{4t}}
\end{equation*}
which is conveniently expressed in terms of the Jacobi theta function
\begin{equation}
\vartheta(z,\tau) = \sum_{k=-\infty}^{\infty}\exp(\pi i k^2 \tau + 2\pi i k z) \label{eq:deftheta}
\end{equation}
as 
\begin{equation*}
\theta^T_{\Delta}(t,(x,y)) \equiv \theta^T_\Delta(t) = \frac{1}{4\pi t}\vartheta\left(0,\frac{iL_1^2}{4\pi t}\right)\vartheta\left(0,\frac{iL_2^2}{4\pi t}\right).
\end{equation*}
The heat kernel of $A = \Delta + m^2$ is then given by 
$$\theta^T_A(t) = \frac{e^{-m^2t}}{4\pi t}\vartheta\left(0,\frac{iL_1^2}{4\pi t}\right)\vartheta\left(0,\frac{iL_2^2}{4\pi t}\right). $$
Extracting the divergence at $t=0$, we write 
\begin{equation*}
\theta^T_A(t) = \frac{e^{-m^2t}}{4\pi t} + e^{-m^2t}h(t) 
\end{equation*}
where $h(t) = \frac{1}{4\pi t}\left(\vartheta\left(0,\frac{iL_1^2}{4\pi t}\right)\vartheta\left(0,\frac{iL_2^2}{4\pi t}\right)-1\right)$ falls off like $e^{-C/t}$ as $t \to 0$.
The local zeta function of $A$ is the Mellin transform of this object: 
\begin{equation*}
\zeta_A^{loc}(s) = \frac{1}{\Gamma(s)}\int_0^\infty t^{s-1}\frac{e^{-m^2t}}{4\pi t}(1 + h(t))dt
\end{equation*}
For $\operatorname{Re} s >1$, the first term is given by 
$$\frac{1}{\Gamma(s)}\int_0^\infty t^{s-1}\frac{e^{-m^2t}}{4\pi t} = \frac{\Gamma(s-1)}{4\pi\Gamma(s)m^{2(s-1)}} = \frac{1}{4\pi (s-1) m^{2(s-1)}}.$$
The second integral can be explicitly given in terms ofa modified Bessel function of the second kind $K_s(x)$, thus the analytically continued zeta function is 
$$\zeta_A(s)^{loc} = \frac{1}{4\pi (s-1) m^{2(s-1)}} + \frac{m^{1-s}}{2\pi\Gamma(s)}\sum_{k,l \neq 0}b_{k,l}^{s-1}K_{1-s}(2 m b_{k,l})$$ 
where $b_{k,l} = \sqrt{k^2L_1^2 + l^2L_2^2}$. The first term has a pole at $s=1$, while the second term is an entire function of $s$. The tadpole function is the finite part of $\zeta_A(s)^{loc}$ at $s=1$: 
\begin{equation*}\tau^{reg} = -\frac{\log m^2}{4\pi} + \frac{1}{2\pi}\sum_{k,l\neq 0}K_0(2mb_{k,l})
\end{equation*}
and the logarithm of the zeta-regularized determinant is 
\begin{equation*}
\log \det A = \zeta'_A(0) = \frac{L_1L_2}{4 \pi}m^2(\log m^2 -1) + \frac{mL_1L_2}{2\pi}\sum_{k,l\neq 0}\frac{K_1(2mb_{k,l})}{b_{k,l}}.
\end{equation*}
One immediately verifies $\frac{\d}{\d m^2}\log \det A = -\int_T\tau^{reg}\,\dvol_T$ from the relation $\frac{\d}{\d x}x^s K_s(x) = -x^sK_{s-1}(x)$. 
\subsubsection{Cylinder}\label{sec:example cylinder}
Let us consider a cylinder $C$ of circumference $L$ and height $H$. Then, the heat kernel is given by 
\begin{equation*}
K^C_\Delta(t,(x,y),(x',y')) = K_{\Delta}^{S^1}(t,x,x')K_{\Delta}^{DD}(t,y,y'). 
\end{equation*}
The restriction to the diagonal is given by, restricting to the diagonal in \eqref{eq:heat_circle},\eqref{eq:heat_interval},  
\begin{equation} 
\theta_{\Delta}^C(t,(x,y)) = \frac{1}{4\pi t}\sum_{k=-\infty}^{\infty}e^{-\frac{(2kL)^2}{4t}}\sum_{l=-\infty}^\infty e^{\frac{-(lH)^2}{4t}}  -e^{\frac{-(2y-2lH)^2}{4t}}\label{eq:heat_cyl1}
\end{equation}
Using the Jacobi theta function \eqref{eq:deftheta} we can express \eqref{eq:heat_cyl1} as 
\begin{equation}
\theta_{\Delta}^C(t,(x,y)) = \frac{1}{4\pi t}\vartheta\left(0,\frac{iL^2}{4\pi t}\right)\left(\vartheta\left(0,\frac{iH^2}{\pi t}\right) - e^{\frac{-y^2}{t}}\vartheta\left(\frac{yH}{\pi i t },\frac{iH^2}{\pi t}\right)\right)\label{eq:heat_cyl_theta1}
\end{equation}
Recall that the theta function has the modular transform $\vartheta(z/\tau,-1/\tau) = \alpha \vartheta (z,\tau)$, where $\alpha = (-i\tau)^\frac12 \exp(\pi i z^2 /\tau)$.  
Setting $\tau = \frac{4\pi i t}{H^2}, z= \frac{y}{H}$, we can rewrite \eqref{eq:heat_cyl_theta1} as 
\begin{equation}
\theta_{\Delta}^C(t,(x,y)) = \frac{1}{2LH}\vartheta\left(0,\frac{4\pi i t}{L^2}\right)\left(\vartheta\left(0,\frac{\pi i t}{H^2}\right) - \vartheta\left(\frac{y}{H},\frac{\pi i t}{H^2}\right)\right)\label{eq:heat_cyl_theta2}
\end{equation}
Integrating over $(x,y) \in C$ we obtain
$$\Theta^C_\Delta(t):=\int_C*\theta_\Delta^C = \frac{1}{2}\vartheta\left(0,\frac{4\pi i t}{L^2}\right)\left(\vartheta\left(0,\frac{\pi i t}{H^2}\right) - 1\right)$$
which we rewrite using the modular transform as 
\begin{equation}\Theta^C_\Delta(t) = \frac{LH}{4\pi t}\vartheta\left(0,\frac{i L^2}{4\pi t}\right)\left(\vartheta\left(0,\frac{iH^2}{\pi t}\right) - \frac{\sqrt{\pi t}}{H}\right)= \frac{LH}{4\pi t} - \frac{L}{4\sqrt{\pi t}} + h(t)\label{eq:Theta_decomp}\end{equation}
where 
$$h(t) =  \frac{LH}{4\pi t}\left(\vartheta\left(0,\frac{i L^2}{4\pi t}\right)\vartheta\left(0,\frac{iH^2}{\pi t}\right)-1\right) - \frac{L}{4\sqrt{\pi t}}\left(\vartheta\left(0,\frac{i L^2}{4\pi t}\right)-1 \right) $$
satisfies $h(t) \simeq e^{-C/t}/t$ as $t \to 0$. 
Now consider the operator $\A := \Delta_C + m^2$, it heat kernel is given by $\theta_\A^C = e^{-m^2t}\theta_\Delta$. The local zeta function of $A$ is its Mellin transform, 
$$\zeta_\A(s,(x,y)) = \frac{1}{\Gamma(s)}\int_0^\infty t^{s-1}e^{-m^2t}\theta_\Delta dt.$$ 
To investigate its behavior at $s=0,1$, we define $g(t)= e^{-m^2t}/(4\pi t)$ and then write 
$$\zeta(s,(x,y)) = \frac{1}{\Gamma(s)}\int_0^\infty t^{s-1}\frac{e^{-m^2t}}{4\pi t}dt +   \frac{1}{\Gamma(s)}\int_0^\infty t^{s-1}e^{-m^2t}\left(\theta_\Delta-\frac{1}{4\pi t}\right)dt.$$ 
The second integral converges absolutely at $s=1$. The first integral can be explicitly computed (for $\operatorname{Re} s >1$) and yields 
\begin{equation*}\frac{1}{\Gamma(s)}\int_0^\infty t^{s-1}\frac{e^{-m^2t}}{4\pi t}dt = \frac{\Gamma(s-1)}{4\pi\Gamma(s)m^{2(s-1)}} = \frac{1}{(s-1)}\cdot\frac{1}{4\pi m^{2(s-1)}}
\end{equation*} 
The zeta-regularized tadpole function is given by 
\begin{align}
\tau_\A^{reg}(x,y) & = \lim_{s\rightarrow 1} \frac{\d}{\d s}(s-1)\zeta_\A(s,(x,y)) \notag \\
&= - \frac{\log m^2 }{4\pi }+  \frac{1}{\Gamma(s)}\int_0^\infty t^{s-1}e^{-m^2t}\left(\theta^C_\Delta-\frac{1}{4\pi t}\right)dt \label{eq:cyl_tadpole}
\end{align}
which $\theta^C_\Delta$ given by \eqref{eq:heat_cyl_theta1}. The zeta function of $\A$ is $\zeta_\A(s) = \frac{1}{\Gamma(s)}\int_0^\infty t^{s-1}e^{-m^2t}\Theta_\Delta dt $ and using the decompositon \eqref{eq:Theta_decomp} we write 
\begin{equation*}
\zeta_\A(s) = \frac{LH}{(s-1)}\cdot\frac{1}{4\pi m^{2(s-1)}} - \frac{L}{4\sqrt{\pi}}\cdot\frac{\Gamma(s-1/2)}{\Gamma(s)\,m^{2s-1}} + \frac{1}{\Gamma(s)}\int_0^\infty t^{s-1}e^{-m^2t}h(t)dt
\end{equation*}
Here, the last integral converges absolutely for any $s\in \C$. The logarithm of the zeta-regularized determinant is given by 
\begin{equation*}
\log \det\A = \zeta_\A'(0) = \frac{LH}{4\pi}m^2(\log m^2 -1) + \frac{L m}{2 } + \int_0^\infty t^{-1}e^{-m^2t}h(t)dt
\end{equation*}
\textbf{Gluing two cylinders into a cylinder.} 
Next, we will investigate gluing of two cylinders into a longer cylinder. For this, consider the Dirichlet-to-Neumann operator $D_H$, say, on the lower end of a cylinder of circumference $L$ and height $H$, with Helmholtz operator $A_H$ (Circumference $L$ and mass $m$ are fixed). It is easiest to determine the Dirichlet-to-Neumann operator by its action on the basis of $C^\infty(S^1)$ given by ${\eta}_n(x) = \exp(2\pi i n x/L)$. The unique function  $\phi_{\eta_n}(x,y)$ satisfying $A_{H}\phi_{\eta_n}=0$, $\phi_{\eta_n}(x,0) = \eta_n(x)$ and $\phi_{\eta_n}(x,H) = 0$ is 
\begin{equation*}
\phi_{\eta_n}(x,y) = \eta_n(x)\frac{\sinh (H-y)\omega_n}{\sinh H\omega_n},
\end{equation*} 
where $\omega_n = \sqrt{m^2 + (2\pi n/L)^2}$. Taking the $y$ derivative at $y=0$ we find 
\begin{equation*}
D_H(\varphi_{\eta_n}) = \varphi_{\eta_n}\omega_n\coth H\omega_n. 
\end{equation*}
When gluing two cylinders, the relevant Dirichlet-to-Neumann operator is $D_{H_1} + D_{H_2}$, which has eigenvalues $\lambda_n = \omega_n(\coth H_1 \omega_n + \coth H_2 \omega_n)$. To compute its zeta-regularized determinant, recall that $\det_\zeta(K_1K_2) = \det_\zeta K_1\det(K_2)$ if $K_2$ 
is the identity plus a trace class operator (and hence has a well-defined Fredholm determinant). 
In this example, we let $K_2\colon\eta_n \mapsto \frac{1}{2}(\coth H_1 \omega_n + \coth H_2 \omega_n)\eta_n$, which 
is identity plus trace class 
and $K_1\colon\eta_n \mapsto 2\omega_n \eta_n$. The zeta determinant of $K_1$ is the square root of the zeta-determinant of the Helmholtz operator\footnote{Taking squares commutes with zeta-regularized products. On the other hand, multiplying all terms by a constant $a$ multiplies the zeta-regularized product $a^{\zeta(0)}$, where $\zeta$ is the zeta function of the sequence. One can check that in this case $\zeta(0) = 0$, using e.g. the results of \cite{Q1993}.} on $S^1$ given in \eqref{eq:detAcircle}: $\det K_1 = 2\sinh mL/2$. Thus, 
\begin{equation*}
\det 
(D_{H_1} + D_{H_2}) = 2\sinh \frac{mL}{2}\; \prod_{n \in \mathbb{Z}}\frac{ \coth H_1\omega_n + \coth H_2\omega_n}{2}. 
\end{equation*}
The gluing formula for zeta-regularized determinants thus implies the interesting identity 
\begin{align*}
\log \det A_{H_1+ H_2}& - \log \det A_{H_1} - \log \det A_{H_2} \\
&= - \frac{Lm}{2} + \int_0^\infty t^{-1}e^{-m^2t}(h_{H_1+H_2}(t) - h_{H_1}(t) - h_{H_2}(t))dt \\
&= \log \det  (D_{H_1} + D_{H_2}) \\
&= 
\log \left(2\sinh \frac{mL}{2}\right) +
 \sum_{n \in \mathbb{Z}}\log\left(\frac{\coth H_1 \omega_n + \coth H_2 \omega_n}{2}\right)
\end{align*}
and one can check numerically that his formula holds. \\
We can extend this numerical check to tadpoles. The value of the glued tadpole $\tau_{H_1} * \tau_{H_2}$ at some point $(x,y)\in C_{H_1}$ is $$\tau_{H_1}(x,y) + \int_{S^1 \times S^1}  \partial_\nu G((x,y),(x',0))(D_{H_1}+D_{H_2})^{-1}(x',x'')\partial_\nu G((x,y),(x'',0))dx'dx''. $$
Here the tadpole function on the cylinder is given by \eqref{eq:cyl_tadpole}; $\partial_\nu G$ is the normal derivative of the Green's function in the second argument.
To compute the second term, we expand the normal derivative in Fourier modes 
$$\partial_\nu G ((x,y),(x',0)) = \int_0^\infty dt\frac{e^{-m^2t}}{\sqrt{4\pi t^3}L}\left(\sum_{n=-\infty}^\infty e^{-\frac{4t\pi^2n^2}{L^2}}\varphi_n(x')\right)g(t,y),$$
where $$g(t,y) = e^{-\frac{y^2}{4t}}\left(y\,\vartheta\left(\frac{yH}{\pi i t },\frac{iH^2}{\pi t}\right)-\frac{2H_1}{2\pi i}\vartheta'\left(\frac{yH}{\pi i t },\frac{iH^2}{\pi t}\right)\right).$$
The inverse of the Dirichlet-to-Neumann operator is given by $$\varphi_n(x)\mapsto (\omega_n(\coth(H_1\omega_n)+\coth(H_2\omega_n)))^{-1}\varphi_n(x)$$
 so that we obtain 
\begin{multline*}
\tau_{H_1}*\tau_{H_2} = \tau_{H_1} + \\
 + \int_{[0,\infty]^2}dtdu\frac{e^{-m^2(t+u)}}{\sqrt{4\pi u^3}\sqrt{4\pi t^3} L^2}\sum_{n=-\infty}^\infty\frac{e^{-\frac{4(t+u)\pi^2n^2}{L^2}}}{\omega_n(\coth (H_1\omega_n) + \coth(H_2\omega_n))}g(t,y)g(u,y).
\end{multline*}
Again one can check numerically that this equals $\tau_{H_1 + H_2}(y)$.

\subsubsection{Sphere}

Consider a sphere of radius $R$.

\textbf{Green's function.}
The Green's function for the Helmholtz operator on a sphere is:
\begin{equation}\label{G_sphere}
G_{S^2}(x,y) = \frac{1}{4\cos \pi \big(\frac14-(mR)^2\big)^{\frac12}}\cdot   {}_2F_1\Big(\alpha_1,\alpha_2;1;\cos^2\frac{d(x,y)}{2R}\Big)
\end{equation}
with $\alpha_{1,2}$ the roots of the quadratic equation $\alpha^2-\alpha+(mR)^2=0$ 
and $d(x,y)$ is the geodesic distance in the sphere metric; ${}_2F_1$ is the hypergeometric function. 

Note that the Green's function on a hemisphere can be obtained from (\ref{G_sphere}) by the image charge method: 
$$G_{H^+}(x,y)=G_{S^2}(x,y)-G_{S^2}(x,\hat{y})$$ 
where  $\hat{y}$ is the reflection of $y$ through the equatorial plane.

\begin{remark}\label{rem: sphere propagator m->0 limit}
The asymptotics $m\ra 0$  of the Green's function (\ref{G_sphere}) reads
$$
G_{S^2}(x,y)\quad \underset{m\ra 0}{\sim}\quad  \frac{1}{\mr{Area}(S^2)\cdot 
m^2}\underbrace{-\frac{1}{2\pi}\left(\log\sin\frac{d(x,y)}{2R}+\frac12\right)}_{G^\mr{fin}}+\mr{O}(m^2R^2)
$$
The finite part $G^\mr{fin}$ here is the propagator of the massless scalar theory on $S^2$. It satisfies 
\begin{equation}\label{Delta G^fin}
\Delta G^\mr{fin}=\delta(x,y)-c
\end{equation} 
with the constant $c=\frac{1}{\mr{Area}(S^2)}$.\footnote{
The constant shift by $-c$ is related to the zero-mode of $\Delta$. More precisely, the integral operator defined by $G^\mr{fin}$ inverts $\Delta$ only on the orthogonal complement of constant functions and vanishes on constant functions.
} 

 In fact, as one can show from examining the $t\ra\infty$ asymptotics of the heat kernel, this behavior is universal: for any surface $\Sigma$, the asymptotics $m\ra 0$ of the Green's function is 
$ G(x,y)\;\underset{m\ra 0}{\sim} \frac{1}{\mr{Area}(\Sigma)\cdot m^2}+G^\mr{fin}(x,y)+\mr{O}(m^2)$ with $G^\mr{fin}(x,y)$ some function satisfying the  (\ref{Delta G^fin}) with $c=\frac{1}{\mr{Area}(\Sigma)}$.
\end{remark}

\textbf{Tadpole.}
Next, note that the zeta-regularized tadpole on the sphere can be computed from $y\ra x$ asymptotics of the Green's function (\ref{G_sphere}) (by first calculating the point-splitting tadpole and then using Corollary \ref{corollary:tadpole difference}) and yields
\begin{equation}\label{sphere tadpole}
\tau^\mr{reg}=\frac{\log R^2-\psi(\alpha_1)-\psi(\alpha_2)}{4\pi}
\end{equation}
where $\psi(z)=\frac{d}{dz}\log \Gamma(z)$ is the digamma function.

\textbf{Determinant.}
Helmholtz operator on the sphere of radius $R$ has eigenvalues $\lambda_l=\frac{l(l+1)}{R^2}+m^2$ with multiplicities $2l+1$, for $l=0,1,2,\ldots$. The corresponding zeta-regularized determinant is:
\begin{equation}\label{det sphere}
\begin{aligned}
\det(\Delta+m^2) &= \Big(
\prod_{l\geq 0} \left(
\frac{l(l+1)}{R^2}+m^2
\right)^{2l+1}\Big)_\mr{reg} \\
&= R^{-2 \zeta_{\bar{A}}(0)} \Big(\prod_{l\geq 0} (l(l+1)+m^2 R^2)^{2l+1}\Big)_\mr{reg} \\
&= R^{-2 (\frac13-m^2R^2)} \mathbb{F}(m^2R^2)
\end{aligned}
\end{equation}
Here we denoted $\bar{A}=R^2(\Delta+m^2)$ the rescaled Helmholtz operator; we denoted its regularized determinant by  $\mathbb{F}(m^2R^2)$. We are using the property of zeta-regularized determinants $\det(c \bar{A})=c^{\zeta_{\bar{A}}(0)}\det(\bar{A})$. The value $\zeta_{\bar{A}}(0)=\frac13-m^2R^2$ is calculated straightforwardly.\footnote{
Indeed, for $\mr{Re}(s)>1$ one has $\zeta_{\bar{A}}(s)=\frac{1}{\Gamma(s)}\int_0^\infty dt\, t^{s-1}e^{-m^2 R^2 t} f(t)$, where $f(t)=\sum_{l=0}^\infty {(2 l+1)e^{-l(l+1)t} }$. Using Euler-Maclaurin formula, one obtains the asymptotics $f(t)\sim \frac{1}{t}+\frac13+\mathrm{O}(t)$ at $t\ra 0$. Thus, the analytic continuation of $\zeta_{\bar{A}}(s)$ to $s=0$ is:
$\zeta_{\bar{A}}(0)=\lim_{s\ra 0} \frac{1}{\Gamma(s)}\int_0^\infty dt\,t^{s-1} e^{-m^2 R^2 t}(f(t)-\frac{1}{t}-\frac13)+\frac{(m^2 R^2)^{1-s}}{s-1}+\frac13 (m^2 R^2)^{-s} =\frac13-m^2 R^2 $.
}

In fact, we can determine the function $\mathbb{F}(z)$ in (\ref{det sphere}) explicitly from knowing the zeta-regularized tadpole on the sphere (\ref{sphere tadpole}), by integrating the tadpole against mass (cf. Corollary \ref{corollary: tau via d/dm^2 log det}). Indeed, setting $R=1$, we obtain
\begin{equation}\label{FF}
\begin{aligned}
\log\mathbb{F}(z)&=\int^z dm^2\, 4\pi\, \tau(m^2)\big|_{R=1} \\
& =C+\log z- \int_0^{z} dm^2\, \left(\psi(\alpha_1)+\psi(\alpha_2)+\frac{1}{m^2}\right)
\end{aligned}
\end{equation}
where $\alpha_{1,2}=\frac12\pm \sqrt{\frac14-m^2}$ are the roots of the equation $\alpha^2-\alpha+m^2=0$. The constant in (\ref{FF}) is 
\begin{equation*}
C=\log {\det}'\Delta\big|_{R=1} = \frac12-4\zeta'(-1)
\end{equation*}
-- the logarithm of the determinant of the Laplacian on the unit sphere (with zero-mode excluded), for which we quote the result from \cite{kumagai1999determinant}.

Function (\ref{FF}) has the following asymptotics at $z\ra 0$ and at $z\ra \infty$:
$$
\log\mathbb{F}(z)\underset{z\ra 0}{\sim} \log z+C+\mathrm{O}(z),\qquad
\log\mathbb{F}(z)\underset{z\ra \infty}{\sim} -z \log z+z+\frac13 \log z 
+ o(1)
$$

The integral (\ref{FF}) can be evaluated in terms of Barnes $G$-function (a.k.a. ``double Gamma function''), yielding
\begin{equation*}
\log \mathbb{F}(z)=C-2z-\log\left(\frac{1}{\pi}\cos\pi\Big(\frac14-z\Big)^{\frac12}\right) + 2\log\left(G(\alpha_1)G(\alpha_2)\right)\Big|_{\alpha_{1,2}=\frac12\pm\big(\frac14-z\big)^{\frac12}}
\end{equation*}

Putting together the result for the determinant, we have:
\begin{lemma} The zeta-regularized determinant of the Helmholtz operator $\Delta+m^2$ on the sphere of radius $R$ is:
\begin{equation*}
\begin{aligned}
\det(\Delta+m^2)&= e^{\frac12-4\zeta'(-1)} R^{-2(\frac13-m^2R^2)}\cdot \\ 
&\cdot\frac{\pi e^{-2 m^2R^2}}{ \cos\pi\Big(\frac14-m^2 R^2\Big)^{\frac12} } \;\; 
G(\alpha_1)^2G(\alpha_2)^2 \Big|_{\alpha_{1,2}=\frac12\pm\big(\frac14-m^2R^2\big)^{\frac12}}
\end{aligned}
\end{equation*}
\end{lemma}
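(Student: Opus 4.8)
The plan is to reduce the evaluation of $\det(\Delta+m^2)$ to a single integration of the zeta-regularized tadpole in the mass parameter, using the spectral data of the sphere and the compatibility of $\tau^\mr{reg}$ with zeta-regularization. First I would record the spectrum $\lambda_l=\frac{l(l+1)}{R^2}+m^2$ with multiplicities $2l+1$ and isolate the $R$-dependence by rescaling: writing $\bar A=R^2(\Delta+m^2)$ and using the scaling law $\det(c\bar A)=c^{\zeta_{\bar A}(0)}\det(\bar A)$, the determinant factorizes as in (\ref{det sphere}) into the elementary prefactor $R^{-2\zeta_{\bar A}(0)}$ and the dimensionless regularized product $\mathbb{F}(m^2R^2)$. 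The only input needed here beyond bookkeeping is the value $\zeta_{\bar A}(0)=\tfrac13-m^2R^2$, which I would obtain by a standard Mellin-transform argument: write $\zeta_{\bar A}(s)=\frac{1}{\Gamma(s)}\int_0^\infty dt\,t^{s-1}e^{-m^2R^2t}f(t)$ with $f(t)=\sum_{l\geq 0}(2l+1)e^{-l(l+1)t}$, extract the small-$t$ asymptotics $f(t)\sim \frac1t+\frac13$ via the Euler--Maclaurin formula, and read off the value at $s=0$.

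The heart of the argument is determining $\mathbb{F}$. By Corollary \ref{corollary: tau via d/dm^2 log det} we have $\frac{d}{dm^2}\log\det(\Delta+m^2)=\int_{S^2}\tau^\mr{reg}\,\dvol$, and since the sphere is homogeneous the integrand is constant and equal to the explicit expression (\ref{sphere tadpole}) computed from the short-distance asymptotics of the Green's function (\ref{G_sphere}) together with Corollary \ref{corollary:tadpole difference}. Setting $R=1$ and integrating in $m^2$ then yields (\ref{FF}) for $\log\mathbb{F}$ up to an $m$-independent constant, with the digamma terms $\psi(\alpha_1)+\psi(\alpha_2)$ arising directly from the tadpole.

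To fix the integration constant I would send $m\to 0$: the $m\to 0$ asymptotics show that $\log\mathbb{F}(z)\sim\log z+C$, and matching the subleading (zero-mode--subtracted) behavior identifies $C$ with $\log{\det}'\Delta\big|_{R=1}=\tfrac12-4\zeta'(-1)$, the logarithm of the determinant of the Laplacian on the unit sphere, which I would quote from \cite{kumagai1999determinant}. Finally I would evaluate the integral (\ref{FF}) in closed form: integrating $\psi(\alpha_{1,2})$ with $\alpha_{1,2}=\tfrac12\pm(\tfrac14-m^2)^{1/2}$ against $dm^2$ is exactly an integral producing $\log$ of the Barnes $G$-function via $\frac{d}{dz}\log G(z+1)=\tfrac12\log(2\pi)+\tfrac12-z+z\psi(z)$, and the trigonometric factor $\cos\pi(\tfrac14-z)^{1/2}$ together with the $e^{-2z}$ factor emerge from the reflection/functional equations relating $G(\alpha_1)G(\alpha_2)$ to elementary functions. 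Restoring $R$ by multiplying back the prefactor $R^{-2(\frac13-m^2R^2)}$ gives the stated formula.

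The main obstacle I expect is the last step: carefully carrying out the antiderivative of the digamma expression and correctly packaging the result in terms of the Barnes $G$-function, keeping track of the various elementary factors (the $e^{-2m^2R^2}$, the cosine, and the normalization of $G$) and ensuring the branch choices in $\alpha_{1,2}=\tfrac12\pm(\tfrac14-m^2R^2)^{1/2}$ are consistent across the whole range of $m$. The spectral and scaling parts are routine; it is the special-function identities and the matching of integration constants that require genuine care.
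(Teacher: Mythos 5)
Your proposal follows essentially the same route as the paper: factor out the $R$-dependence via $\det(c\bar A)=c^{\zeta_{\bar A}(0)}\det(\bar A)$ with $\zeta_{\bar A}(0)=\tfrac13-m^2R^2$ computed by Mellin transform and Euler--Maclaurin, recover $\log\mathbb{F}$ by integrating the zeta-regularized tadpole $\tau^{\mathrm{reg}}=\frac{\log R^2-\psi(\alpha_1)-\psi(\alpha_2)}{4\pi}$ in $m^2$ (Corollary \ref{corollary: tau via d/dm^2 log det}), fix the constant at $m\to 0$ using $\log\det'\Delta|_{R=1}=\tfrac12-4\zeta'(-1)$ from \cite{kumagai1999determinant}, and package the antiderivative of the digamma terms via the Barnes $G$-function. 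The only addition is your explicit citation of the identity $\frac{d}{dz}\log G(z+1)=\tfrac12\log(2\pi)+\tfrac12-z+z\psi(z)$, which the paper uses implicitly; the argument is correct.
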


In particular, in the limit $m\ra 0$, the determinant behaves as 
$$\det(\Delta+m^2) \underset{m\ra 0}{\sim} e^C R^{-2\cdot \frac13} \cdot m^2 R^2$$ 
Therefore, at zero mass, the determinant with excluded zero-mode on a sphere of radius $R$ is
$$ {\det}'\Delta = e^C R^{-2\cdot\frac13+2}=e^C R^{\frac43} $$

\subsection{Dirichlet-to-Neumann operators: explicit examples}

\subsubsection{Example: disk}
For $\Sigma$ a disk of radius $R$ with flat metric, the Dirichlet-to-Neumann operator acts diagonally in the basis $\phi_n(\theta)=e^{in\theta}$ in the space of $L^2$ functions on the boundary circle (with $\theta$ the polar angle):
\begin{equation*}
D_\Sigma:\; \phi_n(\theta) \mapsto \lambda_n\cdot \phi_n(\theta) 
\end{equation*}
with $n\in\ZZ$ and with  eigenvalues
\begin{equation}\label{lambda DN disk}
\lambda_n=m \frac{I'_n(mR)}{I_n(mR)} = \frac{m}{2}\, \frac{I_{n+1}(mR)+I_{n-1}(mR)}{I_n(mR)}
\end{equation}
where $I_n$ is the modified Bessel's function. This follows from the fact that the general solution of Helmholtz equation on the disk can be written, via separation of variables in polar coordinates, as
$\phi(\theta,r)=\sum_{n=-\infty}^\infty c_n \phi_n(\theta) I_n(mr)$ with $c_n$ constant coefficients.

\subsubsection{Example: hemisphere}
Consider $\Sigma$ a hemisphere of radius $R$ with standard metric. The Dirichlet-to-Neumann operator again acts diagonally in the basis of functions $\phi_n(\theta)=e^{in\theta}$, with $\theta$ the polar angle parameterizing the equator (the boundary of the hemishpere):
\begin{equation*}
D_\Sigma:\; \phi_n(\theta) \mapsto \lambda_n \cdot \phi_n(\theta)
\end{equation*}
with $n\in\ZZ$ and eigenvalues
\begin{equation}\label{lambda DN hemisphere}
\lambda_n=\frac{2}{R} \frac{\Gamma\left(\frac{n+1+\alpha_1}{2}\right)\Gamma\left(\frac{n+1+\alpha_2}{2}\right)}{\Gamma\left(\frac{n+\alpha_1}{2}\right)\Gamma\left(\frac{n+\alpha_2}{2}\right)}
\end{equation}
with $\alpha_{1,2}$ the two roots of the quadratic equation $\alpha^2-\alpha+(mR)^2=0$. One proves this similarly to (\ref{lambda DN disk}) -- from the separation of variables for the Helmholtz equation in spherical coordinates.

We remark that the zeta-regularized determinant of $D_\Sigma$ for a hemisphere can calculated explicitly, yielding
\begin{equation}\label{det DN hemishpere}
{\det}_\mr{reg}D_\Sigma= \underbrace{{\det}_\mr{reg}(\varkappa)}_{2\sinh (\pi m R)}\cdot \prod_{n=-\infty}^\infty \frac{\lambda_n}{\omega_n} =\quad 2\cos \pi \Big(\frac14-(mR)^2\Big)^{1/2}
\end{equation}
where $\omega_n=(\frac{n^2}{R^2}+m^2)^{1/2}$ are the eigenvalues of the operator $\varkappa=(\Delta+m^2)^{1/2}\big|_{\partial \Sigma}$. Here to compute the second factor in the middle expression (the Fredholm determinant of $\varkappa^{-1}D_\Sigma$), the crucial observation is that (\ref{lambda DN hemisphere}) can be written in the form $\lambda_n=\frac{2}{R}\frac{f_{n+1}}{f_{n}}$, which allows one to compute the finite product 
$\prod_{n=-N}^N \frac{\lambda_n}{\lambda_n^\varkappa} = 2^{2N+1} \frac{f_{N+1}}{f_{-N}} 
\prod_{n=-N}^N  ( (n+imR) {(n-imR)})^{-1/2}$ 
-- this is a certain combination of Gamma functions, and the limit $N\ra\infty$ can be evaluated straightforwardly.

By the BFK gluing formula for determinants, the expression (\ref{det DN hemishpere}) appears as a ratio of the determinant of the Helmholtz operator on a sphere $S^2$ and the product of determinants of the Helmholtz operators on the upper and lower hemispheres $H^+$, $H^-$:
\begin{multline*}
\frac{{\det}_{\mr{reg}}(\Delta+m^2)_{S^2}}{{\det}_{\mr{reg}}(\Delta+m^2)_{H^+}\cdot {\det}_{\mr{reg}}(\Delta+m^2)_{H^-}}= \frac{\Big(\prod_{l\geq 0}\big(\frac{l(l+1)}{R^2}+m^2\big)^{2l+1}\Big)_\mr{reg}}{\left(\Big(\prod_{l\geq 0}\big(\frac{l(l+1)}{R^2}+m^2\big)^{l}\Big)_\mr{reg}\right)^2} \\
= \Big(\prod_{l\geq 0} \left(\frac{l(l+1)}{R^2}+m^2\right)\Big)_\mr{reg} \quad \underset{\mr{BFK}}{=} \quad {\det}_\mr{reg} (D_{H^+}+D_{H^-})= 2\cos \pi \Big(\frac14-(mR)^2\Big)^{1/2}
\end{multline*}

%

\subsubsection{How far are the Dirichlet-to-Neumann operators from the square root of Helmholtz operator on the boundary?} \label{App: examples delta}
The operator $\varkappa=(\Delta+m^2)^{1/2}$ on a circle is diagonalized the basis $\phi_n(\theta)$ with eigenvalues
\begin{equation} \label{varkappa eigenvalues}
\omega_n=\left(\frac{n^2}{R^2}+m^2\right)^{1/2}
\end{equation}
Using the results (\ref{lambda DN disk}, \ref{lambda DN hemisphere})
 and the case of the cylinder of height $H$ considered in Section \ref{sec:example cylinder}, we have the following $n\ra\infty $ asymptotics for the ratio of the $n$-th eigenvalue of the Dirichlet-to-Neumann operator on a disk/hemisphere/cylinder\footnote{In the case of a cylinder, we mean the block of Dirichlet-to-Neumann operator connecting one of the boundary circles to itself.} to $\omega_n$:
\begin{equation}\label{lambda/omega examples}
\begin{aligned}
\frac{\lambda^\mr{disk}_n}{\omega_n} &\underset{n\ra\infty}{\sim}\quad  1- \frac{(mR)^2}{2n^3}+ \mathrm{O}(n^{-4}),\\
\frac{\lambda^\mr{hemisphere}_n}{\omega_n} &\underset{n\ra\infty}{\sim}\quad  1-\frac{(mR)^2}{4n^4}+\mathrm{O}(n^{-5}),\\
\frac{\lambda^\mr{cylinder}_n}{\omega_n} & =\coth  H\omega_n \underset{n\ra\infty}{\sim}\quad  1+\mathrm{O}(n^{-\infty})
\end{aligned}
\end{equation}
Thus, $\varkappa^{-1}D_\Sigma=1+\delta$ is the identity plus a pseudodifferential operator $\delta$ of negative order $N$, where:

\begin{center}
\begin{tabular}{c|ccc}
surface & disk & hemisphere &  cylinder \\ \hline
$N$ & $-3$ & $-4$  & $-\infty$
\end{tabular}
\end{center}

The result for the hemisphere can be further generalized to a result for the spherical sector with cone angle $\phi$. In this case, we have\footnote{
Explicitly, the eigenvalues of the Dirichlet-to-Neumann operator are given by $R^{-1}\frac{d}{d\phi}\log P^n_{-\alpha}(\cos\phi)$ with $\alpha$ either root of $\alpha^2-\alpha+(mR)^2=0$ and with $P^\mu_\nu(z)$ the Legendre function. The relevant asymptotics of the Legendre function was obtained in \cite{nemes2020large}.
}
\begin{equation} \label{spherical sector}
\frac{\lambda^\mr{spherical\;sector}_n}{\omega_n} \underset{n\ra\infty}{\sim}\quad  1-\frac{(mR)^2 \cos\phi\,\sin^2\phi}{2n^3}+\frac{(mR)^2 (1+3 \cos 2\phi) \sin^2\phi}{8 n^4}+\OO(n^{-5}) 
\end{equation}
where $\omega_n= \big(\frac{n^2}{R^2 \sin^2\phi}+m^2\big)^{1/2}$ -- the eigenvalues of $\varkappa$ on the boundary circle. Thus: 
\begin{enumerate}[(a)]
\item In the case $\phi\neq \frac{\pi}{2}$, $\delta$ has order $-3$, while at $\phi=\frac{\pi}{2}$ (the case of a hemisphere) the coefficient of $n^{-3}$ in (\ref{spherical sector}) vanishes and $\delta$ degenerates to order $-4$. 
\item If $\Sigma_1$, $\Sigma_2$ are two complementary spherical sectors (which glue into a full sphere), (\ref{spherical sector}) implies that in the sum $\delta_1+\delta_2$, the order $-3$ term cancels out, leaving an order $-4$ pseudodifferential operator.
\end{enumerate}


For a general surface, we have the following result: 
\begin{prop}\label{prop:delta regularity}
Let $\Sigma$ be a surface with smooth boundary, then $S =  D_\Sigma - \varkappa $ is a pseudodifferential operator of order at most $-2$. In particular, $\delta = \varkappa^{-1}S = \varkappa^{-1}D_\Sigma - 1$ is a pseudodifferential operator of order at most $-3$, and a fortiori a trace-class operator.\footnote{Recall that a pseudodifferential operator on a circle of order $N<-1$ is automatically trace-class.}
\end{prop}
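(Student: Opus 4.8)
The plan is to compute the full (classical) symbol of $D_\Sigma$ near $\partial\Sigma$ by factoring the Helmholtz operator in boundary normal coordinates and to compare it, order by order, with the symbol of $\varkappa=(\Delta_Y+m^2)^{1/2}$. First I would introduce Fermi coordinates $(t,y)$ on a collar of $\partial\Sigma$, with $t=d(\cdot,\partial\Sigma)$ and $y$ a coordinate on $\partial\Sigma$, so that the metric is $g=dt^2+h_t$ with $h_0$ the induced metric on $Y=\partial\Sigma$. In these coordinates
$$\Delta_\Sigma+m^2=-\partial_t^2-a(t,y)\,\partial_t+Q(t),\qquad a=\partial_t\log\sqrt{\det h_t},\quad Q(t)=\Delta_{h_t}+m^2,$$
where $\Delta_{h_t}$ is the nonnegative Laplacian of the boundary metric $h_t$. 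The next step is the operator factorization: one constructs recursively (on homogeneous symbol components) a one-parameter family $B(t)$ of first-order classical pseudodifferential operators on $\partial\Sigma$, with principal symbol $|\xi|_{h_t}$, solving the Riccati equation $B^2-aB-\partial_tB=Q$ modulo smoothing operators. Taking the root with principal symbol $+|\xi|_{h_t}$ (the factor corresponding to the mode decaying into the interior) and using that for $m>0$ the Dirichlet problem is uniquely solvable, the standard parametrix construction for the Dirichlet-to-Neumann map identifies $D_\Sigma=B(0)$ modulo smoothing. In particular $D_\Sigma$ and $\varkappa$ are both classical order-$1$ operators with principal symbol $|\xi|_{h_0}$, so $S=D_\Sigma-\varkappa$ has order $\le 0$ a priori.

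The reduction step is a bootstrap from the square. Set $P=D_\Sigma+\varkappa$, which is elliptic of order $1$ (principal symbol $2|\xi|_{h_0}>0$) and so has a parametrix $P^{-1}$ of order $-1$. Modulo smoothing one has the operator identity
$$SP=(D_\Sigma^2-\varkappa^2)+[S,\varkappa],\qquad\text{hence}\qquad S=\big(R+[S,\varkappa]\big)P^{-1},\quad R:=D_\Sigma^2-\varkappa^2.$$
Since $[S,\varkappa]$ has order $\le\mathrm{ord}(S)$ and $P^{-1}$ has order $-1$, iterating this identity starting from $\mathrm{ord}(S)\le 0$ gives $\mathrm{ord}(S)\le -1$ after one step and $\mathrm{ord}(S)\le -2$ after the next, the order of $R$ then capping any further improvement. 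Thus everything reduces to proving $\mathrm{ord}(R)\le -1$. Evaluating the Riccati equation at $t=0$ and using $D_\Sigma=B(0)$, we obtain, modulo smoothing,
$$R=B(0)^2-\varkappa^2=a(0)\,B(0)+(\partial_t B)(0),$$
an operator a priori only of order $1$; the claim is that its symbol vanishes through homogeneous degree $0$.

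The heart of the argument — and the step that genuinely uses $\dim\Sigma=2$ — is this symbol computation. Writing $h_t=\rho(t,y)^2\,dy^2$, one has $\sqrt{\det h_t}=\rho$ and $b_1=|\xi|_{h_t}=|\xi|/\rho$, so the degree-$1$ symbol of $R$ is
$$a(0)\,b_1(0)+\partial_t b_1\big|_{0}=\tfrac{1}{\rho}\,\partial_t\big(\rho\,b_1\big)\big|_0=0,$$
the point being that in one boundary dimension $\sqrt{\det h_t}\cdot|\xi|_{h_t}=|\xi|$ is independent of $t$ (in higher dimension the surviving term is the mean curvature, and $R$ is only order $0$). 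For the degree-$0$ symbol I would solve the degree-$1$ level of the Riccati recursion and check that the subprincipal symbol of $B$ vanishes identically, $b_0\equiv 0$: the equation reads $2b_1b_0=\big(q_1-(\partial_\xi b_1)(D_yb_1)\big)+\big(ab_1+\partial_t b_1\big)$, and in one dimension each bracketed group vanishes separately (the first by the identity $q_1=(\partial_\xi b_1)(D_yb_1)$ for the boundary Laplacian symbol, the second as above). Since $a(0)B(0)$ has exact symbol $a(0)\,\sigma(B(0))$ ($a$ being a multiplication operator) and $(\partial_t B)(0)$ has symbol $\partial_t\sigma(B)\big|_0$, the degree-$0$ component of $R$ is $a(0)b_0(0)+\partial_t b_0\big|_0=0$, whence $\mathrm{ord}(R)\le -1$. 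I expect this bookkeeping to be the main obstacle, as one must track the exactness of these two symbols and the cancellations carefully.

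Finally, with $\mathrm{ord}(S)\le -2$ established, and since $\varkappa^{-1}$ is a classical operator of order $-1$, the operator $\delta=\varkappa^{-1}S=\varkappa^{-1}D_\Sigma-1$ has order $\le -3$. On each circle component of $Y$ a pseudodifferential operator of order $<-1$ has eigenvalues decaying faster than $|n|^{-1}$ and is therefore trace-class, which gives the final assertion.
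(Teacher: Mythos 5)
Your proof is correct, and its core --- the factorization of $\Delta_\Sigma+m^2$ in boundary normal coordinates via the Riccati equation $B^2-aB-\partial_tB=Q$, together with the observation that in one boundary dimension the term $a b_1+\partial_t b_1=\rho^{-1}\partial_t(\rho\, b_1)$ cancels because $\sqrt{\det h_t}\,|\xi|_{h_t}=|\xi|$ is $t$-independent --- is exactly the mechanism the paper invokes by adapting Taylor's Proposition C.1, where the order-$0$ symbol $\tfrac12\bigl(\mathrm{Tr}\,A_{\partial\Sigma}-\langle A^*_{\partial\Sigma}\xi,\xi\rangle/\langle\xi,\xi\rangle\bigr)$ vanishes because the Weingarten map of a curve is scalar. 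Where you genuinely diverge is the comparison with $\varkappa$. The paper computes the full symbols of \emph{both} $D_\Sigma$ and $\varkappa$ through the $|\xi|^{-1}$ term and matches them ($|\xi|+m^2/2|\xi|+\mathrm{O}(|\xi|^{-2})$ on each side), which requires the symbol calculus for the operator square root $(\Delta_Y+m^2)^{1/2}$ and the Riccati recursion one level deeper. You instead compare the squares --- using that $\varkappa^2=\Delta_Y+m^2=Q(0)$ is exact, so $R=D_\Sigma^2-\varkappa^2=a(0)B(0)+(\partial_tB)(0)$ modulo smoothing --- and then bootstrap through the elliptic factor $P=D_\Sigma+\varkappa$ via $S=(R+[S,\varkappa])P^{-1}$. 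This buys you two things: no square-root symbol calculus at all, and you only need the Riccati recursion through $b_0$ (the check $q_1=(\partial_\xi b_1)(D_yb_1)$ and $ab_1+\partial_tb_1=0$, giving $b_0\equiv 0$ and hence $\mathrm{ord}(R)\le -1$), rather than through $b_{-1}$. The cost is the extra, though routine, bookkeeping that $[S,\varkappa]$ loses one order so that the iteration $\mathrm{ord}(S)\le\max(\mathrm{ord}(R),\mathrm{ord}(S))-1$ stabilizes at $-2$. Both arguments yield the same sharp bound, consistent with the explicit disk asymptotics in (\ref{lambda/omega examples}).
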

\begin{proof} We will adapt the proof of \cite[Proposition C.1]{TayII}, where for $m=0$, it is shown that $D_\Sigma^{} - \sqrt{\Delta_{\partial\Sigma}} = B$, where $B$ is an order at most $0$ pseudodifferential operator with principal symbol 
$$\sigma(B_0)(x,\xi) = \frac12\left(\operatorname{Tr}A_{\partial \Sigma} - \frac{\langle A^*_{\partial\Sigma}\xi,\xi\rangle}{\langle \xi,\xi\rangle}\right).$$
Here $A_{\partial \Sigma}$ is the Weingarten map. Adapting slightly the proof in loc. cit. we can see this is true also in the massive case. Since $T^*_x\partial\Sigma$ is one dimensional, $A_{\partial \Sigma}$ is just multiplication by a real number and the expression on the right hand side vanishes trivially. This shows that $S$ is an operator of order at most $-1$. In fact, the proof of \cite[Proposition C.1]{TayII} can be used to analyze the full symbol $p_{D_\Sigma}(x,\xi)$ of $D_{\Sigma}$ (see also \cite[Proposition 1.1]{LeeUhl1989}). In particular, it can be shown that 
\[p_{D_\Sigma}(x,\xi)=|\xi|+m^2/2|\xi|+\mathrm{O}(|\xi|^{-2}).
\]
Also, the full symbol $p_{\varkappa}(x,\xi)$ of $\varkappa$ has similar behavior:
\[p_{\varkappa}(x,\xi)=|\xi|+m^2/2|\xi|+\mathrm{O}(|\xi|^{-2}).
\]
This shows that $ D_\Sigma - \varkappa $ is an operator of order at most $-2$ which completes the proof. 
\end{proof}


Also, for any $\Sigma$ with a product metric near the boundary, $\delta$ is a smoothing operator (i.e. one has $N=-\infty$), see \cite{Lee2}. 

Explicit examples above, 
plus the expectation that the singular part of the integral kernel of $\delta$ must be universally expressed in terms of local metric characteristics (curvature of $\Sigma$ and extrinsic curvature of the boundary at the point),  suggest the following.
\begin{conjecture}\leavevmode
\begin{enumerate}[(i)]
\item Let $\Sigma$ be a smooth surface with smooth boundary, endowed with a Reimannian metric. If the boundary of $\Sigma$ is totally geodesic, then the operator $\delta=\varkappa^{-1}D_\Sigma-1$ is a PDO (pseudodifferential operator) of order $-4$.
\item If a surface $\Sigma$ is cut by a $1$-manifold $Y$ into two surfaces $\Sigma_L,\Sigma_R$, then ${\frac12(\delta_L+\delta_R)} = \frac{1}{2\varkappa_Y} D_{\Sigma_L,\Sigma_R}-1$ is a PDO of order $-4$. Here we are not assuming that $Y$ is geodesic.
\end{enumerate}
\end{conjecture}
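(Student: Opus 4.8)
The plan is to extend the symbol computation behind Proposition \ref{prop:delta regularity} by one order and to identify precisely how the subleading symbol of the Dirichlet-to-Neumann operator depends on the second fundamental form of the curve along which we measure it. I would work in boundary normal coordinates $(t,s)$ near a boundary (or interface) circle, with $t\geq 0$ the geodesic distance to the curve and $s$ its arclength, so that $g=dt^2+h(t,s)^2\,ds^2$ with $h(0,s)\equiv 1$. Writing $h=1+a_1(s)\,t+a_2(s)\,t^2+\cdots$, the coefficient $a_1$ is, up to sign, the geodesic curvature (second fundamental form) of the curve, and the boundary is totally geodesic exactly when $a_1\equiv 0$. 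In these coordinates $\Delta+m^2=-\dd_t^2-\tfrac{h_t}{h}\dd_t-\tfrac1{h^2}\dd_s^2+\tfrac{h_s}{h^3}\dd_s+m^2$. Following \cite[App.~C]{TayII} and \cite[Prop.~1.1]{LeeUhl1989}, the operator $D_\Sigma$ is, modulo a smoothing operator, the value at $t=0$ of the tangential first-order classical pseudodifferential operator $B(t)$ defined by $\dd_t\phi=-B(t)\phi$ on outgoing solutions of $(\Delta+m^2)\phi=0$; its full symbol $b(t,s,\xi)=b_1+b_0+b_{-1}+\cdots$ solves the operator Riccati equation
\[
B^2-\dd_tB-\tfrac{h_t}{h}B=Q,\qquad Q=-\tfrac1{h^2}\dd_s^2+\tfrac{h_s}{h^3}\dd_s+m^2,
\]
order by order in $\xi$ via the composition rule $\sigma(B^2)=\sum_{k\geq 0}\tfrac1{k!}\dd_\xi^k b\,D_s^k b$ with $D_s=-i\dd_s$.

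Then I would solve the recursion. The leading equations give $b_1=|\xi|/h$; using the identity $\dd_t b_1+\tfrac{h_t}{h}b_1\equiv 0$ and $h_s|_{t=0}=0$ one finds $b_0\equiv 0$ and $b_{-1}=m^2h/(2|\xi|)$, which at $t=0$ reproduces $\sigma(\varkappa)=(\xi^2+m^2)^{1/2}=|\xi|+\tfrac{m^2}{2|\xi|}+\cdots$ through order $-1$ — this is exactly Proposition \ref{prop:delta regularity}. The single new input is the order-$(-1)$ equation
\[
b_{-2}=\frac1{2b_1}\Big(\dd_t b_{-1}+\tfrac{h_t}{h}b_{-1}-\dd_\xi b_1\,D_s b_{-1}-\dd_\xi b_{-1}\,D_s b_1\Big).
\]
Evaluating at $t=0$ (where $h=1$, $h_s=0$, $h_t=a_1$), the two cross terms drop out since each carries a factor $h_s$, while the first two terms each equal $m^2a_1/(2|\xi|)$, giving $b_{-2}|_{t=0}=\tfrac{m^2 a_1}{2|\xi|^2}$. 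Because $\sigma(\varkappa)$ is even in $\xi$ and therefore has no order-$(-2)$ term, the order-$(-2)$ symbol of $S=D_\Sigma-\varkappa$ equals $\tfrac{m^2 a_1}{2|\xi|^2}$, i.e. it is proportional to the geodesic curvature. As a check this matches \eqref{lambda/omega examples}: the flat disk has $a_1=-1/R$, reproducing the $n^{-3}$ term of $\lambda_n^{\mr{disk}}/\omega_n$, while the geodesic equator of the hemisphere has $a_1=0$.

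The conjecture then follows cheaply. For (i), a totally geodesic boundary has $a_1\equiv 0$, hence $\sigma_{-2}(S)=0$, so $S$ has order $\leq -3$ and $\delta=\varkappa^{-1}S$ has order $\leq -4$. For (ii), reversing the normal across $Y$ corresponds to $t\mapsto -t$, hence $h(t,s)\mapsto h(-t,s)$ and $a_1\mapsto -a_1$; therefore $\beta_{-2}^L=-\beta_{-2}^R$, the order-$(-2)$ symbols of $S_L$ and $S_R$ cancel, and $\tfrac1{2\varkappa_Y}D_{\Sigma_L,\Sigma_R}-1=\tfrac1{2\varkappa_Y}(S_L+S_R)$ has order $\leq -4$ with no hypothesis on $Y$. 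This is consistent with the complementary spherical sectors, where the order-$(-3)$ terms cancel in \eqref{spherical sector}.

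The main obstacle is not the algebra (only one order beyond Proposition \ref{prop:delta regularity}) but its rigorous justification: one must verify that $D_\Sigma$ is a classical pseudodifferential operator whose full symbol is produced by the local factorization recursion modulo a smoothing operator — i.e. that the rest of $\Sigma$ away from the curve contributes only a smoothing correction — which is the content of the parametrix constructions of \cite{TayII,LeeUhl1989} and requires checking that their hypotheses persist for $m>0$ and for an interior interface. A secondary point is sharpness: to assert order \emph{exactly} $-4$ rather than $\leq -4$ one would compute $\beta_{-3}$ and show $\sigma_{-3}(S)=\beta_{-3}+\tfrac{m^4}{8|\xi|^3}\neq 0$ generically, but the hemisphere, whose ratio behaves as $1-\tfrac{(mR)^2}{4n^4}$ in \eqref{lambda/omega examples}, already shows the bound $-4$ is attained.
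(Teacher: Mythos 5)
The paper offers no proof of this statement: it is recorded as a conjecture, supported only by the explicit examples \eqref{lambda/omega examples}, \eqref{spherical sector} and by the heuristic that the singular part of $\delta$ should be a universal local expression in the curvatures. Your proposal therefore goes strictly beyond the paper, and the central computation is correct. I have checked the Riccati recursion for the full symbol of the factorization operator $B(t)$ in Fermi coordinates: with $g=dt^2+h^2ds^2$, $h(0,s)\equiv 1$, one indeed gets $b_1=|\xi|/h$, $b_0=0$, $b_{-1}=m^2h/(2|\xi|)$, and at $t=0$ the two cross terms in the order-$(-1)$ equation carry a factor of $h_s$, which vanishes because $h(0,\cdot)\equiv 1$, leaving $b_{-2}|_{t=0}=m^2a_1/(2\xi^2)$ with $a_1=h_t(0,s)$ the (signed) geodesic curvature. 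Since the full symbol of $\varkappa$ on the arclength-parametrized circle is exactly $(\xi^2+m^2)^{1/2}$, which has no homogeneous component of degree $-2$, this gives $\sigma_{-2}(D_\Sigma-\varkappa)=m^2a_1/(2\xi^2)$. This formula reproduces every data point of Appendix \ref{App: examples delta}: the coefficient $-(mR)^2/2$ of $n^{-3}$ for the disk ($a_1=-1/R$, $\xi=n/R$), the coefficient $-(mR)^2\cos\phi\sin^2\phi/2$ for the spherical sector ($a_1=-\cot\phi/R$, $\xi=n/(R\sin\phi)$), its vanishing for the hemisphere, and the cancellation of the $n^{-3}$ terms for complementary sectors. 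Part (i) follows because totally geodesic means $a_1\equiv 0$, and part (ii) because the two sides of an interior interface see opposite inward normals, so $a_1^L=-a_1^R$ and the degree-$(-2)$ symbols of $S_L$ and $S_R$ cancel.

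Two gaps remain, both of which you flag yourself. First, your argument yields order $\leq -4$; the conjecture as phrased asserts order $-4$, and sharpness requires showing the degree-$(-4)$ symbol of $\delta$ is generically nonzero (the hemisphere, with its nonvanishing $n^{-4}$ coefficient in \eqref{lambda/omega examples}, supports this but is not a proof in general). Second, and more substantively, the whole computation presupposes that $D_\Sigma$ --- and, for part (ii), each one-sided Dirichlet-to-Neumann operator along an \emph{interior} interface --- is a classical pseudodifferential operator whose complete polyhomogeneous symbol is produced by the local factorization recursion modulo a smoothing remainder. For a boundary hypersurface this is the content of the parametrix constructions in \cite{TayII} and \cite{LeeUhl1989} (already invoked in Proposition \ref{prop:delta regularity}) and survives the addition of $m^2$, but for the interior-interface case of part (ii) this localization still has to be stated and justified. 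With that supplied, your argument would upgrade the conjecture, in the form ``order at most $-4$,'' to a theorem.
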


\begin{remark}
For comparison, we comment on the behavior of the operator $S=D_\Sigma-\varkappa$ in the case $m=0$ (on two-dimensional surfaces).\footnote{
In the massless case, we prefer to talk about  $S$ rather than $\delta=\varkappa^{-1} D_\Sigma-1 = \varkappa^{-1}S$, since the latter is not everywhere defined, due to $\varkappa$ being invertible only on the orthogonal complement of constants on a given boundary component.
}
\begin{enumerate}[(a)]
\item \label{rmk S massless (a)} The operator $D_\Sigma$ is conformally invariant in the massless case, and likewise for the operators $\varkappa$ and $S$.
 More explicitly, let $f: \Sigma \ra \Sigma'$ be a conformal diffeomorphism, 
$f^*g_{\Sigma'}= \Omega\cdot  g_\Sigma $, with  $g_\Sigma$, $g_{\Sigma'}$ the two metrics and $\Omega$ the Weyl factor. Then, for $Y$ a boundary component of $\Sigma$, we have
$$ D_\Sigma = (\Omega|_Y)^{\frac12} (f|_Y)^* D_{\Sigma'},\qquad \varkappa_Y=(\Omega|_Y)^{\frac12} (f|_Y)^*\varkappa_{Y'},\qquad   S_\Sigma = (\Omega|_Y)^{\frac12} (f|_Y)^* S_{\Sigma'} $$
\item $S$ is a smoothing operator for any $\Sigma$.\footnote{
By (\ref{rmk S massless (a)}), one can straighten the metric near the boundary into a cylindrical one by a conformal mapping; and for a cylindrical metric near the boundary, $S$ is smoothing.}
\item For $\Sigma$ homeomorphic to a disk (with any metric), $S=0$.\footnote{
Using (\ref{rmk S massless (a)}), by Riemann mapping theorem, this case reduces to the case of the standard unit disk where the explicit computation is straightforward, e.g. as $m\ra 0$ limit of (\ref{lambda DN disk}),  see also \cite{TayII}.
}
\end{enumerate}
\end{remark}

\section{Trace of stress-energy tensor and tadpole. 
Trace anomaly.
}\label{app: trace anomaly}

In this appendix we start by giving an interpretation of the tadpole in terms of the stress-energy tensor $T$ in free massive scalar theory -- as the vacuum expectation value of the trace  $\tr T$. 
We then compare the trace of quantum stress-energy tensor, defined in terms of the variation of the partition function w.r.t. a Weyl transformation of the metric, with the expectation value of the trace of the classical stress-energy tensor (the variation of the classical action w.r.t. a Weyl transformation) and calculate the difference between the two (the ``trace anomaly''), in free and then in interacting theory.

In the massive free scalar theory, the \textit{classical} stress-energy tensor, defined as the variational derivative of the action w.r.t. the metric, is given by 
$$T_\mr{cl}(x)=\frac{2}{\sqrt{\det{g}}}\;\frac{\delta S_\Sigma}{\delta g^{-1}(x)} \;=d\phi \otimes d\phi - g\, \Big(\frac12 \langle d\phi ,d\phi \rangle +\frac{m^2}{2}\phi^2\Big)$$
where $g$ is the metric and $g^{-1}$ its inverse; $\langle,\rangle$ is the Hodge inner product. In particular, its trace is $\mr{tr}\,T_\mr{cl} = -m^2 \phi^2$. Thus, the normalized expectation value is:
\begin{equation}\label{tr T_cl = tadpole}
\big\langle \mr{tr}\, T_\mr{cl}(x) \big\rangle = - \hbar m^2 \tau(x)
\end{equation}
So, in a free massive theory, the tadpole\footnote{Throughout this section, a ``tadpole'' means a ``zeta-regularized tadpole.''} is proportional to the expectation value of the trace of the stress-energy tensor.

Consider the $m\ra 0$ asymptotics of (\ref{tr T_cl = tadpole}) averaged over the surface:
\begin{equation}\label{int <T_cl> tends to - hbar}
\int_\Sigma d^2x\, \big\langle \mr{tr}\, T_\mr{cl}(x) \big\rangle 
= -\hbar m^2 \int_\Sigma d^2x\; \tau(x) 
\quad \underset{m\ra 0}{\sim} -\hbar + \mathrm{O}(m^2)
\end{equation}
The reason for that is that the averaged tadpole behaves as 
\begin{equation}\label{tau via d/dm^2, small m asympt}
\int_\Sigma d^2 \tau(x)= \frac{d}{dm^2} \log \underbrace{\det (\Delta+m^2)}_{m^2 \widetilde{\det}(\Delta+m^2)} 
=\frac{1}{m^2}+\log \widetilde{\det}(\Delta+m^2)
\quad  \underset{m\ra 0}{\sim} \quad  \frac{1}{m^2}+ \mr{const}+\mathrm{O}(m^2)
\end{equation}
Here we denoted $\widetilde{\det}(\Delta+m^2)$ the zeta-regularized determinant of $\Delta+m^2$ with the lowest eigenvalue $\lambda=m^2$ excluded from the regularized product. 
Thus, the asymptotics $1/m^2$ of the averaged tadpole -- and hence the constant asymptotics in (\ref{int <T_cl> tends to - hbar}) -- are due to the lowest eigenvalue of $\Delta+m^2$.

One also has a local version of the result (\ref{tau via d/dm^2, small m asympt}):
$$\tau(x) \underset{m\ra 0}{\sim} \frac{1}{\mr{Area}(\Sigma)\cdot m^2}+\mathrm{O}(1). $$
It follows from (\ref{tau zeta-reg definition}) and the large-time asymptotics of the heat kernel $\theta_\Delta(x,t)\underset{t\ra \infty}{\sim} \frac{1}{\mr{Area}(\Sigma)}$ (cf. Remark \ref{rem: sphere propagator m->0 limit}). 

Next, recall that the \textit{quantum} stress-energy tensor is defined as a variational derivative of the partition function of the theory w.r.t. the metric (rather than the expectation value of the classical stress-energy tensor):
\begin{equation*}
\big\langle T_q(x)\big\rangle = -\hbar \frac{2}{\sqrt{\det{g}}} \, \frac{\delta \log Z_\Sigma}{\delta g^{-1}(x)} 
\end{equation*}
Correspondingly, its trace measures the reaction of the partition function to an infinitesimal Weyl rescaling of metric $g\ra e^{\sigma}\cdot g$ (with $\sigma\in C^\infty(\Sigma)$ a Weyl scaling factor):
\begin{equation}\label{tr T_q def}
\big\langle \tr T_q(x)\big\rangle = \hbar \frac{2}{\sqrt{\det{g}}} \;\frac{\delta }{\delta \sigma}\Big|_{\sigma=0} \log Z_\Sigma^{g\ra e^\sigma\cdot g}
\end{equation}
Averaging over the surface, we get
\begin{equation}\label{tr T_q averaged}
\int_\Sigma d^2 x\; \big\langle \tr T_q(x)\big\rangle =  2\hbar\, \frac{d}{d \sigma}\Big|_{\sigma=0}\log Z_\Sigma^{g\ra e^\sigma\cdot g}
\end{equation}
with $\sigma$ a constant (point-independent) Weyl scaling factor.

\begin{remark}
Recall (see e.g. \cite[Section 5.4.2]{DMS}) that in conformal field theory 
the classical stress-energy tensor is traceless, $\tr T_\mr{cl}=0$, whereas on the quantum level
one has the  ``trace anomaly'' (or ``Weyl anomaly''): the trace of the stress-energy tensor has an expectation value proportional to the central charge $c$ and the scalar curvature $K$:\footnote{
In this section we use the notation $K$ rather than $\mr{R}$ for the scalar curvature, to avoid the mix up with the radius of the sphere.
}
\begin{equation}\label{Weyl anomaly CFT}
\big\langle \tr T_q \big\rangle^\mr{CFT} = \frac{c\hbar}{24 \pi} K
\end{equation}
This corresponds to the following behavior of the partition function of a CFT under finite Weyl transformations of metric (see e.g. \cite{Seg1}, \cite{gawedzki1989conformal}): 
$$ Z^\mr{CFT}_{\Sigma, e^\sigma\cdot g} = e^{\frac{c}{48 \pi}\int_\Sigma \frac12 d\sigma\wedge *d\sigma+K\sigma \dvol_\Sigma }\;\;  Z^\mr{CFT}_{\Sigma,g} $$
In particular, 
free massless scalar field is a conformal theory with central charge $c=1$ and, according to (\ref{Weyl anomaly CFT}), should satisfy
\begin{equation}\label{Weyl anomaly CFT scalar field}
\big \langle \tr T_q \big\rangle_{m=0}^\mr{CFT} = \frac{\hbar}{24 \pi} K
\end{equation}
\end{remark}

\subsection{Trace of quantum vs. classical stress-energy tensor on a sphere}
Consider the example of $\Sigma$ a sphere of radius $R$. 
The partition function of the free massive scalar theory is $Z=\det(\Delta+m^2)^{-\frac12}$ with the determinant given by (\ref{det sphere}).

Note that (\ref{det sphere}) implies that
\begin{equation*}
\left(R^2\frac{\dd}{\dd R^2}-m^2 \frac{\dd}{\dd m^2}\right) \log \det(\Delta+m^2) = -\frac13+m^2R^2
\end{equation*}
Comparing with (\ref{tr T_q averaged}) (where a global Weyl rescaling is tantamount to changing the radius of the sphere), with (\ref{tr T_cl = tadpole}) and with Corollary \ref{corollary: tau via d/dm^2 log det}, we obtain
\begin{equation*}
\int_{S^2} d^2x\; \big\langle \tr T_q \big\rangle = \int_{S^2} d^2x\; \big\langle \tr T_\mr{cl} \big\rangle + \hbar\left(\frac13 - m^2R^2\right)
\end{equation*}
Or in the non-averaged form (here we can use that $\big\langle \tr T_q \big\rangle$, $\big\langle \tr T_\mr{cl} \big\rangle$ must be constant functions in the case of a sphere, due to isometries acting transitively):
\begin{equation}\label{tr T_q vs tr T_cl on a sphere}
\big\langle \tr T_q \big\rangle = \big\langle \tr T_\mr{cl} \big\rangle + \hbar\left(\frac{1}{12 \pi R^2} - \frac{m^2}{4\pi}\right)
\end{equation}
We interpret the second term in the r.h.s. as the trace 
anomaly in the free massive scalar theory.

\subsection{Trace anomaly for a general surface (and in the interacting theory)}
The following result is a generalization of (\ref{tr T_q vs tr T_cl on a sphere}) to an arbitrary closed surface. 
\begin{lemma}\label{lem: trace anomaly}
 For the free massive scalar theory on a closed surface $\Sigma$, the trace of the quantum stress-energy tensor at a 
 point $x$ is
\begin{equation}\label{trace anomaly massive}
\big\langle \tr T_q(x) \big\rangle = \big\langle \tr T_\mr{cl}(x) \big\rangle +\hbar\frac{1}{4\pi}\left(\frac{K(x)}{6}-m^2\right) 
\end{equation}
with $K(x)$ the scalar curvature. The first term on the r.h.s. is $\big\langle \tr T_\mr{cl}(x) \big\rangle=-\hbar m^2 \tau(x)$ with $\tau(x)$ the zeta-regularized tadpole (\ref{tau zeta-reg definition}). 
\end{lemma}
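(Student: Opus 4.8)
The plan is to reduce the statement to a local Weyl variation of the zeta-regularized determinant and to evaluate that variation with the heat-kernel / local-zeta machinery already set up in Section~\ref{tadpole:zeta-regularized}. Since $Z_\Sigma=\det(\Delta_g+m^2)^{-1/2}$ on a closed surface, the definition (\ref{tr T_q def}) gives
\[
\big\langle \tr T_q(x)\big\rangle = -\frac{\hbar}{\sqrt{\det g}}\,\frac{\delta}{\delta\sigma(x)}\Big|_{\sigma=0}\log\det\big(\Delta_{e^\sigma g}+m^2\big),
\]
so everything reduces to computing the kernel of the first Weyl variation of $\log\det A_\sigma$, where $A_\sigma:=\Delta_{e^\sigma g}+m^2$, and then matching against the classical piece $\langle\tr T_\mr{cl}(x)\rangle=-\hbar m^2\tau^{\mathrm{reg}}_\Sigma(x)$ from (\ref{tr T_cl = tadpole}).

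First I would dispose of the measure subtlety: $A_\sigma$ is self-adjoint on $L^2(e^\sigma\dvol_g)$, not on the fixed $L^2(\dvol_g)$. Using the two-dimensional conformal covariance $\Delta_{e^\sigma g}=e^{-\sigma}\Delta_g$ and conjugating by the unitary $f\mapsto e^{-\sigma/2}f$, I replace $A_\sigma$ by the spectrally equivalent operator $B_\sigma=e^{\sigma/2}A_\sigma e^{-\sigma/2}$ on $L^2(\dvol_g)$, so that $\det A_\sigma=\det B_\sigma$. A direct computation gives $\dot B:=\frac{d}{d\epsilon}\big|_0 B_{\epsilon\sigma}=-\tfrac12[\Delta_g,\sigma]-\sigma\Delta_g$. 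The crucial simplification is that the commutator drops out of the trace (by cyclicity $\mathrm{Tr}([\Delta_g,\sigma]e^{-tA})=0$), and writing $\Delta_g=A-m^2$ one finds
\[
\mathrm{Tr}\big(\dot B\,e^{-tA}\big)=\Phi'(t)+m^2\Phi(t),\qquad \Phi(t):=\int_\Sigma\sigma(x)\,\theta_A(x,t)\,\dvol_g(x).
\]

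Next I would feed this into the standard zeta-variation formula $\frac{d}{d\epsilon}\big|_0\log\det B_{\epsilon\sigma}=\frac{d}{ds}\big|_0\big[\tfrac{1}{\Gamma(s)}I(s)\big]$, where $I(s)=\int_0^\infty t^s\,\mathrm{Tr}(\dot B e^{-tA})\,dt$ and the Duhamel identity $\frac{d}{d\epsilon}\mathrm{Tr}(e^{-tB_\epsilon})=-t\,\mathrm{Tr}(\dot B e^{-tB})$ is used. Integrating by parts in $t$ and recognizing Mellin transforms of the local heat trace yields $I(s)=\Gamma(s+1)\int_\Sigma\sigma\big(m^2\zeta_A(1+s,x)-\zeta_A(s,x)\big)\dvol_g$. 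Expanding near $s=0$: the local zeta function is holomorphic at $s=0$ with $\zeta_A(0,x)=\tfrac{1}{4\pi}(a_1(x)-m^2)$ (the $t^0$ coefficient of (\ref{eq:local heat trace expansion})), while at $s+1=1$ it has a simple pole of residue $\tfrac{1}{4\pi}$ and finite part $\tau^{\mathrm{reg}}_\Sigma(x)$, both read off from the expansion of $\zeta_A(1+\varepsilon,x)$ in the proof of Lemma~\ref{lem:tau zeta}. Extracting the finite part $I_0$ and the pole coefficient $c_{-1}=\tfrac{m^2}{4\pi}\int_\Sigma\sigma$, the two Euler--$\gamma$ contributions (one from $\tfrac{1}{\Gamma(s)}\sim s+\gamma s^2$, one from $\Gamma(s+1)\sim 1-\gamma s$) cancel, leaving, with $a_1=\tfrac16 R=\tfrac16 K$,
\[
\frac{1}{\sqrt{\det g}}\,\frac{\delta}{\delta\sigma(x)}\Big|_{0}\log\det A_\sigma=-\frac{1}{4\pi}\Big(\frac{K(x)}{6}-m^2\Big)+m^2\,\tau^{\mathrm{reg}}_\Sigma(x).
\]
Multiplying by $-\hbar$ and inserting $\langle\tr T_\mr{cl}(x)\rangle=-\hbar m^2\tau^{\mathrm{reg}}_\Sigma(x)$ gives exactly (\ref{trace anomaly massive}). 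As a consistency check, integrating over $\Sigma$ recovers the global identity $\frac{d}{d\sigma}\big|_0\log\det A=-\zeta_A(0)+m^2\int_\Sigma\tau^{\mathrm{reg}}_\Sigma$ via Corollary~\ref{corollary: tau via d/dm^2 log det}, reproducing the sphere result (\ref{tr T_q vs tr T_cl on a sphere}).

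The main obstacle I anticipate is the careful bookkeeping of the regularization at the two special points $s=0$ and $s=1$: one must justify the term-by-term small-$t$ asymptotics entering the Mellin transform, track the finite part of $I(s)$ against the pole of $\zeta_A(1+s,x)$, and verify that the spurious Euler--$\gamma$ terms cancel rather than survive. Fixing the measure convention in the functional derivative (the interplay of the prefactor $\tfrac{1}{\sqrt{\det g}}$ with $\delta/\delta\sigma$) once and for all is needed so that the extracted local density is unambiguous; apart from these analytic points the computation is routine.
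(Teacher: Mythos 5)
Your proof is correct and follows essentially the same route as the paper's: both compute the Weyl variation of $\log\det(\Delta+m^2)$ via the first-order perturbation of the heat trace, trade $\sigma\Delta e^{-tA}$ for $-(\partial_t+m^2)$ acting on the smeared heat trace, and read off the answer from the local zeta function at $s=0$ (giving $\tfrac{1}{4\pi}(\tfrac{K}{6}-m^2)$) and the finite part at $s=1$ (giving $m^2\tau^{\mathrm{reg}}$). Your explicit unitary conjugation to handle the $L^2(e^\sigma\dvol_g)$ measure issue is a welcome extra care point that the paper elides, but it does not change the substance of the argument.
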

We interpret the second term on the r.h.s. of (\ref{trace anomaly massive}) as the trace anomaly.

\begin{proof} For $\sigma\in C^\infty(\Sigma)$, denote $A_\sigma=e^{-\sigma}\Delta+m^2$  the Helmholtz operator for the Weyl-rescaled metric $e^\sigma g$. First note that  $\tr e^{-t A_{\epsilon\sigma}} = \tr e^{-t A}+\epsilon t \tr \sigma \Delta e^{-t A}+\mathrm{O}(\epsilon^2)$ (cf. the proof of Lemma \ref{lem: tau zeta-reg is variational derivative in mass}).
Using this, we have
\begin{align*}
\frac{d}{d\epsilon}\Big|_{\epsilon=0} \log&\det A_{\epsilon \sigma} 
=-\frac{d}{ds}\Big|_{s=0} \frac{1}{\Gamma(s)}\int_0^\infty dt\, t^{s-1} \frac{d}{d\epsilon}\Big|_{\epsilon=0} \tr e^{-t  A_{\epsilon\sigma}  
} \\
&= -\frac{d}{ds}\Big|_{s=0} \frac{1}{\Gamma(s)}\int_0^\infty dt\, t^{s-1} \cdot t\,\tr \sigma\left(-\frac{\dd}{\dd t}-m^2  \right) e^{-t A} \\
&= -\frac{d}{ds}\Big|_{s=0} \frac{1}{\Gamma(s)}\int_0^\infty dt\, \left( s\, t^{s-1}-m^2 t^s  \right) \tr \sigma e^{-t A} \\
&=-\frac{d}{ds}\Big|_{s=0}\int_\Sigma d^2 x\, \sigma(x) (s \zeta_A(s,x)-m^2 s\zeta_{A}(s+1,x))\\
&=\int_\Sigma d^2 x\,\sigma(x) (-\zeta_{A}(0,x)+m^2 \tau(x))
\end{align*}
Comparing with (\ref{tr T_q def}) and using the definition of the partition function in free theory $Z=\det^{-\frac12} A$, we have
\begin{equation}\label{T_q = T_cl + zeta(0)}
\big\langle T_q(x) \big\rangle = -\hbar m^2 \tau(x) +\hbar\zeta_A(0,x)= \langle T_\mr{cl}(x) \rangle + \hbar\zeta_A(0,x) 
\end{equation}
We find the value $\zeta_A(0,x)$ from the small-$t$ expansion of the heat kernel (\ref{eq:local heat trace expansion}):
\begin{equation}\label{zeta_A local at zero computation}
\begin{aligned}
\zeta(0,x)&=\lim_{s\ra 0} \Big(\frac{1}{\Gamma(s)}\int_0^\infty dt\, t^{s-1}\, e^{-m^2 t} \left(\theta_\Delta(t,x)-\frac{1}{4\pi t}-\frac{1}{4\pi}\frac{K(x)}{6}\right) +\\ &+\frac{1}{4\pi}\frac{m^{-2(s-1)}}{s-1}+\frac{1}{4\pi}\frac{K(x)}{6}m^{-2s}\Big) \qquad =\quad  \frac{1}{4\pi} \left(\frac{K(x)}{6}-m^2\right)
\end{aligned}
\end{equation}
Putting (\ref{T_q = T_cl + zeta(0)}) and (\ref{zeta_A local at zero computation}) together, we obtain the statement.
\end{proof}

\begin{remark}[On the massless limit and comparison to CFT]\label{rem:comparison to CFT} In the limit $m\ra 0$, (\ref{trace anomaly massive}) becomes
\begin{equation*}
\big\langle \tr T_q(x) \big\rangle = -\frac{\hbar}{\mr{Area}(\Sigma)}+\frac{\hbar}{4\pi} \frac{K(x)}{6}
\end{equation*}
which seems to contradict the known result (\ref{Weyl anomaly CFT scalar field}) from CFT. The explanation is that the trace of the quantum stress-energy tensor 
in the $m\ra 0$ limit of the massive theory and in the massless theory differ by a shift due to different normalizations 
 of partition functions. A reasonable normalization/regularization of the partition function of the massless free scalar field CFT is:
\begin{equation}\label{Z^CFT normalization}
\begin{aligned}
Z^\mr{CFT}_{m=0}=\lim_{m\ra 0} \left[\frac{\det (\Delta+m^2)}{m^2\mr{Area}(\Sigma)}\right]^{-1/2}&=\mr{Area}(\Sigma)^\frac12\lim_{m\ra 0} \til\det(\Delta+m^2)^{-\frac12}\\ &=\mr{Area}(\Sigma)^\frac12 \big({\det}'\Delta\big)^{-\frac12}
\end{aligned}
\end{equation}
where $\til\det$ is the determinant with the lowest eigenvalue $\lambda=m^2$ excluded.\footnote{
For example, for $\Sigma =T_\tau= \mathbb{C}/(\mathbb{Z}\oplus \tau \mathbb{Z})$ a torus with modular parameter $\tau$, the extra factor $\mr{Area}^{\frac12}$ in (\ref{Z^CFT normalization}) restores the invariance of the partition function under the modular transformation $\tau\ra -\tau^{-1}$. Indeed, by Kronecker's limit formula (see e.g. \cite{gawedzki1997lectures}, \cite{Q1993}), ${\det'}\Delta=(\mr{Im}\tau)^2|\eta(\tau)|^4$, with $\eta(\tau)$ the Dedekind's eta function. This determinant is not invariant under $\tau\ra - \tau^{-1}$. However, $\frac{1}{\mr{Area}(T_\tau)}{\det}'\Delta=(\mr{Im}\tau)^{-1}{\det}'\Delta$ is invariant, and thus the partition function with normalization (\ref{Z^CFT normalization}) is also modular invariant. \\
\indent As a side note, 
in the operator formalism, the partition function for a torus $Z^\mr{CFT}_\mr{op}=\tr_{H_{S^1}} q^{L_0-\frac{c}{24}}\bar{q}^{\bar{L}_0-\frac{\bar{c}}{24}}$, with $q=e^{2\pi i \tau}$,  is in fact ill-defined for the massless scalar CFT, due to a continuum 
of primary fields/states (vertex operators). However, it can be regularized by allowing the scalar field to take values in a target circle of radius $r$ (see e.g. \cite{DMS}). For a large target radius, this regularized partition function behaves as $Z^\mr{CFT}_\mr{op}(r) \underset{r\ra\infty}{\sim} r\cdot Z^\mr{CFT}$ where the coefficient on the right is the partition function (\ref{Z^CFT normalization}).
} 
The extra factor $\mr{Area}(\Sigma)^{\frac12}$ here accounts, upon taking the variation w.r.t. Weyl transformations, for the  difference 
$$\big\langle \tr T_q \big\rangle^\mr{CFT} = \lim_{m\ra 0}\big\langle \tr T_q \big\rangle+\frac{\hbar}{\mr{Area}(\Sigma)}$$
Therefore, (\ref{trace anomaly massive}) and (\ref{Weyl anomaly CFT scalar field}) are in agreement and not in contradiction.
\end{remark}

Lemma \ref{lem: trace anomaly} admits the following generalization to the non-free case.
\begin{prop}\label{prop: trace anomaly}
Consider the massive scalar theory with interaction potential $p(\phi)=\sum_{n\geq 0}\frac{p_n}{n!}\phi^n$ on a closed surface $\Sigma$. 
The trace of the quantum stress energy, 
defined via (\ref{tr T_q def}), satisfies
\begin{equation}\label{trace anomaly with p}
\big\langle \tr T_q(x) \big\rangle = \Big\langle \tr T_\mr{cl}(x) +  
\hbar\frac{1}{4\pi}\left(\frac{K(x)}{6}-m^2-\frac{\dd^2}{\dd \phi^2} p(\phi)\right) \Big\rangle
\end{equation}
at any point $x\in \Sigma$. Here $\tr T_\mr{cl}=-m^2\phi^2-2\,p(\phi)$ is the trace of the classical stress-energy tensor. 
On the r.h.s., $\langle\cdots\rangle$ means the normalized expectation value (one-point correlation function) in the interacting theory.
\end{prop}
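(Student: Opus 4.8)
The plan is to compute the quantum trace $\langle\tr T_q(x)\rangle$ directly from its definition \eqref{tr T_q def} as the response of $\log Z$ to an infinitesimal Weyl rescaling $g\mapsto e^{\epsilon\sigma}g$, and to extract the anomaly by separating the naive (``classical-observable'') variation from the genuinely anomalous one. Writing $Z^\tau_\Sigma=\det(\Delta+m^2)^{-1/2}\sum_\Gamma\hbar^{\ell(\Gamma)}F^\tau(\Gamma)/|\Aut(\Gamma)|$ and passing to connected diagrams, I would split
\[
\log Z^\tau_\Sigma=-\tfrac12\log\det(\Delta+m^2)+W_{\mr{conn}},
\]
and organize $\frac{d}{d\epsilon}\big|_0$ according to which datum is varied: the determinant, the vertex integration measures $\dvol$, the propagators $G$ on the edges, and the tadpole function $\tau$ on the short loops. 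Since in two dimensions the kinetic term is Weyl invariant, the explicit metric dependence of the action is $\frac{d}{d\epsilon}\big|_0 S=\int_\Sigma\sigma\big(\tfrac{m^2}{2}\phi^2+p(\phi)\big)\dvol$, which fixes the ``classical'' target: by the identity $\hbar\,\delta_\sigma\log Z=-\langle\delta_\sigma S\rangle+(\text{measure anomaly})$ together with $\tr T_\mr{cl}=-m^2\phi^2-2p(\phi)$, the explicit contribution is precisely $\langle\tr T_\mr{cl}(x)\rangle$, and the remainder is the anomaly to be identified.

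Next I would match the non-anomalous pieces diagram by diagram. The determinant variation gives, exactly as in the proof of Lemma \ref{lem: trace anomaly} (see \eqref{T_q = T_cl + zeta(0)}--\eqref{zeta_A local at zero computation}), the free contribution $-\hbar m^2\tau(x)+\hbar\zeta_\A(0,x)$, i.e. the free classical trace $\langle -m^2\phi^2\rangle_{\mr{free}}$ plus the free anomaly. In $W_{\mr{conn}}$, the vertex-measure variations insert $\sigma\,p(\phi)$ at a bulk vertex and resum into the interaction part $-2\langle p(\phi(x))\rangle$ of $\langle\tr T_\mr{cl}\rangle$, while the coincidence limits of the varied propagators produce the interaction corrections to $\langle -m^2\phi^2(x)\rangle$. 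What is then left over is a purely \emph{local} remainder coming from the short-distance part of the propagator and tadpole variations.

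To evaluate this remainder I would argue that the conformal anomaly is a one-loop effect governed by the fluctuation operator, i.e. the Hessian of the action $\A_\phi:=\Delta+m^2+\frac{\dd^2}{\dd\phi^2}p(\phi)$ in a background field. The Weyl variation of the associated Gaussian normalization is controlled, just as in the free case, by the value at $s=0$ of the local zeta function, but with the effective potential shifted by $p''(\phi)$. Using the heat-kernel expansion $\theta_{\Delta+V}(x,t)=\frac{1}{4\pi t}\big(1+(\tfrac{K(x)}{6}-V(x))\,t+\OO(t^2)\big)$ with $V=\frac{\dd^2}{\dd\phi^2}p(\phi)$ and repeating the computation \eqref{zeta_A local at zero computation} verbatim gives
\[
\zeta_{\A_\phi}(0,x)=\frac{1}{4\pi}\Big(\frac{K(x)}{6}-m^2-\frac{\dd^2}{\dd\phi^2}p(\phi)\Big),
\]
so that the entire anomaly is unified as $\hbar\langle\zeta_{\A_\phi}(0,x)\rangle$ (its $p=0$ part being the free anomaly already produced by the determinant). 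Taking the normalized perturbative expectation value then yields the claim, since $K(x)/6$ and $m^2$ pass through $\langle\cdot\rangle$ unchanged and $\langle p''(\phi)\rangle$ is the full one-point function.

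I expect the main obstacle to be the rigorous justification that the measure anomaly equals $\hbar\langle\zeta_{\A_\phi}(0,x)\rangle$ to all orders: one must show that the Weyl variations of the propagators and of the zeta-regularized tadpole localize, order by order in $\hbar$, to the single insertion $-\frac{\hbar}{4\pi}\,p''(\phi(x))$ with the correct combinatorial weight, and resum into $-\frac{\hbar}{4\pi}\langle p''(\phi(x))\rangle$. Concretely this requires computing $\frac{d}{d\epsilon}\big|_0\tau_{\epsilon\sigma}$ and the short-distance part of $\frac{d}{d\epsilon}\big|_0 G_{\epsilon\sigma}$ and verifying their anomalous local pieces reproduce the heat-kernel coefficient above; a clean consistency check is the lowest-order ``figure-eight'' diagram for $p(\phi)=\frac{p_4}{4!}\phi^4$, whose tadpole variation must yield $-\frac{\hbar^2 p_4}{8\pi}\tau(x)=-\frac{\hbar}{4\pi}\langle p''(\phi(x))\rangle+\OO(p_4^2)$. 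Finally, one must check that the chosen zeta-regularization of the tadpole commutes appropriately with the Weyl variation, so that no additional scheme-dependent local terms are generated.
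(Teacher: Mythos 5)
Your overall strategy coincides with the paper's: both compute $\frac{d}{d\epsilon}\big|_0\log Z^{g\to e^{\epsilon\sigma}g}$, split $\log Z$ into $-\frac12\log\det(\Delta+m^2)$ plus connected diagrams, and classify the variation according to which diagrammatic datum is hit (determinant, vertex volume forms, propagators, short loops). The identification of the vertex-measure variations with $-2\langle p(\phi)\rangle$ and of the $-m^2$ insertions from propagator/tadpole variations with $\langle -m^2\phi^2\rangle$ is exactly the paper's bookkeeping. Where you genuinely differ is in how the anomalous remainder is packaged: you propose to recognize it as $\hbar\,\zeta_{\A_\phi}(0,x)$ for the Hessian $\A_\phi=\Delta+m^2+p''(\phi)$, using the heat-kernel coefficient $a_1=\frac{K}{6}-V$. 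This is a nice conceptual unification (it explains \emph{why} the answer is $\frac{1}{4\pi}\langle\frac{K}{6}-m^2-p''(\phi)\rangle$, and your figure-eight check at order $p_4$ is correct), whereas the paper obtains the $-\frac{\hbar}{4\pi}\langle p''(\phi)\rangle$ term purely combinatorially, from the $\binom{n}{2}$ ways of forming a short loop at an $n$-valent vertex times a universal local term in the tadpole variation.

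The gap — which you yourself flag as ``the main obstacle'' — is that the decisive computation is deferred rather than done, and it is precisely where all the content of the proposition lives. One must show: (a) the Weyl variation of the Green's function between \emph{distinct} points produces only the $-m^2\sigma$ insertion, $\frac{d}{d\epsilon}\big|_0 G^{g\to e^{\epsilon\sigma}g}(x_1,x_2)=-m^2\int G(x_1,y)\sigma(y)G(y,x_2)\,d^2y$, with \emph{no} extra local term; and (b) the Weyl variation of the regularized tadpole carries, in addition to the same $-m^2$ insertion, the anomalous local piece $\frac{1}{4\pi}\sigma(x)$, cf.\ \eqref{trace anomaly tau var derivative}. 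Without (a) and (b) the claim that the ``measure anomaly'' localizes to a single insertion of $-\frac{\hbar}{4\pi}p''(\phi(x))$ with the right weight is an assertion, not a proof, and the background-field heuristic via $\zeta_{\A_\phi}(0,x)$ does not by itself establish it to all orders in the diagrammatics: the $\phi$-dependent operator $\A_\phi$ never literally appears in the perturbative expansion, so its local zeta value has to be reassembled from graph data.

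Both missing steps are short but must be supplied. For (a), differentiate $G=(e^{-\epsilon\sigma}\Delta+m^2)^{-1}$ (against the rescaled volume form) and use $\Delta_y G(y,x_2)=-m^2G(y,x_2)+\delta(y,x_2)$; the delta-function term is exactly cancelled by the variation of $\dvol$ at the endpoint, leaving only the $-m^2$ insertion. For (b), the cleanest route is the one the paper takes: work with the point-split tadpole $\tau^{\sp}(x)=\lim_{x'\to x}\bigl[G(x,x')+\frac{1}{2\pi}\log d(x,x')\bigr]$; under $g\to e^{\epsilon\sigma}g$ the subtracted term varies as $\frac{1}{2\pi}\cdot\frac{\epsilon\sigma(x)}{2}=\frac{\epsilon\sigma(x)}{4\pi}$, which is the anomalous local piece, and Corollary \ref{corollary:tadpole difference} (the two tadpoles differ by a universal constant) transfers this to $\tau^{\mathrm{reg}}$. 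Once (a) and (b) are in place, the combinatorial factor $\frac{p_n}{n!}\binom{n}{2}=\frac{p_n}{2\,(n-2)!}$ converts the sum over short loops into $\frac12 p''(\phi)$, and together with the free-theory anomaly $\zeta_{\Delta+m^2}(0,x)=\frac{1}{4\pi}(\frac{K}{6}-m^2)$ from Lemma \ref{lem: trace anomaly} this reproduces \eqref{trace anomaly with p} after multiplying by the $2\hbar$ in \eqref{tr T_q def}.
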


\begin{proof}
Consider the expression 
\begin{equation}\label{inf Weyl tranf of log Z}
\frac{d}{d\epsilon}\Big|_{\epsilon=0} \log Z_\Sigma^{g\ra e^{\epsilon \sigma}g}
\end{equation}
On one hand, by definition (\ref{tr T_q def}), it is equal to 
\begin{equation}\label{trace anomaly tr T_q}
\frac{1}{2\hbar}\int_\Sigma d^2 x \,\sigma(x)\,\big\langle \tr T_q(x) \big\rangle
\end{equation}
On the other hand, 
\begin{equation}\label{log Z expanded}
\log Z_\Sigma = -\frac12 \log\det(\Delta+m^2)+ \sum_{\Gamma\,\mr{connected}} \frac{\hbar^{-\chi(\Gamma)}}{|\Aut(\Gamma)|}F_\Gamma
\end{equation}
where the sum on the right is over connected Feynman graphs. Taking the derivative of this expression w.r.t. a Weyl transform, we get the following contributions to (\ref{inf Weyl tranf of log Z}):
\begin{enumerate}[(i)]
\item Derivative hits $-\frac12 \log\det(\Delta+m^2)$. This gives the contribution $$\int_\Sigma d^2 x\, \sigma(x)\, \big(\underbrace{- \frac{m^2}{2}\tau(x)}_{\mr{(i)'}}+
\underbrace{\frac12\,\frac{1}{4\pi}\left(\frac{K(x)}{6}-m^2\right)}_{\mr{(i)''}}\big) $$ as we obtained in Lemma \ref{lem: trace anomaly}.
\item Derivative hits $\dvol_\Sigma$ in $F_\Gamma$ corresponding to one of the vertices of $\Gamma$  (recall that the Riemannian volume form contains $\sqrt{\det g}$ which scales with Weyl transformations). This gives the contribution\footnote{Here and below when we say ``contribution $\cdots$'' while discussing the element $\Xi$ of Feynman diagrams, we mean ``the contribution to (\ref{inf Weyl tranf of log Z}) where one instance of $\Xi$ is replaced by $\frac{d}{d\epsilon}\big|_{\epsilon=0}\Xi^{g\ra e^{\epsilon \sigma} g}=\cdots$.'' }
$$\frac{d}{d\epsilon}\Big|_{\epsilon=0} \dvol_x^{g\ra e^{\epsilon \sigma }g}=\sigma(x) \dvol_x$$
\item Derivative hits one of the edges (Green's functions) in $F_\Gamma$ -- but not one of the short loops. This gives the contribution\footnote{
Indeed, 
using the formula for the derivative of the inverse of an operator $M_\epsilon$ in a parameter, $\frac{d}{d\epsilon}M^{-1}_\epsilon = - M^{-1}_\epsilon (\frac{d}{d\epsilon}M_\epsilon) M^{-1}_\epsilon$, we have:
$ \frac{d}{d\epsilon}\Big|_{\epsilon=0}  \int \dvol_{x_2} G(x_1,x_2) f(x_2)\Big|_{g\ra e^{\epsilon \sigma }g}=-\int d^2x \int d^2 x_2 G(x_1,x)(-\sigma(x)\Delta_x)G(x,x_2)f(x_2)= -m^2\int d^2 x \int d^2 x_2 G(x_1,x) \sigma(x) G(x,x_2) f(x_2)+\int d^2 x_2 G(x_1,x_2) \sigma(x_2) f(x_2)$ where the last term is due to the variation of $\dvol_{x_2}$ while the first is due to the variation of the Green's function itself. Here $f(x_2)$ is an arbitrary test function and we used that $\Delta_x G(x,x_2)=-m^2 G(x,x_2)+\delta(x,x_2)$.
} 
\begin{equation}\label{trace anomaly G var derivative}
 \frac{d}{d\epsilon}\Big|_{\epsilon=0} G^{g\ra e^{\epsilon \sigma }g}(x_1,x_2) =  \int_\Sigma d^2 x\, G(x_1,x) (-m^2 \sigma(x)) G(x,x_2) 
\end{equation}
\item Derivative hits a short loop, giving the contribution\footnote{
This is proven easiest by analyzing the point-split tadpole (\ref{tadpole:splitting}) where the first term $G(x,x')$ reacts to the Weyl transform according to (\ref{trace anomaly G var derivative}) and the variation of the singular subtraction $\frac{1}{2\pi}\log d(x,x')$ yields the second term in  (\ref{trace anomaly tau var derivative}). Then we recall that $\tau$ and $\tau^\mr{split}$ differ by a universal constant, see Corollary \ref{corollary:tadpole difference}; thus, their variational derivatives coincide.
}
\begin{equation}\label{trace anomaly tau var derivative}
 \frac{d}{d\epsilon}\Big|_{\epsilon=0} \tau^{g\ra e^{\epsilon \sigma }g}(x)=\underbrace{\int_\Sigma d^2y\, G(x,y) (-m^2 \sigma(y)) G(y,x)}_{\mathrm{(iv)'}} +
  \underbrace{\frac{1}{4\pi} \sigma(x)}_{\mr{(iv)''}}
\end{equation}
\end{enumerate}
Next, we note that
\begin{itemize}
\item Contributions of type (ii) above sum up to the expectation value 
\begin{equation}\label{trace anomaly p contribution}
\hbar^{-1}\int_\Sigma d^2x\, \sigma(x)\,\big\langle -p(\phi(x)) \big\rangle
\end{equation}
\item Contributions of types (iii), (iv)' and (i)' sum up to 
\begin{equation}\label{trace anomaly m^2 phi^2 contribution}
\hbar^{-1}\int_\Sigma d^2x\, \sigma(x) \big\langle-\frac{m^2}{2}\phi^2(x)\big\rangle
\end{equation}
\item Contributions (iv)'' sum up to 
\begin{equation}\label{trace anomaly p'' contribution}
\int_\Sigma d^2x\, \sigma(x) \Big\langle -\frac{1}{4\pi}\underbrace{\sum_n \frac{p_n}{n!}  
\left(\begin{array}{c}n \\  2\end{array}
\right) \phi^{n-2}(x)}_{\frac12 \frac{\dd^2}{\dd \phi^2}p(\phi)}
\Big\rangle 
\end{equation}
\end{itemize}
Finally, we note that (\ref{trace anomaly p contribution}) and (\ref{trace anomaly m^2 phi^2 contribution}) together yield $\frac{1}{2\hbar}\int d^2x \,\sigma(x)\, \big\langle T_\mr{cl}(x)\big\rangle$, while (\ref{trace anomaly p'' contribution}) and (i)'' above together yield
$$\frac12 \int d^2 x\, \sigma(x)\, \frac{1}{4\pi}\Big\langle \frac{K(x)}{6}-m^2-\frac{\dd^2}{\dd\phi^2}p(\phi)\Big\rangle$$
Thus, the sum of these two expressions is (\ref{trace anomaly tr T_q}), which finishes the proof.
\end{proof}


In Lemma \ref{lem: trace anomaly} and in Proposition \ref{prop: trace anomaly}, one can allow $\Sigma$ to be a surface with boundary and $x$ an \textit{interior} point. Both results continue to hold in this case, where now l.h.s. and r.h.s. are understood as functions of the boundary field $\eta$ (the proof is adapted straightforwardly).

Note that the second derivative of the interaction 
$\frac{\dd^2}{\dd\phi^2}p(\phi)$ 
that we see in the trace anomaly in (\ref{trace anomaly with p}), we have encountered before in the context of RG flow, see (\ref{RG flow eq for p_Lambda}), (\ref{RG flow eq local}).


\bibliographystyle{abbrv}
\bibliography{bibliography}

\end{document}